\colorlet{shadecolor}{yellow!20}
\begin{document}

\title{\mytitle}

\author{\myauthors
\\\myemailgroup
\\\myinstitute, \myuniversity
}

\maketitle

\begin{abstract}
\noindentTransactional Memory (TM) is an approach aiming to simplify concurrent
programming by automating synchronization while maintaining efficiency. TM
usually employs the optimistic concurrency control approach, which relies on
transactions aborting and restarting if conflicts occur. However, an aborted
transaction can still leave some effects in the system that cannot be cleaned up,
if irrevocable operations are present within its code. The pessimistic approach
eliminates that problem, since it relies on deferring operations in case of
conflict rather than aborting, but hitherto pessimistic TMs suffered from low
parallelism due to the need of serializing transactions. In this paper, we aim
to introduce OptSVA, a pessimistic TM concurrency control algorithm that
ensures a high level of parallelism through a battery of far-reaching
optimizations including early release, asynchronous execution, and the
extensive use of buffering.

\end{abstract}

\providecommand{\keywords}[1]{\noindent\textbf{Index terms---}#1}
\keywords{\mykeywords}

\section{Introduction}
\label{sec:introduction}

In a world dominated by multicore processors and distributed applications even
the rank-and-file programmer is increasingly likely to have to turn to parallel
programming to take full advantage of 
various multiprocessor architectures.
However, concurrent execution can cause operations on separate processors to
interleave in ways that produce anomalous results, forcing the programmer to
predict 
and eliminate them through
synchronization.

Yet 
implementing synchronization correctly
is notoriously difficult, since the
programmer must reason about interactions among seemingly unrelated parts of
the system code. 
Furthermore, low-level mechanisms like barriers, monitors, and locks are
easily misused and performance, consistency, or progress fall prey to faulty
design or simple bugs. Worse still, the resulting errors like deadlocks or
livelocks are far reaching, difficult to track down, and often
non-deterministic.

Consequently, researchers seek ways to automate synchronization 
while retaining a decent level of efficiency. 
\emph{Transactional memory (TM)} \cite{HME93,ST95} is one such approach,
introduced for multiprocessor architectures, and then extended to distributed
systems as \emph{distributed TM} (see \cite{KAJ+08,TRP13}, among others). The TM
approach requires that the programmer
annotates blocks of code as \emph{transactions} that must be executed with
specific correctness guarantees (e.g., \emph{serializability} \cite{Pap79},
\emph{opacity} \cite{GK10}, and TMS1 \cite{DGLM13}).
The TM system then ensures these guarantees using an underlying concurrency
control algorithm, which provides synchronization as needed to give the
illusion of transaction atomicity and isolation, but whose details remain hidden from the
programmer.
In effect, TM reduces the effort required to implement correct and efficient
programs.

Most TM research emphasizes optimistic concurrency control.%
There are variations to this approach, but, generally, a transaction executes
regardless of other transactions using buffers, and only updates the state
of the system as it finishes executing (i.e. on \emph{commit}). If two
transactions try to access the same object, and one of them writes to it, they
\emph{conflict} and one of them \emph{aborts} by discarding its buffers and
\emph{restarts} to execute all of its operations anew.
Such an approach allows parallelism on multicores, as transactions do not block
one another during execution.
However, this requires an assumption that an aborted transaction does not have
any visible effect on the system.

Hence, the optimistic approach comes with its own set of problems.
Most notably, TM transactions can contain any code, including code with side
effects, such as: system calls, I/O operations, locking, or network communication.
These are referred to collectively as \emph{irrevocable operations}, since it
is practically impossible to revoke their effects. However, the \emph{modus
operandi} of optimistic transactions depends on aborted transactions cleaning
up after themselves.
The problem can be mitigated by using irrevocable transactions that run
sequentially, and so cannot abort \cite{WSA08}, or providing multiple versions
of objects on which transactions execute reads \cite{AH11,PFK10}. In other
cases, irrevocable operations are simply forbidden in transactions (e.g., in
Haskell \cite{HMJH04}) or moved to commit.  Other research suggests that a form
of compensation can be used to fix the computations, so that conflicting
transactions do not abort \cite{BMT10}. 
These solutions, however, introduce complexity and overhead, relax the
consistency guarantees of TM, or limit the applicability of TM.

A different approach, as suggested in \cite{MS12,AMS12,ADFK14} and our earlier
work \cite{Woj05b,Woj07}, is to use fully-pessimistic concurrency control
\cite{BHG87,WG02}. This involves transactions waiting until they have
permission to access shared objects. In effect, potentially conflicting
operations are postponed until they no longer conflict. Thus transactions, for
the most part, avoid forced aborts, and therefore, they also naturally avoid the
problems stemming from irrevocable operations.
However, the authors of \cite{MS12} show that the fully-pessimistic approach
can have negative impact on performance in high contention, since it depends on
serializing write transactions to prevent aborts, which inherently limits
parallelism.
The goal of this paper is to show that this penalty on parallelism is not
inherent in the pessimistic approach and can be overcome.

In our previous work \cite{Woj05b,Woj07,SW13}, we attempted to mitigate the
performance issue by introducing a pessimistic TM algorithm using \emph{early
release}---a technique that allows conflicting transactions to execute in
parallel and still commit.  Specifically, the \emph{Supremum Versioning
Algorithm} (\emph{SVA}) 
uses \emph{a priori} knowledge to allow transactions to safely 
release a variable it will no longer access, which, in turn, allows  other
transactions to read or update it without waiting for the first transaction to
finish completely.
This increases the number of allowed interleavings between transactions, which
then translates into promising performance results.

In this paper we present OptSVA, a TM concurrency control algorithm that builds
on the early release mechanism introduced in SVA, but introduces a number of
deep modifications that eliminate its predecessors limitations, which
effectively make OptSVA a novel and unique algorithm.
Most notably, OptSVA parallelizes reads where possible, and relies on buffering
to institute automatic \emph{privatization} of variables, where SVA is
operation-type--agnostic and writes to variables in-place. This allows to both expedite early
release and defer the moment of synchronization, resulting in greater
parallelism.
Furthermore, to the best of the authors' knowledge, OptSVA is the first TM
concurrency control algorithm to delegate specific concurrency-control--related
tasks to separate threads to achieve local asynchrony, allowing a transaction
to perform local computations and non-conflicting operations while waiting to
serialize conflicting operations with other transactions.
This feature is especially valuable in distributed systems, where network
communication introduces delays.

Furthermore, we show through formal analysis that OptSVA can produce tighter
interleavings than SVA due to the increased level of parallelism.
This is also born out by experimental evaluation, which verifies that the
higher level of complexity of the algorithm does not incur significant enough
penalties to nullify the advantage of better interleavings. 

In addition, we demonstrate that OptSVA meets the same correctness guarantees as
SVA, by presenting a proof for last-use opacity.
Last-use opacity \cite{SW14-disc,SW15-arxiv} is a strong safety property for TM systems, that relaxes
opacity to allow early release after the last use of a variable in a
transaction.

Given that opacity is defined as a property that must be demonstrated for all
prefixes of a given transactional schedule, proving it for a complex system is
typically troublesome, as demonstrated by research on markability \cite{LP14} and
graph representation of opacity \cite{GK10}, both techniques trying to work
around the basic definition of opacity. By extension, the same is true of
last-use opacity.
Furthermore, since last-use opacity is defined using histories, but buffering
algorithms like OptSVA divorce transactional operations from the actual
operations on memory, and perform synchronization based on the latter,
demonstrating last-use opacity is even more complex.
Hence, apart from the proof itself, we contribute a \emph{trace harmony}, a
proof technique that shows last-use opacity based on interrelationships among
memory accesses (and can be easily extended to show related properties like
opacity).

The paper is structured as follows: Following the introduction,
\rsec{sec:related-work} shows other research relating to OptSVA.  Then, in
\rsec{sec:optsva} we present the OptSVA algorithm in full.  In
\rsec{sec:comparison} we compare the parallelism of OptSVA histories to those
admitted by SVA. 
In \rsec{sec:trace-harmony}, we introduce trace harmony, a proof technique that allows us to demonstrate last-use opacity based on memory accesses.
Then, in \rsec{sec:proof} we employ that proof technique to show that despite
allowing greater parallelism OptSVA satisfies last-use opacity, i.e. the same
safety property as SVA.
Finally, we conclude in \rsec{sec:conclusions}.

\section{Related Work}
\label{sec:related-work}
\label{sec:rw}

A large number of TM systems were proposed to date. Here, we concentrate only
on those that use some of the same techniques as the one OptSVA is based on:
pessimistic concurrency control and early release. 

\subsection{Pessimistic TM systems}
Seeing as TM systems tend heavily towards optimistic concurrency control,
pessimistic systems are relatively rare. 
Examples of these include our previous work on the Basic Versioning Algorithm
and the Supremum Versioning Algorithm \cite{Woj05b,Woj07}. The former is an
opaque in-place TM that never aborts transactions and uses \emph{a priori} knowledge on
access sets to enforce disjoint-access parallelism. The latter adds an early
release mechanism in an effort to allow conflicting transactions to execute
partially in parallel. SVA was later extended to also allow optional aborts in
\cite{SW13}.  All of these algorithms were proposed for a system model using complex objects
defining custom methods rather than variables, so none of them distinguish between reads and writes
as in the traditional TM model.
The algorithm proposed here builds on both of these and introduces the
distinction between operation types, as well as other modifications, all aiming
to increase the degree of parallelism the of which TM system is capable.

Another example is the system proposed in \cite{MS12}, where read-only
transactions execute in parallel, but transactions that update are synchronized
using a global lock to execute one-at-a-time. This idea was improved upon in
Pessimistic Lock Elision (PLE) \cite{AMS12}, where a number of optimizations
were introduced, including encounter-time synchronization, rather than
commit-time. However, the authors show that sequential execution of update
transactions yields a performance penalty. In contrast, the algorithm proposed
in this paper maintains a high level of parallelism regardless of updates. In
particular, the entire transaction need not be read-only for a variable that is
read-only to be read-optimized.

SemanticTM is another pessimistic TM system \cite{DFK14}. Rather than using
versioning or blocking, transactions are scheduled and place their operations
in bulk into a producer-consumer queues attached to variables. The instructions
are then executed by a pool of non-blocking executor threads that use
statically derived access sets and dependencies between operations to ensure
the right order of execution. The scheduler ensures that all operations of one
transaction are executed before another's. In addition, statically derived
access sets and dependencies between operations are used to ensure that
operations are executed in the right order. Contrary to SVA and OptSVA, the
transactions cannot abort, forcibly, but also do not allow for manual aborts.
SemanticTM and versioning algorithms produce similar histories, but while the
latter are deadlock-free, SemanticTM is wait-free. However, even without
aborts, in contrast to SVA and OptSVA, SemanticTM does not guarantee that a
given operation is executed (at most) once.

While not exactly a pessimistic system \emph{per se},
Twilight STM \cite{BMT10} relaxes isolation to allow conflicting transactions
to reconcile using so-called twilight code at the end of the transaction and
commit nevertheless.
If a transaction reads a value that was modified by another transaction since
its start, twilight code can re-read the changed variables and re-write the 
variables the transaction modified to reflect the new state, allowing the
transaction to commit anyway.
Even though the operations are re-executed, as per optimistic concurrency
control, it means that transactions that execute twilight code always finish
successfully nevertheless.
This means, however, that regardless of transactions aborting or committing,
the code within them is prone to re-execution, which introduces problems with
irrevocable operations that SVA and OptSVA try to avoid.

\subsection{Early release TM systems}
A number of TM systems employ early release to improve parallelism. One example
is SVA, which we elaborate on earlier.

Another example is
Dynamic STM \cite{HLMS03}, the system 
that can be
credited with introducing the concept of early release in the TM context.
Dynamic STM allows transactions that only perform read operations on particular
variables to (manually) release them for use by other transactions. However, it
left the assurance of safety to the programmer, and, as the authors state, even
 linearizability cannot be guaranteed by the system. In contrast, versioning
 algorithms guarantee, at minimum, last-use opacity.

The authors of \cite{SK06} expanded on the work above and evaluated the concept
of early release with respect to read-only variables on several concurrent data
structures. The results showed that this form of early release does not
provide a significant advantage in most cases, although there are scenarios
where it would be advantageous if it were automated. We use a different
approach in SVA and OptSVA, where early release is not limited to read-only
variables.

DATM  \cite{RRHW09} is another noteworthy system with an early release mechanism.
DATM is an optimistic multicore-oriented TM based on TL2 \cite{DSS06}, augmented
with early-release support.
It allows a transaction $\tr_i$ to write to a variable that was accessed by
some uncommitted transaction $\tr_j$, as long as $\tr_j$ commits before
$\tr_i$.
DATM also allows transaction $\tr_i$ to read a speculative value, one written
by $\tr_j$ and accessed by $\tr_i$ before $\tr_j$ commits. 
DATM detects if $\tr_j$ overwrites the data
or aborts, in which case $\tr_i$ is forced to restart. 
DATM allows all schedules allowed by conflict-serializability. This means that
DATM allows overwriting, as well as cascading aborts. It also means that it
does not satisfy \lopacity{}. Hence, DATM is weaker than OptSVA (as well as
most TM systems). DATM can also incur very high transaction abort rates, in
comparison to OptSVA, whose abort rate will tend towards zero (depending on the
use of programamtic aborts).

\section{OptSVA}
\label{sec:optsva}

This section describes 
the
\emph{Optimized Supremum Versioning Algorithm
(OptSVA).} 
OptSVA is specified in full in \rfig{fig:optsva}.
Given the complexity of the algorithm we split the presentation into four
parts. In the first, we explain the rudiments of the use of versioning for
concurrency control, as well as the early release mechanism. These are the
foundations of the algorithm and the elements which reflect  the basic design of SVA.
Then, in the other three parts we discuss how the combination of explicit
read/write distinction, buffering, and asynchronous execution of specific
synchronization-related tasks is used to optimize accesses to read-only
variables, to delay synchronization of the initial operation upon an initial
write, and expedite early release to the last (closing) write.
First, though, we present the system model.

\newbox\codefill
\begin{lrbox}{\codefill}
    \begin{minipage}{\linewidth}
    \vfill
    \end{minipage}
\end{lrbox}

\def
  \newbox{#}
  \begin{lrbox}{#}
    \input{1}
  \end{lrbox}  
#2{
  \newbox{#1}
  \begin{lrbox}{#1}
    \input{#2}
  \end{lrbox}  
}

\def\moremagic#1{
\begin{minipage}[t]{\linewidth}%
  \usebox#1 
  \end{minipage}
}

\def\ersatzsubfloat#1{\begin{subfigure}[b]{\linewidth} #1 \end{subfigure}}

\def\hlcolor{javagreen!10}
\def\betweenlistings{-10pt}

  \newbox{\codestart}
  \begin{lrbox}{\codestart}
    \lstset{
    }
    \begin{lstlisting}[name=optsva]
proc start(Transaction $\trobj_i$) {
  for($\obj$ ∈ $\accesses{i}$ sorted by $\prec_\lockf$)                                   $\label{l:lock-a}$
    lock $\lock{\obj}$                                                  $\label{l:lock-b}$
  for($\obj$ ∈ $\accesses{i}$) {                                      $\label{l:start-loop-1}$
    $\gv{\obj}$ ← $\gv{\obj}$ + 1                                       $\label{l:set-gv}$ 
    $\pv{\obj}{i}$ ← $\gv{\obj}$                                        $\label{l:set-pv}$
  }                                                                     $\label{l:end-loop-2}$
  for($\obj$ ∈ $\accesses{i}$ sorted by $\prec_\lockf$)                                   $\label{l:unlock-a}$ 
    unlock $\lock{\obj}$                                                $\label{l:unlock-b}$
  for($\obj$ ∈ $\accesses{i}$: $\obj$ is read-only)                   $\label{l:reads-a}$
    async run read_buffer($\trobj_i$,$\obj$)  $\label{l:reads-c}$
      when $\pv{\obj}{i}$ - 1 = $\lv{\obj}$ $\label{l:reads-b}$
}                                                                       $\label{l:end1}$
    \end{lstlisting}

  \end{lrbox}

  \newbox{\coderead}
  \begin{lrbox}{\coderead}
        \makeatletter
    \lstset{
        firstnumber=auto, 
    }
    \csname\@lst @SetFirstNumber\endcsname
    \makeatother
    \begin{lstlisting}[name=optsva]
proc read(Transaction $\trobj_i$, Var $\obj$) {
  if ($\obj$ is read-only) {                                    $\label{l:read-ro-start}$
    join with read_buffer($\trobj_i$,$\obj$)                    $\label{l:read-ro-wait}$
  } else if ($\wc{i}{\obj}$ = 0 and $\rc{i}{\obj}$ = 0) {        $\label{l:read-c2}$ $\label{l:read-c1}$
    wait until $\pv{\obj}{i}$ - 1 = $\lv{\obj}$               $\label{l:read-access}$
    checkpoint($\trobj_i$,$\obj$)                             $\label{l:read:checkpoint}$
    $\buf{i}{\obj}$ ← $\stored{i}{\obj}$                        $\label{l:first-read:set-buf}$ 
  }
  if (∃$\objy$: $\rv{i}{\objy}$ ≠ $\cv{\objy}$) $\label{l:read-abort}$
    abort($\trobj_i$,$\obj$) and exit $\label{l:read-abort-go}$
  $\rc{i}{\obj}$ ← $\rc{i}{\obj}$ + 1                            $\label{l:read-inc}$
  return $\buf{i}{\obj}$                                      $\label{l:read-only:return}\label{l:non-local-read:return}\label{l:local-read:return}$
}
proc read_buffer(Transaction $\trobj_i$, Var $\obj$) {
  $\rv{i}{\obj}$ ← $\cv{\obj}$                                          $\label{l:rb-checkpoint}$
  $\buf{i}{\obj}$ ← $\obj$                                      $\label{l:rb-copy}\label{l:read-buffer:set-buf}\label{l:read-buffer:get}$
  release($\trobj_i$,$\obj$)                                            $\label{l:rb-release}$
  async run read_commit($\trobj_i$,$\obj$) 
    when $\pv{\obj}{i}$ - 1 = $\ltv{\obj}$ $\label{l:rb-async}$
}
proc read_commit(Transaction $\trobj_i$, Var $\obj$) {
  $\cv{\obj}$ ← $\pv{\obj}{i}$                       $\label{l:rc-start}$
  if (∃$\objy$: $\rv{i}{\objy}$ > $\cv{\objy}$)
    abort($\trobj_i$) and exit
  $\ltv{\obj}$ ← $\pv{\obj}{i}$                      $\label{l:rc-end}$
}
    \end{lstlisting}

  \end{lrbox}

  \newbox{\codewrite}
  \begin{lrbox}{\codewrite}
        \makeatletter
    \lstset{linebackgroundcolor=%
        firstnumber=auto, 
    }
    \csname\@lst @SetFirstNumber\endcsname
    \makeatother
    \begin{lstlisting}[name=optsva]
proc write(Transaction $\trobj_i$, Var $\obj$, Value $\val$) {
  if (not $\val$ in domain of $\obj$)       $\label{l:write:domain}$
    abort($\trobj_i$) and exit $\label{l:write:domain:abort}$
  if (∃$\objy$: $\rv{i}{\objy}$ ≠ $\cv{\objy}$)
    abort($\trobj_i$) and exit 
  $\buf{i}{\obj}$ ← $\val$                  $\label{l:write:set-buf}$
  $\wc{i}{\obj}$ ← $\wc{i}{\obj}$ + 1       $\label{l:write-wc-inc}$
  if ($\wc{i}{\obj}$ = $\wub{i}{\obj}$)        $\label{l:write-no-reads-ub}\label{l:write-ub}$
    async run write_buffer($\trobj_i$,$\obj$) 
      when $\pv{\obj}{i}$ - 1 = $\lv{\obj}$ $\label{l:write-no-reads-async}\label{l:write-async}$
}
proc write_buffer(Transaction $\trobj_i$, Var $\obj$) {
  if (not checkpoint made)                                  $\label{l:wb-checkpoint-cond}$
    checkpoint($\trobj_i$,$\obj$)                            $\label{l:wb-checkpoint}$
  if (∃$\objy$: $\rv{i}{\objy}$ ≠ $\cv{\objy}$) $\label{l:wb-abort}$
    abort($\trobj_i$) and exit        
  $\obj$ ← $\buf{i}{\obj}$             $\label{l:wb-apply}\label{l:write-buffer:rset}$
  release($\trobj_i$,$\obj$)                               $\label{l:wb-release}$
}
    \end{lstlisting}

  \end{lrbox}

  \newbox{\codecommit}
  \begin{lrbox}{\codecommit}

\makeatletter
\lstset{linebackgroundcolor=%
    \rangecol{16-16}{\hlcolor}%
    \rangecol{35-35}{\hlcolor}%
    \rangecol{22-33}{\hlcolor}%
    \rangecol{37-38}{\hlcolor}%
    \rangecol{40-41}{\hlcolor},%
    firstnumber=auto 
}
\csname\@lst @SetFirstNumber\endcsname
\makeatother
    \begin{lstlisting}[name=optsva]
proc commit(Transaction $\trobj_i$) {
  for($\obj$ ∈ $\accesses{i}$) {                                $\label{l:commit-1}$
    if ($\obj$ is read-only)                                    $\label{l:commit-ro}$
      join with read_comit($\trobj_i$,$\obj$)                   $\label{l:commit-ro-wait}$
    else {
      if ($\wc{i}{\obj}$ = $\wub{i}{\obj}$) 
        join with write_buffer($\trobj_i$,$\obj$)
      else
        catch_up($\trobj_i$, $\obj$)                                  
      wait until $\pv{\obj}{i}$ - 1 = $\ltv{\obj}$              $\label{l:commit-commit}$
      if ($\pv{\obj}{i}$ - 1 = $\lv{\obj}$)
        $\lv{\obj}$ ← $\pv{\obj}{i}$                            $\label{l:commit-lv}$
      if ($\wc{i}{\obj}$ + $\rc{i}{\obj}$> 0 
           and $\rv{i}{\obj}$ = $\cv{\obj}$
           and $\pv{\obj}{i}$ - 1 > $\lv{\obj}$) 
        $\cv{\obj}$ ← $\pv{\obj}{i}$                            $\label{l:commit-cv}$
    } 
  }   
  if (∃$\objy$: $\rv{i}{\objy}$ > $\cv{\objy}$) $\label{l:commit-abort}$
    abort($\trobj_i$) and exit                                  $\label{l:commit-abort2}$ 
  for($\obj$ ∈ $\accesses{i}$)
    $\ltv{\obj}$ ← $\pv{\obj}{i}$                               $\label{l:commit-ltv}$
}
    \end{lstlisting}

  \end{lrbox}

  \newbox{\codeabort}
  \begin{lrbox}{\codeabort}
        \makeatletter
    \lstset{linebackgroundcolor=%
        \rangecol{44-52}{\hlcolor}%
        ,firstnumber=auto, 
    }
    \csname\@lst @SetFirstNumber\endcsname
    \makeatother
    \begin{lstlisting}[name=optsva]
proc abort(Transaction $\trobj_i$) {
  for($\obj$ ∈ $\accesses{i}$) {                                     $\label{l:abort-1}$
    wait until $\pv{\obj}{i}$ - 1 = $\ltv{\obj}$                 $\label{l:abort-commit}$
    if ($\wc{i}{\obj}$ > 0 
         and $\pv{\obj}{i}$ - 1 > $\lv{\obj}$ 
         and $\rv{i}{\obj}$ = $\cv{\obj}$) {                     $\label{l:abort-ro}\label{l:abort:recover-ro}$
      if ($\wc{i}{\obj}$ = $\wub{i}{\obj}$) 
        join with write_buffer($\trobj_i$,$\obj$) 
      $\obj$ ← $\stored{i}{\obj}$                               $\label{l:recover-copy}\label{l:abort:aset}$
      $\cv{\obj}$ ← $\rv{i}{\obj}$   $\label{l:recover-cv}\label{l:abort:recover-cv}$
    }
    if ($\pv{\obj}{i}$ - 1 = $\lv{\obj}$)                       $\label{l:abort-access}$
      $\lv{\obj}$ ← $\pv{\obj}{i}$                             $\label{l:abort-lv}$
    $\ltv{\obj}$ ← $\pv{\obj}{i}$                               $\label{l:abort-ltv}$
  }
}
    \end{lstlisting}

  \end{lrbox}

  \newbox{\codeaux}
  \begin{lrbox}{\codeaux}
        \makeatletter
    \lstset{
        firstnumber=auto, 
    }
    \csname\@lst @SetFirstNumber\endcsname
    \makeatother
    \begin{lstlisting}[name=optsva]
proc checkpoint(Transaction $\trobj_i$, Var $\obj$) {
  $\stored{i}{\obj}$ ← $\obj$        $\label{l:checkpoint-copy}\label{l:checkpoint:get}$
  $\rv{i}{\obj}$ ← $\cv{\obj}$       $\label{l:checkpoint-cv}$
}
proc release(Transaction $\trobj_i$, Var $\obj$) {
  $\cv{\obj}$ ← $\pv{\obj}{i}$   $\label{l:release-cv}$
  $\lv{\obj}$ ← $\pv{\obj}{i}$   $\label{l:release-lv}$
}
proc catch_up(Transaction $\trobj_i$, Var $\obj$) {
  wait until $\pv{\obj}{i}$ - 1 = $\lv{\obj}$             $\label{l:commit:access}$
  if (not checkpoint made) 
    checkpoint($\trobj_i$,$\obj$)                         $\label{l:commit-checkpoint}$
  if (∃$\objy$: $\rv{i}{\objy}$ ≠ $\cv{\objy}$)                         $\label{l:commit-abort-1}$
    abort($\trobj_i$) and exit 
  if ($\wc{i}{\obj}$ > 0)
    $\obj$ ← $\buf{i}{\obj}$ $\label{l:commit:rset}$ 
}
    \end{lstlisting}

  \end{lrbox}

  \newbox{\codeasync}
  \begin{lrbox}{\codeasync}
         \makeatletter
     \lstset{
         firstnumber=auto, 
     }
     \csname\@lst @SetFirstNumber\endcsname
     \makeatother

  \end{lrbox}

\afterpage{
    \begin{figure}
    \adjustbox{valign=t}{
        \begin{minipage}{.5\textwidth}%
        \ersatzsubfloat%
        {\moremagic\codestart} 
        \ersatzsubfloat%
        {\moremagic\coderead} 
        \ersatzsubfloat%
        {\moremagic\codewrite}
        \end{minipage}
    }
    \hfill
    \adjustbox{valign=t}{
        \begin{minipage}{.49\textwidth}%
        \ersatzsubfloat%
        {\moremagic\codecommit}
        \ersatzsubfloat%
        {\moremagic\codeabort}
        \ersatzsubfloat%
        {\moremagic\codeaux}
        \end{minipage}
    }
    \caption{\label{fig:optsva}Full listing of OptSVA.}
\end{figure}
}

\subsection{Transactional Memory System Model}

OptSVA operates in a system composed of a set of processes $\processes =
\{\proc_1, \proc_2, ..., \proc_n\}$ concurrently executing a set of finite
sequential programs $\prog = \{ \subprog_1, \subprog_2, ..., \subprog_n \}$,
where process $\proc_i$ executes $\subprog_i$.
Within these programs there is code that defines transactions.
A \emph{transaction} $\tr_i \in \transactions$ is some piece of code executed
by process $\proc_k$, as part of subprogram $\subprog_k$. Each process executes
transactions sequentially, one at a time.
Transactions contain local computations (that can be whatever) and invoke
operations on shared variables, or \emph{variables}, for short.
Each variable, denoted $\objx, \objy, \objz$ etc. supports the following
\emph{operations}, %
that allow to retrieve or modify its state:
\begin{enumerate}[a) ]
    \item \emph{write} operation (denoted $\twop{i}{\obj}{\val}$)
     that sets the
     state of $\obj$ to value $\val$;
     the operation's \emph{return value} is
     the constant $\ok_i$, indicating correct execution, or the constant $\ab_i$
     which indicates that transaction $\tr_i$ has been forcibly aborted due to
     some inconsistency,
     \item \emph{read} operation $\trop{i}{\obj}{}$ whose \emph{return value} 
     is the current state of $\obj$, or $\ab_i$ in case of a forced abort,
\end{enumerate}
In addition, transactions can execute the following operations related to themselves:
\begin{enumerate}[a) ]
    \setcounter{enumi}{2}
    \item \emph{start} (denoted $\init_i$) which initializes transaction $\tr_i$, and
        whose return value is the constant $\ok_i$,
    \item \emph{commit} (denoted $\tryC_i$) which attempts to commit $\tr_i$
        and returns either the constant $\co_i$, which signifies a
        successful commitment of the transaction or the constant $\ab_i$ in
        case of a forced abort,
    \item \emph{abort} (denoted $\tryA_i$) which aborts $\tr_i$ and returns $\ab_i$.
\end{enumerate}
The operations a--e defined above are part of the so-called transactional API. They can only be
invoked within a transaction.

Even though transactions are parts of subprograms evaluated by processes, it is 
convenient to talk about them as separate and independent entities. 
Thus, rather than saying $\proc_k$ executes some operation as part of 
transaction $\tr_i$, we will simply say that $\tr_i$ executes (or performs) 
some operation.

Transactions follow a particular life-cycle. At the outset each transaction
$\tr_i$ executes $\init_i$, following which it can execute local computations
and operations on variables. Eventually, $\tr_i$ executes either commit or
abort, which complete the execution of the transaction, following which the
transaction runs no further code. Furthermore, any operation other than start
can return $\ab_i$, which also means that the transaction was forced to
execute abort during that operation.  In addition, a transaction can execute
abort arbitrarily. In any case, the transaction can perform no other
computations following an execution of an abort.
If the transaction was forcibly aborted, though, it will be restarted by the
system, but it is easier to think of that as a separate consecutive transaction
(i.e. $\tr_j, i\neq j$) executed by the same process.

If transaction $\tr_i$ executed a read operation on variable $\objx$, we say
$\objx$ is in $\tr_i$'s \emph{read set}.
write operation on $\objx$, we say $\objx$ is in $\tr_i$'s \emph{write set}. If
$\tr_i$ executed either a read or a write operation on $\objx$, $\objx$ is in
$\tr_i$'s \emph{access set}, which we denote $\objx \in \accesses{i}$

In OptSVA specifically, executing $\init_i$ translates to executing procedure
\ppr{start}, $\tryC_i$ to executing \ppr{commit}, $\tryA_i$ to \ppr{abort},
$\twop{i}{\obj}{\val}$ to \ppr{write}, and $\trop{i}{\obj}{}$ to \ppr{read}.
If a transaction aborts as a result of some operation (returning $\ab_i$),
\ppr{abort} will also have been executed before the operation returns.

\subsection{Versioning Concurrency Control}
\label{sec:supremum-versioning}

OptSVA uses four version counters to determine whether a given transaction can
be allowed to access a particular shared variable, or whether the access should
be deferred to avoid conflicts.
The intuition behind how these counters work is by analogy to how the teller
may manage a queue in a bank: customers who come into the bank retrieve a
ticket with a number from a dispenser and wait before approaching the teller
until their number is called. Meanwhile the teller increments the number as she
finishes serving each consecutive customer.
In the analogy, each customer is a transaction, and the teller is some
resource, like a shared variable.  The number in the customer's hand is his
version for that variable, and it is being compared against the number that is
currently being served by the teller---the variable's version.
The design gets more involved as more variables and more counters are
introduced, and we explain it in detail below.

Whenever a transaction $\trobj_i$ starts, it retrieves a \emph{private version}
$\pv{\obj}{i}$ for every variable $\obj$ in its access set (lines
\ref{l:lock-a}--\ref{l:unlock-b}). The access set is assumed to be known
\emph{a priori}.
The values of private versions received by consecutive transactions are
generated from a \emph{global version} $\gv{\obj}$, which is initially $0$ and
is incremented with each starting transaction that has $\obj$ in its access
set.  Hence private versions are unique for a given variable and successive for
consecutive transactions. 
The assignment is also guarded by locks so that it is done atomically. In effect,
if one transaction $\tr_i$ has a greater private version for $\obj$ (or just
\emph{version} for $\obj$, for short) than another transaction $\tr_j$, then
all of its private versions are greater than $\tr_j$'s.

Given that private versions ascribe a link between transactions and variables,
OptSVA then uses them to permit or deny access to variables. Which transaction
can access variable $\obj$ is defined by its \emph{local version} $\lv{\obj}$.
Specifically, the local version of a variable $\obj$ is always equal to the
private version of the transaction $\tr_i$ that most recently finished working
on $\obj$, i.e. when $\tr_i$ commits or aborts it sets $\lv{\obj}$ to
$\pv{\obj}{i}$ (lines \ref{l:commit-lv} and \ref{l:abort-lv}, respectively).
The transaction that can access $\obj$ is the next transaction after the one
that stopped using $\obj$ last. That is, the one whose private version for
$\obj$ is one greater than the local version of $\obj$. Hence, in order to
access $\obj$, $\tr_i$ must wait until the condition $\pv{\obj}{i} - 1 =
\lv{\obj}$ is met (see, e.g., line \ref{l:read-access}). We will refer to this
condition as the \emph{access condition}.

An example of how this mechanism works is shown in \rfig{fig:access-control}.
The diagram depicts a history consisting of operations executed by transactions
on a time axis. Every line depicts the operations executed by a particular
transaction.
The symbol
\protect\tikz{
    \protect\draw[] (0,0) -- ++(0.25,0) node[dot] {} -- ++(0.25,0);
} 
denotes a complete operation execution.
The inscriptions above operation executions denote operations executed by the
transactions, e.g. $\trop{i}{\obj}{}\!\to\!1$ denotes that a read operation on
variable $\obj$ is executed by transaction $\tr_i$ and returns $1$, and
$\twop{i}{\obj}{1}$ denotes that a write operation writing $1$ to $\obj$ is
executed by $\tr_i$, and $\tryC_i\!\to\!\co_i$ indicates that $\tr_i$ attempts
to commit and succeeds because it returns $\co_i$.
On the other hand, the symbol 
\protect\tikz{
    \protect\draw[] (0,0) -- ++(0.1,0) node[inv] (a) {} 
                             ++(0.4,0) node[res] (b) {}
                          -- ++(0.1,0);
    \protect\draw[wait] (a) -- (b);
}
denotes an operation execution split into the invocation and the response
event to indicate waiting, or that the execution takes a long time.
In that case the inscription above is split between the events, e.g., a read
operation execution would show $\trop{i}{\obj}{}$ above the invocation, and
$\!\to\!1$ over the response.
If waiting is involved, the arrow 
\protect\tikz{
    \protect\draw[hb] (0,0.2) .. controls +(270:.25) and +(90:0.25) .. (0.5,0.0);
} 
is used to emphasize a happens before relation between two events.
Annotations below events emphasize the state of counters or performed operations
within the concurrency control algorithm (used as necessary).

In \rfig{fig:access-control}, $\trobj_i$ and $\trobj_j$ attempt to
access shared variable $\objx$ at the same time. Transaction $\trobj_i$ executes
start first, so  $\pv{\objx}{i} = 1$, and $\trobj_j$ executes second, so
$\pv{\objx}{i} = 2$. This then determines in which order the transactions
access $\objx$: initially $\lv{\objx} = 0$, so $\trobj_j$ is not able to pass
the access condition and execute its read operation. However, $\trobj_i$ can
pass the access condition and it executes its operation without delay. Once
$\trobj_i$ commits, it sets $\lv{\objx}$ to $1$, so  $\trobj_j$ then becomes
capable of passing the access condition and finishing executing its
read operation.
In the mean time, transaction $\trobj_k$ can proceed to access $\objy$
completely in parallel.

Below we describe the early release mechanism with means to ensure commit order
and forced aborts that we use to improve the effectiveness of the version
control mechanism.

\def\magicadjust{\hspace{1cm}}
\begin{figure}[t]
\magicadjust
\begin{tikzpicture}
     \draw
           (0,2.5)      node[tid]       {$\trobj_i$}
                        node[aop]       {$\init_i$} %
                        node[dot]       {}
                        node[not]       {$\pv{\objx}{i}\!\gets\!1$}

      -- ++(3,0)        node[aop]       {$\twop{i}{\objx}{1}$}
                        node[dot]       {}

      -- ++(2,0)        node[aop]       {$\tryC_i\!\to\!\co_i$}
                        node[dot] (ci)  {}         
                        node[not]       {$\lv{\objx}\!\gets\!1$}
                        ;

     \draw
         (0.25,1.25)    node[tid]       {$\trobj_{j}$}
                        node[aop]       {$\init_{j}$} %
                        node[dot]       {} 
                        node[not]       {$\pv{\objx}{j}\!\gets\!2$}

      -- ++(2.00,0)     node[aop]       {$\trop{j}{\objx}{}$}
                        node[inv] (rx0) {}
                        node[not]       {$\lv{\objx}\!=\!1?$}
 
         ++(4.00,0)     node[aop]       {$\to\!2$}
                        node[res] (rx)  {}                       

      -- ++(2.00,0)     node[aop]       {$\tryC_j\!\to\!\co_j$}
                        node[dot]       {}
                        node[not]       {$\lv{\objx}\!\gets\!1$}
                        ;
      \draw[hb] (ci) \squiggle (rx);
      \draw[wait] (rx0) -- (rx);

     \draw
           (0.5,0)      node[tid]       {$\trobj_k$}
                        node[aop]       {$\init_k$} %
                        node[dot]       {} 
                        node[not]       {$\pv{\objy}{k}\!\gets\!1$}

       -- ++(3.0,0)     node[aop]       {$\twop{k}{\objy}{1}$}
                        node[dot]       {}

      -- ++(3.00,0)     node[aop]       {$\tryC_k\!\to\!\co_k$}
                        node[dot]       {}         
                        node[not]       {}
                        ;

\end{tikzpicture}
\caption{\label{fig:access-control} Concurrency control \emph{via} versioning.}
\end{figure}
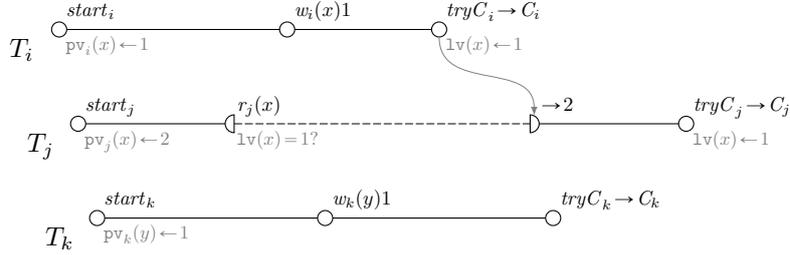

\subsubsection{Early release}
\label{sec:early-release}

The second basic feature of OptSVA is early release based on \emph{a priori}
knowledge.
Each transaction 
knows
the maximum
number of times it will read and write each individual variable at the start of execution.
(This information can be provided by the programmer, supplemented by a type
checker \cite{Woj05b}, or generated by static analysis \cite{SW12}.)
These \emph{upper bounds} for reads and writes 
are denoted for transaction $\tr_i$ and $\obj$ as, respectively, 
$\rub{i}{\obj}$ and
$\wub{i}{\obj}$. %
Then, each transaction can count accesses to each variable as they occur
using \emph{read} and \emph{write counters}: $\rc{i}{\obj}$
and $\wc{i}{\obj}$
(lines \ref{l:read-inc} and \ref{l:write-wc-inc}).
When the read or write counter for $\obj$ reaches the upper bound for $\obj$, 
the transaction knows that no further accesses of a
particular type will occur afterward. 
When it is apparent that transaction $\tr_i$ will perform no further
modifications on $\obj$, another transaction can start accessing $\obj$ right
away, without waiting for $\tr_i$ to commit. Hence, if after a write it is true
that $\wc{i}{\obj} = \wub{i}{\obj}$ (line \ref{l:write-no-reads-ub}), then
procedure \ppr{release} is (eventually) executed and $\tr_i$ sets $\lv{\obj}$
to $\pv{\obj}{i}$.

This is illustrated further in \rfig{fig:early-release}. Here, transaction
$\trobj_i$ and transaction $\trobj_j$ both try to access $\objx$. Like in
\rfig{fig:access-control}, since $\trobj_i$'s private version for $\objx$ is
lower than $\trobj_j$'s, the former manages to access $\objx$ first, and
$\trobj_j$ waits until $\trobj_i$ is released. Unlike in
\rfig{fig:access-control}, $\trobj_i$ has upper bound information on writes
on $\objx$ via $\wub{i}{\objx}$: it knows that it will not
read from $\obj$ and will write to $\objx$ at most once.
So, $\trobj_i$ releases $\objx$ immediately after its write to $\objx$, rather than
waiting to do so until commit. In effect, $\trobj_j$ can access $\objx$
earlier.

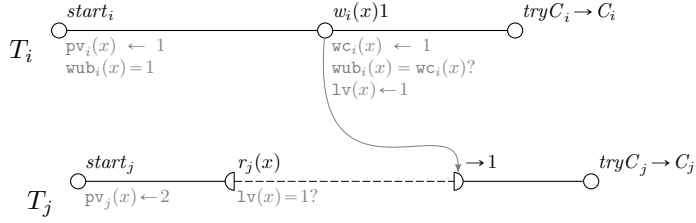
\begin{figure}[t]
\magicadjust
\begin{tikzpicture}
     \draw
           (0,2.5)      node[tid]       {$\trobj_i$}
                        node[aop]       {$\init_i$} %
                        node[dot]       {} 
                        node[not, text width=3cm] 
                                        {$\pv{\objx}{i}\!\gets\!1$
                                         $\wub{i}{\objx}\!=\!1$
                                         }

      -- ++(3.5,0)      node[aop]       {$\twop{i}{\objx}{1}$}
                        node[dot] (wx)  {}
                        node[not, text width=3cm]       
                                        {$\wc{i}{\objx}\!\gets\!1$
                                         $\wub{i}{\objx}\!=\!\wc{i}{\objx}?$
                                         $\lv{\objx}\!\gets\!1$}

      -- ++(2.5,0)      node[aop]       {$\tryC_i\!\to\!\co_i$}
                        node[dot] (ci)  {}         
                        ;

     \draw
         (0.25,0.5)     node[tid]       {$\trobj_{j}$}
                        node[aop]       {$\init_{j}$} %
                        node[dot]       {} 
                        node[not]       {$\pv{\objx}{j}\!\gets\!2$}

      -- ++(2.00,0)     node[aop]       {$\trop{j}{\objx}{}$}
                        node[inv] (rx0) {}
                        node[not]       {$\lv{\objx}\!=\!1?$}
 
         ++(3.00,0)     node[aop]       {$\to\!1$}
                        node[res] (rx)  {}                       

      -- ++(1.75,0)     node[aop]       {$\tryC_j\!\to\!\co_j$}
                        node[dot]       {}
                        node[not]       {}
                        ;

      \draw[hb] (wx) .. controls +(265:2.5) and +(90:1) .. (rx);
      \draw[wait] (rx0) -- (rx);

\end{tikzpicture}
\caption{\label{fig:early-release} Early release \emph{via} upper bounds.}
\end{figure}

\subsubsection{Commit Order}

Although OptSVA is pessimistic and prevents transactions from aborting on
conflict, for expressiveness, it allows the programmer to manually invoke the abort operation.
Therefore, it is possible for any transaction spontaneously to abort in effect.
Such manual aborts can be useful to the programmer to implement conditional
rollbacks, and to the system to recover from failures (e.g., in distributed
TM).
However, this design decision also makes it necessary to
enforce the order in which transactions commit
to prevent a situation where transaction $\tr_i$ releases $\obj$ early
and subsequently aborts, but before $\tr_i$ does abort, $\tr_j$ reads $\obj$ and commits.
That would mean that $\tr_j$ committed having acted on an invalid, inconsistent
value of $\obj$, which is incorrect behavior (e.g., according to serializability
\cite{Pap79}).

OptSVA prevents that sort of erroneous situation by ordering commits in the same order as
accesses to variables and forcing an abort if the previous transaction also aborted.
This is instituted through \emph{local terminal versions}, denoted
$\ltv{\obj}$, that specify which transaction that used $\obj$ last completed by
either committing or aborting. I.e., each transaction writes its private version
to $\ltv{\obj}$, if it only used $\obj$, just before finishing the \ppr{commit} or
\ppr{abort} procedure (lines \ref{l:commit-ltv} and \ref{l:abort-ltv}).
Then, the local terminal version is used to enforce commit order, as every
committing or aborting transaction must wait at \emph{termination condition} $\pv{\obj}{i} - 1 =
\ltv{\obj}$ (lines \ref{l:commit-commit} and \ref{l:abort-commit}) before it
can commit or abort.  Thus, a transaction that accesses $\obj$ does not
complete until the last transaction that previously accessed $\obj$ commits or
aborts.

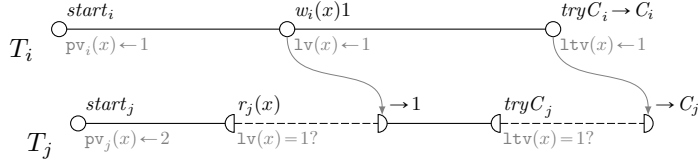
\begin{figure}[t]
\magicadjust
\begin{tikzpicture}
     \draw
           (0,2.5)      node[tid]       {$\trobj_i$}
                        node[aop]       {$\init_i$} %
                        node[dot]       {} 
                        node[not]       {$\pv{\objx}{i}\!\gets\!1$}

      -- ++(3.0,0)      node[aop]       {$\twop{i}{\objx}{1}$}
                        node[dot] (wx)  {}
                        node[not]       {$\lv{\objx}\!\gets\!1$}

      -- ++(3.5,0)     node[aop]       {$\tryC_i\!\to\!\co_i$}
                        node[dot] (ci)  {}
                        node[not]       {$\ltv{\objx}\!\gets\!1$}         
                        ;

     \draw
         (0.25,1.25)    node[tid]       {$\trobj_{j}$}
                        node[aop]       {$\init_{j}$} %
                        node[dot]       {} 
                        node[not]       {$\pv{\objx}{j}\!\gets\!2$}

      -- ++(2.00,0)     node[aop]       {$\trop{j}{\objx}{}$}
                        node[inv] (rx0) {}
                        node[not]       {$\lv{\objx}\!=\!1?$}
 
         ++(2.00,0)     node[aop]       {$\to\!1$}
                        node[res] (rx)  {}                       

      -- ++(1.50,0)     node[aop]       {$\tryC_j$}
                        node[inv] (cj0) {}
                        node[not]       {$\ltv{\objx}\!=\!1?$} 

         ++(2.,0)     node[aop]       {$\to\!\co_j$}
                        node[res] (cj)  {}
                        node[not]       {}
                        ;

      \draw[hb] (wx) \squiggle (rx);
      \draw[hb] (ci) \squiggle (cj);
      \draw[wait] (rx0) -- (rx);
      \draw[wait] (cj0) -- (cj);

\end{tikzpicture}
\caption{\label{fig:commit-order} Commit ordering preservation.}
\end{figure}

An example of how this mechanism affects execution is shown in
\rfig{fig:commit-order}. 
Here, $\trobj_i$, $\trobj_j$ both access $\objx$ and they respectively get the
values  of $1$ and $2$ of $\pvf$ for $\objx$.  Transaction $\trobj_i$ accesses
$\objx$ first and releases it early, setting $\lv{\objx}$ to $1$. This allows
$\trobj_j$ to pass the access condition and read $\objx$. Transaction
$\trobj_j$ subsequently attempts to commit. However, in order to commit
$\trobj_j$ must pass the termination condition $\pv{\objx}{j} - 1 =
\ltv{\objx}$, which will not be satisfied until $\trobj_i$ sets $\ltv{\objx}$
to its own $\pv{\objx}{i}$. Hence $\trobj_j$ can only complete to commit
after $\trobj_i$ commits.
In general, this condition is checked for every variables in the access set $\accesses{j}$ of transaction $\tr_j$.
In effect the commit order preserves the private version order of transactions
for every variable.

\subsubsection{Forced Aborts}
Furthermore, if some transaction reads a value written by another, and the latter aborts, 
then the former cannot be allowed to commit having possibly acted upon inconsistent state. 
Hence, the transaction must be forced to abort.

To enforce aborts, OptSVA marks which version of a variable is the last
consistent version via its \emph{current version} $\cv{\obj}$---a counter
shared by all transactions.
This is used in conjunction with its \emph{recovery version}
$\rv{i}{\obj}$---the last consistent version seen by $\tr_i$---to check
whether transaction $\tr_i$ is using a consistent (current) version or not.
Whenever transaction $\trobj_i$ gains access to
shared variable $\obj$ for the first time, it runs procedure \ppr{checkpoint}
where it reads the state of $\obj$
and stores it in its buffer $\stored{i}{\obj}$
(line \ref{l:checkpoint-copy}).
It then
sets its recovery version for $\obj$ $\rv{i}{\obj}$ to $\obj$'s current
version (line \ref{l:checkpoint-cv}).
Since $\cv{\obj}$ is set to some transaction $\tr_j$'s private version for
$\obj$ (line \ref{l:release-cv}) whenever $\tr_j$ releases $\obj$ or commits
(lines \ref{l:release-cv} and \ref{l:commit-cv}), then $\rv{i}{\obj}$ is equal
to the private version of a transaction that most recently finished operating
on $\obj$.
Then, whenever some transaction $\trobj_j$ aborts and restores shared variable
$\obj$ from the backup copy (line \ref{l:abort:aset})
it also sets $\cv{\obj}$
back to $\rv{j}{\obj}$ (line \ref{l:abort:recover-cv}).

In addition, whenever a transaction tries to commit or access a shared
variable, it must test the consistency of all the variables it operates on. Thus,
e.g., $\tr_i$ cannot proceed to access $\objx$ unless $\rv{i}{\objy} = \cv{\objy}$ for each variable $\objy$ in its access set,
and must abort otherwise (e.g. lines \ref{l:read-abort}--\ref{l:read-abort-go}). Similarly, if
$\trobj_i$ attempts to commit, there must be no variable $\objy$ in its access set
for which $\rv{i}{\objy} > \cv{\objy}$, or $\trobj_i$ must be forced to abort
(line \ref{l:commit-abort}--\ref{l:commit-abort2}).
Hence, if $\trobj_i$ gains access to $\obj$ after the
previous $\trobj_j$ releases it, and $\trobj_j$ subsequently
aborts and sets $\cv{\obj}$ to a new (lesser) value $\rv{j}{\objx}$, then $\trobj_i$ will be forced to
abort either when accessing $\objx$ later, or when attempting to commit.
The condition for aborting is always checked for all variables rather than just
the one being accessed, in order to abort as quickly as possible, and to prevent the
transaction from operating on both consistent and invalidated variables
simultaneously.

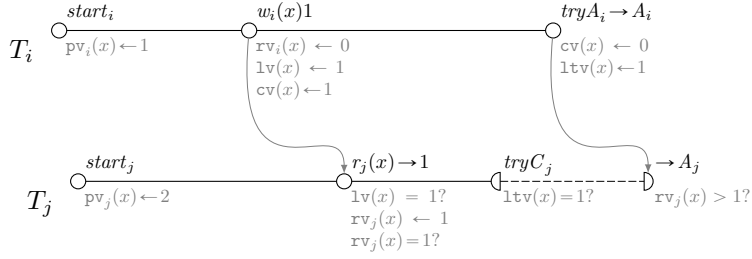
\begin{figure}[t]
\magicadjust
\begin{tikzpicture}
     \draw
           (0,2.5)      node[tid]       {$\trobj_i$}
                        node[aop]       {$\init_i$} %
                        node[dot]       {} 
                        node[not]       {$\pv{\objx}{i}\!\gets\!1$}

      -- ++(2.5,0)      node[aop]       {$\twop{i}{\objx}{1}$}
                        node[dot] (wx)  {}
                        node[not, text width=2.5cm]
                                        {   $\rv{i}{\objx}\!\gets\!0$
                                            $\lv{\objx}\!\gets\!1$
                                            $\cv{\objx}\!\gets\!1$
                                         }

      -- ++(4,0)     node[aop]       {$\tryA_i\!\to\!\ab_i$}
                        node[dot] (ci)  {}
                        node[not, text width=2.5cm]
                                        {   $\cv{\objx}\!\gets\!0$
                                            $\ltv{\objx}\!\gets\!1$}         
                        ;

     \draw
         (0.25,0.50)    node[tid]       {$\trobj_{j}$}
                        node[aop]       {$\init_{j}$} %
                        node[dot]       {} 
                        node[not]       {$\pv{\objx}{j}\!\gets\!2$}

      -- ++(3.5,0)      node[aop]       {$\trop{j}{\objx}{}\!\to\!1$}
                        node[dot] (rx)  {}          
                        node[not, text width=2.75cm]
                                        {
                                            $\lv{\objx}\!=\!1?$
                                            $\rv{j}{\objx}\!\gets\!1$
                                            $\rv{j}{\objx}\!=\!1?$
                                        }             

      -- ++(2,0)     node[aop]       {$\tryC_j$}
                        node[inv] (cj0) {}
                        node[not]       {$\ltv{\objx}\!=\!1?$} 

         ++(2.0,0)     node[aop]       {$\to\!\ab_j$}
                        node[res] (cj)  {}
                        node[not]       {$\rv{j}{\objx} > 1?$}
                        ;

      \draw[wait] (cj0) -- (cj);
      \draw[hb] (wx) .. controls +(265:2.5) and +(90:1) .. (rx);
      \draw[hb] (ci) .. controls +(265:2.5) and +(90:1) .. (cj);

\end{tikzpicture}
\caption{\label{fig:forced-abort} Forced abort.}
\end{figure}

We show an example of this in \rfig{fig:forced-abort}. Here, $\trobj_i$ and
$\trobj_j$ access $\objx$ and have private values for $\objx$ equal to
$1$ and $2$, respectively. Hence $\trobj_i$ accesses $\objx$ first. As this is
executed $\trobj_i$ sets its recovery version to $0$, the value of the
current version  for $\objx$. Then, after the write
operation finishes executing, the transaction releases $\objx$ by setting the
local version to $1$ and sets the current version  to its own private
version, i.e. $1$. Subsequently $\trobj_j$ meets the access condition and accesses
$\objx$ for the first time, setting its own recovery version to $1$ (as
$\cv{\objx} = 1$). Since $\rv{i}{\objx} =
\cv{\objx}$, the access is successful.  However, as $\trobj_j$ tries to commit,
it is delayed because it cannot satisfy the commit condition. Meanwhile
transaction $\trobj_i$ aborts. As it does so, it sets the current version
 to its recovery version  equal to $0$.  Then, $\trobj_i$ sets
the local terminal version to its own private version, allowing $\trobj_j$ to
resume committing. However, $\trobj_j$ 
cannot satisfy the condition 
$\rv{j}{\objx} > \cv{\objx}$ during commit, since $\rv{i}{\objx} = 1$ and
$\cv{\objx} = 0$. Hence, $\trobj_j$ is forced to abort.

Note that if no programmatic aborts occur in the system, the system does not
experience any forced aborts whatsoever. If the programmer does issue manual
aborts, however, cascading aborts can occur.

Below we proceed to describe the three optimizations with respect to the version control mechanism that are employed by OptSVA.

\subsection{Read-only Variables}
\label{sec:read-only-variables}

Since originally versioning algorithms did not distinguish between reads and
writes, they did not allow read-only transactions to be executed in parallel to
other read-only transactions.
This is a run-of-the-mill optimization found in all but a small number of TMs,
so it is also introduced in OptSVA. However, OptSVA goes a
step further, and allows partial parallelization of transactions whenever a
variable in a transaction is only read from and not written to, without
requiring that all the variables in a transaction are not written to.

Whenever transaction $\tr_i$ accesses $\obj$ in such a way that it reads from
$\obj$ but does not write to $\obj$ (and this is known \emph{a priori}---i.e.,
$\rub{i}{\obj} > 0$ and $\wub{i}{\obj} = 0$), we will refer to $\obj$ as being
a \emph{read-only variable} in $\tr_i$.
In the case of such variables, OptSVA can optimize the accesses by buffering
the variable and reading the buffer instead of the actual variable.
In addition, since all the reads will be done using the buffer, and the upper
bounds indicate that no writes will follow, the variable can be released after
it is buffered, irrespective of what operations the transaction will execute
later.

Obviously, it is best for parallelism to release any variable as soon as it is
no longer needed by a transaction, because it allows other transactions to start
acting sooner.
Since read-only variables are not needed after they are
buffered, they can be released immediately after this happens. The variable
must be buffered before or during the first read operation on it is executed,
but it could be buffered before that point, even during transaction start.
However, in order to buffer a variable, its state must be viewed, so, for the
sake of consistency, buffering within versioning concurrency control must be
done only after the transaction passes the access condition. 
Since waiting at the access condition would prevent the transaction from
executing operations on other variables or performing local computations, it is
best for parallelism for the transaction not to start waiting until it is
absolutely necessary.

The algorithm finds balance between buffering as soon as possible and delaying synchronization
much as necessary by executing it asynchronously.
This is achieved by using the  {\tt async run P when C} construct which relegates the execution of procedure {\tt P} to some separate thread. However, before the thread starts executing {\tt P} it waits until condition {\tt C} is satisfied.
This allows the transaction
to wait at condition {\tt C} without preventing the procedure from delaying
other operations that could be executing in the mean time. 
On the other hand, {\tt P} will be executed as soon as {\tt C} is satisfied, so
as soon as it is safe.

OptSVA executes buffering via procedure \ppr{read\_buffer}. This
procedure is relegated to asynchronous execution at lines
\ref{l:reads-c}--\ref{l:reads-b}, and will execute once the access condition is
satisfied.
Within \ppr{read\_buffer}, the transaction $\tr_i$ saves the value of some
variable $\obj$ to its buffer $\buf{i}{\obj}$ (line \ref{l:rb-copy}), and
releases it immediately afterward by executing \ppr{release} (line
\ref{l:rb-release}). Since it is possible that the transaction that wrote the
value of $\obj$ that is being buffered will subsequently abort, $\tr_i$ also
updates its recovery value (line \ref{l:rb-checkpoint}), but it does not need
to make a checkpoint for $\obj$, since the transaction will not modify $\obj$.
Once read-only variable $\obj$ is buffered, read operations can use the buffer
to retrieve that value, without accessing the variable (line
\ref{l:read-only:return}), so without waiting. However, a read on a read-only
variable cannot be executed until buffering is finished (line
\ref{l:read-ro-wait}), which we indicate using the {\tt join with P} construct.

Since a transaction does not modify a read-only variable, if it aborts, it does
not need to force other transactions to abort to maintain consistency.
Hence, the transaction tries to
immediately perform all commit-related operations for a read-only variable
immediately after buffering it. This involves waiting for the local terminal
version of the object, so by analogy to buffering, the procedure is executed
asynchronously, so as not to block other operations.
The procedure that executes the commit for variable $\obj$ is {\tt read\_commit} 
and it is started asynchronously at line \ref{l:rb-async}. The procedure
executes a simplified version of \ppr{commit} for just $\obj$. Hence, once
\ppr{commit} is executed by the transaction for other variables, it can be
skipped for $\obj$, and the transaction simply waits for \ppr{read\_commit} to
finish executing.

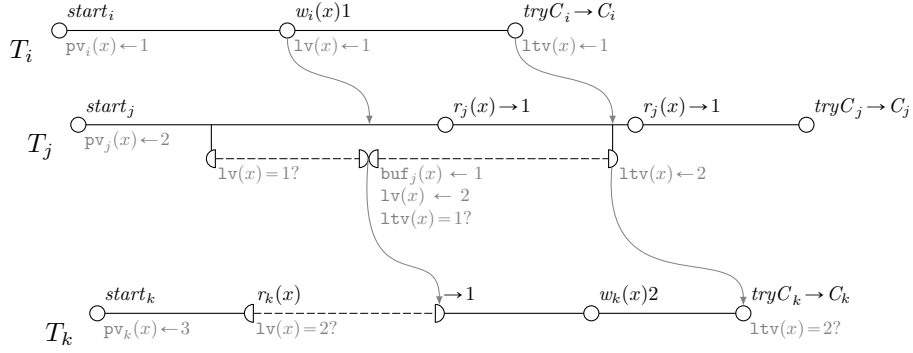
\begin{figure}[t]
\begin{center}
\begin{tikzpicture}
     \draw
           (0,2.5)      node[tid]       {$\trobj_i$}
                        node[aop]       {$\init_i$} %
                        node[dot]       {} 
                        node[not]       {$\pv{\objx}{i}\!\gets\!1$}

      -- ++(3.0,0)      node[aop]       {$\twop{i}{\objx}{1}$}
                        node[dot]  (wx) {}
                        node[not]       {$\lv{\obj}\!\gets\!1$}

      -- ++(3.0,0)      node[aop]       {$\tryC_i\!\to\!\co_i$}
                        node[dot]  (ci) {}
                        node[not]       {$\ltv{\objx}\!\gets\!1$}
                        ;

     \draw
          (0.25,1.25)   node[tid]       {$\trobj_j$}
                        node[aop]       {$\init_j$} %
                        node[dot]       {} 
                        node[not]       {$\pv{\objx}{j}\!\gets\!2$}

      -- ++(1.75,0)      node    (fork1) {}
      -- ++(2.08,0)      node    (join1) {}

      -- ++(1,0)      node[aop]       {$\trop{j}{\objx}{}\!\to\!1$}
                        node[dot]       {}
      
      -- ++(2.2,0)      node    (join2) {}

      -- ++(0.3,0)      node[aop]       {$\trop{j}{\objx}{}\!\to\!{1}$}
                        node[dot]       {}

      -- ++(2.25,0)     node[aop]       {$\tryC_j\!\to\!\co_j$}
                        node[dot] (cj)  {}
                        ;

      \draw
           (2.0,.8)   node[inv] (ro1) {} %
                        node[not]       {$\lv{\objx}\!=\!1?$}
 
          ++(2,0)    node[res] (ro2) {}

          ++(.14,0)     node[inv] (ro3) {}
                        node[not, text width=2.5cm]       {
                            $\buf{j}{\objx}\!\gets\!1$
                            $\lv{\objx}\!\gets\!2$
                            $\ltv{\objx}\!=\!1?$
                            }
          ++(3.14,0)    node[res] (ro4) {}
                        node[not]       {$\ltv{\objx}\!\gets\!2$}
                        ;
                       
      \draw[] (fork1.center) -- (ro1);
      \draw[] (ro4) -- (join2.center);
      \draw[hb] (ci) \squiggle (join2.center);
      \draw[wait] (ro1) -- (ro2);
      \draw[wait] (ro3) -- (ro4);
      \draw[hb] (wx) \squiggle (join1.center);

     \draw
         (0.5,-1.25)     node[tid]       {$\trobj_{k}$}
                        node[aop]       {$\init_{k}$} %
                        node[dot]       {} 
                        node[not]       {$\pv{\objx}{k}\!\gets\!3$}

      -- ++(2.0,0)      node[aop]       {$\trop{k}{\objx}{}$}
                        node[inv] (rx0) {}                     
                        node[not]       {$\lv{\objx}\!=\!2?$}

         ++(2.5,0)      node[aop]       {$\!\to\!1$}
                        node[res] (rx2) {} 
                       
      -- ++(2.,0)      node[aop]       {$\twop{k}{\objx}{2}$}
                        node[dot]        {}                     

       -- ++(2.,0)     node[aop]       {$\tryC_k\!\to\!\co_k$}
                        node[dot] (ck)  {}
                        node[not]       {$\ltv{\objx}\!=\!2?$} 
                        ;

      \draw[hb] (ro2.east) .. controls +(265:2) and +(90:1) .. (rx2);
      \draw[hb] (ro4) .. controls +(265:2) and +(90:1) .. (ck);
      \draw[wait] (rx0) -- (rx2);
\end{tikzpicture}
\end{center}
\caption{\label{fig:read-only-opt} Read-only variable. 
}
\end{figure}

We show an example of an execution of a transaction  $\trobj_j$ with
a read-only variable $\objx$ in \rfig{fig:read-only-opt}.
Transaction $\trobj_j$
asynchronously waits for the access condition on $\objx$ to be met right after
$\tr_j$ starts, but before any reads actually occur.
A parallel
line below transaction (such as the one below $\trobj_j$) indicates procedures
executed asynchronously with respect to the thread executing the transaction.
Meanwhile $\trobj_j$ can
perform local operations or operations on other variables without obstacle.
Once $\trobj_i$ releases $\objx$, $\trobj_j$ immediately buffers $\objx$,
and releases it. Then $\trobj_j$ asynchronously
tries to commit $\objx$, which requires that it waits for the appropriate local terminal version of $\objx$.  Meanwhile $\trobj_k$ can now access $\objx$ in parallel to $\trobj_j$
and even write to it, without interfering with $\trobj_j$'s consistency.
Once $\trobj_i$
commits, $\trobj_j$ can then asynchronously commit $\objx$, which then allows
$\trobj_k$ to commit earlier than it would have otherwise.
Since $\trobj_j$ treats $\objx$ as read-only and hence releases it earlier,
transaction $\trobj_k$ is able to execute its operations much sooner, and thus
shorten the total execution time of the three transactions.

From the example it is apparent, that the read-only variable optimization moves the point at which such a
variable is acquired, released, and committed forward in time.
The earlier a shared variable is released by a transaction, the earlier another
transaction can start using it, increasing the possibility of acting in
parallel, and, therefore, shortening the schedule of execution.

\subsection{Delayed Synchronization on First Write}
\label{sec:delayed-synchronization-on-first-write}

If the first operation that a transaction executes on a particular shared
variable is a write operation, then all read operations on that variable are
\emph{local}, i.e., they only need to view what the current transaction wrote,
and can ignore writes by other transactions. Hence, there is no need for the
transaction to synchronize on this variable with other transactions for the
sake of those operations. The synchronization is only needed to prevent the
current transaction from writing a value to the variable in the middle
of another transaction's operations on it. But if the write is saved to a
buffer, rather than immediately updating the state of the variable, the
synchronization can be delayed until after the write itself, or even after any
of the successive read operations.

Since it is beneficial to synchronize as late as possible while performing
other tasks beforehand, OptSVA then never checks access conditions on writes
(see procedure \ppr{write}): either the transaction started with a write, and
no synchronization is necessary, or there was a preceding read that already did
all the necessary synchronization. Instead, the operation is performed on a
buffer (line \ref{l:write:set-buf}). Then, since all the written values are
only visible to the current transaction, the transaction must at some point
update the state of the actual variable. This is done either upon executing the
last write or during commit.
In the former case, when the upper bound on writes is reached (line
\ref{l:write-ub}), the transaction asynchronously starts procedure
\ppr{write\_buffer} (line \ref{l:write-async}), which executes when the access
condition is met, and updates the state of the variable (line
\ref{l:write-buffer:rset}).
If the upper bound is not reached during execution, the transaction will
instead execute procedure \ppr{catch\_up} during commit, and update the
variable there (line \ref{l:commit:rset}), also after waiting at the access
condition (line \ref{l:commit:access}).

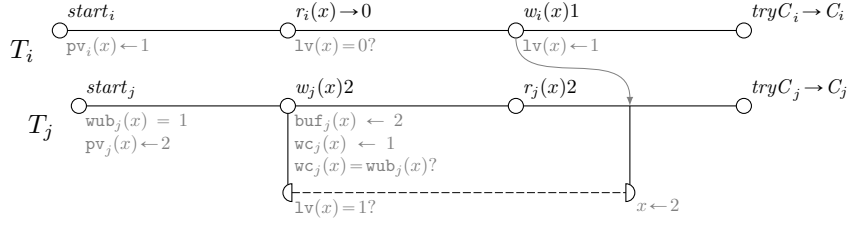
\begin{figure}[t]
\begin{center}
\begin{tikzpicture}
     \draw
           (0,3.5)      node[tid]       {$\trobj_i$}
                        node[aop]       {$\init_i$} %
                        node[dot]       {} 
                        node[not]       {$\pv{\objx}{i}\!\gets\!1$}

      -- ++(3,0)        node[aop]       {$\trop{i}{\objx}{}\!\to\!0$}
                        node[dot]  (rx) {}
                        node[not]       {$\lv{\obj}\!=\!0?$}

      -- ++(3,0)        node[aop]       {$\twop{i}{\objx}{1}$}
                        node[dot]  (wx) {}
                        node[not]       {$\lv{\obj}\!\gets\!1$}

      -- ++(3,0)        node[aop]       {$\tryC_i\!\to\!\co_i$}
                        node[dot]  (ci) {}
                        ;

     \draw
           (0.25,2.5)   node[tid]       {$\trobj_j$}
                        node[aop]       {$\init_j$} %
                        node[dot]       {} 
                        node[not, text width=2.75cm]
                                        {
                                        $\wub{j}{\objx}\!=\!1$
                                        $\pv{\objx}{j}\!\gets\!2$}

      -- ++(2.75,0)     node[aop]       {$\twop{j}{\obj}{2}$}
                        node[dot] (wjx)      {}
                        node[not, text width=3.0cm]       {
                            $\buf{j}{\objx}\!\gets\!2$
                            $\wc{j}{\objx}\!\gets\!1$
                            $\wc{j}{\objx}\!=\!\wub{j}{\objx}?$
                            }
      
      -- ++(3,0)      node[aop]         {$\trop{j}{\objx}{2}$}
                        node[dot] (rx0j){}

      -- ++(1.5,0)        node (join) {}

      -- ++(1.5,0)     node[aop]       {$\tryC_j\!\to\!\co_j$}
                        node[dot] (cj) {}
                        ;
      \draw
           (3,1.35)   node[inv] (ro1) {} %
                        node[not]     {$\lv{\objx}\!=\!1?$}
 
          ++(4.5,0)    node[res] (ro2) {}
                        node[not]      {$\objx\!\gets\!2$}
                        ;

      \draw[] (wjx) -- (ro1);
      \draw[] (ro2) -- (join.center);
      \draw[wait] (ro1) -- (ro2);
      \draw[hb] (wx) \squiggle (join.center);

\end{tikzpicture}
\end{center}
\caption{\label{fig:initial-writes-opt} Delayed synchronization on first write. 
}
\end{figure}

We illustrate this optimization further in \rfig{fig:initial-writes-opt}. Here
transaction $\trobj_i$ can pass access condition for $\objx$ first, but
nevertheless $\trobj_j$ performs a write simultaneously, since it writes to the
buffer rather than wait at the access condition.
Transaction $\trobj_j$ only waits at the access condition when it had performed
all of its write operations (of which there is one) and starts a separate
thread (indicated by the line below) to write the changes to the variable once the access condition is passed.
The thread passes the access condition once $\trobj_i$ releases $\objx$. Then,
$\trobj_j$ applies the value from the  buffer to $\objx$.

\subsection{Early Release on Last Write}
\label{sec:early-release-on-last-write}

Various TMs with early release determine the point at which variables are
released variously. For instance, DATM \cite{RRHW09} releases variables after
each operation,
erring on the side of efficiency and guaranteeing only
conflict-serializability. SVA, on the other hand, errs on the side of caution
and only allows early release after last access to some variable, which it must
do because it treats read and write operations uniformly. OptSVA improves on this,
since it distinguishes between reads and writes, so early release is done
after last write not last access. 
In effect all reads following last write are executed as if privatized.
We argue in \cite{SW15-arxiv} that this
approach is a solid compromise for TMs with early release.

The early release happens
if at some point in the execution of transaction $\tr_i$, the
upper bound on the number of writes for some variable $\objx$ is reached when performing a write (line \ref{l:write-ub}).
The transaction asynchronously executes \ppr{write\_buffer} in that
instance for the purpose of applying the changes from the buffer to the actual
shared variable. After this is done, $\objx$ will no longer be accessed
directly by the transaction, so $\tr_i$ also executes \ppr{release} (at line
\ref{l:wb-release}), which sets $\lv{\obj}$ to $\pv{\objx}{i}$, which allows
other transactions to pass the access condition.
Nevertheless, since $\obj$ was buffered during writes, subsequent reads still have access to
a local, consistent value of $\objx$ (retrieved from the buffer at line \ref{l:local-read:return}).

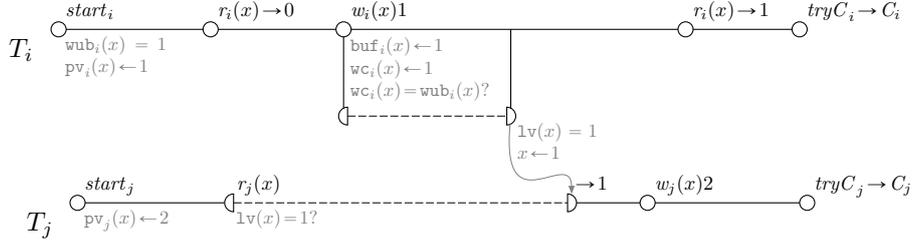
\begin{figure}[t]
\begin{center}

\begin{tikzpicture}
     \draw
           (0,2.5)      node[tid]       {$\trobj_i$}
                        node[aop]       {$\init_i$} %
                        node[dot]       {} 
                        node[not, text width=2.75cm]
                                        {$\wub{i}{\objx}\!=\!1$
                                        $\pv{\objx}{i}\!\gets\!1$}

      -- ++(2,0)     node[aop]       {$\trop{i}{\objx}{}\!\to\!0$}
                        node[dot] (rxi1){}
                        node[not]       {}

      -- ++(1.75,0)     node[aop]       {$\twop{i}{\objx}{1}$}
                        node[dot] (wxi) {}
                        node[not] {
                            \begin{multinote}
                            \hspace{-6pt}$\buf{i}{\objx}\!\gets\!1$\\%\objx$
                            \hspace{-6pt}$\wc{i}{\objx}\!\gets\!1$\\
                            \hspace{-6pt}$\wc{i}{\objx}\!=\!\wub{i}{\objx}?$\\
                            \end{multinote}
                        }

      -- ++(2.2, 0)      node (join) {}                        

      -- ++(2.3,0)     node[aop]       {$\trop{i}{\objx}{}\!\to\!1$}
                        node[dot] (rxi1){}
                        node[not]       {}

      -- ++(1.5,0)     node[aop]       {$\tryC_i\!\to\!\co_i$}
                        node[dot] (ci)  {}
                        ;

      \draw
           (3.75,1.35)  node[inv] (ro1) {} %
                        node[not]     {}
 
          ++(2.2,0)     node[res] (ro2) {}
                        node[not, text width=2cm]      {
                        $\lv{\objx}\!=\!1$
                        $\objx\!\gets\!1$}
                        ;

      \draw
           (0.25,0.2)   node[tid]       {$\trobj_j$}
                        node[aop]       {$\init_j$} %
                        node[dot]       {} 
                        node[not]       {$\pv{\objx}{j}\!\gets\!2$}

      -- ++(2,0)        node[aop]       {$\trop{j}{\objx}{}$}
                        node[inv]  (rxj0) {}
                        node[not]       {$\lv{\obj}\!=\!1?$}

         ++(4.5,0)      node[aop]       {$\!\to\!1$}
                        node[res]  (rxj)     {}

      -- ++(1,0)     node[aop]       {$\twop{j}{\objx}{2}$}
                        node[dot]  (wx) {}

      -- ++(2.1,0)     node[aop]       {$\tryC_j\!\to\!\co_j$}
                        node[dot]  (ci) {}
                        ;                       

     \draw[wait] (rxj0) -- (rxj);
     \draw[hb] (ro2) .. controls +(265:1.5) and +(90:.75) .. 
                (rxj);

     \draw[] (wxi) -- (ro1);
     \draw[] (ro2) -- (join.center);
     \draw[wait] (ro1) -- (ro2);
\end{tikzpicture}

\end{center}
\caption{\label{fig:final-writes-opt} Early release on last write. 
}
\end{figure}

This is illustrated in \rfig{fig:final-writes-opt}. Here, $\trobj_i$ knows
\emph{a priori} that it will write to $\objx$ at most once, since
$\wub{i}{\obj} = 1$. Hence, after the one write to $\objx$, a separate thread
is started which releases $\objx$ by setting $\lv{\obj}$ to $1$. Since $\tr_i$
passes the access condition, this happens almost instantaneously (the figure
shows a wait time merely for the reason of aesthetics).
Once $\objx$ is released in this fashion, $\tr_j$, whose private version for $\obj$ is $2$, can execute its own read and write operations on $\obj$ freely. Nevertheless, $\tr_i$ can continue to execute reads on $\obj$ after releasing $\obj$, and since the value of $\obj$ is read from $\tr_i's$ buffer, $\tr_j$'s operations do not interfere.

\section{Interleaving Comparison}
\label{sec:comparison}
\label{sec:comp}

In this section we compare the interleavings, or \emph{histories}, admitted by
OptSVA to those admitted by its predecessor, the Supremum Versioning Algorithm
(SVA). SVA (with rollback support) is described in detail in
\cite{SW13,SW15-ijpp}. In short, it amounts to the mechanisms described in
\rsec{sec:supremum-versioning}, without the optimizations described in
Sections \ref{sec:read-only-variables}--\ref{sec:early-release-on-last-write}.

\subsection{Preliminaries}

In order to compare the interleavings of the two algorithms, let us first
provide definitions of transactional histories and relevant ancillary concepts
that extend the transactional system model defined in the previous section.

\subsubsection{Traces and Operation Executions}

Given program $\prog$ and a set of processes $\processes$, we denote an
execution of $\prog$ by $\processes$ as $\exec{\prog}{\processes}$.
An execution entails each process $\proc_k \in \processes$ evaluating some
prefix of subprogram $\subprog_k \in \prog$.
The evaluation of each statement $\stmt \in \subprog_k$ by any process is deterministic.
This evaluation produces a (possibly empty) sequence of events (steps)
which we denote $\stmteval{\stmt}$.

Furthermore by $\stmteval{\subprog_k}$ we denote a sequence s.t. given $\stmt_1,
\stmt_2, ..., \stmt_m = \subprog_k$, $\stmteval{\subprog_k} =
\stmteval{\stmt_1} \cdot \stmteval{\stmt_2} \cdot ... \cdot
\stmteval{\stmt_m}$.
By extension, $\exec{\prog}{\processes}$ produces a sequence of events,
which we call a trace $\trace$: $\trace \producedby \exec{\prog}{\processes}$
iff $\forall \proc_k \in \processes, \subprog_k \in \prog, \stmteval{\subprog_k}
\subseteq \trace$.
$\exec{\prog}{\processes}$ is concurrent, 
i.e. while the statements in subprogram $\subprog_k$ are evaluated sequentially by
a single process, the evaluation of statements by different processes can be
arbitrarily interleaved.
Hence, given $\trace \producedby \exec{\prog}{\processes}$ and $\trace'
\producedby \exec{\prog}{\processes}$, it is possible that $\trace \neq
\trace'$.
We call $\exec{\prog}{\processes}$ a \emph{complete} execution if each process
$\proc_k$ in $\processes$ evaluates all of the statements in $\subprog_k$.
Otherwise, we call $\exec{\prog}{\processes}$ a \emph{partial} execution.  By
extension, if $\exec{\prog}{\processes}$ is a complete execution, then $\trace
\producedby \exec{\prog}{\processes}$ is a \emph{complete} trace.

In order to execute some transactional operation $o$ on variable $\obj$ within
transaction $\tr_i$, process $\proc_k$
issues an \emph{invocation event} %
denoted
 $\inv{i}{k}{o}$, and receives a
\emph{response event} %
denoted $\res{i}{k}{\valu}$, where
$\valu$ is the return value of $o$.  %
More specifically, given the operations introduced as part of the transactional model,
if process
$\proc_k$ executes some operation as part of transaction $\tr_i$ it issues an
invocation event of the form $\inv{i}{k}{\init_i}$,
$\inv{i}{k}{o}$ %
for some $\obj$, or $\inv{i}{k}{\tryC_i}$, (or
possibly $\inv{i}{k}{\tryA_i}$) and receives a response of the form
$\res{i}{k}{\valu_i}$, where $\valu_i$ is a value, or 
the constant $\ok_i$, $\co_i$, or $\ab_i$.  
The superscript always denotes which process executes the operation, and the
subscript denotes of which transaction the operation is a part.
Each event is atomic and instantaneous, but the execution of the entire
operation composed of two events is not. 

A pair of these events composed of an invocation of operation $o$ and a response event to $o$ is called a
\emph{complete operation execution} and it is denoted 
$o^k_i \rightarrow \valu$, 
whereas an invocation event $\inv{i}{k}{o}$
without the corresponding response event is called a 
\emph{pending operation execution}. 
We refer to complete and pending 
operation executions as \emph{operation executions}, denoted by $op$.
The transactional model allows the following transactional operation executions
(executed by process $\proc_k$ within transaction $\tr_i$ as):
\begin{enumerate}[a) ]
    \item $\init_i^k \rightarrow \ok_i$,
    \item $\pfrop{i}{k}{\obj}{\val}\,$ or $\,\pfrop{i}{k}{\obj}{\ab_i}$,
    \item $\pfwop{i}{k}{\obj}{\val}{\ok_i}\,$ or $\,\pfwop{i}{k}{\obj}{\val}{\ab_i}$,
    \item $\tryC_i^k \rightarrow \co_i\,$ or $\,\tryC_i^k \rightarrow \ab_i$.
    \item $\tryA_i^k \rightarrow \ab_i$.
\end{enumerate}

Since it is convenient to talk about transactions as independent entities, and
their relations to specific processes is irrelevant, we will henceforth
simplify the notation of invocation and response events to $\inv{i}{}{o}$,
$\res{i}{}{o}$, and of complete executions to $o_i \rightarrow \valu$. Then,
the notation of operation executions becomes
    $\init_i \rightarrow \ok_i$,
    $\frop{i}{\obj}{\val}$, %
    $\fwop{i}{\obj}{\val}{\ok_i}$, %
    $\tryC_i \rightarrow \co_i$, etc. %

Whenever an operation execution refers to a value, but it is irrelevant to the discussion and inconveninent to
specify it,     
we use a placeholder value $\any$ in its place, writing e.g.
$\frop{i}{\obj}{\any}$ or $\fwop{i}{\obj}{\any}{\ok_i}$.

\subsubsection{Histories}

Given a trace $\trace \producedby \exec{\prog}{\processes}$, 
a TM \emph{history} $\hist$ is a subsequence of trace $\trace$
consisting only of executions of transactional operations s.t. for every event $e$, $e \in \hist$ iff $e \in \trace$ and $e$ is
either an invocation or a response event specified by the transactional model.
If $\hist \subset \trace$ we say $\trace$ produces $\hist$.
A \emph{subhistory} of a history $H$ is a subsequence of $H$. 

The sequence of events in a history $\hist_j$ can be denoted as $\hist_j = [
e_1, e_2, ..., e_m ]$. 
For instance, some history $\hist_1$ below is a history of a run of some
program that executes transactions $\tr_1$ and $\tr_2$:
\begin{equation*}
\begin{split}
\hist_1 = [~&
              \inv{1}{}{\init_1}, \res{1}{}{\ok_1},
              \inv{2}{}{\init_2}, \res{2}{}{\ok_2},  \\
            & \inv{1}{}{\twop{1}{\obj}{\val}},
              \inv{2}{}{\trop{2}{\obj}{}}, 
              \res{1}{}{\ok_1}, 
              \res{2}{}{\val},                       \\
            & \inv{1}{}{\tryC_1},\res{1}{}{\co_1},
              \inv{2}{}{\tryC_2},\res{2}{}{\co_2}    ~].
\end{split}
\end{equation*}

Given any history $\hist$, let $\hist|\tr_i$ be the longest 
subhistory of $\hist$ consisting only of invocations and responses executed 
by transaction $\tr_i$.
For example,  $\hist_1|\tr_2$ is defined as:
\begin{equation*}
\begin{split}
\hist_1|\tr_2 = [~ &
                     \inv{2}{}{\init_2}, \res{2}{}{\ok_2}, 
                    \inv{2}{}{\trop{2}{\obj}{}}, 
                    \res{2}{}{\val},             \\&         
                    \inv{2}{}{\tryC_2},\res{2}{}{\co_2}    ~].
\end{split}
\end{equation*}

We say transaction $\tr_i$ \emph{is in} $\hist$, which we denote $\tr_i \in \hist$, if $\hist|\tr_i \neq \varnothing$.

Let $\hist|\obj$ be the longest subhistory of $\hist$ consisting only of
invocations and responses executed on variable $\obj$, but only those that 
form complete operation executions.

Given complete operation execution $op$ that consists of an invocation
event $e'$ and a response event $e''$, we say $op$ \emph{is in} $\hist$ 
($\op \in \hist$) if $e' \in \hist$ and $e'' \in \hist$.
Given a pending operation execution $op$ consisting of an invocation $e'$, 
we say $op$ \emph{is in} $\hist$ ($\op \in \hist$) if $e' \in \hist$ and 
there is no other operation execution $op'$ consisting of an invocation 
event $e'$ and a response event $e''$ s.t. $\op' \in \hist$.      

Given two complete operation executions $\op'$ and $\op''$ in some history 
$\hist$, where $\op'$ contains the response event $\mathit{res}'$ and 
$\op''$ contains the invocation event $\mathit{inv}''$, 
we say $\op'$ \emph{precedes} $\op''$ in $\hist$ if
$\mathit{res}'$ precedes $\mathit{inv}''$ in $\hist$.

A history whose all operation executions are complete is a \emph{complete}
history.

Most of the time it will be convenient to denote any two adjoining events in a
history that represent the invocation and response of a complete execution of
an operation as that operation execution, using the syntax $e \rightarrow e'$. 
Then, an alternative representation of $\hist_1|\tr_2$ is denoted as follows: 
$$\hist_1|\tr_2 = [~
        \init_2 \rightarrow \ok_2,~
        \frop{2}{\obj}{\val},~
        \tryC_2 \rightarrow \co_2~].$$

History $\hist$ is \emph{well-formed} if, for every transaction $\tr_i$ in
$\hist$, $\hist|\tr_i$ is an alternating sequence of invocations and responses
s.t., 
\begin{enumerate}[a) ] 
    \item $\hist|\tr_i$ starts with an invocation $\inv{i}{}{\init_i}$, 
    \item no events in  $\hist|\tr_i$ follow $\res{i}{}{\co_i}$ or $\res{i}{}{\ab_i}$,
    \item no invocation event in $\hist|\tr_i$ follows $\inv{i}{}{\tryC_i}$ or $\inv{i}{}{\tryA_i}$,
    \item for any two transactions $\tr_i$ and $\tr_j$ s.t., $\tr_i$ and
    $\tr_j$ are executed by the same process $\proc_k$, the last event of
    $\hist|\tr_i$ precedes the first event of $\hist|\tr_j$ in $\hist$ or \emph{vice
    versa}.
\end{enumerate}
In the remainder of the paper we assume that all histories are well-formed.

History $\hist$ has \emph{unique writes} if, given transactions $\tr_i$ and
$\tr_j$ (where $i\neq j$ or $i=j$), for any two write operation executions
$\fwop{i}{\obj}{\val'}{\ok_i}$ and $\fwop{j}{\obj}{\val''}{\ok_j}$ it is true
that $\val' \neq \val''$ and neither $\val' = \val_0$ nor $\val'' = \val_0$.

\subsubsection{Accesses}

Given a history $\hist$ and a transaction $\tr_i$ in $\hist$, we say that $\tr_i$
\emph{reads} variable $\obj$ in $\hist$ if there exists an invocation
$\inv{i}{}{\trop{i}{\obj}{}}$ in $\hist|\tr_i$.
By analogy, we say that $\tr_i$ \emph{writes} to $\obj$ in $\hist$ if there
exists an invocation $\inv{i}{}{\twop{i}{\obj}{\val}}$ in $\hist|\tr_i$.
If $\tr_i$ reads $\obj$ or writes to $\obj$ in $\hist$, we say $\tr_i$
\emph{accesses} $\obj$ in $\hist$.
In addition, 
let $\tr_i$'s \emph{read set} be a set that contains every variable $\obj$,
s.t.  $\tr_i$ reads $\obj$. By analogy, $\tr_i$'s \emph{write set} contains
every $\obj$, s.t. $\tr_i$ writes to $\obj$. A transaction's \emph{access set},
denoted $\accesses{i}$, is the union of its read set and its write set.

Given a history $\hist$ and a pair of transactions $\tr_i, \tr_j \in \hist$, we
say $\tr_i$ and $\tr_j$ \emph{conflict} on variable $\obj$ in $\hist$ if
$\tr_i$ and $\tr_j$ are concurrent, both $\tr_i$ and $\tr_j$ access $\obj$, and
one or both of $\tr_i$ and $\tr_j$ write to $\obj$.

Given a history $\hist$ and a pair of transactions $\tr_i,
\tr_j \in \hist$, we say $\tr_i$ \emph{reads from} $\tr_j$ if there is some
variable $\obj$, for which 
there is a complete operation execution $\fwop{j}{\obj}{\val}{\ok_j}$ in $\hist|\tr_j$
and another complete operation execution $\frop{i}{\obj}{\valu}$ in
$\hist|\tr_i$, s.t. $\val = \valu$.

Given any transaction $\tr_i$ in some history $\hist$, any
operation execution on a variable $\obj$ within $\hist|\tr_i$ is either
\emph{local} or \emph{non-local}. Read operation execution
$\frop{i}{\obj}{\val}$ in $\hist|\tr_i$ is local if it is preceded in
$\hist|\tr_i$ by a write operation execution on $\obj$, and it is non-local
otherwise. Write operation execution $\fwop{i}{\obj}{\val}{\ok_i}$ in
$\hist|\tr_i$ is local if it is followed in $\hist|\tr_i$ by a
write operation execution on $\obj$, and non-local otherwise.

\subsubsection{Execution Time}

\def\time#1#2{\tau_{#1}(#2)}
\def\ttime#1{\tau_{#1}}
\def\stime#1#2{\tau^\leftarrow_{#1}(#2)}
\def\wtime#1#2{\tau^\leftrightarrow_{#1}(#2)}
\def\etime#1#2{\tau^\rightarrow_{#1}(#2)}
\def\rtime#1#2#3{\tau^{\,r}_{#1}(#2)}
\def\ctime#1#2#3{\tau^{\,c}_{#1}(#2)}

As program $\prog$ is being evaluated by some TM implementation, by a set of
processes $\processes$, it takes time to evaluate each statement. Hence, each
event $e$ in a trace $\trace \producedby \exec{\prog}{\processes}$ appears at a
specific point in time, which we denote $\time{\trace}{e}$. Since each process
$\proc_k$ executes statements in $\subprog_k$ in sequence, then, given two
events $e_1, e_2$ s.t. $e_1 \prec_\trace e_2$, $\time{\trace}{e_1} < \time{\trace}{e_2}$.
Given a complete operation execution $\op$ consisting of an invocation event
$e_1$ and a response event $e_2$, 
the time at which $\op$ finishes executing is
$\etime{\trace}{\op} = \time{\trace}{e_2}$.
The \emph{execution time} of trace $\trace \producedby \exec{\prog}{
\processes}$, denoted $\ttime{\trace}$, is equal to the largest
execution time for any event in $\trace$.
The \emph{release time} of variable $\obj$ in transaction $\tr_i$ in $\trace$,
denoted $\rtime{\trace}{\tr_i}{\obj}$, is the point in time at which $\tr_i$
updates $\lv{\obj}$.
The \emph{completion time} of variable $\obj$ in transaction $\tr_i$ in
$\trace$, $\ctime{\trace}{\tr_i}{\obj}$, is the point in time at which
$\tr_i$ updates $\ltv{\obj}$.

\subsection{Execution Time Comparison}

In this section, we show that the execution time of OptSVA histories is lower
than than of SVA histories resulting from the execution of the same program by
the same processes.

Let $\svaexec{\prog}{\processes}$ denote a complete execution of program
$\prog$ by processes $\processes$ according to the SVA concurrency control
algorithm, and $\optsvaexec{\prog}{\processes}$, an otherwise identical
execution, but according to OptSVA.
Then, there are traces $\tracesva \producedby \svaexec{\prog}{\processes}$ and
$\traceoptsva \producedby \optsvaexec{\prog}{\processes}$, and histories
$\histsva = \tohist\tracesva$ and $\histoptsva = \tohist\traceoptsva$.
The histories contain corresponding transactions: if $\tr_i \in \histsva$ then
$\tr_i \in \histoptsva$ and \emph{vice versa}. Let $\transactions$ be the set
of all transactions in $\histoptsva$ and $\histsva$.

For the purpose of the comparison we assume that the events in histories are
instantaneous.  We also do not account for the time it takes to execute
concurrency control code. Finally, we assume that apart from the details of
the concurrency control, the execution proceeds the same, regardless of whether
it is SVA or OptSVA.

\begin{lemma}[Early Release] \label{lemma:comp:early-release}
    For any $\tr_i \in \transactions$ and $\obj \in \accesses{i}$,
    $\ctime{\histoptsva}{\tr_i}{\obj} \leq \ctime{\histsva}{\tr_i}{\obj}$.
\end{lemma}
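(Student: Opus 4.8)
The plan is to reduce the statement to a comparison of the times at which each transaction becomes \emph{ready} to record its completion of $\obj$, and then dispatch that comparison by induction. First I would fix $\obj$ and exploit the commit-order mechanism of \rsec{sec:supremum-versioning}, which both algorithms share verbatim: since $\ltv{\obj}$ starts at $0$ and a transaction with private version $p$ for $\obj$ may set $\ltv{\obj}$ to $p$ only after passing the termination condition $\pv{\obj}{i}-1 = \ltv{\obj}$, the transactions accessing $\obj$ update $\ltv{\obj}$ strictly in order of increasing private version, and this order is identical in $\histsva$ and $\histoptsva$ (private versions are assigned at the start operation $\init_i$, and by assumption the two executions differ only in concurrency control). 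Write $\tr_{(1)}, \tr_{(2)}, \dots$ for these transactions, so that $\pv{\obj}{(k)} = k$.

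Next I would isolate the two causes of delay to the $\ltv{\obj}$ update in $\tr_{(k)}$: the time $\rho_k$ at which $\tr_{(k)}$ arrives at the $\ltv{\obj}$-update step of its commit/abort having completed everything except the wait at the termination condition, and the completion time of its predecessor. This yields the recursion: the completion time of $\tr_{(k)}$ equals $\max(\rho_k, c_{k-1})$, where $c_{k-1}$ denotes the predecessor's completion time and $c_0$ the initial instant; unrolling gives $c_k = \max_{j \le k}\rho_j$ in either algorithm. Hence it suffices to prove $\rho_k^{\mathrm{Opt}} \le \rho_k^{\mathrm{SVA}}$ for every $k$, after which the lemma follows by taking maxima over $j \le k$. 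The base case $k=1$ is immediate: the termination condition $\pv{\obj}{(1)}-1=\ltv{\obj}$ reads $0 = \ltv{\obj}$ and holds from the outset, so $c_1 = \rho_1$ in both algorithms.

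To compare readiness times I would run an induction over all events ordered by their time of occurrence, maintaining the strengthened hypothesis that \emph{every} transactional operation execution and every internal release event occurs no later in $\histoptsva$ than the corresponding event does in $\histsva$. The inductive step checks, operation by operation, that OptSVA never waits where SVA does not: by \rsec{sec:delayed-synchronization-on-first-write} writes never block on the access condition; by \rsec{sec:read-only-variables} a read-only variable is buffered and released (and its commit begun) asynchronously, off the critical path; and by \rsec{sec:early-release-on-last-write} release occurs after the last write rather than the last access. Each of these replaces an SVA wait by no wait, or by a wait discharged on a background thread, so no operation of $\tr_{(k)}$ finishes later than in SVA; combined with the hypothesis that earlier transactions release the variables of $\accesses{(k)}$ no later in OptSVA, every access condition $\tr_{(k)}$ faces is met no later, and thus $\tr_{(k)}$ reaches its commit/abort — and hence $\rho_k$ — no later.

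The main obstacle is the well-foundedness and the sheer case analysis of that last induction: readiness depends on release times of earlier transactions, which depend on their own operation times, which depend on still-earlier release times, so the argument must be carried out strictly in global time order (rather than per-variable) to keep the dependencies acyclic. The most delicate cases are the asynchronous threads — where the $\ltv{\obj}$ update in OptSVA may be produced by a background \ppr{read\_commit} or \ppr{write\_buffer} thread rather than the main line, so I must argue that the spawning thread reaches the \texttt{async} point no later and that the background wait is bounded by the corresponding SVA wait — and forced (cascading) aborts, where the predecessor's completion time is governed by an abort rather than a commit, yet the same termination-condition chain still applies.
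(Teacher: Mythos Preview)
Your proposal is sound but takes a different route from the paper. A first wrinkle worth flagging: the paper has effectively swapped the bodies of its two lemmas. The proof printed under this label argues entirely about \emph{release} times (updates to $\lv{\obj}$, i.e.\ $\rtime{\cdot}{\cdot}{\cdot}$), not the completion times (updates to $\ltv{\obj}$) named in the statement; the argument that actually matches this statement appears under the next lemma, ``Early Completion.'' That argument is a one-line observation: SVA writes $\ltv{\obj}$ only during commit or abort, whereas OptSVA may additionally write it strictly earlier via the asynchronous \ppr{read\_commit} path for read-only variables; hence, ``all things being equal,'' the OptSVA completion time is no later.

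Your recursion-and-global-induction plan is much more elaborate, and for a reason. You are in effect folding all three of the paper's comparison lemmas (Early Release, Early Completion, Early Operation Execution) into a single simultaneous induction over events in global time order. This buys you real rigor: the paper's separate proofs repeatedly invoke ``all things being equal'' to pass from $\etime{\histoptsva}{\op}$ to $\etime{\histsva}{\op}$, which is precisely the content of the \emph{later} Early Operation Execution lemma---so the three proofs as written are mutually dependent. Your time-ordered induction with the strengthened hypothesis (every operation execution and every release/completion event occurs no later in OptSVA) is exactly the device needed to linearize that dependency, and your identification of the asynchronous-thread and cascading-abort cases as the delicate ones is accurate. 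The trade-off is length: for the completion-time inequality \emph{alone}, the paper's direct ``where is $\ltv{\obj}$ written'' comparison is far shorter, but only because it defers the cross-lemma dependency rather than discharging it.
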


\begin{proof}
    An SVA transaction releases $\obj$ by updating $\lv{\obj}$ on commit, on
    abort, and during the
    last operation execution on $\obj$.
    An OptSVA transactions does so on commit, on abort, during the last write
    operation execution on $\obj$, and after buffering a read-only variable.

    \begin{enumerate}[a) ]
        \item If $\obj$ is a read-only variable an SVA transaction releases
            $\obj$ no sooner than the last operation execution on $\obj$, so
            given any read operation execution $\frop{i}{\obj}{\any} \in
            \histsva|\tr_i$:
            \[ \rtime{\histsva}{\tr_i}{\obj} \geq
               \etime{\histsva}{\frop{i}{\obj}{\any}}. \]
            On the other hand, OptSVA releases $\obj$ as soon as possible.
            That is during $\init_i\to\ok_i$ at the earliest, and no later than any
            $\frop{i}{\obj}{v} \in \histoptsva|\tr_i$ at the latest. Thus:
            \[ \rtime{\histoptsva}{\tr_i}{\obj} \leq
               \etime{\histoptsva}{\frop{i}{\obj}{\any}}. \]
            In that case, all things being equal: 
            \[ \rtime{\histoptsva}{\tr_i}{\obj} \leq
               \rtime{\histsva}{\tr_i}{\obj}. \]

        \item Alternatively, if the last operation execution in
            $\histsva|\tr_i|\obj$  is $\frop{i}{\obj}{\any}$, then an SVA
            transaction releases $\obj$ no sooner than $\frop{i}{\obj}{\any}$, so:
            \[ \rtime{\histsva}{\tr_i}{\obj} \geq
               \etime{\histsva}{\frop{i}{\obj}{\any}}. \]           
            On the other hand, if last operation execution in 
            $\histoptsva|\tr_i|\obj$ is $\frop{i}{\obj}{\any}$, then an OptSVA
            transaction releases $\obj$ no sooner than any
            $\fwop{i}{\obj}{\any}{\ok_i}$ in  $\histoptsva|\tr_i$. 
            \[ \rtime{\histoptsva}{\tr_i}{\obj} \geq
               \etime{\histoptsva}{\fwop{i}{\obj}{\any}{\ok_i}}. \]
            Since $\etime{\histoptsva}{\fwop{i}{\obj}{\any}{\ok_i}} <
            \etime{\histoptsva}{\frop{i}{\obj}{\any}}$, then, all things being
            equal:
            \[ \rtime{\histoptsva}{\tr_i}{\obj} \leq
               \rtime{\histsva}{\tr_i}{\obj}. \]

        \item Otherwise, the last operation execution in $\tr_i$
            is $\fwop{i}{\obj}{\any}{\ok_i}$, so both SVA and OptSVA
            transactions will release $\obj$ no sooner than
            $\fwop{i}{\obj}{\any}{\ok_i}$, so, all things being equal:
            \[ \rtime{\histoptsva}{\tr_i}{\obj} =
               \rtime{\histsva}{\tr_i}{\obj}. \]
    \end{enumerate}
\end{proof}

\begin{lemma}[Early Completion] \label{lemma:comp:early-completion}
    For any $\tr_i \in \transactions$ and $\obj \in \accesses{i}$,
    $\rtime{\histoptsva}{\tr_i}{\obj} \leq \rtime{\histsva}{\tr_i}{\obj}$.
\end{lemma}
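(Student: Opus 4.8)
Recall that the completion time $\ctime{\histoptsva}{\tr_i}{\obj}$ is the instant at which $\tr_i$ writes $\pv{\obj}{i}$ into $\ltv{\obj}$, so the claim amounts to $\ctime{\histoptsva}{\tr_i}{\obj}\le\ctime{\histsva}{\tr_i}{\obj}$. In both algorithms a transaction may write $\ltv{\obj}$ only after passing the termination condition $\pv{\obj}{i}-1=\ltv{\obj}$, so the completions of the transactions accessing $\obj$ are totally ordered by their private versions. The plan is therefore to proceed by induction on $\pv{\obj}{i}$, i.e.\ on the position of $\tr_i$ in the access order for $\obj$, and to show that each completion time is a non-decreasing function of quantities that Lemma~\ref{lemma:comp:early-release} and the shared program order already bound in the right direction.

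First I would fix one algorithm and derive a recurrence. Let $\tr_p$ be the transaction with $\pv{\obj}{p}=\pv{\obj}{i}-1$ (for $\pv{\obj}{i}=1$ the termination condition $\ltv{\obj}=0$ holds from the start and $\tr_p$ is omitted). Let $s_i$ be the earliest instant at which $\tr_i$ is \emph{locally ready} to write $\ltv{\obj}$, i.e.\ it has reached the relevant point of \ppr{commit} or \ppr{abort} (or, when $\obj$ is read-only in OptSVA, it has finished \ppr{read\_buffer} so that \ppr{read\_commit} may run), having already passed the access condition where the algorithm requires it. Since events are instantaneous and concurrency-control code is free, $\tr_i$ sets $\ltv{\obj}$ exactly when both it is locally ready and $\tr_p$ has already set $\ltv{\obj}$, giving $\ctime{\trace}{\tr_i}{\obj}=\max\!\bigl(\ctime{\trace}{\tr_p}{\obj},\,s_i\bigr)$.

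Next I would compare the two algorithms termwise. By the inductive hypothesis $\ctime{\histoptsva}{\tr_p}{\obj}\le\ctime{\histsva}{\tr_p}{\obj}$. For the readiness term a two-way split mirroring Lemma~\ref{lemma:comp:early-release} suffices. If $\tr_i$ writes $\obj$ at some point, then $\obj$ is not read-only and both algorithms set $\ltv{\obj}$ only within \ppr{commit} (or \ppr{abort}), reached at the same program-order point, so $s_i^{\mathrm{opt}}=s_i^{\mathrm{sva}}$. If instead $\obj$ is read-only in $\tr_i$, OptSVA detaches completion into \ppr{read\_commit}, whose readiness coincides with the OptSVA release time $\rtime{\histoptsva}{\tr_i}{\obj}$, whereas SVA cannot set $\ltv{\obj}$ before its own commit and hence not before $\rtime{\histsva}{\tr_i}{\obj}$; thus $s_i^{\mathrm{opt}}=\rtime{\histoptsva}{\tr_i}{\obj}\le\rtime{\histsva}{\tr_i}{\obj}\le s_i^{\mathrm{sva}}$ by Lemma~\ref{lemma:comp:early-release}. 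Because $s_i$ may additionally inherit an access-condition dependence on $\tr_p$'s release, which Lemma~\ref{lemma:comp:early-release} again bounds in OptSVA's favour, every argument of the $\max$ is no larger under OptSVA, and monotonicity of $\max$ gives $\ctime{\histoptsva}{\tr_i}{\obj}\le\ctime{\histsva}{\tr_i}{\obj}$, closing the induction.

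The main obstacle I expect is not the final monotonicity step but justifying the recurrence itself: I must argue that writing $\ltv{\obj}$ is gated by exactly local readiness and the termination condition, and nothing more. In particular I have to check that the asynchronous \ppr{read\_commit} of OptSVA introduces no blocking beyond waiting for $\ltv{\obj}$, that the in-place writes of SVA versus the buffered \ppr{catch\_up} of OptSVA do not shift $s_i$ the wrong way, and that the forced- and cascading-abort cases still set $\ltv{\obj}$ at the analogous point, so that every $\tr_p$ reaches its completion and the recurrence (and hence the induction) applies uniformly to committing and aborting transactions.
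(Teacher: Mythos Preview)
Your approach differs substantially from the paper's. The paper gives a three-line direct argument with no induction: in SVA, $\ltv{\obj}$ is written only at the commit/abort response, so $\ctime{\histsva}{\tr_i}{\obj}$ equals that response time; in OptSVA, $\ltv{\obj}$ may additionally be written earlier by \ppr{read\_commit} when $\obj$ is read-only in $\tr_i$, so $\ctime{\histoptsva}{\tr_i}{\obj}$ is at most the corresponding response time; the phrase ``all things being equal'' then absorbs every remaining dependency and yields the inequality. The paper makes no attempt to track cascading waits; your recurrence-and-induction scheme is more honest about what actually needs to be argued.

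However, the per-variable induction on $\pv{\obj}{i}$ does not close. In both algorithms the write to $\ltv{\obj}$ inside \ppr{commit} (or \ppr{abort}) is preceded by the termination wait for \emph{every} $\objy\in\accesses{i}$, not only for $\obj$. Hence your readiness term $s_i$ depends on the completion times $\ctime{\cdot}{\tr_q}{\objy}$ of the predecessors $\tr_q$ on each such $\objy\neq\obj$, and your inductive hypothesis, which is indexed by $\pv{\obj}{\cdot}$ for the fixed $\obj$, supplies none of them. Consequently the claim $s_i^{\mathrm{opt}}=s_i^{\mathrm{sva}}$ in the non-read-only case is not justified: ``reached at the same program-order point'' pins down neither the instant at which \ppr{commit} is invoked (that already depends on when earlier operations returned) nor the cross-variable termination waits inside it. The obstacle you anticipate in your last paragraph is real, but the binding constraint is this cross-variable coupling, which you do not list. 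A workable fix is to run the induction over all pairs $(\tr_i,\obj)$ simultaneously, well-ordered for instance by the SVA completion times themselves, so that every completion time feeding into $s_i$ is already covered by the hypothesis.
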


\begin{proof}
    An SVA transaction updates $\ltv{\obj}$ on commit, or on
    abort, so:
    \[ \ctime{\histsva}{\tr_i}{\obj} = \time{\histsva}{\res{i}{}{\co_i}}
       ~\text{or}~
       \ctime{\histsva}{\tr_i}{\obj} = \time{\histsva}{\res{i}{}{\ab_i}}. \]
    An OptSVA transactions updates $\ltv{\obj}$ on commit, on abort, or after releasing a
    read-only variable. The latter-most potentially precedes a commit, so:
    \[ \ctime{\histoptsva}{\tr_i}{\obj} \leq \time{\histsva}{\res{i}{}{\co_i}}
       ~\text{or}~
       \ctime{\histoptsva}{\tr_i}{\obj} \leq \time{\histsva}{\res{i}{}{\ab_i}}. \]
    Thus, all things being equal:
    \[ \ctime{\histoptsva}{\tr_i}{\obj} \leq \ctime{\histsva}{\tr_i}{\obj}. \]
\end{proof}

\begin{lemma}[Early Operation Execution]
    For any $\tr_i \in \transactions$, and any operation execution $\op$ in
    $\histoptsva|\tr_i$ and $\histsva|\tr_i$,    
    $\etime{\histoptsva}{\op} \leq \etime{\histsva}{\op}$.
\end{lemma}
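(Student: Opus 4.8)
The plan is to prove the statement by induction on the time order of the response events in the OptSVA trace $\traceoptsva$. This lemma is exactly what closes the mutual dependency that Lemmas~\ref{lemma:comp:early-release} and~\ref{lemma:comp:early-completion} leave open: those two results bound the predecessor's release time $\rtime{\histoptsva}{\tr_j}{\obj}$ and completion time $\ctime{\histoptsva}{\tr_j}{\obj}$ against their SVA counterparts only ``all things being equal'', i.e.\ under the assumption that the operation execution times feeding into them are already comparable, which is precisely the conclusion of the present lemma. Accordingly, I would order the operation executions of $\histoptsva$ by the time of their response event and take as induction hypothesis that every operation execution $\op'$ whose response strictly precedes that of $\op$ in $\traceoptsva$ satisfies $\etime{\histoptsva}{\op'} \le \etime{\histsva}{\op'}$.

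For the inductive step, fix an operation execution $\op$ of transaction $\tr_i$. Its response time is the later of two quantities: (i) the time at which its invocation becomes available, and (ii) the time at which the synchronization condition guarding $\op$ is satisfied. Under the standing assumption that the two executions differ only in their concurrency control, quantity~(i) equals the response time of the operation $\op^{-}$ immediately preceding $\op$ in program order (or, for $\init_i$, the last operation of the previous transaction on the same process) plus the same amount of local computation in both runs; since $\op^{-}$ finishes before $\op$ in $\traceoptsva$, the induction hypothesis yields $\etime{\histoptsva}{\op^{-}} \le \etime{\histsva}{\op^{-}}$, so (i) is never later in OptSVA. It remains to bound quantity~(ii), which I would handle by a case analysis on the type of $\op$.

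For a write, OptSVA performs no access-condition check at all (it writes to a buffer), whereas SVA waits at the access condition, so the OptSVA write is bounded by (i) alone and finishes no later. For a read, a non-local read on a read-write variable waits at the access condition $\pv{\obj}{i}-1=\lv{\obj}$, satisfied at the predecessor's release time $\rtime{\histoptsva}{\tr_j}{\obj}$; a read-only read waits instead for buffering (itself gated by the same access condition), and a local read waits for nothing. In each sub-case the wait is dominated by $\rtime{\histoptsva}{\tr_j}{\obj}$, and Lemma~\ref{lemma:comp:early-release} gives $\rtime{\histoptsva}{\tr_j}{\obj} \le \rtime{\histsva}{\tr_j}{\obj}$. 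For commit and abort, both algorithms wait at the termination condition $\pv{\obj}{i}-1=\ltv{\obj}$ for every $\obj \in \accesses{i}$, satisfied at $\ctime{\histoptsva}{\tr_j}{\obj}$, with Lemma~\ref{lemma:comp:early-completion} supplying $\ctime{\histoptsva}{\tr_j}{\obj} \le \ctime{\histsva}{\tr_j}{\obj}$; any buffered writes applied during an OptSVA commit (catch-up) are gated by the access condition and bounded exactly as in the read case, while read-only commitments have already been discharged asynchronously. In every case quantity~(ii), and hence $\etime{\histoptsva}{\op}$, is no later than in SVA.

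\textbf{The main obstacle} is the circularity just described: invoking Lemmas~\ref{lemma:comp:early-release} and~\ref{lemma:comp:early-completion} is legitimate only if their ``all things being equal'' proviso can be discharged inside the induction. The key observation that makes this work is that the release (respectively completion) of the predecessor $\tr_j$ on $\obj$ is triggered by an operation of $\tr_j$ whose response must precede the response of $\op$ in $\traceoptsva$---after all, $\op$ is waiting for that very release or completion---so the induction hypothesis applies to the triggering operation and yields exactly the comparability those lemmas require. A secondary point I would make explicit is that the argument presupposes that the private-version assignment is identical in both runs, so that ``the predecessor $\tr_j$'' is well defined and the same in $\histoptsva$ and $\histsva$; this I would justify from the standing assumption that the two executions issue their $\init$ operations in the same order and differ only in concurrency-control timing.
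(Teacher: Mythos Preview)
Your proposal is correct and follows essentially the same route as the paper: a case analysis on the type of $\op$ (read---non-local, local, or aborting; write---with or without a preceding non-local read, or aborting; commit; abort; start), bounding each case via the predecessor's release time (Lemma~\ref{lemma:comp:early-release}) or completion time (Lemma~\ref{lemma:comp:early-completion}).

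The one substantive difference is that you make the inductive structure explicit, ordering operations by response time so that the ``all things being equal'' proviso in Lemmas~\ref{lemma:comp:early-release} and~\ref{lemma:comp:early-completion} is discharged by the induction hypothesis applied to the predecessor's triggering operation. The paper's proof instead relies on mutual cross-references between its cases (case~i.b invokes case~ii, case~ii.a invokes case~i.a, case~ii.b invokes case~v) without naming an ordering principle, so your framing is cleaner. You also explicitly account for the \ppr{catch\_up} wait inside OptSVA's commit (gated by the access condition and hence bounded via Lemma~\ref{lemma:comp:early-release}), which the paper's case~iii does not mention; this is a genuine, if small, improvement in completeness. Your side remark that the two runs must assign identical private versions is also a point the paper leaves implicit.
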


\begin{proof}
    The case for $\pv{\obj}{i} = 1$ is trivial. If $\pv{\obj}{i} > 1$ then
    there exists $\tr_j \in\transactions$ s.t. $\pv{\obj}{j} + 1 = \pv{\obj}{i}$.

    \begin{enumerate}[i) ]
        \item If $\op$ is a read operation execution, $\op$ can return a value
            and be a non-local read operation execution, or a local one, or an
            the operation can return $\ab_i$.
            \begin{enumerate}[a) ]
                \item If $\op = \frop{i}{\obj}{\val}$ is a non-local read
                    operation execution in both SVA and OptSVA the operation
                    execution will not finish before the access condition is
                    satisfied, so:
                    \[ \etime{\histsva}{\op} \geq \rtime{\histsva}{\tr_j}{\obj}, \]
                    \[ \etime{\histoptsva}{\op} \geq \rtime{\histoptsva}{\tr_j}{\obj}. \]
                    Then, from \rlemma{lemma:comp:early-release}:
                    \[  \rtime{\histoptsva}{\tr_j}{\obj} \leq \rtime{\histsva}{\tr_j}{\obj}, \]
                    So, all things being equal:
                    \[  \etime{\histoptsva}{\op} \leq \etime{\histsva}{\op}. \]
                \item If $\op = \frop{i}{\obj}{\val}$  is a local read
                    operation execution, then, by definition, local reads
                    follow a write operation execution, so $\exists \op_w =
                    \fwop{i}{\obj}{\any} \in \histsva|\tr_i$ s.t. $\op_w
                    \prec_\histsva \op$ and $\op_w = \fwop{\obj}{\val} \in
                    \histsva|\tr_i$ s.t. $\op_w \prec_\histsva \op$. In that
                    case:
                    \[ \etime{\histsva}{\op} \geq \etime{\histsva}{\op_w}, \]
                    \[ \etime{\histoptsva}{\op} \geq \etime{\histoptsva}{\op_w}. \]
                    Then, from ii:
                    \[ \etime{\histoptsva}{\op_w} \leq \etime{\histsva}{\op_w}. \]
                    Hence, all other things being equal:
                    \[ \etime{\histoptsva}{\op} \leq \etime{\histsva}{\op}. \]
                \item If $\op  = \frop{i}{\obj}{\ab_i}$, then  operation
                    execution in both SVA and OptSVA the operation execution
                    waits until $\ltv{\objy} = \pv{\objy}{i} - 1$ is true for
                    all $\objy \in \accesses{i}$. This means that each
                    transaction $\tr_k$ s.t. $\pv{\objy}{k} + 1 =
                    \pv{\objy}{k}$ must update $\pv{\objy}{i}$ to its private
                    version. Hence:
                    \[ \etime{\histsva}{\op} \geq 
                        \max_{
                            \;
                            \begin{subarray}{l}
                            \forall\tr_k\in\transactions,\;\;
                            \forall\objy\in\accesses{k},\\
                            \text{s.t.}\;\pv{\objy}{k} + 1= \pv{\objy}{i}
                            \end{subarray}
                        }\nolimits 
                        \ctime{\histsva}{\tr_k}{\objy}, \]
                    \[ \etime{\histoptsva}{\op} \geq 
                        \max_{
                            \;
                            \begin{subarray}{l}
                            \forall\tr_k\in\transactions,\;\;
                            \forall\objy\in\accesses{k},\\
                            \text{s.t.}\;\pv{\objy}{k} + 1= \pv{\objy}{i}
                            \end{subarray}
                        }\nolimits 
                        \ctime{\histoptsva}{\tr_k}{\objy}. \]
                    From \rlemma{lemma:comp:early-completion}:
                    \[ \ctime{\histoptsva}{\tr_j}{\obj} \leq \ctime{\histsva}{\tr_j}{\obj}, \]
                    So, all things being equal:
                    \[ \etime{\histoptsva}{\op} \leq \etime{\histsva}{\op}. \]
            \end{enumerate}
        \item If $\op$ is a write operation execution, $\op$ can return $\ok_i$
            and either be a preceded by a non-local read operation execution,
            or only by write and non-local read operation executions. Otherwise
            the write operation execution can return $\ab_i$.
            \begin{enumerate}[a) ]
                \item If $\op = \fwop{i}{\obj}{\any}{\ok_i}$ is preceded by
                    some non-local $\op_r = \frop{i}{\obj}{\any}$, then
                    in both SVA and OptSVA:
                    \[ \etime{\histsva}{\op} \geq \etime{\histsva}{\op_r}, \]
                    \[ \etime{\histoptsva}{\op} \geq \etime{\histoptsva}{\op_r}. \]
                    From i point a:
                    \[ \etime{\histoptsva}{\op_r} \leq \etime{\histsva}{\op_r}. \]
                    Thus, all things being equal:
                    \[  \etime{\histoptsva}{\op} \leq \etime{\histsva}{\op}. \]
                \item If $\op = \fwop{i}{\obj}{\any}{\ok_i}$ is not preceded by
                    non-local read operation executions, then there is such
                    $\op_w = \fwop{i}{\obj}{\any}{\ok_i}$ (possibly $\op_w = \op$) such that
                    $\op_w$ is the initial operation in $\histsva|\tr_i|\obj$
                    and $\histoptsva|\tr_i|\obj$.
                    In addition, $\op_w$ is necessarily preceded by $\op_s = \init_i \to \ok_i$, so:
                    \[ \etime{\histsva}{\op_w} \geq \etime{\histsva}{\op_s}, \]
                    \[ \etime{\histoptsva}{\op_w} \geq \etime{\histoptsva}{\op_s}. \]
                    In SVA an initial write waits for the access condition, so:
                    \[ \etime{\histsva}{\op} \geq
                       \max(\rtime{\histsva}{\tr_j}{\obj},
                       \etime{\histsva}{\op_s}). \]
                    In OptSVA writes do not wait for the access condition at all, so: 
                    \[ \etime{\histoptsva}{\op} \geq \etime{\histoptsva}{\op_s}, 
                    \,\text{regardless of}\,\rtime{\histsva}{\tr_j}{\obj}. \]
                    From v:
                    \[ \etime{\histoptsva}{\op_s} \leq \etime{\histsva}{\op_s}. \]
                    Hence, all other things being equal:
                    \[ \etime{\histoptsva}{\op_s} = \etime{\histsva}{\op_s}. \]
                    Then, since either $\op_w = \op$ or $\op_w$ precedes $\op$:
                    \[ \etime{\histoptsva}{\op} \leq \etime{\histsva}{\op}. \]
                \item If $\op = \fwop{i}{\obj}{\any}{\ab_i}$, then, by analogy to ii point c:
                    \[ \etime{\histoptsva}{\op} \leq \etime{\histsva}{\op}. \]
            \end{enumerate}
        \item If $\op = \tryC_i \to \any$, then in both SVA and OptSVA
            transactions wait until $\ltv{\objy} = \pv{\objy}{i} - 1$ is true
            for all $\objy \in \accesses{i}$ before returning from $\op$. 
            This means that each transaction $\tr_k$ s.t. $\pv{\objy}{k} + 1 =
            \pv{\objy}{k}$ must update $\pv{\objy}{i}$ to its private version.
            Hence:
            \[ \etime{\histsva}{\op} \geq \max_{ 
                    \; 
                    \begin{subarray}{l}
                    \forall\tr_k\in\transactions,\;\;
                    \forall\objy\in\accesses{k},\\
                    \text{s.t.}\;\pv{\objy}{k} + 1=\pv{\objy}{i} 
                    \end{subarray} 
               }\nolimits 
               \ctime{\histsva}{\tr_k}{\objy}, \] 
            \[ \etime{\histoptsva}{\op} \geq \max_{ 
                    \; 
                    \begin{subarray}{l}
                    \forall\tr_k\in\transactions,\;\; 
                    \forall\objy\in\accesses{k},\\
                    \text{s.t.}\;\pv{\objy}{k} + 1= \pv{\objy}{i} 
                    \end{subarray} 
                }\nolimits
                \ctime{\histoptsva}{\tr_k}{\objy}. \]
            From \rlemma{lemma:comp:early-completion}: 
            \[
                \ctime{\histoptsva}{\tr_j}{\obj} \leq
                \ctime{\histsva}{\tr_j}{\obj}, \] 
            So, all things being equal:
            \[ \etime{\histoptsva}{\op} \leq \etime{\histsva}{\op}. \]
        \item If $\op = \tryA_i \to \ab_i$, then, by analogy to iv:
            \[ \etime{\histoptsva}{\op} \leq \etime{\histsva}{\op}. \]
        \item If $\op = \init_i \to \ok_i$, then trivially, 
            \[ \etime{\histoptsva}{\op} = \etime{\histsva}{\op}. \]      
    \end{enumerate}
\end{proof}

\begin{corollary}[Lower Execution Time]
    $\ttime{\histoptsva} \leq \ttime{\histsva}$.
\end{corollary}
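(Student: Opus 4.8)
The plan is to reduce the global statement about whole traces to the per-operation statement already established in the Early Operation Execution lemma. First I would unfold the definition of execution time: $\ttime{\trace}$ is the largest $\time{\trace}{e}$ over all events $e$ in $\trace$, and since both $\svaexec{\prog}{\processes}$ and $\optsvaexec{\prog}{\processes}$ are complete executions, the latest event of a well-formed complete history is necessarily a response event closing some complete operation execution $\op$. Consequently $\ttime{\trace} = \max_{\op}\etime{\trace}{\op}$, where the maximum ranges over all operation executions appearing in the history. This re-expresses the trace-level quantity purely in terms of the per-operation completion times $\etime{\trace}{\op}$ that the preceding lemma controls.

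Next I would pick the witness. Let $\op^*$ be an operation execution in $\histoptsva$ attaining the maximum, so that $\ttime{\histoptsva} = \etime{\histoptsva}{\op^*}$. Because the two executions run the same program on the same processes and differ only in concurrency-control details, the histories contain corresponding transactions and corresponding operation executions; in particular $\op^*$ also occurs in $\histsva$. This is exactly the hypothesis required to invoke the Early Operation Execution lemma, which then yields $\etime{\histoptsva}{\op^*} \leq \etime{\histsva}{\op^*}$.

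Finally I would close the chain of inequalities. Since $\op^*$ is one of the operation executions of $\histsva$, its completion time is bounded by the maximum, i.e. $\etime{\histsva}{\op^*} \leq \ttime{\histsva}$. Stringing the three relations together gives $\ttime{\histoptsva} = \etime{\histoptsva}{\op^*} \leq \etime{\histsva}{\op^*} \leq \ttime{\histsva}$, which is the claim.

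I expect the only real subtlety to lie in the first step: justifying that the trace-level maximum is realised by a response event of an operation execution common to both histories, rather than by some bare event with no counterpart. This rests on completeness of the executions together with the standing assumption that, apart from concurrency control, the two runs proceed identically, so that operation executions are in one-to-one correspondence. Once that bookkeeping is in place, the corollary follows immediately from the per-operation bound, and no further case analysis is needed.
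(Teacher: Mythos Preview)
Your proposal is correct and is exactly the explicit version of what the paper intends: the corollary is stated without proof, as an immediate consequence of the Early Operation Execution lemma, and your argument simply spells out the standard ``pointwise bound implies bound on the maximum'' step. The one subtlety you flag---that the maximising event corresponds to an operation execution present in both histories---is precisely the content of the paper's standing assumption that the two executions proceed identically apart from concurrency control.
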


Thus, the execution time of OptSVA is no worse than SVA. Intuitively, OptSVA is
likely perform better in almost all cases though, and especially, if high
contention causes many transactions to wait to access the same object---then,
the expedited release times and delayed synchronization come into play.

\subsection{Practical Comparison {\tt [Proposition for Consideration]}}

Given that the theoretical considerations above ignore the complexity of the
concurrency control algorithm itself, it could be argued that the cost of
executing individual operations and delegating execution to separate threads
are heavy enough to wipe out any theoretical scheduling advantage in practice.
Thus, in this section, we perform a practical comparison of the two algorithms,
that bears out the conclusions from the previous section, by showing comparing
the performance of SVA to OptSVA given a variety of workloads.

For evaluation we used EigenBench \cite{SOJB+10}, a flexible,
powerful, and lightweight benchmark that can be used for comprehensive
evaluation of mutlicore TM systems by simulating a variety of transactional
application characteristics. It generates a traffic of client transactions,
which access objects at random (with a specified degree of locality) according
to a predefined ratio of reads to writes from three different array types: the
hot array contains variables where transactions can conflict, the mild array
contains variables accessed transactionally but without the possibility of
conflict, and the cold array contains non-transactional variables. 

The experiment was run 
The benchmark was executed on a 10-node cluster with two quad-core Intel Xeon
L3260 processors at 2.67 GHz and 4 GB of RAM per node, running OpenSUSE 13.1
(kernel 3.11.10, x86\_64 architecture), and connected with a 1Gb network. The
implementations of SVA and OptSVA run on the 
64-bit Java HotSpot(TM) Java Virtual Machine version 1.8 (build 1.8.0\_25-b17),
as does the benchmark. 

The benchmark executes 80 concurrent threads (one per processor core), each of
which executes 10 consecutive transactions. The transactions have three
parameters: length, read-to-write ratio, and contention. Long transactions
execute 10 operations on shared variables each, while short ones execute 5
each. These operations either have an $5\!\div\!1$ or $1\!\div\!5$
read-to-write ratio.  The high contention scenarios provide a total of 20
shared variables to the transaction, while the low contention ones provide
80. We only use hot arrays for the purpose of this presentation, since
only they impact contention.  Locality of operations is at 50\% and is based on
a 5-variable--long history.
We measure total execution time of the entire workload, and
throughput---operations executed per second.

\newcolumntype{s}{>{\hsize=.5\hsize\centering\arraybackslash}X}

\begin{figure}
    {\footnotesize
    \begin{tabularx}{\linewidth}{l s s s s c}\toprule        
        \multirow{2}{*}{Parameters} 
        & \multicolumn{2}{c}{Execution Time [s]} 
        & \multicolumn{2}{c}{Throughput [ops/s]} 
        & \multirow{2}{*}{Gain [\%]} \\ 
        \cmidrule{2-3}\cmidrule{4-5}
        & {\scriptsize SVA} 
        & {\scriptsize OptSVA} 
        & {\scriptsize SVA} 
        & {\scriptsize OptSVA} 
        & \\
        \midrule
         {\scriptsize Short, RW $5\!\div\!1$, high cont.} & 492.2 & 257.2 & 8.1  & 15.5 & 47.7 \\
         {\scriptsize Short, RW $1\!\div\!5$, high cont.} & 486.1 & 266.3 & 8.2	 & 15.0 & 45.2 \\
          {\scriptsize Long, RW $5\!\div\!1$, high cont.} & 994.9 & 576.0 & 8.0  & 13.9 & 42.1 \\
          {\scriptsize Long, RW $1\!\div\!5$, high cont.} & 979.7 & 640.1 & 8.2  & 12.5 & 34.7 \\
          {\scriptsize Short, RW $5\!\div\!1$, low cont.} & 206.5 &	169.9 & 19.3 & 23.5 & 17.7 \\
          {\scriptsize Short, RW $1\!\div\!5$, low cont.} & 210.5 & 168.8 & 19.0 & 23.6 & 19.8 \\
           {\scriptsize Long, RW $5\!\div\!1$, low cont.} & 439.9 & 308.5 & 18.2 & 25.9 & 29.9 \\
           {\scriptsize Long, RW $1\!\div\!5$, low cont.} & 442.7 & 297.1 & 18.1 & 26.9 & 32.9 \\
        \bottomrule
    \end{tabularx}
    }
    \caption{\label{table:comp-results} Experimental comparison between SVA and
    OptSVA.}
\end{figure}

The results shown in \rfig{table:comp-results} confirm the theoretical
comparison, showing that the execution time of OptSVA is lower than that of
SVA, and that, in practice this is the typical result. The advantage of OptSVA
over SVA is affected by contention, since OptSVA optimizations have more impact
when the rate of potential conflicts is higher. Thus, in high contention the
execution time of OptSVA is 34.7--40.7\% lower than that of SVA, whereas in low
contention the difference drops to only between 17.7 and 32.9\%.
Note that the advantage occurs regardless of the fact that the threads
executing asynchronous computations for OptSVA transactions have to share
processors with transaction threads. This does not have a large impact, since
those threads are mostly waiting at access conditions.
In addition, a higher incidence of reads is also better optimized by OptSVA
(due to read-only variables), but the difference is not very pronounced (no
more than 10\%), since the optimization has more impact in long transactions,
but read-only variables are increasingly less likely to occur in EigenBench as
transactions get longer.
On the other hand SVA treats reads and writes the same, so it performs
consistently regardless of the read-to-write ratio.
The abort rate is 0 in all cases.

\section{Correctness}
\label{sec:correctness}
\label{sec:proof}

In this section we show that OptSVA satisfies last-use opacity--the same safety
property as SVA, meaning that the parallelism optimization does not sacrifice
or otherwise relax correctness.

Last-use opacity \cite{SW14-disc,SW15-arxiv} is a TM correctness property that
provides the same guarantees as opacity, with the exception that it allows
reading from live transactions after they performed their \emph{closing
write}---the last write write in that transaction in any possible execution of
that program. (For convenience we repeat the definition of the property after
the original paper in the appendix.)

Given that OptSVA divorces the operations performed on shared variables within
the code of the transaction from the actual accesses to memory that are
executed, and since last-use opacity is defined on operations on shared variables,
showing correctness is not straightforward.
Instead, we use a different method, where we show that the behavior of view and
update events in traces generated by OptSVA satisfy a set of specific
event-related properties, which we refer to in aggregate as
\emph{trace harmony}. 

First, we present the preliminary material that defines how operations on
memory are represented within traces. Then, we give the definitions making up
trace harmony are presented below in this section.
We show in \rappx{sec:harmony-to-last-use-opacity} that any  harmonious trace
implies a last-use opaque history in general.
Finally, we demonstrate that OptSVA traces are harmonious in
\rsec{sec:optsva-harmony}, and so, that OptSVA is last-use opaque.

\subsection{Events}
\label{sec:correctness:preliminaries}
\label{sec:events}

Events are the results of transactions directly interacting with the memory
representing shared variables. When during the execution of some program, some
transaction accesses a variable's state (either viewing it or updating it), it
issues an update event that is logged in the trace resulting from the
execution.

A \emph{view event} $\get{i}{\obj}{\val}$ is any event that represents some
transaction $\tr_i$ accessing the state of variable $\obj$ (i.e. reading the
memory location where the value of $\obj$ is stored) and retrieving the value
of $\val$.
An \emph{update event} $\set{i}{\obj}{\val}$ is any event that represents a
modification of the state of variable $\obj$ by transaction $\tr_i$, setting it
to the value of $\val$.

Some operations can abort the transaction, rather than doing what they are
intended to do. For instance a write operation may fail with an abort rather
than setting a new value of some variable. In such cases the transaction will
execute specific code that is meant to clean up after the transaction and
revert any variables the transaction modified to a previous (consistent) state.
We will refer to this code as the recovery procedure. Any update events
executed as part of a recovery procedure are called \emph{recovery} (update)
events. In contrast, all update events that are not recovery events are called
\emph{routine} (update) events. For distinction, we denote a routine update
$\rset{i}{\obj}{\val}$ and a recovery update $\aset{i}{\obj}{\val}$.

Given a view event $\get{i}{\obj}{\val}$ (for some $\tr_i$), $\val$ is
specified by the most recent preceding update event on $\obj$ in a given trace.
I.e., if the most recent preceding update event on $\obj$ is some
$\set{j}{\obj}{\val'}$ (for some $\tr_j$), then $\val = \val'$.
Note, that this distinction does not depend on how the events appear in the
trace, but is intrinsic to the code that executes them.
Event $e = \set{i}{\obj}{\val}$ is the \emph{ultimate update} event on $\obj$
in $\trace$ iff there is no $e' = \set{j}{\obj}{\val'}$ s.t. $e \prec_\trace
e'$.
Event $e = \set{i}{\obj}{\val}$ is the \emph{ultimate routine update} event on
$\obj$ in $\trace$ iff $e$ is routine and there is no $e' =
\set{j}{\obj}{\val'}$ s.t. $e \prec_\trace e'$ and $e'$ is routine.

Given a view event $e_v = \get{i}{\obj}{\any}$ in some $\tr_i$ and an update
event $e = \set{j}{\obj}{\any}$  in some $\tr_j$, $e$ prefaces $e_v$ in trace
$\trace$, denoted $e \pref_\trace e_v$ iff $e \prec_\trace e_v$ and there is no
update event $e' = \set{k}{\obj}{\any}$  in any $\tr_k$ s.t. $e \prec_\trace e'
\prec_\trace e_v$.
Given a read operation execution $\op_r \in \trace$ s.t., $\op_r =
\frop{i}{\obj}{\val}$ and $\op_r$ that consists of an invocation event $e_i$ and a
response event $e_r$, and a view event $e_v = \get{i}{\obj}{\val'}$, $\op_r$
\emph{depends on} $e_v$ (denoted $\op_r \dependson e_v$) iff $\val' = \val$ and
$e_v \pref_\trace e_r$.
Given a write operation execution $\op_w \in \trace$ s.t., $\op_w =
\fwop{i}{\obj}{\val}{\ok_i}$ and $\op_w$ consists of an invocation event $e_i$
and a response event $e_r$, and an update event $e_u = \get{i}{\obj}{\val'}$,
$\op_w$ \emph{instigates} $e_u$ (denoted $op_w \instigates e_u$) iff $\val' =
\val$ and $e_i \pref_\trace e_u$.  

Transaction $\tr_i$ views transaction $\tr_j$ ($\tr_i \views \tr_j$) if
$\exists e_u, e_v \in \trace$ s.t. $e_v = \get{i}{\obj}{\val}$ and $e_u =
\rset{j}{\obj}{\val}$ and $e_u \pref_\trace e_v$.
Transaction $\tr_i$ virtually views transaction $\tr_j$ ($\tr_i \vviews \tr_j$)
if $\exists e_u, e_v \in \trace$ s.t. $e_v = \get{i}{\obj}{\val}$ and $e_u =
\rset{j}{\obj}{\val}$ and $e_u \prec_\trace e_v$.

Event access set $\eset{i}$ for some transaction $\tr_i \in \trace$ is such a
set of variables such that $\obj \in \eset{i} \iff \exists e \in \trace|\tr_i$
s.t. $e = \rset{i}{\obj}{\val}$ or $e = \get{i}{\obj}{\val}$.

Given $\tr_i \in \trace$, s.t. $e_v = \get{i}{\obj}{\val} \in \trace|\tr_i$ and
$e_v$ is initial in $\trace|\tr_i$, let $\abset{\trace}{\tr_i}{\obj}$ be such
longest sequence of transactions that:
\begin{inparaenum}[a) ]
\item if $\exists \tr_j \in \trace$ s.t. $e_a = \aset{j}{\obj}{\val}
    \trace|\tr_j$ and $e_a\pref_\trace e_v$ then $\abset{\trace}{\tr_i}{\obj} =
    \abset{\trace}{\tr_i}{\obj} \cdot \tr_i$, otherwise
\item $\abset{\trace}{\tr_i}{\obj} = \varnothing \cdot \tr_i$.
\end{inparaenum}

Let a view chain $\vchain{\trace}{i}{j}$ be a sequence of transactions s.t.
$\tr_i$ is the first element, and $\tr_j$ is the last element, and for each
pair of consecutive transactions $\tr_k, \tr_l$, it is true that $\tr_l \vviews
\tr_k$.
Let $\hist|\vchain{\trace}{i}{j}$ be the longest subsequence of $\trace$ s.t.
$e \in \hist|\vchain{\trace}{i}{j}$ iff $e \in \trace|\tr_i$ and $\tr_i \in
\vchain{\trace}{i}{j}$.

\subsection{Definitions}
\label{sec:harmony-definitions}
\label{sec:harmony}

Since OptSVA limits events within a transaction to at most a single routine
update event, at most a single single recovery update event, and at most a
single view event per variable, we limit the method presented below to such a
case. This is represented by the definition of \emph{minimalism} below. 
(However, the method can be extended to allow multiple routine update events and
multiple view events per transaction.)

\begin{definition} [Minimalism] \label{def:minimalism} 
Given transaction $\tr_i \in \trace$, for each $\obj$, $\trace|\tr_i$ contains:
\begin{enumerate}[a)]
    \item either none or one view event $\get{i}{\obj}{\any}$,
    \item either none or one routine update event $\rset{i}{\obj}{\any}$,
    \item either none or one recovery update event $\aset{i}{\obj}{\any}$.
\end{enumerate}
\end{definition}

\emph{Trace isolation} stipulates, 
that once a transaction starts accessing the memory of some variable, it has
exclusive access to it until it is done performing routine updates and view
events on it. Hence a transaction is not interfered with by other transaction
when it is performing memory accesses, unless an abort is required.
Furthermore, if one transaction accesses the memory of one variable before
another transaction, then that other transaction cannot access any other
variable before the first transaction does.

\begin{definition} [Trace Isolation] \label{def:trace-isolation}
Trace $\trace$ is isolated, iff given any two transactions $\tr_i$
and $\tr_j$ in $\trace$ 
for every $\obj \in \eset{i} \cap \eset{j}$, it is true that
given any event $e_i$ s.t. $e_i = \get{i}{\obj}{\val} \in \trace|\tr_i$ or
    $e_i = \rset{i}{\obj}{\val'} \in \trace|\tr_j$, and any event $e_j$ s.t.
    $e_j = \get{j}{\obj}{\val'} \in \trace|\tr_j$ or a routine update event
    $e_j = \rset{j}{\obj}{\val'} \in \trace|\tr_j$,
$e_i \prec_\trace e_j$.
\end{definition}

\emph{Isolation order} imposes an order on transactions in a trace that respects the
order of executing update and view events on variables.
Given an isolated trace, there exist the following orders:

\def\isoorder{\dot\prec}
\def\nisoorder{\dot\nprec}
\def\disoorder{\ddot\prec}
\def\iso#1#2{{\isoorder}^{\obj}_{\trace}}
\begin{definition} [Variable Isolation Order] \label{def:variable-isolation-order}
Two transactions $\tr_i$ and $\tr_j$ are isolation-ordered in trace $\trace$
with respect to $\obj$, which we denote $\tr_i \iso{\obj}{\trace} \tr_j$,
if given any event $e_i$ s.t. $e_i = \get{i}{\obj}{\val} \in \trace|\tr_i$ or
    $e_i = \rset{i}{\obj}{\val'} \in \trace|\tr_j$, and any event $e_j$ s.t.
    $e_j = \get{j}{\obj}{\val'} \in \trace|\tr_j$ or a routine update event
    $e_j = \rset{j}{\obj}{\val'} \in \trace|\tr_j$, and $e_i \prec_\trace e_j$. 
\end{definition}

\begin{definition} [Direct Isolation Order] \label{def:direct-isolation-order}
Two transactions $\tr_i$ and $\tr_j$ are directly isolation-ordered 
$\tr_i \disoorder_\trace \tr_j$ 
if for every $\obj \in \eset{i} \cap \eset{j}$, $\tr_i \iso{\obj}{\trace}
\tr_j$.
\end{definition}

\begin{definition} [Isolation Order] \label{def:isolation-order}
Two transactions $\tr_i$ and $\tr_j$ are isolation-ordered $\tr_i
\isoorder_\trace \tr_j$ there exists a sequence of transactions $\epsilon =
\tr_i \cdot ... \cdot \tr_j$, where for every pair of consecutive transactions
$\tr_n, \tr_m \in \epsilon$, $\tr_n \disoorder_\trace \tr_m$.
\end{definition}

Note that if $\tr_i \prec_\trace \tr_j$ and $\obj \in \eset{i} \cap \eset{j}$,
then $\tr_i \isoorder_\trace \tr_j$, so the isolation order preserves real-time
order.

\emph{Consonance} describes when a particular event or operation involve a
value that can be considered correct, which is determined by other events or
operations that either precede or follow the one in question.

Specifically, a view event is consonant if it retrieves
the value that was written there by a preceding
event, or the initial value, if no events preceded. A consonant read operation
must then return a value that was retrieved by a view event beforehand. On the
other hand, a routine update event must be caused by some write operation.
Whereas a consonant recovery update event is one that cleans up after a routine
update and returns the state of a variable to a value that was retrieved by a
view event that view the unmodified state of the variable in question.

\begin{definition}[View Consonance] \label{def:view-consonance}
Given some $\tr_i \in \trace$, a 
view event $e_v =
\get{i}{\obj}{\val}$ is consonant in $\trace$ iff either:
\begin{enumerate}[a)]
\item $\val = 0$ and $\nexists e_u \in \trace$ s.t. $e_u = \set{j}{\obj}{\val'}$
for any $\tr_j$, and $e_u \prec_\trace e_r$,
\item $\val \neq 0$ and $\exists e_u \in \trace$ s.t. $e_u =
\rset{j}{\obj}{\val}$ for some $\tr_j$, $i\neq j$, $e_u \pref_\trace e_r$, 
and $e_u$ is the ultimate routine update on $\obj$ in
$\trace|\tr_j$, or
\item $\exists e_u \in \trace$ s.t. $e_u = \aset{j}{\obj}{\val}$ for some
$tr_j$, $i\neq j$, $e_u \pref_\trace e_r$.
\end{enumerate}
\end{definition}

\begin{definition}[Routine Update Consonance] \label{def:routing-update-consonance}
Given some $\tr_i \in \trace$, a routine update event $e_u =
\rset{i}{\obj}{\val}$ is consonant in $\trace$ iff 
$e_u$ is instigated in $\trace$ by a consonant write operation
execution.
\end{definition}

\begin{definition} [Recovery Update Consonance] \label{def:recovery-update-consonance}
Given some $\tr_i \in \trace$, event $e_s = \aset{i}{\obj}{\val}$ is \emph{consonant} in
$\trace$ iff: %
\begin{enumerate}[a) ]
    \item $e_a$ is \emph{conservative} in $\trace$, i.e.  there exists a
        consonant non-local view event $e_v$ in $\trace|\tr_i$ that is initial
        in $\trace|\tr_i$,
    \item $e_a$ is \emph{needed} in $\trace$, i.e. $\exists e_u =
        \rset{i}{\obj}{\val'} \in \trace|\tr_i$ s.t. $e_u \prec_{\trace|\tr_i}
        e_a$,
    \item $e_a$ is \emph{dooming} in $\trace$, i.e. $\nexists r \in \trace$
        s.t. $r = \res{i}{}{\co_i}$, $e_a \prec_{\trace|\tr_i} r$,
    \item $e_a$ is \emph{ending} in $\trace$, i.e. $\nexists e \in \trace$ s.t.
        $e = \get{i}{\obj}{\val'}$ or $e = \set{i}{\obj}{\val'}$, $e_a
        \prec_{\trace|\tr_i} e$,
    \item $e_a$ is \emph{clean} in $\trace$, i.e.  given view $e_v$ that
        justifies that $e_s$ is is conservative, there is no event $e_a' =
        \aset{j}{\obj}{\val'}$ in any $\tr_j$ s.t.  $\tr_j \iso{\obj}{\trace}
        \tr_i$ and $e_v \prec_\trace e_a' \prec_\trace e_a$.
\end{enumerate}    
\end{definition}

\begin{definition} [Non-local Read Consonance] \label{def:non-local-read-consonance}
A non-local read operation execution is consonant in trace $\trace$ iff it
depends in $\trace$ on a consonant non-local view event.
\end{definition}

\begin{definition} [Local Read Consonance] \label{def:local-read-consonance}
Given some $\tr_i \in \trace$, a local read operation execution $\op_r =
\frop{i}{\obj}{\val}$ is consonant in trace $\trace$ iff there exists $\op_w =
\fwop{i}{\obj}{\val}{\ok_i} \in \trace|\tr_i$ s.t. $\op_w \pref_{\trace|\tr_i} \op_r$, and
$\op_w$ is consonant.
\end{definition}

\begin{definition} [Write Consonance] \label{def:write-consonance}
A write operation execution $\fwop{i}{\obj}{\val}{\ok_i}$ in some $\tr_i$ is
consonant in trace $\trace$ iff $\val \neq 0$ and $\val$ is within the domain
of $\obj$.
\end{definition}

\begin{definition} [Trace Consonance] \label{def:trace-consonance}
Trace $\trace$ is consonant iff all operation executions, update events, and
view events in trace $\trace$ are consonant.
\end{definition}

\emph{Obbligato} ensures that update events required by write operations happen
on time, so that the values written to variables by operation executions are
actually set in memory by the time the transaction relinquishes control of each
variable.
This means that a routine update event is required after a write operation by
the time a transaction commits (\emph{committed write obbligato}), one is
required after a closing write operation, before any other transaction attempts
to access that variable (\emph{closing write obbligato}), and one is required
if a non-aborted transaction executed write operations and another transaction
accesses the variables in question (\emph{view write obbligato}).

\begin{definition} [Committed Write Obbligato] 
    \label{def:write-actualization}
    \label{def:commit-write-obbligato}
    Given $\tr_i \in \trace$, if $\exists \op_w \in \trace|\tr_i$ s.t. $\op_w =
    \fwop{i}{\obj}{\val}{\ok_i}$, $\op_w$ is non-local, and $\exists r \in
    \trace|\tr_i$ s.t. $r = \res{i}{}{\co_i} \in \trace|\tr_i$, then $\op_w$ is
    \emph{in obbligato} iff $\exists e_s \in \trace|\tr_i$ s.t. $e_s =
    \rset{i}{x}{v}$ and $\op_w \instigates e_s$ and $e_s \prec_{\trace|\tr_i}
    r$.
\end{definition}

\begin{definition} [Closing Write Obbligato] 
    \label{def:exclusive-write-actualization}
    \label{def:closing-write-obbligato}
    Given $\tr_i \in \trace$, if $\exists \op_w \in \trace|\tr_i$
    if $\exists \tr_j \in \trace$ s.t. $\tr_i \isoorder_\trace \tr_j$
    if there is $\op_i = \fwop{i}{\obj}{\any}{\ok_i} \in \trace|\tr_i$,
    $\op_i$ is a closing write, and
    there is $e_v = \get{j}{\obj}{\any} \in \trace|\tr_j$, 
    then $\op_i$ is
    \emph{in closing obbligato} iff $\exists e_u \in \trace|\tr_i$ s.t. $e_u =
    \rset{i}{\obj}{v}$ and $\op_i \instigates e_u$, and $e_u \prec_{\trace} e$.
\end{definition}

\begin{definition} [View Write Obbligato] 
    \label{def:exclusivity}
    \label{def:view-write-obbligato}
    Given $\tr_i \in \trace$, 
    if $\exists \tr_j \in \trace$, s.t. $\tr_i \isoorder_\trace \tr_j$,  
    if there is $\op_i = \fwop{i}{\obj}{\any}{\ok_i} \in \trace|\tr_i$,
    and $e_v = \get{j}{\obj}{\any} \in \trace|\tr_j$, then 
    $\op_i$ is \emph{in view write obbligato} iff
    there is $e_u = \set{i}{\obj}{\any} \in \trace|\tr_i$
    s.t. $e_u \prec_\trace e_v$ or 
    $\exists r = \res{i}{}{\ab_i} \in \trace|\tr_i$ s.t. $r \prec_{\trace} e_v$.
\end{definition}

\begin{definition} [Obbligato] 
    \label{def:actualization}
    \label{def:obbligato}
Trace $\trace$ is obbligato iff 
    \begin{enumerate}[a) ]
        \item all non-local writes in all transactions
            committed in $\trace$ are in committed obbligato,
        \item all closing writes whose effects are potentially viewed are in closing write obbligato,
        \item all writes whose effects are potentially viewed are in view write
            obbligato.
\end{enumerate}
\end{definition}

\emph{Decisiveness} is achieved, when transactions do not let other transactions
to view the values they set to the variables they modify until they commit or
perform their closing writes.

\begin{definition} [Decisiveness] \label{def:decisiveness}
Trace $\trace$ is \emph{decisive} iff given any pair of transactions
$\tr_i, \tr_j \in \trace$, s.t. $\tr_i \views \tr_j$ for any $e_u =
\rset{j}{\obj}{\val} \in \trace|\tr_j$ and $e_v = \get{i}{\obj}{\val} \in
\trace|\tr_i$, then either 
$\tr_j$ is decided on $\obj$,
$\exists r = \res{j}{}{\co_j} \in \trace|\tr_j$ s.t. $e_u \prec_\trace r
\prec_\trace e_v$.
\end{definition}

\emph{Abort accord} is a relation between two transactions, where if one of
them views the update events performed by the other, and the other transaction
aborts, then the first transaction is not permitted to abort.

\begin{definition} [Abort Accord] 
    \label{def:abort-abidement}
    \label{def:abort-accord}
Trace $\trace$ is in abort accord iff for any two transactions $\tr_i$ and
$\tr_j$ in $\trace$ s.t. 
\begin{inparaenum}[a)]
    \item %
        $\tr_j \views \tr_i$, if $\tr_i$ is aborted in $\trace$, 
        then $\tr_j$ is either live or aborted in $\trace$,
    \item %
        $\exists e_u = \rset{i}{\obj}{\any} \in \trace|\tr_i$ and $e =
        \rset{j}{\obj}{\any} \in \trace|\tr_j$ or $e = \get{j}{\obj}{\any} \in \trace|\tr_j$ and $e_a =
        \aset{i}{\obj}{\any} \in \trace|\tr_i$, and $e_u \prec_\trace e \prec_\trace
        e_a$,
        then $\tr_j$ is either live or aborted in $\trace$.
\end{inparaenum}
\end{definition}

\emph{Commit accord} is a similar relation, where given two transactions such that one of
them views the update events performed by the other, and the first transaction
commits, then the first transaction must have also committed.

\begin{definition} [Commit Accord] 
    \label{def:commit-abidement}
    \label{def:commit-accord}
trace $\trace$ is in commit accord iff for any two transactions $\tr_i$ and
$\tr_j$ in $\trace$ s.t. $\tr_j \views \tr_i$, if $\tr_j$ is committed in
$\trace$, then $\tr_i$ is committed in $\trace$.
\end{definition}

\emph{Coherence} specifies, that if a transaction commits, all preceding
transactions according to the isolation order either committed or aborted
beforehand.

\begin{definition} [Coherence] \label{def:coherence}
Trace $\trace$ is coherent iff for any two transactions  $\tr_i$ and
$\tr_j$ in $\trace$ s.t. $\tr_i \iso{\obj}{\trace} \tr_j$, if $\exists r_j =
\res{j}{}{\co_j} \in \trace|\tr_j$, then $\exists r_i = \res{i}{}{\co_i}$ or
$r_i = \res{i}{}{\ab_i}$ and $r_i \prec_\trace r_j$.
\end{definition}

\emph{Abort Coda} specifies when a recovery event can be expected to be issued.
If a transaction updates the state of some variable and eventually aborts,
either it or another transaction will issue a recovery event to clean up that
update before the transaction in question completes aborting. On the other
hand, if the transaction commits, neither it or any other transaction will
issue a recovery event to revert the state of that variable to another value.

\begin{definition} [Abort Coda] \label{def:abort-sanitation}
                                \label{def:abort-coda}
    Trace $\trace$ has \emph{coda} iff for any transaction $\tr_i$ 
    \begin{enumerate}[a)]
        \item if $\tr_i$ aborts in $\trace$ 
            (so $r = \res{i}{}{\ab_i} \in \trace|\tr_i$), then 
            if $\exists e_u = \rset{i}{\obj}{\val} \in \trace|\tr_i$, then 
            for some $\tr_j$ s.t. $i=j$ or $\tr_i \iso{\obj}{\trace} \tr_j$
            $\exists e_a = \aset{i}{\obj}{\val'} \in \trace$ 
            s.t. $e_u \prec_\trace e_a \prec_\trace$,
        \item if $\tr_i$ commits in $\trace$ 
            (so $\exists r = \res{i}{}{\co_i} \in \trace|\tr_i$), 
            then if $\exists e = \rset{i}{\obj}{\val} \in \trace|\tr_i$ 
                 or $e = \get{i}{\obj}{\val} \in \trace|\tr_i$, then
            for any $\tr_j$ s.t. $i=j$ or $\tr_i \iso{\obj}{\trace} \tr_j$
            $\nexists e_a = \aset{i}{\obj}{\val'} \in \trace$ 
            s.t. $e \prec_\trace e_a \prec_\trace$.
    \end{enumerate}
\end{definition}

\emph{Chain consistency} describes what events are allowed and barred from a chain of transactions.
Specifically, \emph{chain isolation} stipulates that, a chain of transactions executing view
and update events is not broken by a revert event, so a transaction cannot view
an inconsistent state where the value of one variable is retrieved before an
abort was performed, and another one after.
\emph{Chain self-containment} if the values viewed by a transaction in some
chain always come from that chain.

\begin{definition} [Chain Isolation] \label{def:chain-isolation}
Given trace $\trace$, transactions $\tr_i, \tr_j \in \trace$,
$\vchain{\trace}{i}{j}$ is isolated if for $\forall \tr_k \in
\vchain{\trace}{i}{j}$ s.t. $e^k = \rset{k}{\obj}{\val}$, there is no $\tr_l$
(possibly $\tr_l \not\in \vchain{\trace}{i}{j}$) s.t. $\exists e^l =
\aset{l}{\obj}{\val'}$ where $\val = \val'$ and $e^l$ is between $e^k$ and any
other event in any transaction in $\vchain{\trace}{i}{j}$.
\end{definition}

\begin{definition} [Chain Self-containment] \label{def:chain-self-containment}
      Given trace $\trace$, transactions $\tr_i, \tr_j \in \trace$,
      $\vchain{\trace}{i}{j}$ is self-contained iff given any transactions
      $\tr_k, \tr_l \in \vchain{\trace}{i}{j}$, s.t. $k\neq l$ and $\exists e_u^k =
      \rset{k}{\obj}{\val} \in \trace|\tr$ $\exists e_v^l =
      \get{l}{\obj}{\val'} \in \trace|\tr$ and $e_u^k \prec_\trace e_v^l$, then
      either $\val = \val'$ or $\exists e_u^m = \rset{m}{\obj}{\val'} \in
      \trace|\tr_m$ for some $\tr_m \in \vchain{\trace}{i}{j}$ s.t. 
      $\tr_m$ precedes $\tr_l$ and follows $\tr_k$ in $\vchain{\trace}{i}{j}$ and
      $e_u^k \prec_\trace e_u^m \prec_\trace e_v^l$.
\end{definition}

\begin{definition} [Chain Consistency] \label{def:trace-chain-consistency}
    An isolated trace $\trace$ is chain-consistent if given
    any $\vchain{\trace}{i}{j}$ is chain-isolated and self-contained
    (for some $\tr_i, \tr_j \in \trace$).
\end{definition}

Finally, \emph{harmony} is satisfied if the preceding properties are satisfied
within the entire trace.

\begin{definition} [Harmony] \label{def:harmony}
    Trace $\trace$ is harmonious iff it satisfies all of the following:
    \begin{inparaenum}[a)]
        \item minimalism,
        \item consonance, 
        \item obbligato, 
        \item coherence, commit accord, abort accord, and abort coda,
        \item isolation, 
        \item decisiveness,
        \item chain consistency,
        \item unique writes.
    \end{inparaenum}
\end{definition}

\subsection{Last-use Opacity from Harmony}

\begin{theorem} [Harmonious Trace Last-use Opacity] \label{thm:harmony-to-lu-opacity}
Given harmonious $\hist$, s.t. $\hist = \tohist\trace$, if $\trace$ is
harmonious, $\hist$ is last-use opaque.
\end{theorem}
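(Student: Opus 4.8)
The plan is to prove last-use opacity by exhibiting, for $\hist = \tohist\trace$ and for each of its prefixes, a sequential witness, using the isolation order $\isoorder_\trace$ as the backbone of the serialization. Since last-use opacity is prefix-closed, I would fix an arbitrary prefix $P$ of $\hist$, build a completion $\bar P$ of $P$ together with a sequential history $S$ equivalent to $\bar P$, and verify that $S$ is legal and real-time preserving in the last-use sense; the whole-history claim is then the special case $P = \hist$. Because harmony includes unique writes, the reads-from relation induced by view and update events is unambiguous, so every classification and ordering decision below is well-defined.

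First I would construct the completion. Every transaction committed in $P$ stays committed; every live transaction that has already issued its closing write on each variable whose value is subsequently viewed is treated as committed in $\bar P$ — this is exactly the relaxation separating last-use opacity from opacity, and its legitimacy rests on decisiveness (Definition \ref{def:decisiveness}), which guarantees that a transaction's routine updates are not viewed before it either commits or performs its closing write. All remaining live and aborted transactions are completed with $\tryA_i$. For the ordering, I would argue that $\isoorder_\trace$ is acyclic — using isolation (Definition \ref{def:trace-isolation}), which serializes the accesses on each variable and keeps the per-variable orders mutually consistent, together with coherence (Definition \ref{def:coherence}), which forces commits to respect that order — and then extend $\isoorder_\trace$ to a total order $S$. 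Real-time order preservation for last-use opacity follows immediately from the note after Definition \ref{def:isolation-order}, that $\isoorder_\trace$ already contains real-time order.

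The crux is legality. For each non-local read $\frop{i}{\obj}{\val}$ in $S$ I would show it reads the value of the transaction that immediately precedes $\tr_i$ in $S$ among those writing $\obj$ and committed (or, in the last-use sense, decisive) on $\obj$. The argument chains: non-local read consonance (Definition \ref{def:non-local-read-consonance}) and view consonance (Definition \ref{def:view-consonance}) tie $\val$ to the routine update event that prefaces the view; obbligato (Definition \ref{def:obbligato}) guarantees that update actually occurred in memory before the variable was relinquished; and chain consistency (Definition \ref{def:trace-chain-consistency}), via chain isolation and self-containment, guarantees that a value viewed along a chain of virtual-view relations genuinely originates inside that chain and is not corrupted by an intervening recovery update. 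Local reads are discharged directly from the reader's own preceding write by local read consonance (Definition \ref{def:local-read-consonance}). For aborted transactions, abort coda (Definition \ref{def:abort-coda}) ensures their routine updates are reverted by recovery events before completion, while commit accord and abort accord (Definitions \ref{def:commit-accord}, \ref{def:abort-accord}) ensure no committed transaction's serialization depends on an aborted one — precisely what rules out the cyclic or inconsistent reads-from dependencies that would otherwise break legality.

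I expect the main obstacle to be legality across \emph{all} prefixes simultaneously. Proving legality for the final state is comparatively direct, but opacity — and hence last-use opacity — demands that even a transaction destined to abort observes a consistent snapshot at the moment it reads. The delicate point is a prefix $P$ that cuts a chain of virtual views midway: I must show the value the reader sees is still justified by some legal sequential witness, and this is exactly where chain self-containment and decisiveness have to be combined to certify that the partially observed state is reproducible by a legal order over the transactions present in $P$. The second source of friction, where I would expect to spend the bulk of the effort, is the bookkeeping that relates operation executions in $\hist$ to the underlying view and update events in $\trace$ — kept to at most one per variable by minimalism (Definition \ref{def:minimalism}) — since every legality claim must be transferred back and forth across that operation/memory boundary.
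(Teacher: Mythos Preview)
Your overall plan—using the isolation order as the serialization backbone and invoking consonance, obbligato, chain consistency, decisiveness, and the accord properties for legality—matches the paper's architecture. But there is a genuine gap in the completion step.

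You propose to treat a live transaction as \emph{committed} in $\bar P$ whenever it has issued its closing writes on all subsequently-viewed variables. This is not a valid completion: a completion may only abort live transactions or resolve commit-pending ones; it cannot manufacture $\tryC_i\to\co_i$ for a transaction that never invoked $\tryC_i$. Even setting that aside, artificially committing $\tr_i$ obliges you to prove \emph{full} legality of $\tr_i$ via $\vis{S}{\tr_i}$, and your tool for propagating committedness backward along reads-from—commit accord (Definition~\ref{def:commit-accord})—applies only to transactions actually committed in $\trace$. So you lose precisely the guarantee you need, that everything $\tr_i$ viewed is itself committed in $S$.

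The paper avoids this by aborting \emph{every} live transaction in the completion $\hat C$. The ``reading from a decided-but-uncommitted transaction'' case is handled not in the completion but where the definition of last-use opacity puts it: in $\luvis{S}{\tr_i}$. The paper gives an explicit construction (its Last-use Visible History Construction) that includes $\seqH\cpeC\s\tr_j$ exactly when a view chain $\vchain{\trace}{j}{i}$ exists, and the bulk of the work—the Non-local Read Last-use Consistency lemmas together with the $\abset{\trace}{\tr_i}{\obj}$ machinery for threading through sequences of recovery updates prefacing a view—goes into showing that this particular $\luvis$ is legal. Your proposal never constructs $\luvis$ and so omits the case analysis (view prefaced by a recovery event, case~(c) of view consonance) that carries most of the weight.

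A minor point: your real-time argument overstates the remark after Definition~\ref{def:isolation-order}. Isolation order refines real-time order only when the event access sets intersect; disjoint transactions can be real-time ordered yet isolation-incomparable. The paper's $\seqH$ construction therefore imposes real-time order as an explicit first rule, prior to and independent of $\isoorder_\trace$.
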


\section{OptSVA Harmony}
\label{sec:optsva-harmony}

Let $\traceos$ be any trace produced by OptSVA.

\begin{observation} [Memory Access Pattern]
OptSVA generates view and update events for variable $\obj$ precisely as a result of executing the
following lines:
\begin{itemize}
    \item in procedure \ppr{checkpoint} at \rln{checkpoint:get}---view event,
    \item in procedure \ppr{read\_buffer} at \rln{read-buffer:get}---view event,
    \item in procedure \ppr{write\_buffer} at \rln{write-buffer:rset}---routine update event,
    \item in procedure \ppr{commit} at \rln{commit:rset}---routine update event,
    \item in procedure \ppr{abort} at \rln{abort:aset}---recovery update event.
\end{itemize}
\end{observation}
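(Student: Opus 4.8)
The proof is by exhaustive inspection of the OptSVA listing in \rfig{fig:optsva}. The plan is to fix first the precise criterion for when a line emits a memory event: by the definitions in \rsec{sec:events}, a view event arises exactly when a line reads the memory cell holding the state of a shared variable $\obj$, and an update event arises exactly when a line writes that cell. The essential move is to make this criterion exclude everything else. In particular it excludes every line that only manipulates a transaction-local buffer $\buf{i}{\obj}$ or backup copy $\stored{i}{\obj}$, every line that reads or writes one of the version counters ($\pv{\obj}{i}$, $\lv{\obj}$, $\ltv{\obj}$, $\cv{\obj}$, $\rv{i}{\obj}$, $\gv{\obj}$), and every line that updates an access counter or performs a purely local computation. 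Since OptSVA is a buffering algorithm, the overwhelming majority of its code touches only these local structures, so the first step is simply to classify each line against this criterion.

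Next I would traverse the procedures one at a time. The procedures \ppr{start}, \ppr{release}, and \ppr{read\_commit} touch only counters and buffers and therefore emit no events. In \ppr{read} every returned value is fetched from a buffer, and in \ppr{write} every written value is stored into a buffer, so neither procedure touches shared memory directly; this is precisely the sense in which read and write operation executions are divorced from memory accesses. That leaves exactly the five lines claimed: \ppr{checkpoint} at \rln{checkpoint:get} and \ppr{read\_buffer} at \rln{read-buffer:get} each copy the current state of $\obj$ into a local structure, hence read the shared cell and yield view events; \ppr{write\_buffer} at \rln{write-buffer:rset} and \ppr{commit} (through \ppr{catch\_up}) at \rln{commit:rset} each write a buffered value back to the shared cell, hence yield update events; and \ppr{abort} at \rln{abort:aset} writes a restored value to the shared cell, yielding an update event.

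Finally I would settle the routine/recovery classification. By the definition in \rsec{sec:events}, recovery updates are exactly those emitted by the recovery procedure, which in OptSVA is \ppr{abort}; hence the update at \rln{abort:aset} is a recovery update $\aset{i}{\obj}{\any}$, whereas the updates at \rln{write-buffer:rset} and \rln{commit:rset} lie outside any recovery procedure and are therefore routine updates $\rset{i}{\obj}{\any}$. This matches the event types asserted in the statement, completing the enumeration.

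The main obstacle is not conceptual but one of disciplined bookkeeping: because OptSVA uses buffering so pervasively, almost every line that superficially resembles a read or write of a variable in fact operates on $\buf{i}{\obj}$, $\stored{i}{\obj}$, or a counter, and each such line must be explicitly ruled out rather than waved away. The entire argument hinges on keeping the distinction between the shared memory cell and its local surrogates absolutely crisp throughout the traversal; once that distinction is maintained line by line, the five cases fall out directly and nothing else can slip through.
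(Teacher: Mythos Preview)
Your approach is correct: the claim is verified by exhaustive inspection of the listing in \rfig{fig:optsva}, distinguishing lines that touch the shared memory cell of $\obj$ from lines that only touch buffers, backup copies, or counters. The paper itself gives no proof for this statement---it is labelled an \emph{observation} and left as self-evident from the algorithm listing---so your write-up is simply an explicit articulation of the inspection the paper expects the reader to perform implicitly.
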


\begin{observation} [Closing Write Identification]
    If after executing a write operation on $\obj$ by $\tr_i$ it is true that
    $\wub{i}{\obj} = \wc{i}{\obj}$, then that is the closing write operation
    execution on $\obj$ in $\tr_i$.
\end{observation}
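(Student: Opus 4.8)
The plan is to argue directly from the defining property of the upper bound $\wub{i}{\obj}$ and the bookkeeping of the write counter $\wc{i}{\obj}$, without appealing to any other harmony property. First I would recall the two relevant facts. By the \emph{a priori} assumption on upper bounds, $\wub{i}{\obj}$ is the maximum number of write operation executions that $\tr_i$ performs on $\obj$ over \emph{every} possible execution of the program. And by inspection of the algorithm, $\wc{i}{\obj}$ is initialised to $0$ and incremented by exactly one immediately upon each write on $\obj$ (line \ref{l:write-wc-inc}), so that after the $n$-th write operation execution on $\obj$ in $\tr_i$ we have $\wc{i}{\obj} = n$. Hence the hypothesis that $\wc{i}{\obj} = \wub{i}{\obj}$ holds right after some write $\op_w = \fwop{i}{\obj}{\any}{\ok_i}$ says precisely that $\op_w$ is the $\wub{i}{\obj}$-th write on $\obj$ in $\tr_i$.

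The core step is then a one-line argument by contradiction. Suppose there existed a further write operation execution on $\obj$ by $\tr_i$ following $\op_w$ in $\traceos$. Then the total number of writes on $\obj$ performed by $\tr_i$ in $\traceos$ would be at least $\wub{i}{\obj} + 1$, contradicting the fact that $\wub{i}{\obj}$ upper-bounds the writes on $\obj$ in every execution. Thus no such write exists and $\op_w$ is the last write on $\obj$ in $\tr_i$. Crucially, I would make this argument at the level of executions rather than of the single observed trace: every continuation of the current partial execution is itself a possible execution and so is equally bounded by $\wub{i}{\obj}$, whence none of them contains a write on $\obj$ beyond $\op_w$. Since a closing write is by definition the last write on $\obj$ in every possible execution, $\op_w$ is the closing write on $\obj$ in $\tr_i$.

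The main obstacle is conceptual rather than computational. The statement stands or falls on correctly invoking the \emph{a priori} guarantee that $\wub{i}{\obj}$ bounds writes in \emph{all} executions and on matching this universal quantifier with the ``any possible execution'' quantifier in the definition of a closing write. This guarantee is external to the algorithm---it is supplied by the programmer, a type checker, or static analysis---so the proof must carry it as an assumption and must take care to apply the bound to executions as a whole, not merely to the trace $\traceos$ in hand; applying it only to $\traceos$ would establish that $\op_w$ is the last write in $\traceos$ but would leave open the possibility of a later write in some alternative continuation, which is exactly what the closing-write notion rules out.
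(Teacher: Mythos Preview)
Your argument is correct. The paper states this as an observation without proof, treating it as immediate from the semantics of $\wub{i}{\obj}$ (the \emph{a priori} upper bound on writes in any execution) and the bookkeeping of $\wc{i}{\obj}$; your write-up spells out precisely the reasoning the paper leaves implicit, and your care in quantifying over all continuations rather than just the current trace is exactly what is needed to match the ``closing write'' notion (cf.\ \rdef{def:last-write-inv}).
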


\begin{lemma} [Version Order]
    Any two transactions $\tr_i,\tr_j \in \traceos$ s.t. $\accesses{i} \cap
    \accesses{j} \neq \varnothing$  are \emph{isolation ordered}:  if $\exists
    \obj \in \accesses{i} \cap \accesses{j}$ $\pv{\obj}{i} < \pv{\obj}{j}$,
    then $\forall \obj,\objy \in \accesses{i} \cap \accesses{j}, \pv{\obj}{i} <
    \pv{\obj}{j}$.
\end{lemma}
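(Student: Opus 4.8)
The plan is to show that private versions are handed out to transactions in a single totally ordered sequence of atomic steps, so that the relative order of any two transactions sharing a variable is forced to be the same across every variable they share. First I would recall from the listing of \ppr{start} (lines \ref{l:lock-a}--\ref{l:unlock-b}) that a transaction $\tr_i$ assigns its private version $\pv{\obj}{i}$ for every $\obj \in \accesses{i}$ inside a single region guarded by locks on those variables: it acquires the locks, performs $\pv{\obj}{i} \gets \gv{\obj}$ together with an increment of $\gv{\obj}$ for each $\obj$, and only then releases the locks. The two structural facts I would lean on are that all the relevant locks are held simultaneously throughout this assignment region, and that each $\gv{\obj}$ is only ever incremented, hence monotonically non-decreasing over the whole execution.

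Next I would fix two transactions $\tr_i, \tr_j \in \traceos$ with $\accesses{i} \cap \accesses{j} \neq \varnothing$ and choose any shared variable $\obj_0$ in the intersection. Since both transactions must hold the lock on $\obj_0$ for the entire duration of their respective assignment regions, mutual exclusion forces those regions to be disjoint in time: one finishes before the other begins. Without loss of generality, suppose $\tr_i$'s region precedes $\tr_j$'s. Then for every shared $\obj \in \accesses{i} \cap \accesses{j}$ the step $\pv{\obj}{i} \gets \gv{\obj}$ (and the subsequent increment of $\gv{\obj}$) occurs strictly before $\pv{\obj}{j} \gets \gv{\obj}$; because $\gv{\obj}$ never decreases and was incremented at least once in between, I obtain $\pv{\obj}{i} < \pv{\obj}{j}$ for all such $\obj$.

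This delivers the lemma at once. If $\pv{\obj_0}{i} < \pv{\obj_0}{j}$ for some shared $\obj_0$, then the region of $\tr_j$ cannot precede that of $\tr_i$ (that ordering would force $\pv{\obj_0}{j} < \pv{\obj_0}{i}$, a contradiction), so $\tr_i$'s region precedes $\tr_j$'s, and hence $\pv{\obj}{i} < \pv{\obj}{j}$ holds for every $\obj \in \accesses{i} \cap \accesses{j}$. The symmetric case with $\tr_j$ first is identical, which is exactly the claimed isolation ordering.

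I expect the load-bearing step to be pinning down the locking discipline from the figure precisely enough to rule out the \emph{crossed} scenario, in which $\tr_i$ obtains a smaller version on one shared variable but a larger version on another. The whole argument rests on all of a transaction's access-set locks being held simultaneously throughout the assignment region --- equivalently, on the assignment being atomic with respect to every other transaction that shares any variable. Were \ppr{start} to acquire and release locks one variable at a time without a global protocol, that crossed scenario could arise and would falsify the statement; so confirming from lines \ref{l:lock-a}--\ref{l:unlock-b} that the critical section spans the entire access set at once is the crux, after which only the monotonicity of $\gv{\obj}$ and a short mutual-exclusion argument remain.
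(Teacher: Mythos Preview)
Your proposal is correct and follows essentially the same approach as the paper: both argue that the lock-guarded assignment region in \ppr{start} makes the acquisition of all private versions atomic, so if one transaction's region precedes another's (forced by mutual exclusion on any shared variable), then it obtains smaller private versions on every shared variable. Your version is more detailed---you make explicit the monotonicity of $\gv{\obj}$ and the contrapositive ruling out the crossed scenario---but the core argument is identical.
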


\begin{proof}
    During \ppr{start} every transaction acquires a value of $\pv{\obj}{i}$.
    Since the acquisition is guarded by locks, it is performed atomically, so
    that if transaction $\tr_i$, starts acquiring $\forall \obj \in
    \accesses{i}, \pv{\obj}{i}$, then no other $\tr_j$ acquires $\forall \obj
    \in \accesses{j} \cap \accesses{i}, \pv{\obj}{i}$ until transaction $\tr_i$
    completes acquiring and releases the locks. Hence, if for any two
    $\tr_i,\tr_j$, if $\exists \obj \in \accesses{i} \cap \accesses{j}$
    $\pv{\obj}{i} < \pv{\obj}{j}$, then $\forall \obj,\objy \in \accesses{i}
    \cap \accesses{j}, \pv{\obj}{i} < \pv{\obj}{j}$.
\end{proof}

\begin{corollary}
    [Version Order from Isolation Order] \label{lemma:lambda}
    \label{cor:lambda}
    Given transactions $\tr_i, \tr_j$ s.t. $\tr_i \isoorder_{\traceos} \tr_j$,
    then $\forall \obj \in \accesses{i} \cap \accesses{j}, \pv{\obj}{j} <
    \pv{\obj}{i}$.
\end{corollary}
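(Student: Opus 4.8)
The plan is to unfold $\isoorder_{\traceos}$ as the transitive closure of the direct isolation order and to establish the version inequality one link at a time, using the Version Order lemma to keep the comparison coherent across variables and along the chain. By Definition~\ref{def:isolation-order}, $\tr_i \isoorder_{\traceos} \tr_j$ supplies a sequence $\tr_i = \tr_{n_0} \cdot \tr_{n_1} \cdot \ldots \cdot \tr_{n_k} = \tr_j$ in which each consecutive pair is directly isolation-ordered, $\tr_{n_p} \disoorder_{\traceos} \tr_{n_{p+1}}$. It therefore suffices to prove that for any directly-ordered pair $\tr_n \disoorder_{\traceos} \tr_m$ and any $\obj \in \accesses{n} \cap \accesses{m}$ the later transaction carries the smaller private version, $\pv{\obj}{m} < \pv{\obj}{n}$, and then to compose these comparisons transitively from $\tr_i$ to $\tr_j$, arriving at the stated $\pv{\obj}{j} < \pv{\obj}{i}$.

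For the direct step I would fix a shared variable $\obj$. By Definition~\ref{def:direct-isolation-order} we have $\tr_n \iso{\obj}{\traceos} \tr_m$, so by Definition~\ref{def:variable-isolation-order} every view and routine update event of $\tr_n$ on $\obj$ precedes every such event of $\tr_m$ in $\traceos$. The crux is to translate this purely temporal ordering of memory-access events into an arithmetic comparison of private versions. Here I would invoke the versioning machinery recorded in the Memory Access Pattern observation: each such event is emitted only once its transaction has cleared the access condition $\pv{\obj}{\cdot} - 1 = \lv{\obj}$, and a transaction advances $\lv{\obj}$ to its own private version upon finishing with $\obj$ through \ppr{release}. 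Since $\lv{\obj}$ advances monotonically and admits transactions in the order dictated by consecutive private versions, the ordering fixed by $\iso{\obj}{\traceos}$ pins down the sign of $\pv{\obj}{m} - \pv{\obj}{n}$ unambiguously, yielding the required $\pv{\obj}{m} < \pv{\obj}{n}$ for this pair and this variable.

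I would then globalize across variables and chain. The Version Order lemma guarantees that a single strict comparison of private versions on one shared variable forces the identical comparison on every other commonly-accessed variable; hence the inequality obtained for one witnessing $\obj$ at a given link holds for all variables shared by that pair. Composing the per-link comparisons transitively along $\tr_i = \tr_{n_0}, \ldots, \tr_{n_k} = \tr_j$, and re-invoking the Version Order lemma at each junction to transport a comparison established on one link's witnessing variable onto the variables of $\accesses{i} \cap \accesses{j}$, I would conclude $\pv{\obj}{j} < \pv{\obj}{i}$ for every $\obj \in \accesses{i} \cap \accesses{j}$, as stated.

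The step I expect to be the main obstacle is the translation in the direct case: converting the event-temporal statement $\tr_n \iso{\obj}{\traceos} \tr_m$ into the version comparison $\pv{\obj}{m} < \pv{\obj}{n}$. This demands showing that the access condition together with the monotone advancement of $\lv{\obj}$ forces consecutively-admitted transactions to hold strictly comparable private versions, so that whichever transaction accesses $\obj$ later is pinned by $\lv{\obj}$ to a determinate private version and the sign of $\pv{\obj}{m} - \pv{\obj}{n}$ is forced. A secondary subtlety is handling links whose witnessing variable lies outside $\accesses{i} \cap \accesses{j}$; this is exactly where the Version Order lemma is indispensable, since it lets a comparison established on any shared variable be transported to the variables actually common to the two endpoints.
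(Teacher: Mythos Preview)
The paper gives no explicit proof; it simply labels this a corollary of the Version Order lemma and moves on. Your approach is considerably more elaborate --- unfolding the transitive closure, handling each direct link via the access condition, and invoking Version Order to transport comparisons across variables --- and is a reasonable expansion of what the paper leaves implicit.

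However, there is a genuine problem with the direction of the inequality. Your own reasoning about the access condition and the monotone advancement of $\lv{\obj}$ actually establishes the \emph{opposite} of what you conclude: if $\tr_n \iso{\obj}{\traceos} \tr_m$, meaning $\tr_n$'s events on $\obj$ precede $\tr_m$'s, then $\tr_n$ must have cleared the access condition $\pv{\obj}{n} - 1 = \lv{\obj}$ at a point when $\lv{\obj}$ was smaller, so $\pv{\obj}{n} < \pv{\obj}{m}$ --- not $\pv{\obj}{m} < \pv{\obj}{n}$ as you write. You have bent the conclusion of a correct argument to match the stated inequality rather than following the logic where it leads.

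In fact the corollary as printed ($\pv{\obj}{j} < \pv{\obj}{i}$ when $\tr_i \isoorder_{\traceos} \tr_j$) appears to be a typo in the paper: every subsequent invocation of \rcor{cor:lambda} in the proofs of Closing Write Obbligato, View Write Obbligato, Abort Accord, Commit Accord, Abort Coda, and Chain Isolation uses it in the form $\pv{\obj}{i} < \pv{\obj}{j}$. Your argument, carried through honestly, proves that intended version; you should flag the discrepancy between the statement and its uses rather than reverse your own correct reasoning at the last step.
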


\begin{lemma} [Minimalism] \label{lemma:minimalism}
    $\traceos$ is minimalistic.
\end{lemma}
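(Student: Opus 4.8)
The plan is to reason directly from the Memory Access Pattern observation, which pins down the exact five lines at which OptSVA emits view, routine update, and recovery update events. Minimalism then reduces to a counting argument: for each transaction $\tr_i$ and each variable $\obj$, I must show that $\traceos|\tr_i$ contains at most one view event $\get{i}{\obj}{\any}$, at most one routine update $\rset{i}{\obj}{\any}$, and at most one recovery update $\aset{i}{\obj}{\any}$. Since by the observation the three event types are emitted by disjoint sets of lines, I would discharge each clause of Definition~\ref{def:minimalism} independently.

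For the view-event clause, the observation tells us that view events arise only within \ppr{checkpoint} and within \ppr{read\_buffer}. First I would argue that \ppr{checkpoint} is invoked for $\obj$ at most once, since it is guarded to run only the first time $\tr_i$ gains access to $\obj$. Next, \ppr{read\_buffer} is relegated to asynchronous execution exactly once per read-only variable. The crux is then mutual exclusivity: \ppr{read\_buffer} is executed for $\obj$ precisely when $\obj$ is read-only in $\tr_i$ (i.e. $\rub{i}{\obj} > 0$ and $\wub{i}{\obj} = 0$), whereas \ppr{checkpoint} is reserved for variables that $\tr_i$ may modify; hence the two procedures never both fire for the same $\obj$, leaving at most one view event.

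The routine-update clause proceeds analogously. Routine updates are emitted only by \ppr{write\_buffer} and by the \ppr{catch\_up} invoked inside \ppr{commit}. I would show that \ppr{write\_buffer} is relegated at most once for $\obj$---it is triggered only when the write counter reaches its upper bound, $\wc{i}{\obj} = \wub{i}{\obj}$, which occurs once---and that \ppr{commit}, hence \ppr{catch\_up} for $\obj$, runs at most once per transaction. These are mutually exclusive by design: if the upper bound is reached during execution the buffer is flushed by \ppr{write\_buffer}, whereas \ppr{catch\_up} updates $\obj$ only in the complementary case where the bound was not reached. So at most one routine update per variable survives. The recovery-update clause is the easiest: recovery updates come solely from \ppr{abort}, and a transaction executes \ppr{abort} at most once---after aborting it runs no further code---producing a single recovery update for $\obj$ as it iterates its access set.

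The main obstacle will be discharging the mutual-exclusivity and single-firing claims rigorously, because they rest on the guards and \texttt{async run} relegations in the full pseudocode of Figure~\ref{fig:optsva} rather than on anything stated abstractly. In particular the view-event case is the delicate one, since it requires reconciling the read-only classification that drives \ppr{read\_buffer} with the ``first access'' guard on \ppr{checkpoint} under the delayed-synchronization-on-first-write behavior, where a leading write is buffered and does not itself trigger a checkpoint.
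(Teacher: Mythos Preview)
Your proposal is correct and follows essentially the same approach as the paper: a per-clause counting argument that splits on read-only versus non-read-only for view events, on closing-write-versus-commit for routine updates, and on the single execution of \ppr{abort} for recovery updates. The paper's proof is terser and does not flag the mutual-exclusivity guards as delicate, but the structure, case split, and reliance on the same pseudocode lines are identical.
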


\begin{proof} %
    If $\obj$ is read-only in $\tr_i$, then there is exactly one view event on
    $\obj$ in $\tr_i$ (\rln{read-buffer:get}). If $\obj$ is not read-only, then
    there is exactly one view event on $\obj$ in $\tr_i$ executed as part of
    procedure \ppr{checkpoint} (\rln{checkpoint:get}), either during the first
    \ppr{read}, the closing \ppr{write} (in \ppr{write\_buffer}), or, if not
    previously invoked, during \ppr{commit}.

    Routine update events are executed only after the closing \ppr{write} (in
    \ppr{write\_buffer}---\rln{write-buffer:rset}), so at most once, or during
    \ppr{commit} (\rln{commit:rset}), if there were writes, but the upper bound
    on writes was not reached. Hence, routine update events occur at most once
    per variable.

    A recovery update event can occur only during \ppr{abort} \rln{abort:aset},
    at most once per variable.
\end{proof}

\begin{lemma} [Obligatory Checkpoints] \label{lemma:epsilon}
    If $\tr_i$ issues an update event or a view update event, $\tr_i$
    invoked \ppr{checkpoint}.
\end{lemma}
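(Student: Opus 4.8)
The plan is to argue by cases over the lines of OptSVA that can emit the events in question, using the Memory Access Pattern observation as a complete enumeration of event sources. By that observation, the only routine update events on $\obj$ are emitted at \rln{write-buffer:rset} (inside \ppr{write\_buffer}) and at \rln{commit:rset} (inside \ppr{commit}, via \ppr{catch\_up}), the only recovery update event at \rln{abort:aset} (inside \ppr{abort}), and the only view events are emitted at \rln{checkpoint:get} (inside \ppr{checkpoint}) and at \rln{read-buffer:get} (inside \ppr{read\_buffer}). A view event at \rln{checkpoint:get} needs no argument: it is by construction emitted from within an execution of \ppr{checkpoint}, so $\tr_i$ invoked \ppr{checkpoint}. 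The view events at \rln{read-buffer:get} are the sole events OptSVA emits without taking a checkpoint, since a read-only variable is never modified and so needs no backup; these therefore fall outside the statement and I would set them aside.

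Next I would dispatch the routine update events, which are exactly the points at which $\tr_i$ overwrites the memory of $\obj$ with a buffered write. The key point is that such an overwrite destroys the pre-write value of $\obj$, which is precisely the value a later recovery would need to restore, and that value is captured only by \ppr{checkpoint} (read at \rln{checkpoint:get} and stored into the backup buffer $\stored{i}{\obj}$). To turn this into control flow, I would invoke the reasoning already recorded in the proof of \rlemma{lemma:minimalism}: for a non-read-only variable the single view event, hence the invocation of \ppr{checkpoint} on $\obj$, occurs either during the first \ppr{read}, during the closing \ppr{write} inside \ppr{write\_buffer}, or, if neither occurred, during \ppr{commit}. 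Since both \rln{write-buffer:rset} and \rln{commit:rset} lie on exactly these code paths and are preceded on them by the (possibly conditional) invocation of \ppr{checkpoint}, every routine update event on $\obj$ in $\tr_i$ is preceded by an invocation of \ppr{checkpoint} on $\obj$.

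I would then treat the recovery update event at \rln{abort:aset}. This event restores $\obj$ from the backup copy $\stored{i}{\obj}$; such a restore is issued only for variables for which a backup exists, and a backup of $\obj$ is written solely by \ppr{checkpoint} at \rln{checkpoint:get}. Hence a recovery update event on $\obj$ in $\tr_i$ again entails a prior invocation of \ppr{checkpoint} on $\obj$, which closes the case analysis.

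The main obstacle is the delayed-synchronization-on-first-write path, in which the first operation on $\obj$ is a write, so no preceding \ppr{read} (and no \ppr{checkpoint} tied to a read) has run. Here one must verify directly from the listing that \ppr{write\_buffer} (respectively \ppr{commit} via \ppr{catch\_up}) itself invokes \ppr{checkpoint} on $\obj$, after passing the access condition but before emitting the routine update at \rln{write-buffer:rset} (respectively \rln{commit:rset}). This is exactly the subtlety underlying the clause "during the closing \ppr{write} \ldots or \ldots during \ppr{commit}" in the proof of \rlemma{lemma:minimalism}, and confirming it against the code of \ppr{write\_buffer} and \ppr{catch\_up} is the one non-mechanical step; the remaining cases follow immediately from the Memory Access Pattern observation.
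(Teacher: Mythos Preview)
Your approach is essentially the paper's: a case analysis over the lines enumerated in the Memory Access Pattern observation. Two differences are worth noting.

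First, you are actually more careful than the paper on view events. The paper's proof simply asserts that ``view events are only executed as part of \ppr{checkpoint}'', which contradicts the Memory Access Pattern observation (view events also arise at \rln{read-buffer:get}). You correctly flag the \ppr{read\_buffer} view events as falling outside the lemma's intended scope. Similarly, you explicitly treat the routine update at \rln{commit:rset}, which the paper's proof omits (it mentions only \rln{write-buffer:rset}).

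Second, for the recovery update event your argument diverges from the paper's and is weaker as written. You argue by data flow: the event reads $\stored{i}{\obj}$, which is written only by \ppr{checkpoint}, so \ppr{checkpoint} must have run. But reading from $\stored{i}{\obj}$ does not by itself guarantee the buffer was ever written; you need the guard on \rln{abort:aset} to establish that. The paper instead reasons from those guards directly: \rln{abort:aset} is reached only when $\wc{i}{\obj}>0$ and $\pv{\obj}{i}-1>\lv{\obj}$; the latter means $\tr_i$ already released $\obj$, which (since $\wc{i}{\obj}>0$ rules out the read-only path) forces \ppr{write\_buffer} to have executed, and \ppr{write\_buffer} invokes \ppr{checkpoint} at lines \ref{l:wb-checkpoint-cond}--\ref{l:wb-checkpoint}. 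Your data-flow intuition is sound, but to make it a proof you should tie it to the guard conditions the way the paper does.
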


\begin{proof}
    View events are only executed as part of \ppr{checkpoint}. 
    
    A routine update
    event is only executed as part of \ppr{write\_buffer} at
    \rln{write-buffer:rset}, which is dominated by lines
    \ref{l:wb-checkpoint-cond}--\ref{l:wb-checkpoint}, which executes
    \ppr{checkpoint} if it was not previously executed. 

    A recovery update event occurs as a result of executing \rln{abort:aset},
    which is guarded by a condition that $\wc{i}{\obj} > 0$, so a write must
    have been executed. Furthermore, $\pv{\obj}{i} - 1 > \lv{\obj}$ must be
    true, which implies that $\tr_i$ released $\obj$, which means the closing
    write executed, so \ppr{write\_buffer} was started asynchronously. 
    That procedure executes a \ppr{checkpoint} if it was not executed beforehand at 
    lines \ref{l:wb-checkpoint-cond}--\ref{l:wb-checkpoint}.
\end{proof}

\begin{lemma} [Always View Before Update]\label{lemma:thorn}
    If transaction $\tr_i$ issues an update event $e_u = \set{i}{\obj}{\any}$
    in trace $\traceos$, then there is $e_v = \get{i}{\obj}{\any} \in
    \traceos|\tr_i$ s.t. $e_v \prec_\traceos e_u$.
\end{lemma}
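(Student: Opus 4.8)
The plan is to combine the classification of update events from the Memory Access Pattern observation with \rlemma{lemma:epsilon} (Obligatory Checkpoints), and then to upgrade the pure existence claim of that lemma into a temporal one by tracing the control flow that emits each update event. This lemma is, in effect, a strengthening of \rlemma{lemma:epsilon}: the existence of the view event is already guaranteed, so the whole content is the ordering $e_v \prec_\traceos e_u$.

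First I would note that, by the Memory Access Pattern observation, the update event $e_u = \set{i}{\obj}{\any}$ is one of exactly three kinds: a routine update emitted by \ppr{write\_buffer} at \rln{write-buffer:rset}, a routine update emitted by \ppr{commit} at \rln{commit:rset}, or a recovery update emitted by \ppr{abort} at \rln{abort:aset}. In each case \rlemma{lemma:epsilon} guarantees that $\tr_i$ invoked \ppr{checkpoint}, and since the only view event on $\obj$ inside \ppr{checkpoint} is the one at \rln{checkpoint:get}, there exists $e_v = \get{i}{\obj}{\any} \in \traceos|\tr_i$. What remains is solely to place $e_v$ before $e_u$.

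I would then establish the ordering by case analysis on the origin of $e_u$. In each case \ppr{checkpoint} is invoked either earlier in $\tr_i$'s run---during the first \ppr{read}, whose view event at \rln{checkpoint:get} then precedes the closing write and hence the later update---or inline immediately before the update, guarded by the ``if not previously executed'' test at lines \ref{l:wb-checkpoint-cond}--\ref{l:wb-checkpoint} inside \ppr{write\_buffer} (and analogously along \ppr{commit}'s \ppr{catch\_up} path before \rln{commit:rset}). Either way the single view event at \rln{checkpoint:get} is emitted before the update. For the recovery case at \rln{abort:aset} I would reuse the reasoning inside the proof of \rlemma{lemma:epsilon}: the guard on that line forces $\wc{i}{\obj} > 0$ and $\pv{\obj}{i} - 1 > \lv{\obj}$, hence a closing write to have executed and \ppr{write\_buffer} to have been started, which performs the \ppr{checkpoint} (and thus its view event) before the variable is reverted.

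The main obstacle will be the asynchrony: \ppr{write\_buffer} runs in a separate thread, so when \ppr{checkpoint} fires during the first \ppr{read} in the transaction thread while the update fires in the asynchronous thread, the two events sit on different threads and the ordering must still hold in the flattened trace $\traceos$. I would close this by observing that the view event precedes the closing write operation that spawns the asynchronous \ppr{write\_buffer}, so it precedes that thread's start and therefore every routine update the thread emits; and for \ppr{abort}, that the recovery update cannot be emitted until the asynchronous \ppr{write\_buffer} carrying the checkpoint has run, since a safe revert presupposes the variable was released by the closing write. When instead \ppr{checkpoint} runs inline within the asynchronous procedure, both events lie on the same thread and the textual order of \rln{checkpoint:get} before \rln{write-buffer:rset} suffices.
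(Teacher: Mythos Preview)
Your proposal is correct and follows essentially the same approach as the paper, which simply invokes \rlemma{lemma:epsilon} (reading the ``before'' ordering out of its proof rather than its statement) and then notes that \ppr{checkpoint} issues the view event. Your case analysis and treatment of asynchrony are more explicit than the paper's two-sentence argument, but they unpack the same reasoning rather than take a different route.
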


\begin{proof}
    From \rlemma{lemma:epsilon}, if $\tr_i$ executes an update event, then it
    executes \ppr{checkpoint} before the event is issued. Since
    \ppr{checkpoint} issues a view event, then a view event is issued before an
    update event.
\end{proof}

\begin{lemma} [Wait at Access] \label{lemma:gamma}
    Given transactions $\tr_i, \tr_j$ s.t. $\pv{\obj}{j} < \pv{\obj}{i}$,
    $\tr_i$ does not issue a view or update event on $\obj$ until $\tr_j$
    executes \ppr{release} on $\obj$, \ppr{abort}, or \ppr{commit}.
\end{lemma}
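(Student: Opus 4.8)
The plan is to reduce every view or update event on $\obj$ in $\tr_i$ to the single event of $\tr_i$ passing the \emph{access condition} $\pv{\obj}{i} - 1 = \lv{\obj}$, and then to argue that $\lv{\obj}$ cannot reach the value $\pv{\obj}{i} - 1$ before $\tr_j$ has set $\lv{\obj}$ to $\pv{\obj}{j}$. First I would dispose of update events: by \rlemma{lemma:thorn} every update event $\set{i}{\obj}{\any}$ in $\tr_i$ is preceded in $\traceos|\tr_i$ by a view event $\get{i}{\obj}{\any}$, so the earliest view-or-update event $\tr_i$ issues on $\obj$ is necessarily a view event, and it suffices to bound that first view event. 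By the Memory Access Pattern observation, view events on $\obj$ arise only at \rln{checkpoint:get} inside \ppr{checkpoint} and at \rln{read-buffer:get} inside \ppr{read\_buffer}; I would check that in either case the line is reached only after $\tr_i$ waits out the access condition --- \ppr{checkpoint} is invoked during the first \ppr{read}, inside \ppr{write\_buffer}, or inside \ppr{commit}, each of which blocks on $\pv{\obj}{i} - 1 = \lv{\obj}$ (e.g.\ line \ref{l:read-access}), while \ppr{read\_buffer} is dispatched by the {\tt async run P when C} construct on that very condition. Hence $\tr_i$'s first view event on $\obj$ cannot occur until $\lv{\obj} = \pv{\obj}{i} - 1$ holds.

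Next I would analyse how $\lv{\obj}$ evolves. It is initialised to $0$ and is assigned to only by a transaction writing its own private version into it, which happens in exactly three places: \ppr{release}, \ppr{commit} (line \ref{l:commit-lv}), and \ppr{abort} (line \ref{l:abort-lv}) --- exactly the three events named in the statement. The key sub-claim is that $\lv{\obj}$ moves through the consecutive values $0,1,2,\dots$ in order, each value $v$ being installed by the unique transaction whose private version for $\obj$ equals $v$; this rests on the fact that private versions for $\obj$ are unique and successive (by the versioning construction of Section~\ref{sec:supremum-versioning}) and that a transaction installs its own version $v$ only once $\lv{\obj}$ has reached $v-1$. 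For early releases the latter is immediate, since \ppr{release} is reached only after that transaction has passed its own access condition; for the releases performed at completion I would appeal to the commit-order (termination) condition $\pv{\obj}{i} - 1 = \ltv{\obj}$ guarding \ppr{commit} and \ppr{abort} (lines \ref{l:commit-commit} and \ref{l:abort-commit}), which, together with the fact that by completion every transaction has already raised $\lv{\obj}$ to its version, forces completions to occur in version order and thus keeps $\lv{\obj}$ monotone.

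Granting the sub-claim, the conclusion is immediate: from $\pv{\obj}{j} < \pv{\obj}{i}$ we have $\pv{\obj}{j} \leq \pv{\obj}{i} - 1$, so for $\lv{\obj}$ to reach $\pv{\obj}{i} - 1$ it must earlier have taken the value $\pv{\obj}{j}$, and the only transaction that installs $\lv{\obj} = \pv{\obj}{j}$ is $\tr_j$, doing so through \ppr{release}, \ppr{commit}, or \ppr{abort}. Thus $\tr_j$ performs one of these before $\lv{\obj}$ first satisfies $\tr_i$'s access condition, which in turn precedes $\tr_i$'s first view event and hence any view or update event $\tr_i$ issues on $\obj$. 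I expect the main obstacle to be the monotonicity sub-claim: the interaction between \emph{early} release (which advances $\lv{\obj}$ ahead of commit and is ordered only by the access condition) and release \emph{at completion} (ordered by the termination condition on $\ltv{\obj}$) must be reconciled so that $\lv{\obj}$ provably never skips the value $\pv{\obj}{j}$. I would formalise this by induction on the private version, using the termination condition to show completions respect version order and the access condition to show early releases do likewise.
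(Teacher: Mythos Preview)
Your proposal is correct and follows the same overall skeleton as the paper: reduce every view/update event on $\obj$ by $\tr_i$ to passing the access condition $\pv{\obj}{i}-1=\lv{\obj}$, then argue inductively that $\lv{\obj}$ steps through $0,1,2,\dots$ with value $v$ installed only by the transaction with private version $v$ via \ppr{release}, \ppr{commit}, or \ppr{abort}. The one substantive difference is in how you establish the monotonicity sub-claim for the commit/abort case. You route through the termination condition on $\ltv{\obj}$ and then need the auxiliary fact that a transaction sets $\lv{\obj}$ before it sets $\ltv{\obj}$; this works but is exactly the ``reconciliation'' you flag as the main obstacle. The paper avoids the detour entirely by observing that the assignments to $\lv{\obj}$ inside \ppr{commit} and \ppr{abort} are themselves dominated by a wait on the access condition (at lines \ref{l:commit:access} and \ref{l:abort-access}), just like \ppr{release} is. So all three places that write $\lv{\obj}\gets\pv{\obj}{k}$ uniformly wait for $\lv{\obj}=\pv{\obj}{k}-1$ first, and the induction on versions goes through directly without ever touching $\ltv{\obj}$. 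Conversely, your proof is more careful than the paper's in one respect: you explicitly cover the view event issued at \rln{read-buffer:get} in \ppr{read\_buffer} (guarded by the {\tt async \dots\ when} access condition), whereas the paper routes everything through \rlemma{lemma:epsilon} and \ppr{checkpoint}, tacitly eliding that case.
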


\begin{proof}
    Let $\tr_k$ be such that $\pv{\obj}{k} = \pv{\obj}{i} - 1$.
    Every invocation of \ppr{checkpoint} is dominated by an instruction that
    waits until the condition    $\pv{\obj}{i} - 1 = \lv{\obj}$:
    \rln{read:checkpoint} by \rln{read-access}, \rln{commit-checkpoint} by
    \rln{commit:access}, and \rln{wb-checkpoint} by
    \rln{write-no-reads-async}.
    Since, from \rlemma{lemma:epsilon}, every view or update event is preceded
    by the invocation of \ppr{checkpoint}, then each view or update event is
    dominated by an instruction that waits until $\pv{\obj}{i} - 1 =
    \lv{\obj}$.
    Hence in order for $\tr_i$ to issue a event it must be true that
    $\pv{\obj}{i} - 1 = \lv{\obj}$.
    
    In order for that condition to be met, some transaction must set
    $\lv{\obj}$ to $\pv{\obj}{i} - 1$ (or $\pv{\obj}{i} = 1$, but then there
    could not be such $\tr_j$ as assumed).
    Some transaction $\tr_k$ modifies a new value of $\lv{\obj}$ during
    \ppr{release}, \ppr{abort}, or \ppr{commit} and the value is there set to $\pv{\obj}{k}$.
    Hence $\tr_i$ cannot issue any view or update event until some $\tr_k$ such
    that $\pv{\obj}{k} = \pv{\obj}{i} - 1$ executes \ppr{release}, \ppr{abort},
    or \ppr{commit}.

    Every invocation of \ppr{release} (by $\tr_k$) is dominated by an
    instruction that waits until the condition $\pv{\obj}{k} - 1 = \lv{\obj}$
    is met: the invocation at 
    \rln{rb-release} by \rln{reads-b}, and the one at \rln{wb-release}
    by \rln{write-no-reads-async}.
    Furthermore, modifying $\lv{\obj}$ within \ppr{commit} (\rln{commit-lv}) or
    \ppr{abort} (\rln{abort-lv}) also requires that $\pv{\obj}{k} - 1 =
    \lv{\obj}$ be first satisfied (at \rln{abort-access} and
    \rln{abort-access}, respectively).
    Hence $\tr_k$ cannot set $\lv{\obj}$ to $\pv{\obj}{k}$ view or update event
    unless $\pv{\obj}{k} = 1$ or until some $\tr_l$ such that $\pv{\obj}{l} =
    \pv{\obj}{k} - 1$ executes \ppr{release}, \ppr{abort}, or \ppr{commit}.

    Assuming that $\pv{\obj}{k} > 1$, and that some $\tr_l$ s.t. $\pv{\obj}{l}
    = \pv{\obj}{k} - 1$ exists, then, since $\tr_i$ cannot issue any view or
    update event until $\tr_k$ sets $\lv{\obj}$ in \ppr{release}, \ppr{abort},
    or \ppr{commit} and since $\tr_k$ cannot set $\lv{\obj}$ until $\tr_l$
    executes \ppr{release}, \ppr{abort}, or \ppr{commit}, then $\tr_i$ cannot
    issue any view or update events until $\tr_l$ executes \ppr{release},
    \ppr{abort}, or \ppr{commit}. Since $\pv{\obj}{i} - 1 = \pv{\obj}{k}$ and
    $\pv{\obj}{k} - 1 = \pv{\obj}{l}$ then $\pv{\obj}{l} < \pv{\obj}{i}$.
    
    It follows by induction then that given any $\tr_j$ s.t. $\pv{\obj}{j} <
    \pv{\obj}{i}$, $\tr_i$ does not issue a view or update event on $\obj$
    until $\tr_j$ executes \ppr{release}, \ppr{abort}, or \ppr{commit}.
\end{proof}

\begin{lemma} [Recovery Versions from Version Order] \label{lemma:kappa}
    Given transactions $\tr_i, \tr_j$ s.t. $\pv{\obj}{j} < \pv{\obj}{i}$, 
    if $\tr_j$
    executes \ppr{abort} before $\tr_i$ executes \ppr{checkpoint}, 
    $\rv{j}{\obj} \leq \rv{i}{\obj}$.
    otherwise $\rv{j}{\obj} < \rv{i}{\obj}$.
\end{lemma}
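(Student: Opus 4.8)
The plan is to track how the current version $\cv{\obj}$ evolves around the two checkpoints, since $\rv{i}{\obj}$ and $\rv{j}{\obj}$ are each just the value of $\cv{\obj}$ read at the checkpoint of $\tr_i$ and $\tr_j$ respectively (\ppr{checkpoint} sets the recovery version to $\cv{\obj}$). The only ways $\cv{\obj}$ changes are: \ppr{release} and \ppr{commit}, which set $\cv{\obj} \gets \pv{\obj}{k}$ for the acting transaction $\tr_k$, and \ppr{abort}, which rolls $\cv{\obj} \gets \rv{k}{\obj}$. First I would establish a helper invariant, namely that $\rv{k}{\obj} < \pv{\obj}{k}$ for every transaction $\tr_k$, by induction on $\pv{\obj}{k}$: whatever is written into $\cv{\obj}$ is always the private or recovery version of a transaction that acted on $\obj$ strictly earlier, hence with a strictly smaller private version by the version order, and a recovery version is itself bounded by such a private version.

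Next I would invoke \rlemma{lemma:gamma}: since $\pv{\obj}{j} < \pv{\obj}{i}$, transaction $\tr_i$ cannot issue the view event of its \ppr{checkpoint} until $\tr_j$ has executed \ppr{release}, \ppr{abort}, or \ppr{commit}, so $\tr_j$'s terminal action on $\obj$ precedes $\tr_i$'s checkpoint, and the proof splits exactly along whether that action is an abort. In case (a), $\tr_j$ aborts before $\tr_i$ checkpoints, so at that abort $\cv{\obj} = \rv{j}{\obj}$; I would then show that every modification of $\cv{\obj}$ occurring between this abort and $\tr_i$'s checkpoint leaves $\cv{\obj} \geq \rv{j}{\obj}$. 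A \ppr{release} or \ppr{commit} raises $\cv{\obj}$ to some $\pv{\obj}{k} > \pv{\obj}{j} > \rv{j}{\obj}$, while a later \ppr{abort} by an intermediate $\tr_k$ with $\pv{\obj}{j} < \pv{\obj}{k} < \pv{\obj}{i}$ sets $\cv{\obj} = \rv{k}{\obj}$, where the induction hypothesis on the version gap (case (a) applied to the pair $\tr_j, \tr_k$, using that $\tr_j$ necessarily aborts before $\tr_k$ checkpoints) gives $\rv{k}{\obj} \geq \rv{j}{\obj}$. Hence $\rv{i}{\obj} = \cv{\obj} \geq \rv{j}{\obj}$.

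In case (b), $\tr_j$ does not abort before $\tr_i$ checkpoints, so by \rlemma{lemma:gamma} it released or committed first, setting $\cv{\obj} \gets \pv{\obj}{j}$, and by the helper $\pv{\obj}{j} > \rv{j}{\obj}$. The same case analysis on intermediate modifications now yields a strict bound: releases and commits push $\cv{\obj}$ to private versions $\geq \pv{\obj}{j} > \rv{j}{\obj}$, and an intermediate abort by $\tr_k$ sets $\cv{\obj} = \rv{k}{\obj} > \rv{j}{\obj}$ by the induction hypothesis (case (b) for $\tr_j, \tr_k$, since $\tr_k$ checkpoints before $\tr_i$ does and so $\tr_j$ does not abort before $\tr_k$'s checkpoint). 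Therefore $\rv{i}{\obj} = \cv{\obj} > \rv{j}{\obj}$.

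The main obstacle is precisely that $\cv{\obj}$ is non-monotone: early release, unlike commit and abort, is not serialized by the termination condition, so in principle a transaction with a private version below $\pv{\obj}{j}$ could abort and lower $\cv{\obj}$ after $\tr_j$ has already acted, which would threaten the bound in case (a). The delicate work is to use the commit-order serialization enforced by the termination condition to argue that once $\tr_j$ completes its abort, no transaction with a smaller private version can still touch $\cv{\obj}$, and to nest the two inductions—on the helper invariant $\rv{k}{\obj} < \pv{\obj}{k}$ and on the version gap $\pv{\obj}{i} - \pv{\obj}{j}$—so that every remaining modification of $\cv{\obj}$ between $\tr_j$'s terminal action and $\tr_i$'s checkpoint is shown never to violate the required (weak in case (a), strict in case (b)) bound.
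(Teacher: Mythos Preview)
Your approach is essentially the same as the paper's: both argue by induction on the version gap, tracing how $\cv{\obj}$ is handed from one predecessor to the next (the paper phrases this as a top-down descent from $\tr_i$ through its immediate predecessor $\tr_k$ with $\pv{\obj}{k}=\pv{\obj}{i}-1$, then $\tr_l$ with $\pv{\obj}{l}=\pv{\obj}{k}-1$, etc., while you phrase it as a bottom-up tracking of $\cv{\obj}$ from $\tr_j$'s terminal action forward). The paper also relies, implicitly, on your helper invariant $\rv{k}{\obj}<\pv{\obj}{k}$ when it writes ``since $\rv{k}{\obj}$ either trivially equals $0$ if $\pv{\obj}{k}=1$, or is acquired by analogy from some $\tr_l$ \ldots''. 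The one place you are more careful than the paper is precisely the non-monotonicity obstacle you flag: the paper simply asserts that $\rv{i}{\obj}$ equals whatever the immediate predecessor $\tr_k$ last wrote to $\cv{\obj}$, without explicitly ruling out interference from a lower-version transaction that released early and aborts in between; your plan to close this via the termination-condition serialization is the right fix and is exactly what the paper's other lemmas (e.g.\ the proof of Clean Recovery Update Events) rely on.
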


\begin{proof}
    Transaction $\tr_i$ sets $\rv{i}{\obj}$ to $\cv{\obj}$ only during
    \ppr{checkpoint} (\rln{checkpoint-cv}).
    Every invocation of \ppr{checkpoint} is dominated by an instruction that
    waits until the condition    $\pv{\obj}{i} - 1 = \lv{\obj}$:
    \rln{read:checkpoint} by \rln{read-access}, \rln{commit-checkpoint} by
    \rln{commit:access}, and \rln{wb-checkpoint} by
    \rln{write-no-reads-async}.

    In order for that condition to be met, some transaction must set
    $\lv{\obj}$ to $\pv{\obj}{i} - 1$ (or $\pv{\obj}{i} = 1$, but then there
    could not be such $\tr_j$ as assumed, so necessarily $\pv{\obj}{i} > 1$).
    Some transaction $\tr_k$ can set a new value of $\lv{\obj}$ during
    \ppr{release}, \ppr{abort}, or \ppr{commit}. 
    Hence $\tr_i$ sets the value of $\rv{i}{\obj}$ only after $\tr_k$ such
    that $\pv{\obj}{k} = \pv{\obj}{i} - 1$ executes \ppr{release}, \ppr{abort},
    or \ppr{commit}. 
    Thus, since value of $\cv{\obj}$ is there set by $\tr_k$ to $\pv{\obj}{k}$
    in case of \ppr{release} (\rln{release-cv}) and \ppr{commit}
    (\rln{commit-cv}), or $\rv{k}{\obj}$ in case of \ppr{abort}
    (\rln{abort:recover-cv}), $\rv{i}{\obj} = \rv{k}{\obj}$ if $\tr_k$ aborts before
    $\tr_i$ executes \ppr{checkpoint} and $\rv{i}{\obj} = \pv{\obj}{k}$
    otherwise.

    Since $\rv{k}{\obj}$ either trivially equals $0$ if $\pv{\obj}{k} =
    1$, or is acquired by analogy from some $\tr_l$ s.t. $\pv{\obj}{l} =
    \pv{\obj}{k} - 1$, then $\rv{k}{\obj} \leq \rv{i}{\obj}$.
    
    Furthermore, under the assumption that $\tr_k$ does not execute \ppr{abort}
    prior to $\tr_i$ executing \ppr{checkpoint}, then value of $\cv{\obj}$ is
    there set by $\tr_k$ only either within \ppr{release} or \ppr{commit}, and
    thus $\cv{\obj} = \pv{\obj}{k}$ during $\tr_i$'s \ppr{checkpoint}, so
    $\rv{i}{\obj} = \pv{\obj}{k}$. Since $\pv{\obj}{k} < \pv{\obj}{i}$, then
    $\rv{k}{\obj} < \rv{i}{\obj}$.

    By extension, given $\tr_j$ s.t. $\pv{\obj}{j} < \pv{\obj}{i}$, either
    $k=j$, or $\pv{\obj}{j} < \pv{\obj}{k}$. 

    In the former case, necessarily $\rv{k}{\obj} \leq \rv{i}{\obj}$ if $\tr_j$
    executes \ppr{abort} before $\tr_i$ executes \ppr{checkpoint}, or
    $\rv{k}{\obj} < \rv{i}{\obj}$.

    In the latter case, there must be some $\tr_l$ s.t.
    $\pv{\obj}{l} = \pv{\obj}{k}$. Then, if $\tr_l$ executes \ppr{abort} before
    $\tr_k$ executes \ppr{checkpoint}, $\rv{l}{\obj} \leq \rv{k}{\obj}$,
    otherwise $\rv{l}{\obj} < \rv{k}{\obj}$. 
    Furthermore, either $l=j$, or $\pv{\obj}{j} < \pv{\obj}{l}$.
    
    It then follows by induction that given any $\tr_j$ s.t. $\pv{\obj}{j} <
    \pv{\obj}{i}$, if $\tr_j$ executes \ppr{abort} before
    $\tr_i$ executes \ppr{checkpoint}, $\rv{l}{\obj} \leq \rv{k}{\obj}$,
    otherwise $\rv{l}{\obj} < \rv{k}{\obj}$. 
\end{proof}

\begin{lemma} [Isolation] \label{lemma:isolation}
    Trace $\traceos$ is isolated.
\end{lemma}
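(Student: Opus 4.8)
The plan is to reduce trace isolation, as given in Definition~\ref{def:trace-isolation}, to establishing the variable isolation order of Definition~\ref{def:variable-isolation-order} for every pair of transactions that share a variable. So I would fix arbitrary $\tr_i, \tr_j \in \traceos$ and an arbitrary $\obj \in \eset{i} \cap \eset{j}$. Membership in $\eset{i}$ means $\tr_i$ issues a view or routine update event on $\obj$; by \rlemma{lemma:epsilon} this is preceded by an invocation of \ppr{checkpoint}, so $\tr_i$ genuinely accesses $\obj$, giving $\obj \in \accesses{i} \cap \accesses{j}$. Since private versions are unique per variable, the two transactions hold distinct private versions for $\obj$, and by the Version Order lemma this ordering is uniform across all shared variables. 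Hence I may assume without loss of generality that $\pv{\obj}{i} < \pv{\obj}{j}$ for every $\obj \in \accesses{i} \cap \accesses{j}$, and it suffices to show $\tr_i \iso{\obj}{\traceos} \tr_j$ for each such $\obj$, namely that every view or routine update event of $\tr_i$ on $\obj$ precedes every view or routine update event of $\tr_j$ on $\obj$.

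The core of the argument combines \rlemma{lemma:gamma} with the auxiliary claim that $\tr_i$ issues \emph{all} of its view and routine update events on $\obj$ strictly before it sets $\lv{\obj}$ to $\pv{\obj}{i}$, i.e. before it releases $\obj$ through \ppr{release}, \ppr{commit}, or \ppr{abort}. Granting this claim, \rlemma{lemma:gamma} guarantees that $\tr_j$, having the higher version, issues no view or update event on $\obj$ until after $\tr_i$ has executed one of \ppr{release}, \ppr{commit}, or \ppr{abort} on $\obj$, hence after $\tr_i$ has set $\lv{\obj}$. Chaining the two facts, every view or routine update event of $\tr_i$ on $\obj$ precedes $\tr_i$'s release point, which precedes every view or update event of $\tr_j$ on $\obj$; this is precisely $\tr_i \iso{\obj}{\traceos} \tr_j$. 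By \rlemma{lemma:minimalism} there is at most one view event and at most one routine update event per variable, so ``all such events'' collapses to a single event of each kind, which streamlines the bookkeeping.

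To establish the auxiliary claim I would run a case analysis on where $\tr_i$ releases $\obj$, using the Memory Access Pattern observation to locate every view and routine update event. If $\obj$ is read-only in $\tr_i$, the sole view event is emitted in \ppr{read\_buffer} at \rln{read-buffer:get} and the release follows at \rln{rb-release}, with no routine update present. If the closing write on $\obj$ is reached, the view event (from \ppr{checkpoint}, triggered either on the first read or inside \ppr{write\_buffer}) and the routine update at \rln{write-buffer:rset} both precede the release at \rln{wb-release}. Finally, if the write upper bound is never reached and the release occurs during \ppr{commit} (or the transaction follows the \ppr{abort} path), the \ppr{checkpoint} view event and the catch-up routine update at \rln{commit:rset} precede the assignment of $\lv{\obj}$ at \rln{commit-lv} (respectively \rln{abort-lv}). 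In every case the view and routine update events precede the release, and, crucially, no routine update is ever emitted after release.

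The main obstacle is this last case analysis, and in particular its interaction with the asynchronous execution model: the routine update in \ppr{write\_buffer} and the release at \rln{wb-release} are carried out on a thread spawned by \texttt{async run}, so I must argue that within that thread \rln{write-buffer:rset} is sequenced before \rln{wb-release}, and that the main thread does not set $\lv{\obj}$ through a competing path in the meantime. The recovery-update case needs attention only to confirm that it lies outside the scope of isolation: recovery updates are excluded from Definition~\ref{def:trace-isolation}, so an $\aset{i}{\obj}{\any}$ emitted by $\tr_i$ during a late abort---after $\tr_i$ released $\obj$ early and $\tr_j$ has already accessed it---does not violate the variable isolation order, which constrains only view and routine update events.
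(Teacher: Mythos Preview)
Your proposal is correct and follows essentially the same approach as the paper: the paper's proof is a terse sketch asserting that (i) every view and routine update event is guarded by the access condition, (ii) a transaction releases a variable only after issuing all its view and routine update events on it, and (iii) transactions are version-ordered, and you have simply unpacked these three points with explicit appeals to \rlemma{lemma:gamma}, \rlemma{lemma:epsilon}, and the Version Order lemma, together with the release-point case analysis that the paper leaves implicit. Your remarks on the sequencing within the asynchronously spawned \ppr{write\_buffer} thread and on the exclusion of recovery updates from \rdef{def:trace-isolation} are valid refinements that the paper's proof glosses over.
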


\begin{proof}    
    Every routine update event, view event, and recovery event is dominated by
    an access conditions ($\pv{\obj}{i} - 1 = \lv{\obj}$). This condition is
    satisfied for $\tr_i$ if $\lv{\obj} = 0$ and $\pv{\obj}{i} = 0$, or if some
    transaction $\tr_j$ s.t. $\pv{\obj}{j} = \pv{\obj}{i} - 1$ releases $\obj$
    by setting $\lv{\obj}$ to $\pv{\obj}{j}$ during commit or after closing
    write or after the first non-local read (and thus after any
    $\rset{j}{\obj}{\any}$ or $\get{j}{\obj}\any$).
    
    Since events are guarded by access conditions, since variables are released
    after all view or routine update events are issued by a transaction, and
    since each transactions are version-ordered, then for any $\tr_i, \tr_j$, 
    $\exists \obj \in \accesses{i} \cap \accesses{j}$ 
    if $\exists e_i = \rset{i}{\obj}{\any} \in \traceos|\tr_i$ 
    or $e_i = \get{i}{\obj}{\any} \in \traceos|\tr_i$ 
    and $\exists e_j = \rset{j}{\obj}{\any} \in \traceos|\tr_j$ 
    or $e_j = \get{j}{\obj}{\any} \in \traceos|\tr_j$, 
    and $e_i \prec_\trace e_j$ then
    $\forall \objy \in \accesses{i} \cap \accesses{j}$,
    if $\exists e_i' = \rset{i}{\objy}{\any} \in \traceos|\tr_i$ 
    or $e_i' = \get{i}{\objy}{\any} \in \traceos|\tr_i$ 
    and $\exists e_j' = \rset{j}{\objy}{\any} \in \traceos|\tr_j$ 
    or $e_j' = \get{j}{\objy}{\any} \in \traceos|\tr_j$, 
    and $e_i' \prec_\trace e_j'$.
\end{proof}

\begin{corollary} [Isolation Order]
    Trace $\traceos$ is isolation-ordered.
\end{corollary}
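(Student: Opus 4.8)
The plan is to read off the corollary from \rlemma{lemma:isolation}, which has just established that $\traceos$ is isolated, together with the Version Order lemma and \rlemma{lemma:gamma}. Interpreting ``$\traceos$ is isolation-ordered'' as the claim that every pair of transactions $\tr_i, \tr_j \in \traceos$ is comparable under $\isoorder_{\traceos}$, and recalling that $\isoorder$ is the chain-closure of the direct isolation order $\disoorder$, it suffices to exhibit, for each pair, a single consistent direction of $\disoorder_\traceos$. First I would dispose of the degenerate case: if $\eset{i} \cap \eset{j} = \varnothing$ then the quantifier in the definition of the direct isolation order ranges over the empty set, so $\tr_i \disoorder_\traceos \tr_j$ holds vacuously and the pair is trivially ordered. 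So I may assume $\eset{i} \cap \eset{j} \neq \varnothing$ and fix some $\obj$ in the intersection.

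Next I would pin down the direction using version order. By the Version Order lemma the two transactions are consistently ordered by their private versions across all shared variables, so without loss of generality $\pv{\obj}{i} < \pv{\obj}{j}$ for every shared $\obj$. Applying \rlemma{lemma:gamma} with $\tr_j$ in the role of the higher-version transaction, $\tr_j$ issues no view or routine update event on $\obj$ until $\tr_i$ has released $\obj$ (via \ppr{release}, \ppr{commit}, or \ppr{abort}); and as noted in the proof of \rlemma{lemma:isolation}, a variable is released only after the transaction has issued all of its view and routine update events on it. Combined with isolation (events on a single variable do not interleave), this means every event of $\tr_i$ on $\obj$ precedes every event of $\tr_j$ on $\obj$, which is exactly $\tr_i \iso{\obj}{\traceos} \tr_j$.

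Finally I would lift this to all shared variables. The cross-variable consistency in the conclusion of \rlemma{lemma:isolation} states precisely that if some $e_i \prec_\traceos e_j$ holds for the events on one shared variable, then the corresponding order holds for the events on every $\obj \in \eset{i} \cap \eset{j}$; hence $\tr_i \iso{\obj}{\traceos} \tr_j$ for every variable in the intersection. By the definition of the direct isolation order this is $\tr_i \disoorder_\traceos \tr_j$, and taking the length-two chain $\epsilon = \tr_i \cdot \tr_j$ in the definition of the isolation order yields $\tr_i \isoorder_\traceos \tr_j$. Since the pair was arbitrary, $\traceos$ is isolation-ordered.

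I expect the main obstacle to be essentially already discharged by \rlemma{lemma:isolation}: the substantive content---that a single direction of the per-variable order is forced, and that it is consistent across all shared variables---is exactly what that lemma supplies, so the corollary is mostly a repackaging. The only genuine care needed here is to confirm that the chosen direction is well-defined (which follows from the Version Order lemma giving a strict, uniform order on shared variables) and to handle the disjoint-access case, where the direct isolation order holds vacuously.
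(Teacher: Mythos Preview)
Your proof is correct and considerably more detailed than the paper's treatment: the paper states this as a bare corollary of \rlemma{lemma:isolation} with no argument whatsoever, and never formally defines what it means for a trace to be ``isolation-ordered.'' The intended content is simply that, once $\traceos$ is known to be isolated, the orders of Definitions~\ref{def:variable-isolation-order}--\ref{def:isolation-order} are well-defined and mutually consistent on $\traceos$; downstream uses (e.g.\ in \rlemma{lemma:direct-precedence-in-vis}) only ever exploit the fact that an event-level precedence on some shared $\obj$ forces $\tr_i \iso{\obj}{\trace} \tr_j$, which is immediate from isolation.

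Your explicit interpretation---comparability of every pair under $\isoorder_{\traceos}$---is a reasonable (and slightly stronger) reading, and your argument does establish it. Two minor remarks: first, your appeals to the Version Order lemma and \rlemma{lemma:gamma} re-derive exactly the facts already packaged in the statement and proof of \rlemma{lemma:isolation}, so a one-line appeal to that lemma would have sufficed and matched the paper's intent. Second, in the disjoint case $\eset{i}\cap\eset{j}=\varnothing$ your vacuous argument gives $\tr_i \disoorder_{\traceos} \tr_j$ \emph{and} $\tr_j \disoorder_{\traceos} \tr_i$, which is fine for comparability but shows that $\isoorder_{\traceos}$ is only a preorder, not a partial order; this is harmless here since the paper never relies on antisymmetry.
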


\begin{lemma} [Write Consonance] \label{lemma:write-consonance}
    Any (complete) write operation in $\traceos$ is consonant.
\end{lemma}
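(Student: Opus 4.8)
The plan is to verify the two conjuncts of Definition~\ref{def:write-consonance} directly: for an arbitrary complete write operation execution $\op_w = \fwop{i}{\obj}{\val}{\ok_i}$ occurring in $\traceos$, I would show that $\val \neq 0$ and that $\val$ lies within the domain of $\obj$. Neither fact will come from the concurrency-control mechanics; both are properties of the value that the program hands to the write, so the crux of the argument is to confirm that OptSVA transmits that value faithfully.

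First I would observe that the value $\val$ named in $\op_w$ is an argument supplied by the code of transaction $\tr_i$, and that procedure \ppr{write} does not transform it: it merely records $\val$ in the private buffer $\buf{i}{\obj}$ (at \rln{write:set-buf}) and returns $\ok_i$, never consulting or altering the value. Consequently the value appearing in the operation execution coincides exactly with the value the program intended to write, and no later step of the algorithm can substitute a different value for it before the response event is issued. Next, since the transactional system model defines $\twop{i}{\obj}{\val}$ as the operation that sets the state of $\obj$ to $\val$, and since $\op_w$ is complete and returns $\ok_i$ rather than $\ab_i$, the value $\val$ must be a legitimate state of $\obj$, i.e. within its domain; this is intrinsic to the specification of a successful write. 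For the remaining requirement $\val \neq 0$, I would appeal to the unique writes property, which is itself a component of harmony (Definition~\ref{def:harmony}) and which stipulates that no write writes the initial value $\val_0 = 0$; hence every successfully written $\val$ differs from $0$.

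The main obstacle is conceptual rather than technical: write consonance is the most elementary of the consonance conditions, precisely because the written value is entirely under the program's control and is passed through by OptSVA untouched, so there is no inductive or version-counter reasoning to perform here. The only points that genuinely require care are (i) making explicit that the algorithm does not mutate the value between the invocation of the write and its buffering, so that the $\val$ of the operation execution is exactly the $\val$ stored in $\buf{i}{\obj}$, and (ii) grounding $\val \neq 0$ in the unique writes assumption rather than in any behaviour of the algorithm itself. With both of these stated, the two conjuncts follow immediately and the lemma is discharged.
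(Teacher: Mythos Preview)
Your argument for the domain condition has a gap. You write that ``since $\op_w$ is complete and returns $\ok_i$ rather than $\ab_i$, the value $\val$ must be a legitimate state of $\obj$, i.e.\ within its domain; this is intrinsic to the specification of a successful write.'' But nothing in the transactional system model says that a write returning $\ok_i$ implies its argument has been validated against the variable's domain; the model merely lists the possible return values without tying them to any such check. The guarantee instead comes from the algorithm: the paper's proof points to an explicit guard at \rln{write:domain} inside procedure \ppr{write}, which aborts the transaction whenever the supplied value is not within the domain of the variable. It is because that guard is present that a completed write with response $\ok_i$ necessarily has $\val$ in the domain. Your observation that \ppr{write} passes the value through unchanged to $\buf{i}{\obj}$ is correct but does not by itself yield domain membership; you still need to identify the mechanism that rejects out-of-domain values, and that mechanism is the guard, not the abstract specification.

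Your treatment of the $\val \neq 0$ conjunct via the unique-writes assumption is reasonable and in fact more explicit than the paper's own proof, which cites only the domain guard and does not separately address non-zeroness (presumably folding it into the same check or treating it as immediate from unique writes).
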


\begin{proof} %
    Each write is guarded by the condition at \rln{write:domain}, which aborts
    the transaction if the value that is supposed to be written is not within
    the domain of the variable. 
    Thus, each write is consonant.
\end{proof}

\begin{lemma} [Routine Update Consonance] \label{lemma:routine-update-consonance}
    Any routine update event in $\traceos$ is consonant.
\end{lemma}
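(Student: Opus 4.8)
The plan is to reduce routine update consonance to write consonance, which is already available from \rlemma{lemma:write-consonance}. By the definition of routine update consonance it suffices, for each routine update event $e_u = \rset{i}{\obj}{\val}$ in $\traceos$, to exhibit a consonant write operation execution $\op_w = \fwop{i}{\obj}{\val}{\ok_i}$ in $\traceos|\tr_i$ such that $\op_w \instigates e_u$. Since write consonance has already been shown for every complete write operation, the real work is to (i) locate the write whose value $e_u$ carries and confirm it is complete, and (ii) verify the instigation relation.

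For step (i) I would appeal to the Memory Access Pattern observation, which tells us that a routine update on $\obj$ by $\tr_i$ is issued only at \rln{write-buffer:rset} inside \ppr{write\_buffer} or at \rln{commit:rset} inside \ppr{commit}; in both cases the value written to memory is the current contents of the buffer $\buf{i}{\obj}$. I would then trace the provenance of that value: the buffer for $\obj$ is written only by a \ppr{write} operation at \ref{l:write:set-buf}, and this store happens only after the domain guard at \rln{write:domain} is passed, so the contributing write completes with $\ok_i$. Moreover, had any write on $\obj$ aborted, $\tr_i$ would have entered \ppr{abort} (issuing a recovery, not a routine, update), so the occurrence of $e_u$ already guarantees that the writes feeding the buffer completed successfully. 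Hence $\val = \buf{i}{\obj}$ is exactly the value of the closing (last) write operation execution $\op_w = \fwop{i}{\obj}{\val}{\ok_i}$ on $\obj$ in $\tr_i$, which is complete and therefore consonant by \rlemma{lemma:write-consonance}.

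For step (ii) I would establish $\op_w \instigates e_u$. Value agreement is immediate from step (i). For the prefacing half I would argue that no update event on $\obj$ lies strictly between the write carrying $\val$ and $e_u$: by \rlemma{lemma:minimalism} the transaction $\tr_i$ issues at most one routine update on $\obj$ (so $\tr_i$ contributes no competing update), and by \rlemma{lemma:isolation} together with the version order no other transaction may issue an update on $\obj$ while $\tr_i$ still holds it, which it does up to the point of $e_u$ (in \ppr{write\_buffer} the release at \rln{wb-release} follows \rln{write-buffer:rset}, and in \ppr{commit} the flush at \rln{commit:rset} likewise precedes the release). Combining the two cases gives an instigating, consonant write for every routine update event, whence $\traceos$ satisfies routine update consonance.

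The hard part will be reconciling the literal prefacing condition of instigation with the delayed-synchronization-on-first-write optimization: because buffered writes do not wait on the access condition, the raw invocation of $\op_w$ can occur before $\tr_i$ acquires $\obj$, so a lower-version transaction's routine update on $\obj$ may fall after that invocation. I therefore expect the crux to be arguing that instigation is properly understood as tracking the value-carrying (closing) write whose value populates the buffer at the moment of the flush, rather than as raw event adjacency, and then discharging the prefacing obligation on the interval from $\tr_i$'s acquisition of $\obj$ to $e_u$, which minimalism and isolation keep free of intervening updates.
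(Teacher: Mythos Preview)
Your approach matches the paper's: both proofs locate the two sites where routine updates are issued (\ppr{write\_buffer} and \ppr{commit}), trace the flushed value back through $\buf{i}{\obj}$ to the most recent write operation, and conclude that this write instigates the update. The paper's argument stops there; it does not invoke minimalism or isolation, and it does not separately discharge the prefacing clause of the instigation definition---it simply asserts $\op_i \instigates e^i_u$ once value provenance is established.

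Your step~(ii) and your ``hard part'' therefore go beyond what the paper actually does. The tension you identify is real: with delayed synchronization, the invocation of the closing write can precede a lower-version transaction's routine update on $\obj$, so the literal prefacing condition $e_i \pref_\trace e_u$ may fail. The paper's proof does not acknowledge or resolve this; it effectively treats instigation as the informal relation ``this write supplied the value carried by this update'' rather than the adjacency condition written in the definition. Your instinct to reinterpret instigation in that looser sense is exactly how the paper uses it throughout (see also the obbligato lemmas, which assert instigation on the same grounds). So you are not missing an argument that the paper supplies---you have spotted a gap the paper leaves open and are proposing the natural reading that closes it.
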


\begin{proof} %
    A routine update event $\rset{i}{\obj}{\val}$ occurs either as a result of
    executing a closing write operation on $\obj$ (\rln{write-buffer:rset}) or
    $\tr_i$ committing (\rln{commit:rset}), if the transaction executed writes,
    but the upper bound for writes was not reached for $\obj$.
    Clearly, then, if there was a routine update event, then $\tr_i$
    executed a write operation on $\obj$.
    In both cases above $\val = \buf{i}{\obj}$ and $\buf{i}{\obj}$ can be set
    by any write operation, the first non-local read operation, or during start
    for read-only variables. If there was a write, then $\obj$ is not
    read-only, and the first non-local read cannot follow a write, so
    $\buf{i}{\obj}$ is set within (the most recent) write operation executed by
    $\tr_i$ and corresponds to the value written by that operation.
    Thus, $\forall \tr_i, \forall e^i_u = \rset{i}{\obj}{\val} \in
    \traceos|\tr_i$, $\exists op_i = \fwop{i}{\obj}{\val}{\ok_i} \in
    \traceos|\tr_i$ s.t. $\op_i \instigates e^i_u$.
    Therefore, $e^i_u$ is consonant.
\end{proof}

\begin{lemma} [View Consonance] \label{lemma:view-consonance}
    Any view event in $\traceos$ is consonant.
\end{lemma}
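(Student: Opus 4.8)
The plan is to prove the lemma by a case analysis on the \emph{prefacing} update event of an arbitrary view event $e_v = \get{i}{\obj}{\val}$ in $\traceos$, using the value-determination rule for view events as the backbone. Recall that this rule fixes $\val$ to be exactly the value written by the most recent update event on $\obj$ preceding $e_v$ (equivalently, the update $e_u$ with $e_u \pref_{\traceos} e_v$), or the initial value $0$ when no such update exists. Consequently the value carried by $e_v$ \emph{automatically} agrees with whatever update prefaces it, and all that remains is to classify that prefacing update and to discharge the side conditions of the view consonance definition.

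First I would handle the degenerate case in which no update event on $\obj$ precedes $e_v$. The rule then forces $\val = 0$, and the nonexistence clause of case a) is met verbatim, so $e_v$ is consonant. Next, suppose the prefacing update is a routine update $e_u = \rset{j}{\obj}{\val}$. The decisive point is that $j \neq i$: by \rlemma{lemma:thorn} every update event issued by $\tr_i$ follows $\tr_i$'s own view event on $\obj$, and by \rlemma{lemma:minimalism} there is only one such view event, so no update of $\tr_i$ can preface $e_v$; hence the prefacing routine update belongs to a distinct $\tr_j$. Minimalism also permits at most one routine update on $\obj$ in $\traceos|\tr_j$, so $e_u$ is trivially the ultimate routine update on $\obj$ in $\tr_j$. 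To meet the requirement $\val \neq 0$ of case b), I would chain \rlemma{lemma:routine-update-consonance} with \rlemma{lemma:write-consonance}: $e_u$ is instigated by a write operation, and write values lie in the domain of $\obj$ and are nonzero, whence $\val \neq 0$. All three conjuncts of case b) then hold.

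The last case is that the prefacing update is a recovery update $e_u = \aset{j}{\obj}{\val}$. The same appeal to \rlemma{lemma:thorn} and \rlemma{lemma:minimalism} yields $j \neq i$, and since $e_u \pref_{\traceos} e_v$ holds by construction, case c) applies with no constraint on $\val$. This case also absorbs the subtle situation where a recovery event restores the value $0$, which is exactly why case a) is stated with the nonexistence of a preceding update rather than merely $\val = 0$.

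The hard part will not be the value bookkeeping---that is handed to us by the value-determination rule---but rather establishing cleanly that a transaction never views its own update, i.e.\ that $j \neq i$ in cases b) and c). This rests entirely on \rlemma{lemma:thorn} together with the single-view guarantee of \rlemma{lemma:minimalism}, so I would be careful to instantiate both at the specific variable $\obj$. A secondary subtlety is simply noticing that ``ultimate routine update in $\traceos|\tr_j$'' is vacuous under minimalism, and that the absence of intervening updates demanded by $\pref_{\traceos}$ is precisely the hypothesis the value-determination rule already presupposes, so no separate argument about intervening updates is required.
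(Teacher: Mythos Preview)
Your proposal is correct and follows essentially the same approach as the paper: both rest on the value-determination rule (a view event returns the value of the prefacing update, or $0$ if none) and then classify the prefacing update against the three clauses of \rdef{def:view-consonance}. Your version is in fact more thorough than the paper's, which never explicitly discharges the $i\neq j$ constraint or the ``ultimate routine update in $\trace|\tr_j$'' side condition; your use of \rlemma{lemma:thorn} together with \rlemma{lemma:minimalism} to rule out a self-prefacing update, and of minimalism alone to make the ultimate-update requirement vacuous, cleanly fills those gaps.
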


\begin{proof}
    If a view event occurs, it views the current state of a variable.  So given
    transaction $\tr_i$, if there is a view event $e_v = \get{i}{\obj}{\val}
    \in \traceos|\tr_i$, $\val$ corresponds to the current state of $\obj$.
    The only way to change the state of $\obj$ is via an update event on
    $\obj$.
    Thus, trivially, for some $\tr_j$, if there is $e_u = \rset{j}{\obj}{\val'}
    \in \traceos|\tr_j$ or $e_a = \aset{j}{\obj}{\val'} \in \traceos|\tr_j$, if
    either $e_u$ or $e_a$ precede $e_v$ so that no other update event on $\obj$
    occurs between either $e_u$ or $e_a$ and $e_v$, then $v=v'$.
    Furthermore, from unique routine updates, there cannot there is no $e_u$
    s.t. $\val' = 0$, and since $\obj$ is initially $0$, then $\nexists e_u$
    s.t. $\val' = 0$ and $e_u \pref e_v$.
\end{proof}

\begin{lemma} [Local Read Consonance] \label{lemma:local-read-consonance}
    Any local read operation execution in $\traceos$ is consonant.
\end{lemma}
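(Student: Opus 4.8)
The plan is to read the behaviour of a local read straight off the OptSVA code and reduce its consonance to that of the write that supplies its value. First I would unfold the definition of a local read: a read operation execution $\op_r = \frop{i}{\obj}{\val}$ in $\traceos|\tr_i$ is local precisely when it is preceded in $\tr_i$ by a write operation execution on $\obj$. In particular $\obj$ is written by $\tr_i$, so $\obj$ is not read-only in $\tr_i$, and hence $\op_r$ is not served from the read-only buffer. Instead, by the structure of \ppr{read}, OptSVA resolves a local read from the write buffer $\buf{i}{\obj}$ at \rln{local-read:return}, so $\op_r$ issues no view event and does not touch shared memory; the value $\val$ it returns is exactly the value held in $\buf{i}{\obj}$ at the moment $\op_r$ executes.

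Next I would identify the write that put that value in the buffer. The only code that assigns $\buf{i}{\obj}$ is the write operation at \rln{write:set-buf}, which stores the value the operation is asked to write. Let $\op_w = \fwop{i}{\obj}{\val'}{\ok_i}$ be the write operation execution of $\tr_i$ on $\obj$ that immediately precedes $\op_r$ in $\traceos|\tr_i$, i.e. $\op_w \pref_{\traceos|\tr_i} \op_r$; such a write exists because $\op_r$ is local. Since no further write on $\obj$ by $\tr_i$ lies between $\op_w$ and $\op_r$, and $\op_r$ reads the buffer rather than memory, $\buf{i}{\obj}$ still holds $\val'$ when $\op_r$ executes, so $\val = \val'$ and therefore $\op_w = \fwop{i}{\obj}{\val}{\ok_i}$.

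Finally I would discharge the consonance of the supplying write: by \rlemma{lemma:write-consonance} every complete write operation in $\traceos$ is consonant, so $\op_w$ is consonant. Combining the three observations yields precisely what the definition of local read consonance demands --- a consonant $\op_w = \fwop{i}{\obj}{\val}{\ok_i} \in \traceos|\tr_i$ with $\op_w \pref_{\traceos|\tr_i} \op_r$ --- and hence $\op_r$ is consonant. I expect the middle step to be the main obstacle: one must argue from the control flow of \ppr{write} and \ppr{read} both that a local read is routed to $\buf{i}{\obj}$ (rather than to memory, as a non-local read would be) and that the buffer is not overwritten between $\op_w$ and $\op_r$. This is exactly where the OptSVA read/write distinction and the \emph{a priori} upper bounds do the real work, and care is needed to match the operation-level prefacing relation $\pref_{\traceos|\tr_i}$ used in the definition with the "most recent preceding write" read off the code.
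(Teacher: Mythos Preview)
Your proposal is correct and follows essentially the same approach as the paper: both argue that a local read returns $\buf{i}{\obj}$ at \rln{local-read:return}, that this buffer value was last set by the most recent preceding write on $\obj$ in $\tr_i$, and hence the read matches that write's value. You are in fact slightly more careful than the paper in explicitly verifying the $\pref_{\traceos|\tr_i}$ relation and invoking \rlemma{lemma:write-consonance} for the consonance of $\op_w$; one small overstatement is that $\buf{i}{\obj}$ can also be assigned by \ppr{read\_buffer} and by the first non-local read, but you have already excluded those paths (the variable is not read-only, and no non-local read can occur between $\op_w$ and $\op_r$), so the argument stands.
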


\begin{proof}
    If transaction executes a local read on $\obj$, then it previously executed
    a write operation on $\obj$, so $\wc{i}{\obj} > 0$. Thus, the read procedure
    returns at \rln{local-read:return}, returning $\buf{i}{\obj}$. 
    The value of $\buf{i}{\obj}$ can be set by any write operation, the first
    non-local read operation, or during start for read-only variables. If there
    was a write, then $\obj$ is not read-only, and the first non-local read
    cannot follow a write, so $\buf{i}{\obj}$ is set within (the most recent)
    write operation executed by $\tr_i$ and corresponds to the value written by
    that operation.
    Thus, $\forall \tr_i, \forall op^i = \frop{i}{\obj}{\val} \in
    \traceos|\tr_i$ if $\op_i$ is local, $\exists op_i' =
    \fwop{i}{\obj}{\val'}{\ok_i} \in \traceos|\tr_i$ s.t. $\val' = \val$.
    Therefore, $\op_i$ is consonant.
\end{proof}

\begin{lemma} [Non-local Read Consonance] \label{lemma:non-local-read-consonance}
    Any non-local read operation execution in $\traceos$ is consonant.
\end{lemma}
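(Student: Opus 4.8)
The plan is to unfold Definition~\ref{def:non-local-read-consonance}: a non-local read $\op_r = \frop{i}{\obj}{\val}$ in $\traceos|\tr_i$ is consonant precisely when it depends on some consonant non-local view event. So I must (i) exhibit a view event $e_v = \get{i}{\obj}{\val'}$ with $\op_r \dependson e_v$, (ii) argue that $e_v$ is non-local, and (iii) invoke Lemma~\ref{lemma:view-consonance} to conclude $e_v$ is consonant. The three ingredients I would lean on are the OptSVA memory-access pattern (the Memory Access Pattern observation), minimalism (Lemma~\ref{lemma:minimalism}), and view consonance (Lemma~\ref{lemma:view-consonance}).

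First I would locate the view event. Since $\op_r$ is non-local, by definition no write operation execution on $\obj$ precedes it in $\traceos|\tr_i$, so $\op_r$'s value cannot originate from $\tr_i$'s own buffer-through-write path. Tracing \ppr{read} gives two subcases. If $\obj$ is read-only in $\tr_i$ ($\rub{i}{\obj}>0$, $\wub{i}{\obj}=0$), the value is buffered by \ppr{read\_buffer}, which issues the view event at \rln{read-buffer:get} and stores it in $\buf{i}{\obj}$, and every read of $\obj$ returns $\buf{i}{\obj}$. Otherwise $\obj$ is not read-only but the first access is a read, so \ppr{checkpoint} runs and issues the view event at \rln{checkpoint:get}, and the read returns that value. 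In both subcases the value returned by $\op_r$ equals that of a view event $e_v = \get{i}{\obj}{\val}$, so $\val' = \val$. By minimalism (Lemma~\ref{lemma:minimalism}) there is at most one view event on $\obj$ in $\traceos|\tr_i$, hence $e_v$ is unique; since it precedes $\op_r$'s response event $e_r$ and no further view event on $\obj$ in $\tr_i$ falls between them, $e_v \pref_\traceos e_r$, giving $\op_r \dependson e_v$.

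Next I would argue $e_v$ is non-local, taking a view event to be non-local when it is not preceded in the transaction by a routine update on the same variable. By the Routine Update Consonance reasoning (Lemma~\ref{lemma:routine-update-consonance}) a routine update on $\obj$ in $\tr_i$ can only follow a write operation on $\obj$; but $\op_r$ being non-local means no such write precedes it, hence none precedes $e_v$, so $e_v$ is non-local. Finally, $e_v$ is consonant by Lemma~\ref{lemma:view-consonance}. Assembling (i)--(iii) yields that $\op_r$ depends on a consonant non-local view event, which is exactly consonance per Definition~\ref{def:non-local-read-consonance}.

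The main obstacle I anticipate is pinning down $e_v \pref_\traceos e_r$ in the read-only subcase, where $\tr_i$ releases $\obj$ immediately after buffering. After that release a later-versioned transaction may issue its own update events on $\obj$ in memory before $e_r$ occurs, so a naive ``no update event in between'' reading of the prefacing relation would be violated by foreign updates. The resolution is that $\op_r$ returns the buffered value $\buf{i}{\obj}$ rather than re-reading memory, so the sourcing event remains $\tr_i$'s own view event $e_v$; the dependence must therefore be read relative to $\tr_i$'s view events (which minimalism makes unique), while isolation (Lemma~\ref{lemma:isolation}) guarantees that no foreign update intervenes before the buffering view event itself. Making this reading precise, rather than the value-matching bookkeeping, is the delicate point.
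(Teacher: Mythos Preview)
Your approach matches the paper's: the same case split on whether $\obj$ is read-only in $\tr_i$, tracing the returned value to the unique view event issued by \ppr{read\_buffer} (line~\ref{l:read-buffer:get}) or \ppr{checkpoint} (line~\ref{l:checkpoint:get}), and concluding dependence. You are in fact more thorough than the paper's proof, which neither invokes Lemma~\ref{lemma:view-consonance}, nor argues non-locality of the view event, nor confronts the prefacing subtlety you flag in the read-only case---it simply asserts that the reads ``depend on that view event'' and stops there.
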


\begin{proof}
    If $\obj$ is read-only in $\tr_i$, then during \ppr{start}, a view event
    occurs within \ppr{read\_buffer} (\rln{read-buffer:get}), and the state of
    $\obj$ is saved in $\buf{i}{\obj}$. Then, subsequent writes return the
    value of $\buf{i}{\obj}$ (\rln{read-only:return}) (waiting if necessary).
    Thus, they depend on that view event.

    Otherwise, a non-local read operation on $\obj$ is one that is not preceded
    by a write on $\obj$, so $\wc{i}{\obj} = 0$. The first such read executes
    \ppr{checkpoint} which initiates a view event (\rln{checkpoint:get}). The
    value obtained by that event is saved in $\stored{i}{\obj}$ and later
    $\buf{i}{\obj}$ is set to the same value. Finally, that value is returned
    at \rln{non-local-read:return}.  Subsequent non-local reads use the same
    value stored in $\buf{i}{\obj}$. The value remains unchanged, since it only
    be overwritten by a write, but the occurrence of a preceding write would
    mean the read is local (and since $\obj$ is not read-only, and there was a
    preceding non-local read). Thus, all non-local reads depend on the view
    event issued during \ppr{checkpoint}.

    Thus, $\forall \tr_i, \forall op^i = \frop{i}{\obj}{\val} \in
    \traceos|\tr_i$ if $\op_i$ is non-local, $\exists e_v^i =
    \get{i}{\obj}{\val'} \in \traceos|\tr_i$ s.t. $\val' = \val$.    
\end{proof}

\begin{lemma} [Conservative Recovery Update Events] \label{lemma:conservative}
    Any recovery update event in $\traceos$ is conservative.
\end{lemma}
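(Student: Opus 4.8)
The plan is to trace every recovery update event back to a \ppr{checkpoint} invocation, and then to show that the view event that the checkpoint issues is exactly the consonant, non-local, initial view event that the definition of \emph{conservative} demands. First I would invoke the Memory Access Pattern observation to note that a recovery update event $e_a = \aset{i}{\obj}{\val}$ arises only from executing \rln{abort:aset} inside \ppr{abort}. Then, by \rlemma{lemma:epsilon} (Obligatory Checkpoints), the mere fact that $\tr_i$ issues this update event guarantees that $\tr_i$ previously invoked \ppr{checkpoint} on $\obj$, which in turn issues a view event $e_v = \get{i}{\obj}{\any}$ at \rln{checkpoint:get}.

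Next I would establish the three attributes the definition requires of $e_v$. Consonance is immediate from \rlemma{lemma:view-consonance}, which already shows that every view event in $\traceos$ is consonant. For non-locality, I would argue that a routine update on $\obj$ by $\tr_i$ can only be issued at \rln{write-buffer:rset} (inside \ppr{write\_buffer}) or \rln{commit:rset} (inside \ppr{commit}); in both places the enclosing procedure first runs \ppr{checkpoint} if it has not yet run (the checkpoint-guard lines cited in the proof of \rlemma{lemma:epsilon}), so the checkpoint's view event always precedes any routine update event on $\obj$ in $\traceos|\tr_i$. Hence no $\rset{i}{\obj}{\any}$ precedes $e_v$, which makes $e_v$ non-local in the same sense as a non-local read. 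Finally, by \rlemma{lemma:minimalism} there is at most one view event on $\obj$ per transaction, and since it is issued by the first memory access to $\obj$ and precedes both the (unique) routine update event and the recovery update event $e_a$, it is initial among the events on $\obj$ in $\traceos|\tr_i$.

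Combining these, $e_v$ witnesses that $e_a$ is conservative, so every recovery update event in $\traceos$ is conservative. The main obstacle I anticipate is the non-locality/initiality step: I must be certain that no branch of the algorithm lets a routine update event reach memory before the checkpoint's view event---in particular the branch where the first operation on $\obj$ is a write and the checkpoint is deferred until \ppr{write\_buffer} or \ppr{commit}. The key is that buffered writes produce no memory events; only \rln{write-buffer:rset} and \rln{commit:rset} do, and both are dominated by the checkpoint guard, so the checkpoint's view genuinely captures the externally-visible, unmodified state of $\obj$ before $\tr_i$ touches it.
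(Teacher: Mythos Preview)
Your proposal is correct and follows essentially the same approach as the paper: both trace the recovery update event at \rln{abort:aset} back to a prior \ppr{checkpoint} invocation and use the view event issued there as the witness. You invoke \rlemma{lemma:epsilon} to reach the checkpoint (whereas the paper re-derives this from the guard conditions $\wc{i}{\obj}>0$ and the fact that $\rv{i}{\obj}$ is set only in \ppr{checkpoint} or \ppr{read\_buffer}), and you are more explicit than the paper about verifying the consonance, non-locality, and initiality of the resulting view event.
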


\begin{proof}
    The recovery update event in $\tr_i$ occurs as a result of executing
    \rln{abort:aset}, which updates the state of $\obj$ to $\stored{i}{\obj}$.
    This is done only if
    $\rv{i}{\obj} \neq \cv{\obj}$ and $\wc{i}{\obj} > 0$.
    If is to be true, Since $\rv{i}{\obj}$ is set to the value of $\cv{\obj}$
    only during \ppr{checkpoint} and \ppr{read\_buffer}, and since the
    requirement that $\wc{i}{\obj} > 0$ excludes the latter, this condition
    checks whether the current transaction previously made a checkpoint.
    Executing \ppr{checkpoint} entails a view event that sets
    $\stored{i}{\obj}$ to the current value of $\obj$.  Hence, if $e_a =
    \aset{i}{\obj}{\val} \in \trace|\tr_i$ then there exists  $e_v =
    \get{i}{\obj}{\val} \in \trace|\tr_i$ s.t. $e_a \prec e_v$.
\end{proof}

\begin{lemma} [Clean Recovery Update Events] \label{lemma:clean}
Any recovery update event in $\traceos$ is clean.
\end{lemma}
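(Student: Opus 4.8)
The plan is to unfold Definition~\ref{def:recovery-update-consonance}(e) and argue by contradiction. Fix a recovery update $e_a = \aset{i}{\obj}{\any}$ in some $\tr_i$ and let $e_v = \get{i}{\obj}{\any}$ be the view event that witnesses conservativeness, which exists, is non-local, and is initial in $\traceos|\tr_i$ by \rlemma{lemma:conservative}; note also that $e_v \prec_\traceos e_a$ by \rlemma{lemma:thorn}. Suppose, toward a contradiction, that there is a recovery update $e_a' = \aset{j}{\obj}{\any}$ of some $\tr_j$ with $\tr_j \iso{\obj}{\traceos} \tr_i$ and $e_v \prec_\traceos e_a' \prec_\traceos e_a$. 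Since $\tr_j \iso{\obj}{\traceos} \tr_i$ means $\tr_j$'s view and routine-update events on $\obj$ precede $\tr_i$'s, and by \rlemma{lemma:gamma} such events are issued in version order, $\tr_j$ has the smaller private version, $\pv{\obj}{j} < \pv{\obj}{i}$; this is the ordering I will exploit throughout.

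First I would localize $e_a$ and $e_a'$. Both are recovery updates, so each of $\tr_i$ and $\tr_j$ aborts, and each such event is issued inside \ppr{abort} only after that transaction clears its termination condition $\pv{\obj}{\cdot}-1 = \ltv{\obj}$. An induction on private versions over the commit-order mechanism---the same induction structure used in the proof of \rlemma{lemma:gamma}---shows that every transaction with a smaller version for $\obj$ sets $\ltv{\obj}$ to its own private version before $\tr_i$ can clear $\pv{\obj}{i}-1 = \ltv{\obj}$. In particular $\tr_j$ completes its abort, and thus issues $e_a'$, strictly before $\tr_i$ issues $e_a$. This re-establishes $e_a' \prec_\traceos e_a$ and, more usefully, pins $e_a'$ to $\tr_j$'s \ppr{abort}.

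The substance of the argument is then to show $e_a' \prec_\traceos e_v$, contradicting the assumed $e_v \prec_\traceos e_a'$ (the two together with $e_v \prec_\traceos e_a$ would force the clean-violating order $e_v \prec_\traceos e_a' \prec_\traceos e_a$). By \rlemma{lemma:gamma}, $\tr_i$ does not issue $e_v$ until $\tr_j$ has executed \ppr{release}, \ppr{commit}, or \ppr{abort} on $\obj$. I would combine this with the forced-abort invariant: when $\tr_j$ performs $e_a'$ it resets $\cv{\obj}$ to $\rv{j}{\obj}$, and by \rlemma{lemma:kappa}, since $\pv{\obj}{j} < \pv{\obj}{i}$ and (in this scenario) $\tr_j$ aborts after $\tr_i$ checkpoints, $\rv{j}{\obj} < \rv{i}{\obj}$. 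Hence, were $e_a'$ to fall after $e_v$, the recovery version $\rv{i}{\obj}$ that $\tr_i$ recorded at its checkpoint would cease to equal $\cv{\obj}$, so at its next access or at commit $\tr_i$ would detect $\rv{i}{\obj} \neq \cv{\obj}$ and be forced to abort before ever applying a routine update on $\obj$; but, by the reasoning of \rlemma{lemma:epsilon}, a recovery update $e_a$ on $\obj$ is issued only when $\tr_i$ has itself written and released $\obj$, so $e_a$ would not arise---a contradiction. Therefore $e_a'$ cannot lie strictly between $e_v$ and $e_a$, and $e_a$ is clean.

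I expect the main obstacle to be the early-release-then-delayed-abort interleaving, in which $\tr_j$ releases $\obj$ right after its closing write yet aborts only much later, allowing $\tr_i$ to checkpoint in the interval. Ruling this out cleanly requires tying the consistency test ($\rv{i}{\obj}$ versus $\cv{\obj}$) to the precise instant at which $\tr_i$ first touches the memory of $\obj$, and accounting for the asynchronous thread that performs the closing write and the release. This is where the bookkeeping over $\cv{\obj}$, $\rv{\cdot}{\obj}$, and the termination condition must be carried out most carefully, and where \rlemma{lemma:kappa} does the real work.
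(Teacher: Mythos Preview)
Your setup and use of \rlemma{lemma:kappa} to obtain $\rv{j}{\obj} < \rv{i}{\obj}$ are in line with the paper, and the contradiction strategy is the right one. However, there is a genuine gap in the step where you conclude that $\tr_i$ ``would detect $\rv{i}{\obj} \neq \cv{\obj}$ and be forced to abort before ever applying a routine update on $\obj$.'' This does not follow. Nothing prevents the interleaving
\[
e_v \;\prec\; \rset{i}{\obj}{\any} \;\prec\; (\tr_i \text{ releases }\obj) \;\prec\; e_a' \;\prec\; e_a,
\]
in which $\tr_i$ completes its closing write, executes \ppr{write\_buffer}, issues its routine update, and releases $\obj$---all before $\tr_j$ aborts. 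In that interleaving $\tr_i$ has no ``next access'' on $\obj$ at which to detect the inconsistency, and your reasoning that $e_a$ cannot arise because $\tr_i$ never applied a routine update breaks down. You even flag the mirror-image of this case (``$\tr_j$ releases $\obj$ right after its closing write yet aborts only much later'') as an obstacle, but the same mechanism applied to $\tr_i$ is what actually defeats your argument.

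The paper closes this gap differently: it does not argue about $\tr_i$'s routine update at all, but about the guard on the recovery update inside \ppr{abort}. The line issuing $e_a$ (\rln{abort:aset}) is conditioned on $\rv{i}{\obj} = \cv{\obj}$; once $\tr_j$'s abort sets $\cv{\obj}\gets\rv{j}{\obj}$ with $\rv{j}{\obj} < \rv{i}{\obj}$, that guard fails and $\tr_i$ simply skips the recovery write for $\obj$. Thus $e_a$ is never issued, contradicting its assumed existence. Your argument should target the guard on $e_a$ itself rather than trying to preempt the routine update.
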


\begin{proof}
    Assume by contradiction that there exists $e_a = \aset{i}{\obj}{\val}$ in
    $\traceos|\tr_i$ and $e_v = get{i}{\obj}{\val}$ that justifies that $e_s$ is
    is conservative, and $e_a' = \aset{j}{\obj}{\val'}$ in $\traceos|\tr_j$ s.t.
    $\tr_j \iso{\obj}{\traceos} \tr_i$ and $e_v \prec_\trace e_a' \prec_\traceos
    e_a$.
    This implies that $\tr_j$ executes \ppr{abort} 
    (and satisfies the condition
    $\rv{j}{\obj} = \cv{\obj}$) between the point at which $\tr_i$ executes
    \ppr{checkpoint} and \ppr{abort} %
    If that is the case, as a result of executing \ppr{abort}, $\tr_j$ sets
    $\cv{\obj}$ to $\rv{j}{\obj}$, and the.
    
    Given that $\tr_j \iso{\obj}{\traceos} \tr_i$, then $\pv{\obj}{j} <
    \pv{\obj}{i}$.  
    Since any execution of \ppr{checkpoint} for some $\tr_k$ is guarded by the
    condition $\pv{\obj}{k} - 1 = \lv{\obj}$, then $\tr_j$ executes
    \ppr{checkpoint} before $\tr_i$. 
    Hence, $\tr_j$ acquires $\rv{j}{\obj}$ from $\cv{\obj}$ before $\tr_i$
    acquires $\rv{i}{\obj}$ from $\cv{\obj}$.

    The value of $\rv{j}{\obj}$ is equal to the value of $\cv{\obj}$ at the
    point when $\tr_j$ executed checkpoint (i.e. when $\pv{\obj}{i} - 1 =
    \lv{\obj}$). The value of $\cv{\obj}$ is set to $\pv{\obj}{k}$ when $\tr_k$
    executes \ppr{release} or \ppr{commit} or to $\rv{k}{\obj}$ when $\tr_k$
    aborts. Thus, when $\tr_j$ executes \ppr{checkpoint}, since $\pv{\obj}{i} -
    1 = \lv{\obj}$, then either: 
    \begin{enumerate}[a) ]
        \item $\cv{\obj} = \pv{\obj}{k} = \pv{\obj}{j} - 1$ (if $\tr_k$
            released $\obj$ or committed),
        \item $\cv{\obj} = \rv{k}{\obj}$ and $\rv{k}{\obj} < \pv{\obj}{j}$ (if
            $\tr_k$ aborted), or 
        \item $\cv{\obj} = 0$ (if there is no such $\tr_k$).
    \end{enumerate}
    In any case, $\rv{j}{\obj} < \pv{\obj}{i}$.

    $\tr_i$ is capable of executing \ppr{checkpoint} after $\tr_j$ commits,
    aborts, or releases $\obj$.  Since $\tr_j$ executes \ppr{abort} between
    $\tr_i$'s \ppr{checkpoint} and \ppr{abort}, then only the third option
    remains. If $\tr_j$ executes \ppr{release} for $\obj$, then it sets
    $\cv{\obj}$ to $\pv{\obj}{j}$. 
    Following the logic from the previous paragraph, this means that when
    $\tr_i$ assigns $\cv{\obj}$ to $\rc{i}{\obj}$, $\pv{\obj}{j} \leq
    \cv{\obj}$, so $\pv{\obj}{j} \leq \rv{i}{\obj}$, so $\rv{j}{\obj} <
    \rv{i}{\obj}$.
    
    Hence, after $\cv{\obj}$ to $\rv{j}{\obj}$ during abort, it is not true
    that $\cv{\obj} = \rv{j}{\obj}$. Thus, $e_a$ cannot occur once $e_a'$
    occurs, which is a contradiction.
\end{proof}

\begin{lemma} [Needed Recovery Update Events] \label{lemma:needed}
    Any recovery update event in $\traceos$ is needed.
\end{lemma}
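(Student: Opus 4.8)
The plan is to mirror the argument already used for \rlemma{lemma:epsilon} (Obligatory Checkpoints), since the \emph{needed} condition asks for essentially the same structural fact: that a routine update on $\obj$ has been emitted within $\tr_i$ before its recovery update. A recovery update event $e_a = \aset{i}{\obj}{\any}$ can arise only from \rln{abort:aset}, so I would begin by recording the two guards on that line that matter here: $\wc{i}{\obj} > 0$, and a condition on $\lv{\obj}$ that (as established in the proof of \rlemma{lemma:epsilon}) witnesses that $\tr_i$ has already released $\obj$. The first guard tells us $\tr_i$ performed a write on $\obj$, hence $\obj$ is not read-only in $\tr_i$ and its release cannot have originated in \ppr{read\_buffer}.

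Next I would pin down where that release came from. Because $\obj$ is not read-only, its release before abort is effected by \ppr{release} inside \ppr{write\_buffer} at \rln{wb-release}, which is reached only after the closing write on $\obj$ has executed (the commit-time variant, via \ppr{catch\_up}, is handled at the end). Hence \ppr{write\_buffer} must have run, and within it the routine update event $e_u = \rset{i}{\obj}{\any}$ is emitted at \rln{write-buffer:rset}, strictly before the call to \ppr{release} at \rln{wb-release}.

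The concluding step is the ordering. Because the recovery update at \rln{abort:aset} fires only once the release guard holds, and the release (\rln{wb-release}) is preceded inside \ppr{write\_buffer} by the routine update (\rln{write-buffer:rset}), we obtain $e_u \prec_{\traceos} e_a$; as both events belong to $\tr_i$, also $e_u \prec_{\traceos|\tr_i} e_a$, which is exactly the definition of $e_a$ being needed.

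I expect the main obstacle to be justifying that it is the release guard, and not the pair $\wc{i}{\obj} > 0$ and $\rv{i}{\obj} \neq \cv{\obj}$ highlighted in \rlemma{lemma:conservative}, that must be invoked here. Those two conditions alone do not suffice: in a cascading-abort scenario where $\tr_i$ reads a value released early by a predecessor $\tr_k$, buffers a non-closing write, and then aborts after $\tr_k$ aborts and lowers $\cv{\obj}$, both $\wc{i}{\obj} > 0$ and $\rv{i}{\obj} \neq \cv{\obj}$ hold even though no routine update was ever emitted. The care needed is to observe that in precisely this case $\tr_i$ has not released $\obj$, so $\lv{\obj} = \pv{\obj}{i} - 1$, the release guard fails, and $e_a$ never occurs. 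I would also note the analogous commit-time case, in which $\tr_i$'s buffered write is applied during a commit attempt via \ppr{catch\_up} at \rln{commit:rset} before the abort; this likewise produces a routine update preceding $e_a$, so the conclusion carries over.
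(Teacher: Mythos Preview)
Your proposal is correct and follows essentially the same approach as the paper: both argue that the recovery update at \rln{abort:aset} is guarded by $\wc{i}{\obj} > 0$ together with a release condition on $\lv{\obj}$, infer from the latter that \ppr{write\_buffer} must have run (since $\obj$ is not read-only), and conclude that the routine update at \rln{write-buffer:rset} precedes the recovery update. Your treatment is in fact more careful than the paper's: you explicitly explain why the release guard (rather than merely $\wc{i}{\obj} > 0$ and the $\cv{\obj}$ check) is the load-bearing condition, and you separately address the \ppr{catch\_up} path through \ppr{commit}, which the paper's terse proof elides by implicitly assuming that release before abort can only come via \ppr{write\_buffer}.
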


\begin{proof}
    The recovery update event occurs as a result of executing \rln{abort:aset},
    which is guarded by a condition that $\wc{i}{\obj} > 0$, so a write must
    have been executed. Furthermore, $\pv{\obj}{i} - 1 > \lv{\obj}$ must be
    true, which implies that $\tr_i$ released $\obj$, which means the closing
    write executed, so \ppr{write\_buffer} was started asynchronously. If that
    is the case, the recovery update event cannot execute until
    \ppr{write\_buffer}, which means a routine update event on $\obj$ will have
    executed before the recovery update event on $\obj$.
\end{proof}

\begin{lemma} [Dooming Recovery Update Events] \label{lemma:dooming}
Any recovery update event in $\traceos$ is dooming.
\end{lemma}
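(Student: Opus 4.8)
The plan is to exploit the very narrow circumstance under which a recovery update event can arise, and combine it with the transactional life-cycle constraint that an aborted transaction executes no further code. Recall that the dooming condition for $e_a = \aset{i}{\obj}{\val}$ requires $\nexists r \in \traceos$ s.t. $r = \res{i}{}{\co_i}$ and $e_a \prec_{\traceos|\tr_i} r$; so it suffices to show that no commit response of $\tr_i$ can follow the recovery update event within $\traceos|\tr_i$.

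First I would invoke the Memory Access Pattern observation, which states that a recovery update event on $\obj$ is generated \emph{only} by executing \rln{abort:aset} inside procedure \ppr{abort}. Hence if $e_a = \aset{i}{\obj}{\val} \in \traceos|\tr_i$, then $\tr_i$ is executing \ppr{abort}, which means $\tr_i$ aborts. Second, I would appeal to the system model: executing \ppr{abort} corresponds to the abort operation, whose return value is $\ab_i$, and by the transaction life-cycle a transaction performs no computations following an abort. Equivalently, by well-formedness no event in $\traceos|\tr_i$ follows $\res{i}{}{\ab_i}$. Since $e_a$ is issued during \ppr{abort}, it precedes the terminating response $\res{i}{}{\ab_i}$, after which nothing—in particular no $\res{i}{}{\co_i}$—can occur in $\traceos|\tr_i$. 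This directly yields the dooming property.

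The one point requiring care, and the main obstacle, is the case where \ppr{abort} is entered not as a standalone \tryA, but as a \emph{forced} abort triggered from within \ppr{commit} (so that a commit operation itself produces the recovery update event). I would dispatch this by noting that, as stated in the model, whenever an operation returns $\ab_i$ the procedure \ppr{abort} has been executed before the operation returns, and the operation's response is then $\ab_i$, never $\co_i$. Thus even on the forced-abort-during-commit path, the recovery update event produced inside \ppr{abort} is followed only by an $\ab_i$ response and no $\co_i$ response, so the dooming condition is preserved.

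In summary, the argument reduces to: recovery update events occur solely within \ppr{abort}; \ppr{abort} forces the transaction to terminate with $\ab_i$; an aborted transaction runs no further code; therefore no $\res{i}{}{\co_i}$ can follow $e_a$ in $\traceos|\tr_i$, and every recovery update event in $\traceos$ is dooming.
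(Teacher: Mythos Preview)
Your proof is correct and follows essentially the same approach as the paper, which dispatches the lemma in a single line: ``Trivially, since any recovery update event occurs only within \ppr{abort}.'' You have simply expanded this observation with the well-formedness and life-cycle details (and the forced-abort-during-commit case), but the core argument is identical.
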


\begin{proof}
    Trivially, since any recovery update event occurs only within \ppr{abort}.
\end{proof}

\begin{lemma} [Ending Recovery Update Events] \label{lemma:ending}
Any recovery update event in $\traceos$ is ending.
\end{lemma}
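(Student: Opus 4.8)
The plan is to exploit the fact that recovery update events arise only within \ppr{abort}, together with the terminal nature of abort in the transaction life-cycle. By the Memory Access Pattern observation, the only recovery update event on $\obj$ in $\tr_i$ is the one issued at \rln{abort:aset} during the execution of \ppr{abort}. I would first invoke the system model, which stipulates that a transaction performs no further computations once it has executed abort; in particular, $\tr_i$ issues no further transactional read or write operations after $e_a$, so no view or routine update event arising directly from \ppr{read} or \ppr{write} can follow $e_a$ in $\traceos|\tr_i$.

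The more delicate part is ruling out \emph{asynchronous} events. View events are produced only by \ppr{checkpoint} and \ppr{read\_buffer}, and routine update events only by \ppr{write\_buffer} and \ppr{commit}. Since $\tr_i$ aborts, \ppr{commit} is never executed, so no routine update from that source can follow $e_a$. For the remaining procedures, I would argue that each has necessarily completed before $e_a$ is issued: by \rlemma{lemma:epsilon}, the recovery update is preceded by an invocation of \ppr{checkpoint}, so the (unique, by \rlemma{lemma:minimalism}) view event on $\obj$ precedes $e_a$; and by the reasoning already used in \rlemma{lemma:needed}, the recovery at \rln{abort:aset} is guarded by $\pv{\obj}{i} - 1 > \lv{\obj}$, which forces the closing write---and thus \ppr{write\_buffer}---to have already applied its routine update on $\obj$ before the recovery can proceed. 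Hence no asynchronous \ppr{write\_buffer} on $\obj$ remains in flight past $e_a$.

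Finally, by \rlemma{lemma:minimalism} there is at most one recovery update event on $\obj$ in $\tr_i$, so $e_a$ is not followed by a second recovery update either. Collecting these facts, no view event $\get{i}{\obj}{\any}$ and no update event $\set{i}{\obj}{\any}$ can follow $e_a$ in $\traceos|\tr_i$, which is precisely the statement that $e_a$ is ending.

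I expect the main obstacle to lie in the asynchronous argument: because OptSVA delegates buffering and write-back to separate threads, one must be careful that the synchronization guarding \rln{abort:aset} serializes every pending memory access on $\obj$ before the recovery update, rather than merely the release of $\obj$. Making this precise---showing that the abort cannot ``overtake'' an unfinished \ppr{write\_buffer} and thereby admit a stray routine update event after $e_a$---is where the care is needed; once that is established, the terminal-abort and minimalism arguments close the lemma routinely.
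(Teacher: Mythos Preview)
Your argument is correct, but it is far more elaborate than the paper's. The paper disposes of the lemma in a single line: recovery update events occur only inside \ppr{abort}, and \ppr{abort} contains no other update or view events on the same variable, so nothing on $\obj$ can follow $e_a$ in $\traceos|\tr_i$. That is the entire proof.

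What you do differently is take seriously the possibility that an asynchronously spawned \ppr{write\_buffer} or \ppr{read\_buffer} thread might still be in flight and issue a view or routine update event on $\obj$ after the recovery event has fired. You then close that gap by appealing to the guards on \rln{abort:aset} (the same ones used in \rlemma{lemma:needed}) to force any such asynchronous access to have completed before the recovery update, and to \rlemma{lemma:minimalism} and \rlemma{lemma:epsilon} to rule out duplicate events. The paper's one-liner implicitly relies on the same synchronization but does not spell it out; your version makes the serialization of the asynchronous tasks against \ppr{abort} explicit, which is arguably the more honest treatment given that OptSVA's novelty is precisely its use of separate threads. Either way, the conclusion is the same and your approach is sound.
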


\begin{proof}
    Trivially, since any recovery update event occurs only within \ppr{abort},
    and there are no other update or view events on the same variable in
    \ppr{abort}.
\end{proof}

\begin{lemma} [Recovery Update Consonance] \label{lemma:recovery-update-consonance}
Any recovery update event in $\traceos$ is consonant.
\end{lemma}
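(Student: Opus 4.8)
The plan is to observe that recovery update consonance is, by Definition~\ref{def:recovery-update-consonance}, nothing more than the conjunction of five independent conditions on a recovery update event $e_a = \aset{i}{\obj}{\any}$: that $e_a$ be conservative, needed, dooming, ending, and clean. Each of these five conditions has just been discharged in a dedicated lemma for an arbitrary recovery update event in $\traceos$, so the whole argument reduces to assembling those results under the single heading the definition demands.

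Concretely, I would take any recovery update event $e_a \in \traceos$ and appeal in turn to \rlemma{lemma:conservative} (conservative), \rlemma{lemma:needed} (needed), \rlemma{lemma:dooming} (dooming), \rlemma{lemma:ending} (ending), and \rlemma{lemma:clean} (clean). Since $e_a$ was arbitrary and satisfies every clause of Definition~\ref{def:recovery-update-consonance}, it is consonant; hence every recovery update event in $\traceos$ is consonant.

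There is no genuine obstacle remaining at this stage: this is a bookkeeping lemma whose only role is to collect the preceding five results. All of the real work has already been done in the component lemmas --- in particular in \rlemma{lemma:clean}, whose cleanliness argument rests on the version-counter reasoning of \rlemma{lemma:kappa} to rule out an intervening recovery update by an isolation-earlier transaction. The present lemma merely packages these facts so that, together with the earlier consonance results for writes, routine updates, views, and reads, they can feed directly into establishing trace consonance (Definition~\ref{def:trace-consonance}).
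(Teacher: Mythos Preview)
Your proposal is correct and matches the paper's proof essentially line for line: the paper's argument is precisely to invoke Lemmata~\ref{lemma:conservative}, \ref{lemma:needed}, \ref{lemma:dooming}, \ref{lemma:ending}, and \ref{lemma:clean} to establish each clause of Definition~\ref{def:recovery-update-consonance}. As you note, this is purely a packaging lemma with no new content.
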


\begin{proof}
    Since each recovery update is conservative (from
    \rlemma{lemma:conservative}), needed (from \rlemma{lemma:needed}), dooming
    (from \rlemma{lemma:dooming}), ending (from \rlemma{lemma:ending}), and
    clean (from \rlemma{lemma:clean}), then each recovery update is consonant.
\end{proof}

\begin{lemma} [Trace Consonance] \label{lemma:consonance}
Trace $\traceos$ is consonant.
\end{lemma}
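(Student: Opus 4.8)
The plan is to prove consonance by decomposition: Definition \ref{def:trace-consonance} requires that every operation execution, every update event, and every view event in $\traceos$ be consonant, and each of these categories has already been dispatched by a dedicated lemma. First I would appeal to the Memory Access Pattern observation to confirm that the only view and update events OptSVA ever generates are exactly those enumerated there, so that the case analysis over event types is exhaustive: every view event, every routine update event, and every recovery update event is accounted for.

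Then I would invoke the individual consonance lemmas one at a time. View events are consonant by \rlemma{lemma:view-consonance}. Routine update events are consonant by \rlemma{lemma:routine-update-consonance}, and recovery update events by \rlemma{lemma:recovery-update-consonance}. For operation executions, the only kinds to which consonance applies are reads and writes: write operation executions are consonant by \rlemma{lemma:write-consonance}, while read operation executions split into the local case, handled by \rlemma{lemma:local-read-consonance}, and the non-local case, handled by \rlemma{lemma:non-local-read-consonance}. Since by definition every read operation execution is either local or non-local, this covers all reads.

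Having established that each event and each operation execution falling within the scope of consonance is in fact consonant, I would conclude directly from Definition \ref{def:trace-consonance} that $\traceos$ is consonant.

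Since every individual consonance property has already been proven in the preceding lemmas, there is no real obstacle here; this lemma is purely an aggregation step. The only point requiring any care is the exhaustiveness of the case split---namely, verifying that reads and writes are the only operation executions subject to consonance (the self-operations $\init_i$, $\tryC_i$, and $\tryA_i$ having no associated consonance definition), and that the event enumeration above captures every view and update event that OptSVA can emit.
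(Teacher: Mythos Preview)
Your proposal is correct and matches the paper's own proof, which is simply a one-line citation of Lemmata \ref{lemma:write-consonance}--\ref{lemma:non-local-read-consonance} together with \rlemma{lemma:recovery-update-consonance}. Your version is more explicit about the exhaustiveness of the case split (invoking the Memory Access Pattern observation and noting that only reads and writes carry a consonance requirement), but the underlying argument is identical: the lemma is a pure aggregation of the individual consonance results already established.
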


\begin{proof}
    From Lemmata
    \ref{lemma:write-consonance}--\ref{lemma:non-local-read-consonance} and
    \ref{lemma:recovery-update-consonance}.
\end{proof}

\begin{lemma} [Comitted Write Obbligato] \label{lemma:comitted-write-obbligato}
Given $\tr_i$ that is committed in $\traceos$, every non-local write operation execution
$\op_i = \fwop{i}{\obj}{\val}{ok_i} \in \traceos|\tr_i$ is in committed obbligato.
\end{lemma}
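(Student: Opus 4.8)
The plan is to pin down, for the given non-local write $\op_i = \fwop{i}{\obj}{\val}{\ok_i}$, the unique routine update event on $\obj$ issued by $\tr_i$, and then to show both that it carries the value $\val$ (so that $\op_i$ instigates it) and that it occurs before the commit response. First I would observe that, since $\op_i$ is non-local, it is the closing (last) write on $\obj$ in $\tr_i$. By the write procedure every write merely stores its argument into the buffer $\buf{i}{\obj}$ (\rln{write:set-buf}) rather than touching memory; hence after $\op_i$ the buffer holds $\val$, and as there is no later write on $\obj$, it continues to hold $\val$ until the state of $\obj$ is actually updated. By \rlemma{lemma:minimalism} at most one routine update event on $\obj$ occurs in $\tr_i$, and since $\tr_i$ did perform a write on $\obj$, exactly one such event exists; call it $e_s = \rset{i}{\obj}{v'}$, produced either at \rln{write-buffer:rset} or at \rln{commit:rset}.

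Next I would split on whether the \emph{a priori} write bound is reached. If $\wc{i}{\obj} = \wub{i}{\obj}$ holds when $\op_i$ completes (\rln{write-ub}), then \ppr{write\_buffer} is launched asynchronously (\rln{write-async}) and the update is applied at \rln{write-buffer:rset}; otherwise the bound is never reached, \ppr{write\_buffer} is not launched, and the update is instead applied during \ppr{commit} by \ppr{catch\_up} at \rln{commit:rset}. In either case the value written to memory is $\buf{i}{\obj} = \val$, so $v' = \val$. Moreover, since $\op_i$ is the closing write, its invocation event $e$ satisfies $e \pref_{\traceos} e_s$: no other update event on $\obj$ intervenes because by \rlemma{lemma:isolation} no other transaction touches $\obj$'s memory in this interval, and by \rlemma{lemma:minimalism} $\tr_i$ issues no further update on $\obj$. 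Hence $\op_i \instigates e_s$ (this is essentially the correspondence already used in \rlemma{lemma:routine-update-consonance}, read in the forward direction).

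It then remains to establish $e_s \prec_{\traceos|\tr_i} r$ for $r = \res{i}{}{\co_i}$. In the unreached-bound case this is immediate, as \rln{commit:rset} lies inside \ppr{commit} strictly before its response. The one step that genuinely relies on the algorithm's synchronization structure, and which I expect to be the main obstacle, is the asynchronous case: \ppr{write\_buffer} runs on a separate thread, so I must argue that its routine update is issued before $\co_i$ is returned. Here I would appeal to the fact that \ppr{commit} joins with any outstanding asynchronous write procedure before completing—the same join that guarantees the committed value is actually in memory—so that $e_s$ necessarily precedes $r$. Combining the value match $v' = \val$, the instigation $\op_i \instigates e_s$, and $e_s \prec_{\traceos|\tr_i} r$ yields that $\op_i$ is in committed obbligato, as required.
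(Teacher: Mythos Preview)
Your argument is correct and follows essentially the same route as the paper: both proofs split on whether $\wc{i}{\obj}=\wub{i}{\obj}$ at the last write, locate the routine update either in \ppr{write\_buffer} or in \ppr{commit}/\ppr{catch\_up}, note that the buffered value is $\val$ since $\op_i$ is the last write on $\obj$, and appeal to \ppr{commit} joining the asynchronous \ppr{write\_buffer} to place $e_s$ before $\res{i}{}{\co_i}$. Your treatment is in fact slightly more careful than the paper's, which establishes only $\prec$ rather than $\pref$ for instigation; one caveat is that your appeal to \rlemma{lemma:isolation} covers only routine updates and views from other transactions, not recovery updates, so strictly speaking the $\pref$ step needs an additional word about why no $\aset{j}{\obj}{\any}$ can intervene---but the paper's own proof does not address this either.
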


\begin{proof}
    If $\tr_i$ executes a write corresponding to $\op_i$, then, if at the end
    of the execution it is true that $\wc{i}{\obj} = \wub{i}{\obj}$,
    \ppr{write\_buffer} is executed, which causes a routine update event to
    execute, writing the value of $\buf{i}{\obj}$ to $\obj$. 
    
    Since
    $\wc{i}{\obj} = \wub{i}{\obj}$ no other writes follow, and since $\obj$ is
    not read-only in $\tr_i$, 
    then the
    value written to $\obj$ in \ppr{write\_buffer} is the value passed to the
    write operation. In that case there is $e_u = \rset{i}{\obj}{\val} \in
    \traceos|\tr_i$. Since commit will not return until \ppr{write\_buffer}
    finishes executing, then trivially $\inv{i}{}{\twop{i}{\obj}{\val}}
    \prec_{\traceos|\tr_i} e_u \prec_{\traceos|\tr_i} \res{i}{}{\co_i}$.
    
    If it is true that $\wc{i}{\obj} = \wub{i}{\obj}$, then \ppr{write\_buffer}
    is not executed, but during \ppr{commit}, the same condition is checked
    again, and if it is not satisfied, $\tr_i$ writes the value from
    $\buf{i}{\obj}$ to $\obj$. Thus, by analogy to the paragraph above, there
    is $e_u = \rset{i}{\obj}{\val} \in \traceos|\tr_i$. Since this is executed
    within commit, then $\inv{i}{}{\twop{i}{\obj}{\val}} \prec_{\traceos|\tr_i}
    e_u \prec_{\traceos|\tr_i} \res{i}{}{\co_i}$.
\end{proof}

\begin{lemma} [Closing Write Obbligato] \label{lemma:closing-write-obbligato}
Given $\tr_i$ that is decided on $\obj$ in $\traceos$, every non-local write
operation execution $\op_i = \fwop{i}{\obj}{\val}{ok_i} \in \traceos|\tr_i$ is
in closing write obbligato.
\end{lemma}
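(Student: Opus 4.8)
The plan is to follow the template of the Committed Write Obbligato proof (\rlemma{lemma:comitted-write-obbligato}), but to anchor the routine update event to the moment of \emph{release} rather than to commit, and then use the access‑gating lemma to force the viewing transaction's view event strictly after that release. Under the hypothesis that $\tr_i$ is decided on $\obj$ (in the last‑use opacity terminology, $\tr_i$ has performed its closing write on $\obj$), the non‑local write $\op_i = \fwop{i}{\obj}{\val}{\ok_i}$ is the last write on $\obj$ in $\traceos|\tr_i$ and is precisely the closing write: by the Closing Write Identification observation, $\wc{i}{\obj} = \wub{i}{\obj}$ holds at $\op_i$. This is exactly the case in which the closing‑write code path fires, so I can treat $\op_i$ as a closing write throughout.

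First I would trace that code path. When $\wc{i}{\obj} = \wub{i}{\obj}$ is reached at $\op_i$, \ppr{write\_buffer} is launched asynchronously; once it passes the access condition (\rln{write-no-reads-async}) it issues the routine update event $e_u = \rset{i}{\obj}{\val}$ at \rln{write-buffer:rset}, writing $\buf{i}{\obj}$ to $\obj$. By the buffering argument already used in \rlemma{lemma:routine-update-consonance}, $\buf{i}{\obj}$ holds exactly the value deposited by the most recent write, so $\op_i \instigates e_u$. The key structural point is that \ppr{write\_buffer} calls \ppr{release} (\rln{wb-release}) only \emph{after} emitting $e_u$; hence $e_u$ strictly precedes the update of $\lv{\obj}$ to $\pv{\obj}{i}$.

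Next I would place $\tr_j$'s view event relative to this release. Since $\tr_i \isoorder_\traceos \tr_j$ and $\obj \in \accesses{i} \cap \accesses{j}$, the isolation order preserves real‑time access order (by the corollary deriving version order from isolation order), so $\tr_i$ accesses $\obj$ no later than $\tr_j$ and therefore $\pv{\obj}{i} < \pv{\obj}{j}$. Thus $\tr_j$ carries the larger private version for $\obj$, and \rlemma{lemma:gamma} applies: $\tr_j$ cannot issue any view or update event on $\obj$ until $\tr_i$ executes \ppr{release}, \ppr{commit}, or \ppr{abort}. In the decided case the release inside \ppr{write\_buffer} is what first lowers this barrier, so $\tr_j$'s view event $e_v = \get{j}{\obj}{\any}$ cannot occur before $\tr_i$ sets $\lv{\obj} = \pv{\obj}{i}$. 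Chaining the two orderings — $e_u$ before the release, the release before any access by $\tr_j$ — yields $e_u \prec_\traceos e_v$, which is exactly what closing write obbligato demands.

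The main obstacle I anticipate is the interaction of asynchrony with the alternative release paths. Because \ppr{write\_buffer} runs on a separate thread, I must argue that its internal order (the checkpoint supplied by \rlemma{lemma:epsilon}, then the routine update at \rln{write-buffer:rset}, then \ppr{release} at \rln{wb-release}) is genuinely respected in real time, so that $e_u$ is well defined and precedes the release it gates. I would also take care to rule out a competing release of $\obj$ by $\tr_i$ preceding the one in \ppr{write\_buffer}: when the closing write has fired, the \ppr{catch\_up}/\ppr{commit} release path is skipped, so the \ppr{write\_buffer} release is the relevant one. Finally, I must confirm that the access condition guarding \ppr{write\_buffer} does not let the routine update run before $\tr_i$ has itself acquired $\obj$, which again follows from \rlemma{lemma:gamma} and keeps $e_u$ correctly ordered after $\tr_i$'s own acquisition and before the release.
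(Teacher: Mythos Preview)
Your proposal is correct and follows essentially the same route as the paper: identify that the closing write triggers \ppr{write\_buffer}, which issues the routine update $e_u$ before calling \ppr{release}; then use \rcor{cor:lambda} to get $\pv{\obj}{i} < \pv{\obj}{j}$ and \rlemma{lemma:gamma} to force $e_v$ after that release, yielding $e_u \prec_\traceos e_v$. Your additional care about asynchrony and ruling out competing release paths (e.g.\ the \ppr{catch\_up} branch being skipped once the closing write has fired) is more explicit than the paper's own argument, but the core structure is identical.
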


\begin{proof}
    If $\tr_i$ executes a write corresponding to $\op_i$, then, at the end of
    the execution, if $\op_i$ is a closing write it is necessarily true that
    $\wc{i}{\obj} = \wub{i}{\obj}$. This causes \ppr{write\_buffer} to be
    executed,\rln{write-no-reads-async} which causes a routine update event to
    execute, writing the value of $\buf{i}{\obj}$ to $\obj$.
    
    Since
    $\wc{i}{\obj} = \wub{i}{\obj}$ no other writes follow, and since $\obj$ is
    not read-only in $\tr_i$, 
    then the
    value written to $\obj$ in \ppr{write\_buffer} is the value passed to the
    write operation. In that case there is $e_u = \rset{i}{\obj}{\val} \in
    \traceos|\tr_i$. Since commit will not return until \ppr{write\_buffer}
    finishes executing, then trivially $\inv{i}{}{\twop{i}{\obj}{\val}}
    \prec_{\traceos|\tr_i} e_u \prec_{\traceos|\tr_i} \res{i}{}{\co_i}$.
    Hence $\op_i \instigates e_u$. $\tr_i$ executes \ppr{release} only
    following issuing $e_u$ \ppr{wb:release}.

    If $\tr_i \isoorder_\traceos \tr_j$, then $\pv{\obj}{i} < \pv{\obj}{j}$
    (\rcor{cor:lambda}). From \rlemma{lemma:gamma}, to issue any $e_v =
    \get{j}{\obj}{\any}$, since $\pv{\obj}{i} < \pv{\obj}{j}$, $\tr_i$ must
    executed \ppr{abort}, \ppr{commit}, or \ppr{release}. Hence $\tr_j$ does
    not issue $e_v$ before $\tr_i$ executes \ppr{release}, which requires that
    $\tr_i$ issues $e_u$ so that $e_u \prec_\traceos e_v$.
\end{proof}

\begin{lemma} [View Write Obbligato] \label{lemma:view-write-obbligato}
     Given $\tr_i \in \traceos$, 
     if $\exists \tr_j \in \trace$, s.t. $\tr_i \isoorder_\traceos \tr_j$,  
     if there is $\op_i = \wop{i}{\obj}{\any}{\ok_i} \in \traceos|\tr_i$,
     and $e_v = \get{j}{\obj}{\any} \in \traceos|\tr_j$, then 
     $\op_i$ is in in view write obbligato.
\end{lemma}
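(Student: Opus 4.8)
The plan is to reduce the claim to a single question: what has $\tr_i$ already done to $\obj$ by the time $\tr_j$ becomes able to issue $e_v$? First I would note that, since $\tr_i \isoorder_\traceos \tr_j$, \rcor{cor:lambda} gives $\pv{\obj}{i} < \pv{\obj}{j}$. Because $e_v = \get{j}{\obj}{\any}$ is a view event on $\obj$ issued by the higher-versioned transaction, \rlemma{lemma:gamma} forces $\tr_j$ to wait at the access condition until $\tr_i$ has executed \ppr{release}, \ppr{abort}, or \ppr{commit} on $\obj$; equivalently, $\tr_i$ must have set $\lv{\obj}$ to $\pv{\obj}{i}$ before $e_v$ can occur. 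The rest of the proof is then a case analysis on which of these three mechanisms performed that release, the aim being to exhibit in each case either a preceding update event $e_u = \set{i}{\obj}{\any}$ or a preceding abort response.

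Next I would dispatch the two ``flushing'' cases. If $\tr_i$ released $\obj$ through \ppr{write\_buffer} (its write on $\obj$ was closing), a routine update is issued at \rln{write-buffer:rset} strictly before the release at \rln{wb-release}; and if $\tr_i$ released $\obj$ during \ppr{commit}, then \ppr{catch\_up} emits a routine update at \rln{commit:rset} before the release at \rln{commit-lv}. This is essentially the content of \rlemma{lemma:comitted-write-obbligato} and \rlemma{lemma:closing-write-obbligato}. In both cases there is $e_u = \rset{i}{\obj}{\any}$ preceding the release of $\obj$, which in turn precedes $e_v$, so $e_u \prec_\traceos e_v$ and the first disjunct of view write obbligato holds. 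Here it matters that view write obbligato only asks for \emph{some} update event on $\obj$ by $\tr_i$, not one instigated by $\op_i$ itself, so it is irrelevant whether $\op_i$ is the flushing write or an earlier local one.

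The remaining case is when $\tr_i$ releases $\obj$ only inside \ppr{abort} (setting $\lv{\obj}$ at \rln{abort-lv}). I would split it once more. If $\tr_i$ had already flushed $\obj$ (a closing write or a commit attempt occurred), then \ppr{abort} issues a recovery update $\aset{i}{\obj}{\any}$ at \rln{abort:aset}, again giving an update event before the release and hence before $e_v$. The genuinely delicate sub-case is when $\tr_i$ wrote $\obj$ only into its buffer and never flushed it: no closing write occurred, so by the guard analysed in \rlemma{lemma:needed} no recovery update is issued either, and $\tr_i$ produces no update event on $\obj$ whatsoever. In this sub-case the only surviving route is to establish $\res{i}{}{\ab_i} \prec_\traceos e_v$.

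I expect this last sub-case to be the main obstacle, because \ppr{abort} sets $\lv{\obj}$ (the very action that unblocks $\tr_j$) before it returns $\ab_i$, so the access condition by itself does not order the abort response ahead of $e_v$. I would close the gap in one of two ways. The first is structural: argue from the pseudocode of \ppr{abort} that the release of $\obj$ is its terminal action on that variable, so that $\res{i}{}{\ab_i}$ is issued essentially at the release and the successor's view cannot precede it. The second, and cleaner, route is to observe that since $\tr_i$ emits no update event on $\obj$, its buffered write never reaches memory, so $e_v$ cannot view $\op_i$'s effect at all; as the aggregate property (Definition~\ref{def:obbligato}) only requires view write obbligato for writes ``whose effects are potentially viewed,'' this sub-case is vacuous for the use to which the lemma is actually put. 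I would make the chosen argument precise and then conclude that $\op_i$ is in view write obbligato in every case.
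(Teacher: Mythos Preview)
Your skeleton is exactly the paper's: invoke \rcor{cor:lambda} for $\pv{\obj}{i}<\pv{\obj}{j}$, then \rlemma{lemma:gamma} to force $\tr_i$ into one of \ppr{release}, \ppr{commit}, or \ppr{abort} before $e_v$, and case-analyse. The paper is in fact \emph{less} careful than you are: it disposes of the abort branch in a single clause (``since according to the assumption, $\tr_i$ cannot abort prior to $\tr_j$ issuing $e_v$''), tacitly identifying execution of \ppr{abort} with the response $\res{i}{}{\ab_i}$, and then treats only the \ppr{release}-via-\ppr{write\_buffer} path in detail; the \ppr{commit}/\ppr{catch\_up} path that you spell out is not separately argued in the paper.

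The sub-case you isolate---$\tr_i$ buffers a write to $\obj$, never reaches the closing write, and aborts, so that \ppr{abort} sets $\lv{\obj}$ (unblocking $\tr_j$) before $\res{i}{}{\ab_i}$ is emitted and no update event on $\obj$ ever appears in $\traceos|\tr_i$---is a genuine gap that the paper's proof also does not close. Of your two proposed closures, the first does not hold as written: \ppr{abort} still has work to do after \rln{abort-lv} (in particular \rln{abort-ltv}) before returning, so $\tr_j$ can interleave $e_v$ between the release and $\res{i}{}{\ab_i}$. Your second closure is the right one and is the one to make precise: if $\tr_i$ emits no update event on $\obj$ then $\op_i$'s effect is never in memory, so $e_v$ cannot view it, and the aggregate clause of \rdef{def:obbligato} (``all writes whose effects are potentially viewed are in view write obbligato'') is vacuous for this $\op_i$. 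Arguing that way gives you a cleaner treatment of this branch than the paper provides.
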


\begin{proof}
    If $\tr_i \isoorder_\traceos \tr_j$, then $\pv{\obj}{i} < \pv{\obj}{j}$
    \rcor{cor:lambda}.
    From \rlemma{lemma:gamma}  $e_v$  occurs only after  $\tr_i$ releases $\obj$,
    commits, or aborts.
    Since according to the assumption, $\tr_i$ cannot abort prior to $\tr_j$
    issuing $e_v$, $\tr_i$ either releases $\obj$ or commits prior to $\tr_j$ issuing $e_v$.

    If $\tr_i$ releases $\obj$ it executes \ppr{release}. This can occur as a
    result of executing \rln{rb-release} or \rln{wb-release}. Since $\tr_i$ executes $\op_i$,
    then \rln{rb-release} cannot be executed, since it can only be reached if $\tr_i$
    only ever reads $\obj$ (condition at \rln{reads-a}). 
    Hence $\tr_i$ must execute \rln{wb-release}, which is dominated by \rln{write-buffer:rset},
    which issues a write event $e_u = \set{i}{\obj}{\val}$, where $\val$ is the
    value of  $\buf{i}{\obj}$.

    Since \ppr{release} was executed at \rln{wb-release}, \ppr{write\_buffer} must
    have been executed at \rln{write-no-reads-async}.
    Then the
    value written to $\obj$ in \ppr{write\_buffer} is the value passed to the
    write operation. In that case there is $e_u = \rset{i}{\obj}{\val} \in
    \traceos|\tr_i$. Since commit will not return until \ppr{write\_buffer}
    finishes executing, then trivially $\inv{i}{}{\twop{i}{\obj}{\val}}
    \prec_{\traceos|\tr_i} e_u \prec_{\traceos|\tr_i} \res{i}{}{\co_i}$.
    Hence $\op_i \instigates e_u$. $\tr_i$ executes \ppr{release} only
    following issuing $e_u$ \ppr{wb:release}.
\end{proof}

\begin{lemma} [Obbligato] \label{lemma:obbligato}
    Trace $\traceos$ is obbligato.
\end{lemma}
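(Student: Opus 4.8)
The plan is to prove Lemma \ref{lemma:obbligato} by simply unfolding Definition \ref{def:obbligato} into its three constituent conditions and discharging each one by invoking the corresponding lemma that was just established. This mirrors the structure already used for Lemma \ref{lemma:consonance} (Trace Consonance), where trace-level consonance was obtained as an aggregation of the per-event and per-operation consonance lemmas. No new reasoning about the algorithm itself is required at this point; all the work has been pushed into the three preceding lemmas.

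Concretely, I would proceed as follows. First, condition a) of Definition \ref{def:obbligato} asks that every non-local write in every committed transaction be in committed obbligato; this is exactly the conclusion of Lemma \ref{lemma:comitted-write-obbligato}, which fixes an arbitrary committed $\tr_i$ and an arbitrary non-local $\op_i = \fwop{i}{\obj}{\val}{\ok_i} \in \traceos|\tr_i$ and shows it is in committed obbligato. Second, condition b) asks that every closing write whose effects are potentially viewed be in closing write obbligato; this follows from Lemma \ref{lemma:closing-write-obbligato}, whose hypothesis ($\tr_i$ decided on $\obj$, together with the existence of a viewing $\tr_j$ in the definition of closing write obbligato) matches precisely the ``potentially viewed'' qualifier of the condition. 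Third, condition c) asks that every write whose effects are potentially viewed be in view write obbligato; this is the statement of Lemma \ref{lemma:view-write-obbligato}, where the existence of $\tr_j$ with $\tr_i \isoorder_\traceos \tr_j$ and $e_v = \get{j}{\obj}{\any} \in \traceos|\tr_j$ again captures exactly the notion of the write's effects being potentially viewed. Since all three conditions hold, $\traceos$ is obbligato.

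I do not expect a genuine obstacle here, since the substantive arguments (the reliance on \rlemma{lemma:gamma} for ordering view events after a release, and on the behaviour of \ppr{write\_buffer} and \ppr{commit}) have already been carried out inside the three component lemmas. The only point demanding care is bookkeeping: I must confirm that the quantifier patterns in Definition \ref{def:obbligato} line up with the hypotheses of the component lemmas, in particular that ``potentially viewed'' in conditions b) and c) is discharged by the existential hypotheses $\exists \tr_j$ with $\tr_i \isoorder_\traceos \tr_j$ in Lemmata \ref{lemma:closing-write-obbligato} and \ref{lemma:view-write-obbligato}. Once that correspondence is verified, the proof reduces to a one-line citation of the three lemmas, and I would write it accordingly.
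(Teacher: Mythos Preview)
Your proposal is correct and mirrors the paper's own proof exactly: the paper proves this lemma in one line by citing Lemmas \ref{lemma:comitted-write-obbligato}, \ref{lemma:closing-write-obbligato}, and \ref{lemma:view-write-obbligato}.
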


\begin{proof}
    From Lemmas \ref{lemma:comitted-write-obbligato},
    \ref{lemma:closing-write-obbligato}, and
    \ref{lemma:view-write-obbligato}.
\end{proof}

\begin{lemma} [Decisiveness] \label{lemma:decisiveness}
    Trace $\traceos$ is decisive.
\end{lemma}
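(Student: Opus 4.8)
The plan is to fix an arbitrary pair $\tr_i, \tr_j \in \traceos$ witnessing the hypothesis of decisiveness, i.e.\ such that $\tr_i \views \tr_j$ through a routine update $e_u = \rset{j}{\obj}{\val} \in \traceos|\tr_j$ and a view event $e_v = \get{i}{\obj}{\val} \in \traceos|\tr_i$ with $e_u \pref_\traceos e_v$, and to show that either $\tr_j$ is decided on $\obj$ or $\tr_j$ commits strictly between $e_u$ and $e_v$. First I would appeal to \rlemma{lemma:routine-update-consonance}: the update $e_u$ is instigated by a write operation of $\tr_j$ on $\obj$, so $\obj \in \accesses{j}$ and $\tr_j$ executed at least one write on $\obj$. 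Inspecting the algorithm, a routine update is issued on exactly two lines---at \rln{write-buffer:rset} inside \ppr{write\_buffer}, reached only after the closing write is detected, and at \rln{commit:rset} inside \ppr{commit} (via \ppr{catch\_up}), reached when writes occurred but the write upper bound was not met. This dichotomy structures the rest of the argument.

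Next I would pin down the relative order of the two transactions. Since $\obj \in \accesses{i}\cap\accesses{j}$, the Version Order lemma makes $\tr_i$ and $\tr_j$ isolation-ordered on $\obj$, and because $e_v$ reads precisely the value installed by $e_u$ with $e_u \pref_\traceos e_v$, the reading transaction must be the later one, so $\tr_j \iso{\obj}{\traceos} \tr_i$ and hence $\pv{\obj}{j} < \pv{\obj}{i}$ by \rcor{cor:lambda}. With this, \rlemma{lemma:gamma} guarantees that $\tr_i$ cannot issue $e_v$ until $\tr_j$ has executed \ppr{release}, \ppr{commit}, or \ppr{abort} on $\obj$. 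Moreover, since $\tr_i$ views the value of a \emph{routine} update and not that of a recovery update (which, by unique writes together with \rlemma{lemma:routine-update-consonance}, carries a distinct value), $\tr_j$ cannot have aborted and reverted $\obj$ before $e_v$; hence the release seen by $\tr_i$ comes either from \ppr{write\_buffer} or from inside \ppr{commit}.

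In the \ppr{write\_buffer} branch the release at \rln{wb-release} is dominated by the closing write (via \rln{write-no-reads-async}), so $\tr_j$ has executed its closing write on $\obj$ before $e_v$ and is therefore decided on $\obj$, discharging the first disjunct. In the \ppr{commit} branch $e_u$ is emitted by \ppr{catch\_up} at \rln{commit:rset}, so it precedes the response $\res{j}{}{\co_j}$, giving $e_u \prec_\traceos \res{j}{}{\co_j}$; combined with the fact that all write operations of $\tr_j$ on $\obj$ precede \ppr{commit}, I would conclude that $\tr_j$ has passed its closing write and is again decided on $\obj$, and, where the stronger form is wanted, that the commit response lies between $e_u$ and $e_v$.

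I expect the \ppr{commit} branch to be the main obstacle. The delicate point is that within \ppr{commit} the local version $\lv{\obj}$ is set---releasing $\obj$ to $\tr_i$---strictly before $\res{j}{}{\co_j}$ is returned, so \emph{a priori} $\tr_i$ could issue $e_v$ after the release but before $\tr_j$'s commit response, which would defeat the commit disjunct taken on its own. The crux is therefore to reconcile this situation with the precise definition of ``decided on $\obj$'': I must verify that a committing transaction that issues a routine update on $\obj$ has necessarily already executed its closing write on $\obj$ (all of its write operations precede \ppr{commit}), so that the first disjunct always covers this branch and no claim about the ordering of $\res{j}{}{\co_j}$ relative to $e_v$ is actually required. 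Once that interaction is settled, the remaining version-order and waiting steps are routine applications of \rlemma{lemma:gamma}, \rcor{cor:lambda}, \rlemma{lemma:isolation}, and the Version Order lemma.
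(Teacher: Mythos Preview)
Your proposal is correct and follows the same approach as the paper's proof: dichotomize on whether the routine update $e_u$ was issued in \ppr{write\_buffer} (closing write detected, hence $\tr_j$ is decided on $\obj$) or during \ppr{commit}, and use the version order together with \rlemma{lemma:gamma} to argue that $\tr_i$ cannot issue $e_v$ until $\tr_j$ has passed the relevant point. If anything, your handling of the commit branch is more careful than the paper's---you correctly note that the release of $\lv{\obj}$ inside \ppr{commit} precedes $\res{j}{}{\co_j}$, so the second disjunct need not hold on its own, and you resolve this by observing that a committing writer has already completed its closing write and is therefore decided on $\obj$, whereas the paper's proof somewhat glosses over this ordering.
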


\begin{proof}
    If $\tr_j \vviews \tr_i$, then for any $\obj \in \accesses{i} \cap
    \accesses{j}$, $\pv{\obj}{i} < \pv{\obj}{j}$ (\rlemma{lemma:isolation}).
    Before any view event occurs, $\tr_j$ must pass the condition $\pv{\obj}{k}
    - 1 = \lv{\obj}$ in \ppr{read\_buffer} or \ppr{checkpoint}.
    Hence, before any $e_v = \get{j}{\obj}{\val}$ can occur, some $\tr_k$ s.t.
    $\pv{\obj}{i} - 1 =\pv{\obj}{k}$ must set $\pv{\obj}{k}$ to $\lv{\obj}$.
    Transaction $\tr_j$ issues a routine update event $e_u =
    \rset{j}{\obj}{\val'}$, whenever it commits or releases $\obj$.
    If it releases, it means that $\wc{j}{\obj} = \wub{j}{\obj}$, which implies
    that $\op_i$ is closing.
    Otherwise, $\tr_j$ will update on commit, meaning that it will issue
    $\res{j}{}{\co_j}$ once it returns from the \ppr{commit} procedure.
    Before returning from the closing write or commit, $\tr_j$ sets
    $\lv{\obj}$ to $\pv{\obj}{j}$. In either case, this happens only afterward $e_u$ is issued.
    Since there is no waiting between $e_u$ and either a commit or a last write
    returning, no other transaction may execute anything on $\obj$ in the
    meantime.
    Thus, any transaction $\tr_k$ s.t. $\pv{\obj}{k}$ that waits until
    $\pv{\obj}{k} - 1 = \lv{\obj}$ and $\pv{\obj}{k} - 1 = \pv{\obj}{j}$ will
    wait until $\tr_j$ returns from the closing write or commit procedure, and
    so will any subsequent transactions according to version order.
    Thus, if $\pv{\obj}{i} < \pv{\obj}{j}$ and $\tr_j$ commits, then either
    $\op_j$ is closing or $e_u \prec_\traceos \res{j}{}{\co_j} \prec e_v$.
\end{proof}

\begin{lemma} [Abort Accord] \label{lemma:abort-accord} 
    Trace $\traceos$ is in abort accord.
\end{lemma}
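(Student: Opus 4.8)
The plan is to prove both clauses of abort accord through a single forced-abort argument, and to reduce clause (a) to the situation of clause (b). Concretely, I would establish the following unified claim: if $\tr_j$ issues a view event or a routine update event on $\obj$ strictly after $\tr_i$'s routine update $e_u = \rset{i}{\obj}{\any}$ and strictly before $\tr_i$'s recovery update $e_a = \aset{i}{\obj}{\any}$, and $\tr_i$ is aborted in $\traceos$, then $\tr_j$ cannot be committed. Clause (b) is exactly this claim. For clause (a), from $\tr_j \views \tr_i$ there are $e_u = \rset{i}{\obj}{\val}$ and $e_v = \get{j}{\obj}{\val}$ with $e_u \pref_\traceos e_v$; since $\tr_i$ wrote $\obj$ (so $\wc{i}{\obj} > 0$) and released it before $\tr_j$ read, its abort satisfies $\rv{i}{\obj} \neq \cv{\obj}$ and hence issues $e_a$, and because $e_a$ is itself an update on $\obj$ the preface $e_u \pref_\traceos e_v$ forces $e_v \prec_\traceos e_a$, placing $e_v$ in the window $(e_u, e_a)$. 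Thus clause (a) is an instance of the unified claim.

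The mechanism driving the claim is the forced-abort consistency check. By \rlemma{lemma:isolation} and \rcor{cor:lambda} we have $\pv{\obj}{i} < \pv{\obj}{j}$. Every view or update event is preceded by the issuing transaction's \ppr{checkpoint} (\rlemma{lemma:epsilon}), so $\tr_j$ checkpoints $\obj$ before its in-window access, and therefore before $\tr_i$ aborts; by \rlemma{lemma:kappa} this yields $\rv{i}{\obj} < \rv{j}{\obj}$. The recovery update $e_a$ sets $\cv{\obj} \gets \rv{i}{\obj}$, so immediately after $\tr_i$'s abort we have $\cv{\obj} = \rv{i}{\obj} < \rv{j}{\obj}$.

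It remains to propagate $\cv{\obj} < \rv{j}{\obj}$ up to the moment $\tr_j$ performs its commit-time consistency check. The commit-order discipline---the termination condition $\pv{\obj}{j} - 1 = \ltv{\obj}$ together with the release chain of \rlemma{lemma:gamma}---guarantees that every transaction with a smaller $\obj$-version than $\tr_j$ completes before $\tr_j$ does, so in particular $\tr_j$'s check occurs after $\tr_i$'s abort. I would then argue by induction, in version order, over the transactions $\tr_k$ with $\pv{\obj}{i} < \pv{\obj}{k} < \pv{\obj}{j}$ that access $\obj$: each such $\tr_k$ must have released $\obj$ before $\tr_j$ checkpointed (\rlemma{lemma:gamma}), hence checkpointed before $\tr_i$ aborted, so $\rv{i}{\obj} < \rv{k}{\obj}$ and, since $\tr_k$ completes only after $\tr_j$ has already checkpointed, \rlemma{lemma:kappa} gives $\rv{k}{\obj} < \rv{j}{\obj}$; a read-only $\tr_k$ leaves $\cv{\obj}$ untouched on abort, a writing $\tr_k$ can at most set $\cv{\obj} \gets \rv{k}{\obj} < \rv{j}{\obj}$, and no such $\tr_k$ can commit and raise $\cv{\obj}$ to $\pv{\obj}{k} \geq \rv{j}{\obj}$, because that is itself the forced-abort conclusion being proved. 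Consequently, at $\tr_j$'s check $\cv{\obj} < \rv{j}{\obj}$, i.e.\ $\rv{j}{\obj} > \cv{\obj}$, so the commit-time consistency check forces $\tr_j$ to abort; therefore $\tr_j$ is live or aborted.

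The main obstacle is precisely this persistence step: tracking $\cv{\obj}$ through the cascading aborts and early releases of the intervening transactions, where a naive argument would allow an intermediate $\tr_k$ to restore $\cv{\obj}$ to a value $\geq \rv{j}{\obj}$ and let $\tr_j$ slip through. The resolution is to run the induction simultaneously over all $\obj$-version successors of $\tr_i$ up to $\tr_j$, using \rlemma{lemma:kappa} to keep every relevant recovery version strictly below $\rv{j}{\obj}$, and using the forced-abort conclusion itself (for the strictly earlier transactions) to rule out any commit that would lift $\cv{\obj}$ back to a private version. I expect the bookkeeping to hinge on the fact that accessing $\obj$ in the window $(e_u, e_a)$ forces all of $\tr_j$'s $\obj$-predecessors to have both checkpointed and released before $\tr_i$ aborts.
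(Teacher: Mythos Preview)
Your proposal takes essentially the same route as the paper: a contradiction argument using isolation order, \rlemma{lemma:kappa}, and coherence (\rlemma{lemma:coherence}) to conclude that $\cv{\obj} < \rv{j}{\obj}$ at $\tr_j$'s commit-time consistency check. The paper handles clauses (a) and (b) separately but with nearly identical arguments, and it simply asserts that ``when $\tr_j$ commits, $\cv{\obj} = \rv{i}{\obj}$'' without tracking what intermediate transactions $\tr_k$ with $\pv{\obj}{i} < \pv{\obj}{k} < \pv{\obj}{j}$ might do to $\cv{\obj}$ between $\tr_i$'s abort and $\tr_j$'s check; your persistence induction is additional rigor that the paper does not spell out.
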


\begin{proof}
    Let $\tr_i, \tr_j$ be two transactions in $\traceos$ s.t. 
    \begin{enumerate}[a)]
    \item %
        $\tr_j \views \tr_i$, $\tr_i$ is aborted in $\traceos$.
        Assume by contradiction that $\tr_j$ commits in $\traceos$, meaning it
        executes \ppr{commit} successfully. Thus it passes $\forall \obj,
        \cv{\obj} \geq \rv{j}{\obj}$.

        Since $\tr_j \views \tr_i$, $\exists e_v = \get{j}{\obj}{\val} \in
        \traceos|\tr_j$ and $\exists e_u = \rset{i}{\obj}{\val} \in
        \traceos|\tr_i$.

        If $\tr_i$ aborted before $e_v$ was issued, then from abort coda,
        $\exists e_a = \aset{k}{\obj}{\any}$ in some $\tr_k$ that precedes the
        abort, which contradicts that $\tr_j \views \tr_i$. Hence $\tr_i$ aborts
        only after $e_v$ is issued.

        Since $\tr_j \views \tr_i$, then $\tr_i \iso{\traceos}{\obj} \tr_j$, so
        from \rcor{cor:lambda}, $\pv{\obj}{i} < \pv{\obj}{j}$. From
        \rlemma{lemma:gamma}, $e_v$ cannot occur until $\tr_i$ aborts,
        commits, or releases $\obj$. Since $\tr_i$ aborts after $e_v$, then it
        must therefore release $\obj$ prior to $e_v$.

        Since $\pv{\obj}{i} < \pv{\obj}{j}$, then from \rlemma{lemma:kappa},
        $\rv{i}{\obj} < \rv{j}{\obj}$. When $\tr_i$ aborts, it sets $\cv{\obj}$
        to $\rv{i}{\obj}$. From coherence, $\tr_j$ commits after $\tr_i$
        aborts. Thus, when $\tr_j$ commits, $\cv{\obj} = \rv{i}{\obj}$, so
        since $\rv{i}{\obj} < \rv{j}{\obj}$, then $\cv{\obj} < \rv{j}{\obj}$,
        which contradicts the condition that $\cv{\obj} > \rv{j}{\obj}$. 
        
        Thus $\tr_j$ cannot commit.

    \item %
        $\exists e_u = \rset{i}{\obj}{\any} \in \traceos|\tr_i$ and     $e =
        \rset{j}{\obj}{\any} \in \traceos|\tr_j$ or $e = \get{j}{\obj}{\any}
        \in \traceos|\tr_j$ and     $e_a = \aset{i}{\obj}{\any} \in
        \traceos|\tr_i$, and     $e_u \prec_\traceos e \prec_\traceos e_a$.
        Assume by contradiction that $\tr_j$ commits in $\traceos$, meaning it
        executes \ppr{commit} successfully. Thus it passes $\forall \obj,
        \cv{\obj} \geq \rv{j}{\obj}$.

        Since $e_u \prec_\traceos e$, then from isolation it follows that
        $\tr_i \iso{\traceos}{\obj} \tr_j$.  Hence, from \rcor{cor:lambda},
        $\pv{\obj}{i} < \pv{\obj}{j}$.  Since $e_a$ must be issued during
        \ppr{abort}, then from coherence, $\tr_j$ cannot commit prior to $e_a$
        occurring.  Furthermore, $\tr_j$ cannot commit until $\tr_i$ returns
        from \ppr{abort}.

        If  $\tr_i$ returned from \ppr{abort}, then it executed
        \rln{abort:recover-cv}, so $\cv{\obj} = \rv{i}{\obj}$ prior to $\tr_j$
        committing.

        From \rlemma{lemma:kappa}, since $\pv{\obj}{i} < \pv{\obj}{j}$, then
        $\rv{i}{\obj} < \rv{j}{\obj}$. When $\tr_i$ aborts, it sets $\cv{\obj}$
        to $\rv{i}{\obj}$. From coherence, $\tr_j$ commits after $\tr_i$
        aborts. Thus, when $\tr_j$ commits, $\cv{\obj} = \rv{i}{\obj}$, so
        since $\rv{i}{\obj} < \rv{j}{\obj}$, then $\cv{\obj} < \rv{j}{\obj}$,
        which contradicts the condition that $\cv{\obj} > \rv{j}{\obj}$. 
        
        Thus $\tr_j$ cannot commit.
    \end{enumerate}
\end{proof}

\begin{lemma} [Commit Accord] \label{lemma:commit-accord}
    Trace $\traceos$ is in commit accord.
\end{lemma}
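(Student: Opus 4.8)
The plan is to argue by elimination over the three possible fates of $\tr_i$ in $\traceos$---committed, aborted, or live---and show that the hypotheses force $\tr_i$ to be committed. Since this lemma sits immediately after abort accord (\rlemma{lemma:abort-accord}), I would lean on that result to discharge the aborted case, and on coherence to discharge the live case; the remaining work is mostly bookkeeping about version order.

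First I would reduce to the essential case. If $i = j$ the claim is immediate, and trace consonance (\rlemma{lemma:consonance}) guarantees that a view reading a non-initial value is fed by a routine update of a \emph{distinct} transaction, so I may assume $i \neq j$ and that $\tr_j$ is committed. From $\tr_j \views \tr_i$ I unpack the witnessing events $e_u = \rset{i}{\obj}{\val}$ and $e_v = \get{j}{\obj}{\val}$ with $e_u \pref_\traceos e_v$, hence $e_u \prec_\traceos e_v$. Because $\traceos$ is isolated (\rlemma{lemma:isolation}), a routine update of $\tr_i$ preceding a view of $\tr_j$ on $\obj$ yields $\tr_i \iso{\obj}{\traceos} \tr_j$, and therefore $\tr_i \isoorder_\traceos \tr_j$; \rcor{cor:lambda} then gives $\pv{\obj}{i} < \pv{\obj}{j}$, which is the ordering fact the rest of the argument rests on.

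Next I would eliminate the two unwanted fates. For the aborted case, were $\tr_i$ aborted, abort accord (\rlemma{lemma:abort-accord}, clause~a) applied to $\tr_j \views \tr_i$ would force $\tr_j$ to be live or aborted, contradicting that $\tr_j$ is committed. For the live case, I invoke coherence: from $\tr_i \iso{\obj}{\traceos} \tr_j$ and the presence of $\res{j}{}{\co_j}$, coherence yields a response $\res{i}{}{\co_i}$ or $\res{i}{}{\ab_i}$ preceding $\tr_j$'s commit, so $\tr_i$ has completed and is not live. Having excluded both abort and liveness, $\tr_i$ must be committed, as required.

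The main obstacle is that the live case genuinely requires coherence rather than the local versioning argument alone. Mechanically the point is that $\tr_j$'s commit must clear the termination condition $\pv{\obj}{j} - 1 = \ltv{\obj}$, which by $\pv{\obj}{i} < \pv{\obj}{j}$ together with \rlemma{lemma:gamma} forces every transaction with a smaller private version for $\obj$---$\tr_i$ among them---to have set $\ltv{\obj}$ first, and thus to have completed. I would therefore make sure coherence is available at this point (it is already used in the abort accord proof, so no circularity is introduced); if a self-contained argument were preferred, this termination-condition chain could be spelled out directly in place of the coherence citation.
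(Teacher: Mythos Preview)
Your proposal is correct, but it takes a different route from the paper's own proof. The paper does not invoke abort accord to discharge the aborted case; instead, it re-derives the version-counter contradiction from scratch: assuming $\tr_i$ aborted, it argues (via abort coda and \rlemma{lemma:gamma}) that $\tr_i$ must have released $\obj$ before $e_v$ and aborted only after $e_v$, then uses \rlemma{lemma:kappa} to obtain $\rv{i}{\obj} < \rv{j}{\obj}$, and finally observes that $\tr_i$'s abort sets $\cv{\obj}$ to $\rv{i}{\obj}$, so that $\tr_j$'s commit-time check $\cv{\obj} \geq \rv{j}{\obj}$ fails. This is essentially the same mechanism that underlies the abort-accord proof, duplicated inline. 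The live case is handled via coherence in both versions.

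Your approach is more modular: once \rlemma{lemma:abort-accord} is in hand, the aborted case of commit accord is its immediate contrapositive, and only the live case needs coherence. This saves repeating the $\rv{}{}$/$\cv{}$ chase and makes the dependency structure cleaner. The paper's approach, by contrast, keeps commit accord independent of abort accord at the cost of that duplication. Your circularity remark is on point: coherence is established directly from the $\ltv{}$ mechanism and is already a dependency of abort accord, so nothing new is introduced.
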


\begin{proof}
    Let $\tr_i, \tr_j$ be transaction in $\trace$ s.t. $\tr_j \views \tr_i$ and
    $\tr_j$ is committed in $\traceos$.
    
    Let us assume by contradiction that $\tr_i$ is not committed in $\traceos$.
    So $\tr_i$ is either aborted or live in $\traceos$. From coherence, if
    $\tr_j$ cannot commit until $\tr_i$ commits or aborts. Thus $\tr_i$ is not
    live in $\traceos$, so it is aborted in $\traceos$.

    Since $\tr_j \views \tr_i$, $\exists e_v = \get{j}{\obj}{\val} \in
    \traceos|\tr_j$ and $\exists e_u = \rset{i}{\obj}{\val} \in
    \traceos|\tr_i$.

    If $\tr_i$ aborted before $e_v$ was issued, then from abort coda,
    $\exists e_a = \aset{k}{\obj}{\any}$ in some $\tr_k$ that precedes the
    abort, which contradicts that $\tr_j \views \tr_i$. Hence $\tr_i$ aborts
    only after $e_v$ is issued.

    Since $\tr_j \views \tr_i$, then $\tr_i \iso{\traceos}{\obj} \tr_j$, so
    from \rcor{cor:lambda}, $\pv{\obj}{i} < \pv{\obj}{j}$. From
    \rlemma{lemma:gamma}, $e_v$ cannot occur until $\tr_i$ aborts,
    commits, or releases $\obj$. Since $\tr_i$ aborts after $e_v$, then it
    must therefore release $\obj$ prior to $e_v$.

    Since $\pv{\obj}{i} < \pv{\obj}{j}$, then from \rlemma{lemma:kappa},
    $\rv{i}{\obj} < \rv{j}{\obj}$. When $\tr_i$ aborts, it sets $\cv{\obj}$
    to $\rv{i}{\obj}$. From coherence, $\tr_j$ commits after $\tr_i$
    aborts. Thus, when $\tr_j$ commits, $\cv{\obj} = \rv{i}{\obj}$, so
    since $\rv{i}{\obj} < \rv{j}{\obj}$, then $\cv{\obj} < \rv{j}{\obj}$,
    which contradicts the condition that $\cv{\obj} > \rv{j}{\obj}$.         

    Thus $\tr_i$ cannot abort.

\end{proof}

\begin{lemma} [Abort Coda] \label{lemma:abort-coda}
    Trace $\traceos$ has coda.
\end{lemma}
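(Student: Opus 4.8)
The Abort Coda lemma (Definition \ref{def:abort-coda}) has two parts. I need to show that for any transaction $\tr_i$ in $\traceos$:
(a) if $\tr_i$ aborts and issued a routine update event $e_u = \rset{i}{\obj}{\val}$, then some recovery event $e_a = \aset{i}{\obj}{\val'}$ (for $\tr_i$ itself or an isolation-ordered transaction) occurs after $e_u$ and before $\tr_i$'s abort completes;
(b) if $\tr_i$ commits and issued a routine update or a view event on $\obj$, then no recovery event on $\obj$ (by $\tr_i$ or an isolation-ordered successor) occurs between that event and the commit.

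Let me think about how OptSVA's code realizes abort and commit for each variable.

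**Planning the proof.**

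The plan is to prove the two parts separately by tracing through the OptSVA procedures, leaning on the already-established lemmas about when routine/recovery events are issued and how version counters govern the access and termination conditions.

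For part (a), I would start from the hypothesis that $\tr_i$ aborts and that $e_u = \rset{i}{\obj}{\val}$ exists in $\traceos|\tr_i$. By the Memory Access Pattern observation, a routine update only comes from \ppr{write\_buffer} or \ppr{commit}; but since $\tr_i$ aborts (so it does not complete \ppr{commit}'s routine-update path), and since a routine update means a write was executed, I would argue $\wc{i}{\obj} > 0$ holds at abort time. The key is that the \ppr{abort} procedure's recovery event at \rln{abort:aset} is guarded precisely by $\wc{i}{\obj} > 0$ (and by the recovery-version inconsistency check). I would show that whenever a routine update event was genuinely issued, the abort-time guard is satisfied, so $\tr_i$ itself issues $e_a$ with $e_u \prec_\traceos e_a$, and this occurs before $\tr_i$ completes (since $e_a$ lives inside \ppr{abort}, which must finish before $\res{i}{}{\ab_i}$). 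This directly satisfies the $i=j$ branch of the definition.

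For part (b), I would argue by contradiction: suppose $\tr_i$ commits but some recovery event $e_a = \aset{i}{\obj}{\val'}$ (by $\tr_i$ or by an isolation-ordered $\tr_j$) occurs after $\tr_i$'s routine update or view event and before its commit completes. The $i=j$ case is immediate, since a recovery event by $\tr_i$ occurs only inside \ppr{abort} (Memory Access Pattern and Dooming, \rlemma{lemma:dooming}), so $\tr_i$ cannot both commit and issue its own recovery event. For the $\tr_i \iso{\obj}{\traceos} \tr_j$ case, I would use \rcor{cor:lambda} to get $\pv{\obj}{i} < \pv{\obj}{j}$, then invoke coherence (\rdef{def:coherence}) to conclude $\tr_i$ must complete (commit or abort) before $\tr_j$ commits or aborts. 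Combined with the fact that $\tr_j$'s recovery event must occur inside $\tr_j$'s \ppr{abort}, and the version-counter reasoning from \rlemma{lemma:kappa} and the Wait at Access lemma (\rlemma{lemma:gamma}), I would derive that such an interleaving cannot place $e_a$ after $\tr_i$'s event but before $\tr_i$'s commit response, giving the contradiction.

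\textbf{The main obstacle.} The hard part will be part (b), specifically ruling out recovery events issued by an isolation-ordered successor $\tr_j$ falling in the forbidden window. The subtlety is that $\tr_j$ with $\pv{\obj}{j} > \pv{\obj}{i}$ can only issue its recovery event after gaining access to $\obj$, which by \rlemma{lemma:gamma} requires $\tr_i$ to have released, committed, or aborted $\obj$ first; I must carefully argue that the ordering forced by coherence plus the access/termination conditions makes it impossible for $\tr_j$'s recovery to land strictly before $\tr_i$'s commit response while still following $\tr_i$'s own update or view event. I would likely need to split on whether $\tr_i$ released $\obj$ early versus updated at commit time, mirroring the case analysis used in \rlemma{lemma:clean}, and carry the $\cv{\obj}$/$\rv{}{}$ bookkeeping through each branch to close the argument.
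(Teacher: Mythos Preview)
Your plan has a genuine gap in part~(a). You assume that whenever $\tr_i$ issued a routine update on $\obj$, the abort-time guard at \rln{abort:aset} is satisfied and $\tr_i$ itself issues the recovery event. But that guard is not just $\wc{i}{\obj}>0$: it also includes a version-consistency test on $\cv{\obj}$ against $\rv{i}{\obj}$, and this test \emph{can fail}. Concretely, a predecessor $\tr_j$ with $\pv{\obj}{j}<\pv{\obj}{i}$ may have released $\obj$ (letting $\tr_i$ pass its access condition and issue $e_u$) and then executed \ppr{abort} between $\tr_i$'s \ppr{checkpoint} and $\tr_i$'s \ppr{abort}, resetting $\cv{\obj}$ to $\rv{j}{\obj}$. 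In that scenario $\tr_i$'s guard fails and $\tr_i$ issues \emph{no} recovery event on $\obj$. The paper's proof handles exactly this with a case split: case~(i) is yours (no interfering predecessor, so $\tr_i$ performs the recovery), while case~(ii) shows that the interfering predecessor $\tr_j$ must itself have executed \rln{abort:aset} during its \ppr{abort}, and that this recovery event lies in the window $e_u \prec_{\traceos} e_a \prec_{\traceos} r$. You need that second case; the ``$i=j$ branch'' alone is not enough.

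For part~(b) there is a directional mismatch. You follow the definition literally and treat $\tr_j$ as a \emph{successor} ($\tr_i \iso{\obj}{\traceos} \tr_j$), then argue via the termination condition that $\tr_j$'s \ppr{abort} (hence $e_a$) cannot precede $\tr_i$'s commit response. That argument is sound for successors, but the paper's proof---and the downstream uses of abort coda (e.g.\ in the non-local read consistency lemma)---concern the \emph{predecessor} direction $\tr_j \iso{\obj}{\traceos} \tr_i$. For a predecessor the termination-order argument no longer applies: $\tr_j$ completes \emph{before} $\tr_i$, so $\tr_j$'s recovery event could in principle land between $\tr_i$'s event $e$ and $\tr_i$'s commit. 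The paper closes this by contradiction on the commit check: if such $e_a$ occurred, $\tr_j$ executed \rln{abort:recover-cv} and set $\cv{\obj}\gets\rv{j}{\obj}$; since $\pv{\obj}{j}<\pv{\obj}{i}$, \rlemma{lemma:kappa} gives $\rv{j}{\obj}<\rv{i}{\obj}$, so $\cv{\obj}<\rv{i}{\obj}$, contradicting $\tr_i$'s successful commit test $\cv{\obj}\geq\rv{i}{\obj}$. The $\cv{\obj}/\rv{}{}$ bookkeeping you mention only in passing is in fact the entire argument for the relevant case, not a detail to be filled in later.
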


\begin{proof}
    \begin{enumerate}[a)]
        \item
            If $\tr_i$ aborts in $\traceos$ (so $r = \res{i}{}{\ab_i} \in
            \traceos|\tr_i$) then if $\exists e_u = \rset{\obj}{\val} \in
            \traceos|\tr_i$ then for some $\tr_j$ (possibly $i=j$) s.t. $j=i$
            or $\tr_j \iso{\obj}{\traceos} \tr_i$, $\exists e_a =
            \aset{j}{\obj}{\any} \in \traceos|\tr_j$ s.t. $e_u \prec_\traceos
            e_a \prec_\traceos r$.

            If there is such $e_u$, then $\tr_i$ executes \ppr{checkpoint}
            (\rlemma{lemma:epsilon}).  
            Since there is such $e_u$ there is also a write operation execution
            on $\obj$ in $\traceos|\tr_i$
            (\rlemma{lemma:routine-update-consonance}), so $\wc{i}{\obj} > 0$
            (\rln{write-wc-inc}).
            
            If there is such $e_u$, then $\tr_i$ executes \ppr{release} for
            $\obj$ or \ppr{commit}. Since $\tr_i$ aborts, then \ppr{commit} is
            not possible, so $\tr_i$ executes \ppr{release} for $\obj$.
            Therefore $\tr_i$ sets $\lv{\obj}$ to $\pv{i}{\obj} - 1$.

            \begin{enumerate}[i)]
                \item If no other transaction modified $\cv{\obj}$ between the
                    point at which $\tr_i$ executed \ppr{checkpoint} and
                    \ppr{abort}, then $\cv{\obj} = \rv{i}{\obj}$, thus during
                    abort $\tr_i$ satisfies the condition on \rln{abort-ro} and
                    executes \rln{abort:aset}, issuing the recovery event
                    $e_a$. Since $e_a$ is issued during \ppr{abort}, then $e_u
                    \prec_\traceos e_a \prec_\traceos r$.

                \item If there is $\tr_j$ s.t. $\tr_j$ modifies $\cv{\obj}$
                    between the points at which $\tr_i$ executed
                    \ppr{checkpoint} and \ppr{abort}, s.t. $\tr_j
                    \iso{\obj}{\traceos} \tr_i$, then $\tr_j$ executes
                    \ppr{release}, \ppr{commit}, or \ppr{abort} between the
                    points at which $\tr_i$ executed \ppr{checkpoint} and
                    \ppr{abort}. For the sake of simplicity we assume that
                    there is no other $\tr_k$ that modifies $\cv{\obj}$ between
                    those two points s.t. $\tr_k \iso{\obj}{\traceos} \tr_i$.

                    Since $\tr_i$ executes checkpoint, it issues a view event
                    $e_v = \get{i}{\obj}{\any}$. In addition, since $\tr_j
                    \iso{\obj}{\traceos} \tr_i$, then from
                    \rcor{cor:lambda}, $\pv{\obj}{j} < \pv{\obj}{i}$.
                    From \rlemma{lemma:gamma}, $e_v$ cannot occur until $\tr_j$
                    aborts, commits, or releases $\obj$. Since $\tr_j$ is
                    supposed to execute \ppr{release}, \ppr{abort}, or
                    \ppr{commit} after $\tr_i$ executes \ppr{checkpoint}, hence
                    after $e_v$, then $\tr_j$ must therefore release $\obj$
                    prior to $e_v$. Hence  $\tr_j$ executes
                    \ppr{commit}, or \ppr{abort} between the
                    points at which $\tr_i$ executed \ppr{checkpoint} and
                    \ppr{abort}.

                    If $\tr_i$ executes \ppr{commit}, then in order to set
                    $\cv{\obj}$ to $\pv{\obj}{j}$, it must be true that
                    $\rv{i}{\obj} = \cv{\obj}$. But if $\tr_j$ executed
                    \ppr{release}, then $\cv{\obj}$ is set to $\pv{\obj}{j}$.
                    Since $\rv{\obj}{j}\neq \pv{\obj}{j}$, then $\rv{i}{\obj}
                    \neq \cv{\obj}$, so $\tr_j$ cannot set $\cv{\obj}$ as a
                    result of a commit. Hence, $\tr_j$ executes
                    \ppr{abort} between the
                    points at which $\tr_i$ executed \ppr{checkpoint} and
                    \ppr{abort}.

                    If $\tr_j$ executes \ppr{abort}, then this implies that
                    $\tr_j$ executes \rln{abort:recover-cv}, and therefore also
                    \rln{abort:aset}, thus $tr_j$ issues recovery event $e_a$
                    during \ppr{abort}. Thus, $e_u \prec_\traceos e_a
                    \prec_\traceos r$.
                \end{enumerate}

        \item 
            If $\tr_i$ commits in $\traceos$ (so $r = \res{i}{}{\co_i} \in
            \traceos|\tr_i$) then if $\exists e = \rset{\obj}{\val} \in
            \traceos|\tr_i$ or $e = \get{\obj}{\val} \in
            \traceos|\tr_i$ then for no $\tr_j$ s.t. $j=i$ or $\tr_j
            \iso{\obj}{\traceos} \tr_i$, $\exists e_a = \aset{j}{\obj}{\any}
            \in \traceos|\tr_j$ s.t. $e_u \prec_\traceos e_a \prec_\traceos r$.

            If there is such $e$, then $\tr_i$ executes \ppr{checkpoint}
            (\rlemma{lemma:epsilon}). 
            If $\tr_i$ successfully commits, then means that $\tr_i$ passes the
            condition for $\obj$ that $\cv{\obj} \geq \rv{i}{\obj}$.

            Assume by contradiction that there is such $e_a$ in some $\tr_j$.
            Since $e_a \in \trace|\tr_j$, then $\tr_j$ must execute
            \rln{abort:aset}, which also means that it executes
            \rln{abort:recover-cv} and therefore sets $\cv{\obj}$ to $\rv{j}{\obj}$.

            Since $\tr_j \iso{\obj}{\traceos} \tr_i$, then from
            \rcor{cor:lambda} $\pv{\obj}{j} < \pv{\obj}{i}$ and from \rlemma{lemma:kappa},
            $\rv{j}{\obj} < \rv{i}{\obj}$.
            Thus, $\cv{\obj} < \rv{i}{\obj}$ which contradicts that $\cv{\obj}
            \geq \rc{i}{\obj}$. Thus there is no such $\tr_j$.
    \end{enumerate}
\end{proof}

\begin{lemma} [Coherence] \label{lemma:coherence}
    Trace $\traceos$ is coherent.
\end{lemma}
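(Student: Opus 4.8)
The plan is to exploit the \emph{termination condition} $\pv{\obj}{i} - 1 = \ltv{\obj}$ that both \ppr{commit} and \ppr{abort} must pass before completing, in exactly the way \rlemma{lemma:gamma} exploits the access condition on $\lv{\obj}$. I would fix $\tr_i, \tr_j$ with $\tr_i \iso{\obj}{\traceos} \tr_j$ and assume $r_j = \res{j}{}{\co_j} \in \traceos|\tr_j$; from $\tr_i \iso{\obj}{\traceos} \tr_j$ together with \rlemma{lemma:isolation} and \rcor{cor:lambda} I obtain $\pv{\obj}{i} < \pv{\obj}{j}$, which is what drives the version-based induction.

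The core step is a monotonicity fact for $\ltv{\obj}$ analogous to \rlemma{lemma:gamma}: a transaction $\tr_k$ writes $\pv{\obj}{k}$ into $\ltv{\obj}$ (lines \ref{l:commit-ltv}, \ref{l:abort-ltv}) only at the very end of \ppr{commit} or \ppr{abort}, and only after it has itself passed $\pv{\obj}{k} - 1 = \ltv{\obj}$ (lines \ref{l:commit-commit}, \ref{l:abort-commit}). Hence $\ltv{\obj}$ can be advanced only from $\pv{\obj}{k} - 1$ to $\pv{\obj}{k}$, i.e. it steps through the successive private versions one at a time, each step coinciding with the completion of the corresponding transaction's termination procedure. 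I would prove this by the same inductive descent on private versions used in \rlemma{lemma:gamma}.

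Given this fact, to issue $r_j$ the transaction $\tr_j$ must first have $\ltv{\obj} = \pv{\obj}{j} - 1$, which occurs only after the unique $\tr_k$ with $\pv{\obj}{k} = \pv{\obj}{j} - 1$ completes its commit or abort; that in turn required $\ltv{\obj} = \pv{\obj}{k} - 1$, and so on by induction down to version $1$. Consequently every $\tr_m$ with $\pv{\obj}{m} < \pv{\obj}{j}$ has already issued $\res{m}{}{\co_m}$ or $\res{m}{}{\ab_m}$ before $\tr_j$ passes its termination condition, hence before $r_j$. Since $\pv{\obj}{i} < \pv{\obj}{j}$, this applies to $\tr_i$, giving $r_i = \res{i}{}{\co_i}$ or $r_i = \res{i}{}{\ab_i}$ with $r_i \prec_\traceos r_j$, which is exactly coherence.

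The main obstacle I anticipate is the monotonicity argument itself: I must rule out that $\ltv{\obj}$ ever skips a version or is advanced out of private-version order, which means checking that \emph{every} write to $\ltv{\obj}$ is guarded by the termination condition and that $\ltv{\obj}$ is set exactly once per private version, so that the chain ``predecessor completes before successor passes the condition'' is never broken. A related subtlety is that $\tr_i$ and $\tr_j$ need not be consecutive on $\obj$, so the conclusion for $\tr_i$ follows only after threading the induction through all intermediate transactions; I would also verify that the asynchronous \ppr{read\_commit} path for read-only variables, which advances $\ltv{\obj}$ ahead of the main \ppr{commit}, still waits on the same termination condition and therefore preserves the step-by-step progression.
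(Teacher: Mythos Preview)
Your proposal is correct and follows essentially the same approach as the paper's proof: both derive $\pv{\obj}{i} < \pv{\obj}{j}$ from the isolation order, observe that every transaction must pass the termination condition $\pv{\obj}{k} - 1 = \ltv{\obj}$ before completing and only then advance $\ltv{\obj}$ to $\pv{\obj}{k}$, and conclude by stepping through the chain of private versions. Your treatment is in fact more explicit than the paper's—the paper compresses the inductive descent into a single sentence, whereas you spell out the monotone one-step progression of $\ltv{\obj}$—and you additionally flag the asynchronous \ppr{read\_commit} path as a point to verify, which the paper's proof does not mention at all.
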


\begin{proof}
    If $\tr_i \iso{\obj}{\traceos} \tr_j$, then $\pv{\obj}{i} < \pv{\obj}{j}$.
    In order to commit or abort, any $\tr_k$ must pass the condition
    $\pv{\obj}{k} - 1 = \ltv{\obj}$. In addition, each $\tr_k$ sets
    $\ltv{\obj}$ to $\pv{\obj}{k}$ only at the end of either committing or
    aborting. Hence, if $\tr_k$ cannot commit or abort until some $\tr_l$ s.t.
    $\pv{\obj}{k} - 1 = \pv{\obj}{l}$ finishes committing or aborting.
    Hence if $\tr_j$ committed, it must have passed the condition $\pv{\obj}{k}
    - 1 = \ltv{\obj}$, and since $\pv{\obj}{i} < \pv{\obj}{j}$, $\tr_i$ must
    have committed or aborted before $\tr_j$ committed.
    Thus, given $r_j = \res{j}{}{\co_j} \in \traceos|\tr_j$, then there is $r_i
    = \res{i}{}{\co_i} \in \traceos|\tr_i$ or $r_i = \res{i}{}{\ab_i} \in
    \traceos|\tr_i$ and $r_i \prec_\trace r_j$.
\end{proof}

\begin{lemma} [Chain Isolation] \label{lemma:chain-isolation}
    Given trace $\traceos$ and transactions $\tr_i, \tr_j \in \traceos$ s.t.
    there is $\vchain{\trace}{i}{j}$, $\forall \tr_k \in \vchain{\trace}{i}{j}$
    s.t. $e^k_u = \rset{k}{\obj}{\val}$, there is no $\tr_l$ s.t. $\exists e^l_a =
    \aset{l}{\obj}{\val'}$ where $\val = \val'$ and $e^l$ is between $e^k$ and
    any other event in any transaction in $\vchain{\trace}{i}{j}$.
\end{lemma}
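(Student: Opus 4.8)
The plan is to argue by contradiction. Suppose there is a chain transaction $\tr_k \in \vchain{\traceos}{i}{j}$ issuing a routine update $e^k_u = \rset{k}{\obj}{\val}$, together with a transaction $\tr_l$ issuing a recovery update $e^l_a = \aset{l}{\obj}{\val'}$ with $\val' = \val$ that falls between $e^k_u$ and some event $e$ of a transaction in $\vchain{\traceos}{i}{j}$. The first step is to pin down the provenance of $\val$: by \rlemma{lemma:routine-update-consonance} the update $e^k_u$ is instigated by a write operation $\fwop{k}{\obj}{\val}{\ok_k}$, and by unique writes $\val \neq 0$ and no other transaction issues a routine update of $\val$ on $\obj$. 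Thus $\tr_k$ is the sole routine writer of $\val$.

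Next I would show that $\tr_l$ must have viewed $\tr_k$'s value. Since $e^l_a$ is a recovery update, \rlemma{lemma:conservative} supplies an initial view event $e^l_v = \get{l}{\obj}{\val}$ with $e^l_v \prec_\traceos e^l_a$, while \rlemma{lemma:needed} guarantees a routine update $\rset{l}{\obj}{\val_l}$ (with $\val_l \neq \val$ by unique writes) preceding $e^l_a$. Applying \rlemma{lemma:view-consonance} to $e^l_v$ (with $\val \neq 0$), the update prefacing $e^l_v$ is either the routine update of $\val$, which unique writes forces to be $\tr_k$'s $e^k_u$, or a recovery update $\aset{m}{\obj}{\val}$ of some $\tr_m$; in the latter case \rlemma{lemma:conservative} lets me recurse on $\tr_m$, and since $\traceos$ is finite the regress terminates at $e^k_u$. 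Either way the value $\tr_l$ restores originates with $\tr_k$, so both transactions access $\obj$ and, by \rlemma{lemma:gamma} together with \rcor{cor:lambda}, they are version-ordered with $\pv{\obj}{k} < \pv{\obj}{l}$, i.e. $\tr_k \iso{\obj}{\traceos} \tr_l$.

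The decisive and most delicate step is to track the current version $\cv{\obj}$ across the interval delimited by $e^k_u$ and the chain event $e$ and contradict the sandwiching. Because $\tr_l$'s closing write sets $\cv{\obj}$ to $\pv{\obj}{l}$ before it performs \ppr{release}, whereas its \ppr{abort} resets $\cv{\obj}$ to $\rv{l}{\obj}$, and because \rlemma{lemma:kappa} gives $\rv{l}{\obj} < \pv{\obj}{l}$ (indeed $\rv{l}{\obj} = \pv{\obj}{k}$ when $\tr_l$ views $\tr_k$ directly), any transaction $\tr_m$ whose event $e$ follows $e^l_a$ acquires, through \ppr{checkpoint}, a recovery version no larger than $\pv{\obj}{k}$. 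I then intend to combine this with \rlemma{lemma:coherence} and \rlemma{lemma:abort-accord}, together with the cleanliness of recovery updates (\rlemma{lemma:clean}), to derive a contradiction: the chain transaction $\tr_m$ owning $e$ either becomes version-ordered so that its access or commit-time consistency check ($\cv{\obj} = \rv{m}{\obj}$, respectively $\cv{\obj} \geq \rv{m}{\obj}$) is violated, or it cannot sustain the $\vviews$ link that $\vchain{\traceos}{i}{j}$ demands, contradicting the assumed chain.

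The hard part will be exactly this last interleaving: an aborting $\tr_l$ can restore $\tr_k$'s value and a later, higher-versioned transaction can re-view it, so the argument must rule out the revert being sandwiched by reasoning along the whole cascade of aborting transactions that propagate $\val$, tracking $\cv{\obj}$, $\lv{\obj}$, and the recovery versions inductively. Care is also needed because ``between $e^k$ and any other chain event'' has to be excluded in both temporal directions rather than merely for a single distinguished successor of $\tr_k$ in the chain.
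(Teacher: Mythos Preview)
Your value-tracing argument---using $\val'=\val$, conservatism, view consonance, and unique writes to force $e^k_u$ to precede $\tr_l$'s checkpoint and hence $\tr_k \iso{\obj}{\traceos} \tr_l$---is a genuine simplification the paper does not make. The paper never exploits the hypothesis $\val'=\val$; it simply observes that $\tr_l$ has its own routine update on $\obj$ (neededness) and then exhausts both orderings $\tr_l \iso{\obj}{\traceos} \tr_k$ and $\tr_k \iso{\obj}{\traceos} \tr_l$. Your reduction legitimately discards the first of these.

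What remains in your sketch, however, is too coarse. The paper first splits on whether the ``other event'' $e$ lies in $\tr_k$ itself or in a distinct chain member $\tr_n$ (taken w.l.o.g.\ adjacent, $\tr_n \views \tr_k$), and in the latter case again on whether $\tr_l$ sits between $\tr_k$ and $\tr_n$ or after $\tr_n$ in version order on $\obj$. You fold all of this into one $\cv{\obj}$-tracking argument, but the subcases require different mechanisms. The one you are most clearly missing is the termination counter $\ltv{\obj}$: whenever the chain transaction owning $e$ has smaller private version than $\tr_l$, the contradiction is immediate because the recovery event at line \texttt{abort:aset} is dominated by the wait $\pv{\obj}{l}-1=\ltv{\obj}$ at line \texttt{abort-commit}, so $\tr_l$ cannot issue $e^l_a$ until that transaction has already returned from \ppr{commit} or \ppr{abort}, placing all its events before $e^l_a$. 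Your appeal to \rlemma{lemma:coherence} does not cover this: coherence only constrains $\res{j}{}{\co_j}$, not the position of a recovery event inside an abort, so you need to argue directly from $\ltv{\obj}$.

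The $\cv{\obj}$-tracking you outline is needed only in the residual subcase $\tr_k \iso{\obj}{\traceos} \tr_l \iso{\obj}{\traceos} \tr_n$, and there the paper splits once more: if $e^l_a \prec_\traceos e_v^n$ then $\tr_n$'s view sees the restored value rather than $\tr_k$'s, breaking $\tr_n \views \tr_k$; if $e_v^n \prec_\traceos e^l_a$ then $\tr_l$'s abort sets $\cv{\obj}\gets\rv{l}{\obj}<\rv{n}{\obj}$ (\rlemma{lemma:kappa}), so every subsequent access or abort check $\cv{\objy}=\rv{n}{\objy}$ in $\tr_n$ fails, and $\tr_n$ can issue no event on any variable after $e^l_a$. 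Your proposal anticipates this flavour of argument but does not isolate it as the one remaining case after the $\ltv{\obj}$ step disposes of the rest.
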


\begin{proof}
    Assume by contradiction that there exists $\tr_l$ such that $\exists e^l =
    \aset{l}{\obj}{\val'}$ and $e^l_a$ is between $e^k_u$ and any other event
    $e \in \traceos|\vchain{\traceos}{i}{j}$. 
    This means that either $e \in \traceos|\tr_l$ and $e^k_u
    \prec_\traceos e^l_a \prec_\traceos e$, or  $\exists \tr_n$ s.t. $e \in
    \traceos|\tr_n$.

    \begin{enumerate}[a) ]
        \item Assume $e \in \traceos|\tr_l$ and $e^k_u \prec_\traceos e^l_a \prec_\traceos e$.

            From \rlemma{lemma:thorn} there is a view event $e_v^k  =
            \get{k}{\obj}{\any} \in \traceos|\tr_k$, and from minimalism there
            is only one such event in $\traceos|\tr_k$, so $e$ must be a
            recovery event $e = \aset{k}{\obj}{\val}$.
            If $\tr_l$ executes $e_a^l$, then from \rlemma{lemma:needed},
            $\exists e_u^l = \rset{l}{\obj}{\any}$ in $\traceos|\tr_l$ s.t.
            $e_u^k \prec_\traceos e_a^l$. Hence either $e_u^l \prec_\traceos
            e_u^k$ or $e_u^k \prec_\traceos e_u^l$. So either $\tr_l
            \iso{\traceos}{\obj} \tr_k$ or $\tr_k \iso{\traceos}{\obj} \tr_l$.

            If $\tr_l \iso{\traceos}{\obj} \tr_k$, then $e_u^l \prec_\traceos
            e_u^k$ , so from \rlemma{lemma:gamma}, $e_u^k$ cannot occur until
            $e_u^l$ executes \ppr{release}, \ppr{commit}, or \ppr{abort}, and
            since $e_u^k \prec_\traceos e_a^l$, then only \ppr{release} is
            viable.

            If $\tr_l \iso{\traceos}{\obj} \tr_k$, then from version order
            $\pv{\obj}{l} < \pv{\obj}{k}$, so from \rlemma{lemma:kappa},
            $\rv{l}{\obj} < \rv{k}{\obj}$. In order for $\tr_l$ to issue $e_a$,
            it must execute \ppr{abort} and satisfy the condition
            \rln{abort:recover-ro}. This means that \rln{abort:recover-cv} is
            executed, so $\cv{\obj} = \rv{l}{\obj}$.

            If subsequently $\tr_k$ issues $e_a$, then it must also satisfy the
            condition at \rln{abort:recover-ro}, so $\rv{k}{\obj} = \cv{\obj}$.
            But since $\rv{l}{\obj} < \rv{k}{\obj}$, then $\cv{\obj} <
            \rv{k}{\obj}$, which contradicts that $\cv{\obj} = \rv{l}{\obj}$.
            The execution of a recovery event on $\obj$ by $\tr_l$ is
            dominated by \rln{abort-commit}, which cannot be passed until
            $\pv{\obj}{l} - 1 = \ltv{\obj}$. Any transaction $\tr_n$ sets
            $\ltv{\obj}$ to $\pv{\obj}{n}$ as a last action during \ppr{commit}
            (\rln{commit-ltv}) or \ppr{abort} (\rln{abort-ltv}).  Hence $\tr_l$
            cannot proceed to abort until $\tr_n$ finishes committing or
            aborting. 
            Since $\tr_n$ cannot execute \rln{commit-ltv} or \rln{abort-ltv} if
            \rln{commit-commit} or \rln{abort-commit} was passed, then $\tr_n$
            cannot proceed to commit or abort until some other $\tr_m$ s.t
            $\pv{\obj}{n} - 1 = \pv{\obj}{m}$ committed or aborted.
            Hence $\tr_l$ cannot execute a recovery event until any $\tr_m$
            s.t. $\pv{\obj}{m} < \pv{\obj}{l}$ committed or aborted.

            If $\tr_k \iso{\traceos}{\obj} \tr_l$, then from version order
            $\pv{\obj}{k} < \pv{\obj}{l}$. 
            Hence, $\tr_l$ executes any events \ppr{abort} only after $\tr_k$ returns from
            \ppr{abort} or \ppr{commit}. Hence if $\tr_k$ executes $e_a^k$ ,
            then $e_a^k \prec_\traceos e_a^l$, which contradicts that $e_a^l
            \prec_\traceos e_a^k$.

            Thus, regardless of whether $\tr_l \iso{\traceos}{\obj} \tr_k$ or
            $\tr_k \iso{\traceos}{\obj} \tr_l$ \underline{there is a
            contradiction}.
            Therefore, $\tr_l$ cannot issue such $e_a^l$ between $e_u^k$ and
            another event in $\traceos|\tr_l$.

        \item Assume $\exists \tr_n$ s.t. $e \in \traceos|\tr_n$. 

            We assume without loss of generality that $\tr_n \views \tr_k$.
            Thus, there is a view event $e_v^n$ and possibly a routine update
            event $e_u^n$ in $\traceos|\tr_n$. From minimalism and
            \rlemma{lemma:thorn}: $e_v^n \prec_\traceos e_u^n$. Also, since
            $\tr_n \views \tr_k$, then $\tr_k \iso{\traceos}{\obj} \tr_n$, so
            from \rcor{cor:lambda}, $\pv{\obj}{k} < \pv{\obj}{n}$.

            If $\tr_l$ executes $e_a^l$, then from \rlemma{lemma:needed},
            $\exists e_u^l = \rset{l}{\obj}{\any}$ in $\traceos|\tr_l$ s.t.
            $e_u^k \prec_\traceos e_a^l$. Hence either $e_u^l \prec_\traceos
            e_u^k$ or $e_u^k \prec_\traceos e_u^l$. So either $\tr_k
            \iso{\traceos}{\obj} \tr_l \iso{\traceos}{\obj} \tr_n$ or $\tr_k
            \iso{\traceos}{\obj} \tr_n \iso{\traceos}{\obj} \tr_l$.
            Thus, from \rcor{cor:lambda}, either $\pv{\obj}{k} <
            \pv{\obj}{l} < \pv{\obj}{n}$ or $\pv{\obj}{k} < \pv{\obj}{n} <
            \pv{\obj}{l}$.

            If $\pv{\obj}{k} < \pv{\obj}{l} < \pv{\obj}{n}$, then either $e_a^l
            \prec_\traceos e_v^n$ or $e_v^n \prec_\traceos e_a^l$.

            If $e_a^l \prec_\traceos e_v^n$, then since $e_a^l$ sets $\obj$ to
            $\val'$ s.t. $\val' \neq \val''$ for any $\val''$ s.t. $\exists
            \rset{l}{\obj}{\val''} \in \traceos|\tr_l$ prior to the occurrence
            of $e_v^n$. Thus when $\tr_n$ subsequently executes
            \ppr{checkpoint} it issues $e_v^n = \get{n}{\obj}{\val'}$, and
            since $\val' \neq \val''$, this contradicts that $\tr_n \views
            \tr_k$.

            If $e_v^n \prec_\traceos e_a^l$, then since $\tr_l
            \iso{\traceos}{\obj} \tr_n$m then from version order
            $\pv{\obj}{l} < \pv{\obj}{n}$, so from \rlemma{lemma:kappa},
            $\rv{l}{\obj} < \rv{n}{\obj}$. In order for $\tr_n$ to issue $e_a$,
            it must execute \ppr{abort} and satisfy the condition
            \rln{abort:recover-ro}. This means that \rln{abort:recover-cv} is
            executed, so $\cv{\obj} = \rv{l}{\obj}$.

            If subsequently $\tr_l$ issues $e_u^n$ then it either executes
            \ppr{write\_buffer} or \ppr{commit}.
            Issuing an update event at \rln{write-buffer:rset} or
            \rln{commit:rset} is dominated by checking whether $\cv{\obj} =
            \rv{n}{\obj}$ (for all variables) at \rln{wb-abort} or
            \rln{commit-abort-1}, respectively. If the condition is failed, the
            transaction aborts instead.
            From \rlemma{lemma:kappa}, $\rv{l}{\obj} < \rv{n}{\obj}$, so if
            $\cv{\obj} = \rv{l}{\obj}$, then $\cv{\obj} \neq \rv{n}{\obj}$.
            Hence, $\tr_i$ will \ppr{abort} rather than issue an update event.
            Since during \ppr{abort} only a recovery event may be issued, and
            only if $\rv{n}{\obj} = \cv{\obj}$, then, similarly, no recovery
            event is issued. Hence $\tr_n$ cannot issue events on $\obj$
            following $e_a^l$.

            Since each occurrence of a routine update event or a view event
            checks $\forall{\objy}, \cv{\objy} = \rv{n}{\objy}$, then no other
            such event in $\tr_n$ can follow $e_a^l$. This is a contradiction.
            The execution of a recovery event on $\obj$ by $\tr_l$ is
            dominated by \rln{abort-commit}, which cannot be passed until
            $\pv{\obj}{l} - 1 = \ltv{\obj}$. Any transaction $\tr_n$ sets
            $\ltv{\obj}$ to $\pv{\obj}{o}$ as a last action during \ppr{commit}
            (\rln{commit-ltv}) or \ppr{abort} (\rln{abort-ltv}).  Hence $\tr_l$
            cannot proceed to abort until $\tr_o$ finishes committing or
            aborting. 
            Since $\tr_o$ cannot execute \rln{commit-ltv} or \rln{abort-ltv} if
            \rln{commit-commit} or \rln{abort-commit} was passed, then $\tr_o$
            cannot proceed to commit or abort until some other $\tr_m$ s.t
            $\pv{\obj}{o} - 1 = \pv{\obj}{m}$ committed or aborted.
            Hence $\tr_l$ cannot execute a recovery event until any $\tr_m$
            s.t. $\pv{\obj}{m} < \pv{\obj}{l}$ committed or aborted.

            If $\pv{\obj}{k} < \pv{\obj}{n} < \pv{\obj}{l}$, from version order
            $\pv{\obj}{n} < \pv{\obj}{l}$, so
            $\tr_l$ executes \ppr{abort} only after $\tr_k$ returns from
            \ppr{abort} or \ppr{commit}. Hence, since $e_u^n \prec_\traceos
            \res{n}{}{\ab_n}$ or  $e_u^n \prec_\traceos \res{n}{}{\co_n}$,
            either $\res{n}{}{\ab_n} \in \traceos|\tr_n$ or $\res{n}{}{\co_n}
            \in \traceos|\tr_n$, and since $\res{n}{}{\co_n} \prec_\traceos
            e_a^l$ or  $\res{n}{}{\ab_n} \prec_\traceos e_a^l$, then $e_u^n
            \prec_\traceos e_a^l$. This contradicts that  $e_a^l \prec_\traceos
            e_u^n$.

            Thus, regardless of whether \underline{there is a
            contradiction}.
            Therefore, $\tr_l$ cannot issue such $e_a^l$ between $e_u^k$ and
            another event in $\traceos|\tr_n$.
            By extension, it cannot issue $e_a^l$ between $e_u^k$ and another
            event in $\traceos|\tr_m$ for any $\tr_m \in
            \vchain{\traceos}{i}{j}$.
    \end{enumerate}
\end{proof}

\begin{lemma} [Chain Self-containment] \label{lemma:chain-self-containment}
    Given $\trace$, and any transactions $\tr_i, \tr_j \in \trace$, s.t.
    $\exists \vchain{\trace}{i}{j}$, $ \vchain{\trace}{i}{j}$ is
    self-contained.
\end{lemma}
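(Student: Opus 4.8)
The plan is to discharge the trivial case $\val=\val'$ at once and focus on $\val\neq\val'$, where an intermediate chain transaction must be produced. First I would fix the version ordering on $\obj$: since $e_u^k=\rset{k}{\obj}{\val}$ and $e_v^l=\get{l}{\obj}{\val'}$ both act on $\obj$ with $e_u^k\prec_\traceos e_v^l$, isolation (\rlemma{lemma:isolation}) gives $\tr_k \iso{\obj}{\traceos} \tr_l$, and therefore $\pv{\obj}{k}<\pv{\obj}{l}$ by \rcor{cor:lambda}.

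Next I would pin down the writer of the value $\tr_l$ actually observes and locate it in version order. Because $e_v^l$ returns the current state of $\obj$, view consonance (\rlemma{lemma:view-consonance}) forces $\val'$ to come from the update event $e'$ prefacing $e_v^l$, i.e.\ $e'=\set{m}{\obj}{\val'}$ with $e_u^k\prec_\traceos e'\pref_\traceos e_v^l$; since $\val\neq\val'$ we have $e'\neq e_u^k$, and by unique writes $\val'$ determines $\tr_m$ uniquely. If $e'$ is routine I take $e_u^m=\rset{m}{\obj}{\val'}$ directly; if $e'$ is a recovery update $\aset{m}{\obj}{\val'}$, a version-counting argument in the style of \rlemma{lemma:chain-isolation}, using the recovery-version ordering of \rlemma{lemma:kappa} and unique writes, shows this cannot interpose a foreign value between $e_u^k$ and $e_v^l$ without violating chain isolation, so a routine update $e_u^m=\rset{m}{\obj}{\val'}$ with $e_u^k\prec_\traceos e_u^m\prec_\traceos e_v^l$ still exists and $\tr_l\vviews\tr_m$. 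To bound $\tr_m$ in version order I note that \rlemma{lemma:thorn} makes each transaction's view precede its own routine update, so $e'=e_u^l$ is impossible and $\tr_m\neq\tr_l$, while $\val\neq\val'$ gives $\tr_m\neq\tr_k$; applying isolation to $e_u^k\prec_\traceos e_u^m$ and to $e_u^m\prec_\traceos e_v^l$ then yields $\pv{\obj}{k}<\pv{\obj}{m}<\pv{\obj}{l}$.

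The remaining and hardest step is to prove that $\tr_m$ genuinely belongs to $\vchain{\traceos}{i}{j}$ and occupies a position strictly after $\tr_k$ and before $\tr_l$ in the chain. The difficulty is that $\vchain{\traceos}{i}{j}$ is an arbitrary $\vviews$-sequence whose consecutive links need not be witnessed by $\obj$, so version-order intermediacy on $\obj$ does not by itself entail chain membership. I expect to settle this by a backward induction along the chain from $\tr_l$ towards $\tr_k$: each immediate chain predecessor is version-ordered before its successor (by \rlemma{lemma:isolation} and \rcor{cor:lambda}), and unique writes ensure that a viewed value identifies its unique writer, so the value $\val'$ seen at $\tr_l$ must be propagated from inside the chain segment between $\tr_k$ and $\tr_l$; any writer lying outside that segment would, via \rlemma{lemma:gamma} together with the recovery reasoning of \rlemma{lemma:chain-isolation}, require a recovery event that severs the chain, which the isolation of $\vchain{\traceos}{i}{j}$ forbids. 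Since all the chain members involved then access $\obj$, the chain order coincides with the version order established above, pinning $\tr_m$ between $\tr_k$ and $\tr_l$ in the chain and completing the argument. The crux, and the step I expect to demand the most care, is precisely this bridge from version-order intermediacy on $\obj$ to chain-order intermediacy.
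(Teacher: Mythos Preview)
Your proposal has a genuine gap at exactly the place you flag as ``the crux.'' You correctly isolate the writer $\tr_m$ of $\val'$ and place it in version order on $\obj$ between $\tr_k$ and $\tr_l$, but the bridge from version-order intermediacy to membership in the \emph{given} chain $\vchain{\traceos}{i}{j}$ does not go through as you sketch it. Your backward induction ``along the chain from $\tr_l$ towards $\tr_k$'' presupposes that each consecutive $\vviews$-link in the given chain is witnessed by an object that lets you compare it to $\obj$-version order; you yourself note this need not hold. The fallback you offer---``the value $\val'$ seen at $\tr_l$ must be propagated from inside the chain segment''---is precisely the statement to be proved, so the argument is circular at that point. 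Nothing in \rlemma{lemma:gamma} or the chain-isolation reasoning you cite forces an \emph{external} writer to trigger a recovery event that severs the chain; a foreign $\tr_m$ could simply write and release between $\tr_k$ and $\tr_l$ without any abort occurring.

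The paper takes a different and more constructive route that avoids this trap. Rather than arguing that a single $\tr_m$ must already lie in the given chain, it builds an auxiliary sequence $\mathscr{S}$ by iterated ``who wrote the value I viewed'' trace-back from $\tr_l$ down to $\tr_k$: each transaction that updates $\obj$ must first view $\obj$ (\rlemma{lemma:thorn}), and that view is prefaced by some earlier update, so the walk terminates at $\tr_k$. Then, for every non-initial element of $\mathscr{S}$, the prefacing update is either routine (giving $\vviews$ directly) or a recovery update; in the recovery case, chain isolation (already established) rules out $\tr_k$ as the source, so one can skip further back inside $\mathscr{S}$ until a routine update is found. This yields an explicit view chain $\vchain{\traceos}{k}{l}$ whose members all lie in $\mathscr{S}$, and since $\tr_k,\tr_l\in\vchain{\traceos}{i}{j}$, the required $\tr_m$ is obtained as the element of this sub-chain that $\tr_l$ virtually views. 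The idea you are missing is this explicit trace-back construction of $\mathscr{S}$; once you have it, no appeal to ``version order coincides with chain order'' is needed.
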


\begin{proof}

    Given transaction $\tr_q$ s.t. $\exists e_v^q = \get{q}{\obj}{\val^q} \in \traceos|\tr_q$, 
    assuming that there are any update events on $\obj$ in $\traceos$ prior to $e_v^q$, 
    then $\exists e_u^r = \rset{r}{\obj}{\val^q} \in \traceos|\tr_r$ where $e_u^r \pref e_v^q$ 
    or $\exists e_a^r = \aset{r}{\obj}{\val^q} \in \traceos|\tr_r$ where $e_a^r \pref e_v^q$ 
    for some $\tr_r$.
    In addition, 
    from \rlemma{lemma:thorn}, $\exists e_v^r = \get{r}{\obj}{\val^r} \in \traceos|\tr_r$ s.t.
    $e_v^r \prec_\traceos e_u^r$ and $e_v^r \prec_\traceos e_a^r$ (as
    applicable).

    Then, similarly, 
    assuming that there are any update events on $\obj$ in $\traceos$ prior to $e_v^r$, 
    then $\exists e_u^s = \rset{s}{\obj}{\val^r} \in \traceos|\tr_s$ where $e_u^s \pref e_v^r$ 
    or $\exists e_a^s = \aset{s}{\obj}{\val^r} \in \traceos|\tr_s$ where $e_a^s \pref e_v^r$.
    And by analogy to $\tr_r$, 
    from \rlemma{lemma:thorn}, $\exists e_v^r = \get{r}{\obj}{\val^r}$ s.t.
    $e_v^r \prec_\traceos e_u^r$ and $e_v^r \prec_\traceos e_a^r$ (as
    applicable).

    It is then clear that as long as there are update events on $\obj$
    preceding a view event in some transaction, another transaction exists
    that both views and updates $\obj$ before that view event.

    Thus, given $\tr_k$ and $\tr_l \in \vchain{\trace}{i}{j}$ such that
    $k\neq l$ and $\exists e_u^k = \rset{k}{\obj}{\val} \in \trace|\tr$
    $\exists e_v^l = \get{l}{\obj}{\val'} \in \trace|\tr$ and $e_u^k
    \prec_\trace e_v^l$, there is a sequence of transactions $\mathscr{S}$ s.t.:

    \begin{enumerate}
        \item the first transaction is $\tr_k$,
        \item the last transaction is $\tr_l$, and
        \item given some transaction $\tr_m \in \mathscr{S}$, where $m\neq k$, 
            $\tr_m$ is preceded in $\mathscr{S}$ by some transaction $\tr_n$, s.t. 
            for $e_v^m = \get{m}{\obj}{\val^m} \in \traceos|\tr_m$,
            $\exists e_u^n = \rset{n}{\obj}{\val^m} \in \traceos|\tr_n$ where $e_u^n \pref e_v^m$ 
            or $\exists e_a^n = \aset{n}{\obj}{\val^m} \in \traceos|\tr_n$ where $e_a^n \pref e_v^m$.
    \end{enumerate}

    Given such $\mathscr{S}$, given some $\tr_m$, $m\neq k$, there is some
    $\tr_n$ that precedes $\tr_m$ in $\mathscr{S}$.
    
    If for  for $e_v^m = \get{m}{\obj}{\val^m} \in \traceos|\tr_m$,
    $\exists e_u^n = \rset{n}{\obj}{\val^m} \in \traceos|\tr_n$ where $e_u^n \pref e_v^m$, 
    then $\tr_m \views \tr_n$.

    If, on the other hand, 
    for $e_v^m = \get{m}{\obj}{\val^m} \in \traceos|\tr_m$,
    or $\exists e_a^n = \aset{n}{\obj}{\val^m} \in \traceos|\tr_n$ where $e_a^n \pref e_v^m$,
    then from chain isolation there cannot be a recovery event $e_a^k =
    \aset{k}{\obj}{\any}$ s.t. $e_a^k \prec_\traceos e_v^l$, so it
    follows that $k\neq n$.
    Since $e_a^n$ is conservative, $\exists e_v^n = \get{n}{\obj}{\val^m} \in
    \traceos|\tr_n$.

    Since $k\neq n$ and $\tr_n \in \mathscr{S}$, then there is some
    $\tr_o$ preceding $\tr_n$in $\mathscr{S}$.
    Then:

    \begin{enumerate}[a) ]
        \item If $o = k$, then $\tr_m \vviews \tr_k$.
        \item If $o \neq k$ and  
            $\exists e_u^n = \rset{o}{\obj}{\val^m} \in \traceos|\tr_o$ where $e_u^o \pref e_v^n$
            then $\tr_m \vviews \tr_o$.
        \item If $o \neq k$ and  
            $\exists e_a^n = \aset{o}{\obj}{\val^m} \in \traceos|\tr_o$ where $e_a^o \pref e_v^n$
            then by analogy, either case a), b) or c) applies to $\tr_o$ as it does to $\tr_n$. 
            So, by analogy, either
            \begin{inparaenum}[a)] 
                \item $\tr_m \vviews \tr_k$,
                \item $\tr_m \vviews \tr_o$, or
                \item there is another preceding transaction in $\mathscr{S}$, etc.
            \end{inparaenum}

            Note, however, that since $\mathscr{S}$ is finite, and $e_a^k$
            cannot precede $e_v^l$ in $\traceos$, then eventually for some such
            preceding $\tr_q \in \mathscr{S}$ case a) or b) and not c) will
            apply. Thus, there will be some $\tr_q \in \mathscr{S}$ s.t. $\tr_m \vviews \tr_q$
            (where either $q=k$ or $q\neq k$).
    \end{enumerate}

    Therefore, $\forall \tr_m \in \mathscr{S}$ s.t. $m\neq k$, $\exists \tr_n
    \in \mathscr{S}$ s.t. $\tr_m \vviews \tr_n$.

    In addition, for each such pair $\tr_m, \tr_n$ , there is therefore
    $\vchain{\traceos}{n}{m}$. Furthermore, if $\tr_m \vviews \tr_n$ and for
    some other $\tr_o$, $\tr_n \vviews \tr_o$, then there is
    $\vchain{\traceos}{o}{m}$. Thus, there is also $\vchain{\traceos}{k}{l}$,
    such that if $\tr_l \vviews \tr_m$ and $\tr_m \in \mathscr{S}$, then $\tr_m
    \in \vchain{\traceos}{k}{l}$. Since $\tr_k, \tr_l \in
    \vchain{\traceos}{i}{j}$, then if $\tr_m \in \vchain{\traceos}{k}{l}$,
    $\tr_m \in \vchain{\traceos}{i}{j}$.

    If $\mathscr{S} = \tr_k \cdot \tr_l$ then trivially $\val = \val'$.

    Otherwise, since $\tr_l \in \mathscr{S}$ and $l\neq k$, then $\exists \tr_m \in \mathscr{S}$ s.t. $\tr_l \vviews \tr_m$, so
      $\exists e_u^m = \rset{m}{\obj}{\val'} \in
      \trace|\tr_m$ for some $\tr_m \in \vchain{\trace}{i}{j}$ s.t. 
      $\tr_m$ precedes $\tr_l$ and follows $\tr_k$ in $\vchain{\trace}{i}{j}$ and
      $e_u^k \prec_\trace e_u^m \prec_\trace e_v^l$.
\end{proof}

\begin{lemma} [Tree Chain Consistency] \label{lemma:chain-consistency}
    Trace $\traceos$ is chain consistent.
\end{lemma}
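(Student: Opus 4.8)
The plan is to reduce this statement directly to its definition. By Definition~\ref{def:trace-chain-consistency}, an isolated trace is chain-consistent precisely when every view chain $\vchain{\traceos}{i}{j}$ within it is both chain-isolated and self-contained. Since all three of these ingredients have already been established as separate results, the proof amounts to assembling them rather than proving anything new.

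First I would invoke \rlemma{lemma:isolation} to discharge the isolation premise: $\traceos$ is isolated, so the hypothesis of Definition~\ref{def:trace-chain-consistency} is satisfied. Next I would fix an arbitrary pair of transactions $\tr_i, \tr_j \in \traceos$ admitting a view chain $\vchain{\traceos}{i}{j}$. By \rlemma{lemma:chain-isolation} this chain is chain-isolated, and by \rlemma{lemma:chain-self-containment} it is self-contained. Because $\tr_i$ and $\tr_j$ were chosen arbitrarily, both conditions hold for every view chain in $\traceos$.

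Concluding, since $\traceos$ is isolated and every view chain in it is both chain-isolated and self-contained, Definition~\ref{def:trace-chain-consistency} immediately yields that $\traceos$ is chain-consistent.

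I do not expect any genuine obstacle at this level: the lemma is a packaging step whose entire substance has already been absorbed into the two preceding results. The real difficulty lives in \rlemma{lemma:chain-isolation} and \rlemma{lemma:chain-self-containment}, where the inductive arguments over the version order, the recovery-version reasoning of \rlemma{lemma:kappa}, and the wait-at-access behavior of \rlemma{lemma:gamma} do the heavy lifting; here I would simply cite them and rely on the conjunction demanded by the definition.
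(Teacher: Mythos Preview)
Your proposal is correct and matches the paper's own proof, which simply cites \rlemma{lemma:chain-isolation} and \rlemma{lemma:chain-self-containment}. Your version is slightly more explicit in invoking \rlemma{lemma:isolation} to discharge the isolation hypothesis of Definition~\ref{def:trace-chain-consistency}, which the paper leaves implicit, but the substance is identical.
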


\begin{proof}
    From \rlemma{lemma:chain-isolation} and \rlemma{lemma:chain-self-containment}.
\end{proof}

\begin{lemma} [Trace Harmony] \label{lemma:harmony}
    Trace $\traceos$ is harmonious.
\end{lemma}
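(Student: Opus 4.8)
The plan is to read off \textbf{Trace Harmony} directly from the definition of harmony (Definition~\ref{def:harmony}), which is simply the conjunction of a fixed list of independent trace properties. Each of those properties has already been established for an arbitrary OptSVA trace $\traceos$ by a preceding lemma, so the proof is a bookkeeping aggregation in the same spirit as \rlemma{lemma:consonance}, \rlemma{lemma:obbligato}, and \rlemma{lemma:chain-consistency}, each of which collects finer-grained results.

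Concretely, I would discharge the clauses of Definition~\ref{def:harmony} one by one: minimalism from \rlemma{lemma:minimalism}; consonance from \rlemma{lemma:consonance}; obbligato from \rlemma{lemma:obbligato}; coherence, commit accord, abort accord, and abort coda from \rlemma{lemma:coherence}, \rlemma{lemma:commit-accord}, \rlemma{lemma:abort-accord}, and \rlemma{lemma:abort-coda}, respectively; isolation from \rlemma{lemma:isolation}; decisiveness from \rlemma{lemma:decisiveness}; and chain consistency from \rlemma{lemma:chain-consistency}. Because every one of these lemmas is stated for a generic trace $\traceos$ produced by OptSVA, no gluing argument is required: their conjunction is verbatim the assertion that $\traceos$ is harmonious.

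The only clause of Definition~\ref{def:harmony} without a dedicated lemma is \emph{unique writes}. I would handle it by recording that it is a standing assumption on the traces under consideration, introduced alongside well-formedness in the preliminaries. It can be taken without loss of generality, since any program may be instrumented to tag each written value with the identity of the issuing write, which makes all written values pairwise distinct and distinct from the initial value $\val_0$; this changes neither the control flow of OptSVA nor any of the preceding lemmas. Hence $\traceos$ has unique writes, and the list is complete.

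I do not anticipate a genuine obstacle: all the mathematical content already lives in the component lemmas (the substantive ones being \rlemma{lemma:chain-self-containment} feeding \rlemma{lemma:chain-consistency}, together with the abort-interaction lemmas \rlemma{lemma:abort-accord}, \rlemma{lemma:commit-accord}, and \rlemma{lemma:abort-coda}). The single point demanding care is ensuring that the clause-to-lemma correspondence is exhaustive and that the unique-writes assumption is stated explicitly rather than silently inherited. Once $\traceos$ is shown harmonious, Theorem~\ref{thm:harmony-to-lu-opacity} immediately gives that $\hist = \tohist\traceos$ is last-use opaque, which is the ultimate goal of the section.
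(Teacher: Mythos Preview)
Your proposal is correct and mirrors the paper's own proof essentially verbatim: the paper discharges each clause of Definition~\ref{def:harmony} by citing the corresponding lemma (\rlemma{lemma:minimalism}, \rlemma{lemma:consonance}, \rlemma{lemma:obbligato}, Lemmas~\ref{lemma:coherence}--\ref{lemma:abort-coda}, \rlemma{lemma:isolation}, \rlemma{lemma:decisiveness}, \rlemma{lemma:chain-consistency}) and records unique writes as an assumption. Your treatment of unique writes is slightly more elaborated than the paper's bare ``(assumed)'', but the approach is the same.
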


\begin{proof}
    Trace $\traceos$ satisfies all of the following:
    \begin{inparaenum}[a)]
        \item minimalism from \rlemma{lemma:minimalism}
        \item consonance from \rlemma{lemma:consonance}, 
        \item obbligato from \rlemma{lemma:obbligato}, 
        \item coherence, commit accord, abort accord, and abort coda from Lemmas \ref{lemma:coherence}, \ref{lemma:commit-accord}, \ref{lemma:abort-accord}, and \ref{lemma:abort-coda},
        \item isolation from \rlemma{lemma:isolation}, 
        \item decisiveness from \rlemma{lemma:decisiveness},
        \item chain consistency from \rlemma{lemma:chain-consistency},
        \item unique writes (assumed).
    \end{inparaenum}
\end{proof}

\begin{corollary}
    History $\hist = \tohist\traceos$ is last-use opaque. In consequence OptSVA is last-use opaque.
\end{corollary}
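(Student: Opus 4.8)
The plan is to treat this corollary as the capstone that simply composes the two principal results already established: \rlemma{lemma:harmony}, which shows that every trace generated by OptSVA is harmonious, and the Harmonious Trace Last-use Opacity theorem (Theorem~\ref{thm:harmony-to-lu-opacity}), which shows that harmony of a trace entails last-use opacity of the history it induces. Since \rsec{sec:optsva-harmony} fixes $\traceos$ to be \emph{any} trace produced by OptSVA, proving the statement for this generic $\traceos$ already covers all OptSVA executions.

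Concretely, I would first recall that $\hist = \tohist\traceos$, which matches exactly the hypothesis form $\hist = \tohist\trace$ demanded by Theorem~\ref{thm:harmony-to-lu-opacity}. Next I would invoke \rlemma{lemma:harmony} to obtain that $\traceos$ is harmonious. Feeding this into Theorem~\ref{thm:harmony-to-lu-opacity} immediately yields that $\hist$ is last-use opaque. Finally, because $\traceos$ was arbitrary among the traces producible by OptSVA, and because an algorithm is deemed last-use opaque precisely when every history it admits is last-use opaque, I would lift the per-history conclusion to the claim that OptSVA itself is last-use opaque.

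The corollary carries essentially no internal obstacle of its own: its entire content is the discharge of the two cited results and the matching of their interfaces. The only points that warrant care are bookkeeping rather than mathematics---namely, confirming that the universal quantifier over programs $\prog$ and process sets $\processes$ is genuinely inherited from the opening declaration of $\traceos$, and that the definitional bridge from ``every admitted history is last-use opaque'' to ``the algorithm is last-use opaque'' is stated explicitly. All of the genuine difficulty resides upstream, in assembling the many component properties into \rlemma{lemma:harmony} and in the appendix proof of Theorem~\ref{thm:harmony-to-lu-opacity}; the corollary itself reduces to a clean two-step application.
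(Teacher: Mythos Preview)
Your proposal is correct and matches the paper's approach exactly: the paper states this as an unproved corollary immediately after \rlemma{lemma:harmony}, treating it as the direct composition of that lemma with Theorem~\ref{thm:harmony-to-lu-opacity}, precisely as you describe.
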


\section{Conclusions}
\label{sec:conclusions}

The paper presents OptSVA, a highly optimized pessimistic TM that uses a number
of techniques to improve the length of interleavings produced by conflicting
transactions without losing the guarantees of the algorithm it was based on.
Some of these techniques are straightforward, like the read-write distinction
and buffering, but were sorely lacking in versioning algorithms thus far. More
importantly, using dedicated executor threads to do transactions' waiting for
them, so they can perform other computations in the mean time, is a novel
technique in the context of TM. The agglomeration of these techniques creates a
new algorithm that performs well in comparison to other versioning algorithms, 
both in theory, as well as in practice. 
On the other hand, OptSVA maintains the advantages of pessimistic TM with
regards to irrevocable operations, since it only aborts when an abort is
manually invoked.

In addition to the algorithm itself, the paper also presents trace harmony, a
proof technique that can be used for OptSVA and other buffered algorithms that
concentrate on memory accesses over transactional API operations to demonstrate
last-use opacity. Moreover, the proof technique and can be trivially modified to
demonstrate opacity (and related properties). This requires applying the tests
that we apply to events in committed transactions to all other transactions
uniformly.

Future work on OptSVA includes a comprehensive evaluation comparing it against
state-of-the-art TM algorithms. However, even though OptSVA can be implemented
successfully as a multicore system, we consider its intended application to be
in distributed TM, and such an implementation is out of scope of this paper,
even though, the results of the evaluation of such a system are promising.

\paragraph{Acknowledgments}
\mygrant

\bibliographystyle{abbrv}
\bibliography{top-arxiv}

\begin{thebibliography}{10}

\bibitem{AMS12}
Y.~Afek, A.~Matveev, and N.~Shavit.
\newblock {Pessimistic Software Lock-Elision}.
\newblock In {\em {Proceedings of~}DISC'12: the 26th~{International Symposium
  on Distributed Computing}}, pages 297--311, Oct. 2012.

\bibitem{AH11}
H.~Attiya and E.~Hillel.
\newblock {Single-version {STM}s Can Be Multi-version Permissive}.
\newblock In {\em {Proceedings of~}ICDCN'11: the 12th~{International Conference
  on Distributed Computing and Networking}}, number 6522 in {Lecture Notes in
  Computer Science}, pages 83--94, Jan. 2011.

\bibitem{ADFK14}
H.~Avni, S.~Dolev, P.~Fatourou, and E.~Kosmas.
\newblock Abort free semantic tm by depedency aware scheduling of transactional
  instructions.
\newblock In {\em {Proceedings of~}NETYS'14}.

\bibitem{BHG87}
P.~A. Bernstein, V.~Hadzilacos, and N.~Goodman.
\newblock {\em Concurrency control and recovery in database systems}.
\newblock Addison-Wesley, 1987.

\bibitem{BMT10}
A.~Bieniusa, A.~Middelkoop, and P.~Thiemann.
\newblock {Brief Announcement: Actions in the Twilight---Concurrent Irrevocable
  Transactions and Inconsistency Repair}.
\newblock In {\em {Proceedings of~}{PODC'10: the 29th~}{ACM SIGACT-SIGOPS
  Symposium on Principles of Distributed Computing}}, July 2010.

\bibitem{DSS06}
D.~Dice, O.~Shalev, and N.~Shavit.
\newblock {Transactional Locking II}.
\newblock In {\em {Proceedings of~}{DISC'06: the 20th~}{International Symposium
  on Distributed Computing}}, Sept. 2006.

\bibitem{DGLM13}
S.~Doherty, L.~Groves, V.~Luchangco, and M.~Moir.
\newblock Towards formally specifying and verifying transactional memory.
\newblock {\em {Formal Aspects of Computing}}, 25:769--799, Sept. 2013.

\bibitem{DFK14}
D.~Dziuma, P.~Fatourou, and E.~Kanellou.
\newblock Consistency for transactional memory computing.
\newblock {\em Bulletin of the EATCS}, 113, 2014.

\bibitem{GK10}
R.~Guerraoui and M.~Kapa{\l}ka.
\newblock {\em {Principles of Transactional Memory}}.
\newblock Morgan \& Claypool, 2010.

\bibitem{HMJH04}
T.~Harris, S.~Marlow, S.~{Peyton Jones}, and M.~Herlihy.
\newblock {Composable memory transactions}.
\newblock In {\em {Proceedings of~}PPoPP'05: the {ACM SIGPLAN Symposium on
  Principles and Practice of Parallel Programming}}, June 2005.

\bibitem{HLMS03}
M.~Herlihy, V.~Luchangco, M.~Moir, and I.~W.~N. Scherer.
\newblock {Software Transactional Memory for Dynamic-sized Data Structures}.
\newblock In {\em {Proceedings of~}{PODC'03: the 22nd~}{ACM SIGACT-SIGOPS
  Symposium on Principles of Distributed Computing}}, pages 92--101, July 2003.

\bibitem{HME93}
M.~Herlihy and J.~E.~B. Moss.
\newblock {Transactional Memory: Architectural Support for Lock-free Data
  Structures}.
\newblock In {\em {Proceedings of~}ISCA'93: the 20th~{International Symposium
  on Computer Architecture}}, pages 289--300, May 1993.

\bibitem{SOJB+10}
S.~Hong, T.~Oguntebi, J.~Casper, N.~Bronson, C.~Kozyrakis, and K.~Olukotun.
\newblock Eigenbench: A simple exploration tool for orthogonal tm
  characteristics.
\newblock In {\em {Proceedings of~}IISWC'10: the IEEE International Symposium
  on Workload Characterization}, 2010.

\bibitem{KAJ+08}
C.~Kotselidis, M.~Ansari, K.~Jarvis, M.~Luj{\'a}n, C.~C. Kirkham, and
  I.~Watson.
\newblock {{DiSTM}: A Software Transactional Memory Framework for Clusters}.
\newblock In {\em {Proceedings of~}ICPP'08: the 37th IEEE International
  Conference on Parallel Processing}, Sept. 2008.

\bibitem{LP14}
M.~Lesani and J.~Palsberg.
\newblock Decomposing opacity.
\newblock In {\em {Proceedings of~}DISC'14: the 28th~{International Symposium
  on Distributed Computing}}, 2014.

\bibitem{MS12}
A.~Matveev and N.~Shavit.
\newblock {Towards a Fully Pessimistic {STM} Model}.
\newblock In {\em {Proceedings of~}{TRANSACT '12: the 7th~}{ACM SIGPLAN
  Workshop on Transactional Computing}}, number 7437 in {Lecture Notes in
  Computer Science}, pages 192--206, Aug. 2012.

\bibitem{Pap79}
C.~H. Papadimitrou.
\newblock {The Serializability of Concurrent Database Updates}.
\newblock {\em {Journal of the ACM}}, 26(4):631--653, 1979.

\bibitem{PFK10}
D.~Perelman, R.~Fan, and I.~Keidar.
\newblock {On Maintaining Multiple Versions in {STM}}.
\newblock In {\em {Proceedings of~}PODC'10: the 29th~{ACM SIGACT-SIGOPS
  Symposium on Principles of Distributed Computing}}, July 2010.

\bibitem{RRHW09}
H.~E. Ramadan, I.~Roy, M.~Herlihy, and E.~Witchel.
\newblock {Committing Conflicting Transactions in an STM}.
\newblock In {\em {Proceedings of~}{PPoPP'09: the 14th~}{ACM SIGPLAN Symposium
  on Principles and Practice of Parallel Programming}}, Feb. 2009.

\bibitem{ST95}
N.~Shavit and D.~Touitou.
\newblock {Software Transactional Memory}.
\newblock In {\em {Proceedings of~}{PODC'95: the 14th~}{ACM SIGACT-SIGOPS
  Symposium on Principles of Distributed Computing}}, pages 204--213, Aug.
  1995.

\bibitem{SW12}
K.~Siek and P.~T. Wojciechowski.
\newblock {A Formal Design of a Tool for Static Analysis of Upper Bounds on
  Object Calls in {Java}}.
\newblock In {\em {Proceedings of~}{FMICS'12: the 17th~}{International Workshop
  on Formal Methods for Industrial Critical Systems}}, number 7437 in {Lecture
  Notes in Computer Science}, pages 192--206, Aug. 2012.

\bibitem{SW13}
K.~Siek and P.~T. Wojciechowski.
\newblock {Brief announcement: Towards a Fully-Articulated Pessimistic
  Distributed Transactional Memory}.
\newblock In {\em {Proceedings of~}{SPAA'13: the 25th~}{ACM Symposium on
  Parallelism in Algorithms and Architectures}}, pages 111--114, July 2013.

\bibitem{SW14-disc}
K.~Siek and P.~T. Wojciechowski.
\newblock Brief announcement: Relaxing opacity in pessimistic transactional
  memory.
\newblock In {\em {Proceedings of~}DISC'14: the 28th~{International Symposium
  on Distributed Computing}}, 2014.

\bibitem{SW15-ijpp}
K.~Siek and P.~T. Wojciechowski.
\newblock {Atomic RMI: A Distributed Transactional Memory Framework}.
\newblock {\em International Journal of Parallel Programming}, 2015.

\bibitem{SW15-arxiv}
K.~Siek and P.~T. Wojciechowski.
\newblock Last-use opacity: A strong safety property for transactional memory
  with early release support, June 2015.
\newblock preprint, {\tt http://arxiv.org/abs/1506.06275}.

\bibitem{SK06}
T.~Skare and C.~Kozyrakis.
\newblock Early release: Friend or foe?
\newblock In {\em {Proceedings of~}WTW'06: the Workshop on Transactional Memory
  Workloads}, June 2006.

\bibitem{TRP13}
A.~Turcu, B.~Ravindran, and R.~Palmieri.
\newblock {HyFlow2: A High Performance Distributed Transactional Memory
  Framework in Scala}.
\newblock In {\em {Proceedings of~}{PPPJ'13: the 10th~International Conference
  on Principles and Practices of Programming on JAVA platform: virtual
  machines, languages, and tools}}, Sept. 2013.

\bibitem{WG02}
G.~Weikum and G.~Vossen.
\newblock {\em Transactional Information Systems: Theory, Algorithms, and the
  PRactice of Concurrency Control and Recovery}.
\newblock Morgan Kaufmann Publishers, 2002.

\bibitem{WSA08}
A.~Welc, B.~Saha, and A.-R. Adl-Tabatabai.
\newblock {Irrevocable Transactions and their Applications}.
\newblock In {\em {Proceedings of~}{SPAA'08: the 20th~}{ACM Symposium on
  Parallelism in Algorithms and Architectures}}, June 2008.

\bibitem{Woj05b}
P.~T. Wojciechowski.
\newblock {Isolation-only Transactions by Typing and Versioning}.
\newblock In {\em {Proceedings of~}{PPDP'05: the 7th~}{ACM SIGPLAN
  International Symposium on Principles and Practice of Declarative
  Programming}}, July 2005.

\bibitem{Woj07}
P.~T. Wojciechowski.
\newblock {\em {Language Design for Atomicity, Declarative Synchronization, and
  Dynamic Update in Communicating Systems}}.
\newblock Pozna{\'n} University of Technology Press, 2007.

\end{thebibliography}

\clearpage
\appendix
\section{Last-use Opacity from Harmony}
\label{sec:harmony-to-last-use-opacity}

\subsection{Last-use Opacity}

Given program $\prog$ and a set of processes $\processes$
executing $\prog$, since different interleavings of $\processes$ cause an
execution $\exec{\prog}{\processes}$ to produce different histories, then let
$\evalhist{\prog}{\processes}$ be the set of all possible histories that can be
produced by $\exec{\prog}{\processes}$, i.e., $\evalhist{\prog}{\processes}$ is
the largest possible set s.t. $\evalhist{\prog}{\processes} = \{ \hist \mid
\hist \models \exec{\prog}{\processes} \}$.

\begin{definition}[\Last{} Write Invocation \cite{SW15-arxiv}] \label{def:last-write-inv}
Given a program $\prog$, a set of processes $\processes$ executing $\prog$ and
a history $\hist$ s.t. $\hist \models \exec{\prog}{\processes}$, i.e.  $\hist \in
\evalhist{\prog}{\processes}$, an invocation $\inv{i}{}{\wop{\obj}{\val}}$ is
the \emph{\last{} write invocation} on some variable $\obj$ by transaction $\tr_i$
in $\hist$, if for any history $\hist' \in \evalhist{\prog}{\processes}$ for
which $\hist$ is a prefix (i.e., $\hist' = \hist \cdot R$) there is no
operation invocation $\inv{i}{}{\wop{\obj}{\valu}}$ s.t.
$\inv{i}{}{\wop{\obj}{\val}}$ precedes $\inv{i}{}{\wop{\obj}{\valu}}$ in
$\hist'|\tr_i$.
\end{definition}

\begin{definition}[\Last{} Write \cite{SW15-arxiv}] \label{def:last-write-op}
Given a program $\prog$, a set of processes $\processes$ executing $\prog$ and
a history $\hist$ s.t. $\hist \models \exec{\prog}{\processes}$, an operation
execution is the \emph{\last{} write} on some variable $\obj$ by transaction
$\tr_i$ in $\hist$ if it comprises of an invocation and a response other than
$\ab_i$, and the invocation is the \emph{\last{} write invocation} on $\obj$ by
$\tr_i$ in $\hist$.
\end{definition}

\begin{definition} [Transaction Decided on $\obj$ \cite{SW15-arxiv}] \label{def:decided} 
Given a program $\prog$, a set of processes $\processes$ and a history $\hist$
s.t. $\hist \models \exec{\prog}{\processes}$, we say transaction $\tr_i \in
\hist$ \emph{decided on} variable $\obj$ in $\hist$ iff $\hist|\tr_i$ contains
a complete write operation execution $\fwop{i}{\obj}{\val}{\ok_i}$ that is the
\last{} write on $\obj$.
\end{definition}

Given some history $\hist$, let $\cpetrans{\hist}$ be a set of transactions
s.t. $\tr_i \in \cpetrans{\hist}$ iff there is some variable $\obj$
s.t. $\tr_i$ decided on $\obj$ in $\hist$.

Given any $\tr_i \in \hist$, a \emph{decided transaction subhistory}, denoted 
$\hist\cpe\tr_i$, is the longest subsequence of $\hist|\tr_i$ s.t.:
\begin{enumerate}[a) ]
    \item $\hist\cpe\tr_i$ contains $\init_i\to\valu$, and
    \item for any variable $\obj$, if 
    $\tr_i$ decided on $\obj$ in $\hist$, 
    then
    $\hist\cpe\tr_i$ contains $(\hist|\tr_i)|\obj$.
\end{enumerate}

In addition, a \emph{decided transaction subhistory completion}, denoted
$\hist\cpeC\tr_i$, is a sequence s.t. $\hist\cpeC\tr_i =
\hist\cpe\tr_i \cdot [\tryC_i\to\co_i]$.

Given a sequential history $S$ s.t. $S\equiv\hist$,
$\luvis{S}{\tr_i}$ is the longest subhistory of $S$, s.t. for each $\tr_j
\in S$:
\begin{enumerate}[a) ]
    \item $S|\tr_j \subseteq \luvis{S}{\tr_i}$ if $i=j$ or $\tr_j$ is committed
    in $S$ and $\tr_j \prec_S \tr_i$, or 
    \item $S\cpeC\tr_j \subseteq \luvis{S}{\tr_i}$ only if $\tr_j$ is not
    committed in $S$ but $\tr_j \in \cpetrans{\hist}$ and $\tr_j \prec_S \tr_i$ and not $\tr_j \prec_\hist \tr_i$ .
\end{enumerate}

Given a sequential history $S$ and a transaction $\tr_i \in S$, we then say
that transaction $\tr_i$ is \emph{last-use legal in} $S$ if $\luvis{S}{\tr_i}$
is legal.

\begin{definition} [Final-state Last-use Opacity \cite{SW15-arxiv}] \label{def:final-state-lopacity} \label{def:fs-lopacity}
    A finite history $\hist$ is \emph{final-state last-use opaque} if, and only if,
    there exists a sequential history $S$ equivalent to any completion of
    $\hist$ s.t., 
    \begin{enumerate}[a) ] 
        \item $S$ preserves the real-time order of $\hist$,
        \item every transaction in $S$ that is committed in $S$ is
        legal in $S$,
        \item every transaction in $S$ that is not committed in $S$ is
        last-use legal in $S$. 
    \end{enumerate}
\end{definition}

\begin{definition} [Last-use Opacity \cite{SW15-arxiv}] \label{def:lopacity}
    A history $\hist$ is \emph{last-use opaque} if, and only if, every finite prefix of
    $\hist$ is final-state last-use opaque.
\end{definition}

\subsection{Composition Rules}

Given trace $\trace$ and a history $\hist = \tohist{\trace}$,
let $\hat{C} = \compl{\hist}$ be a completion of $\hist$ s.t. for every $\tr_i
\in \hist$, if $\tr_i$ is live or commit-pending in $\hist$, then $\tr_i$ is
aborted in $\hist_C$.
Let $\s\tr_i$ such a transaction in $\hat{C}$ that corresponds to a completion
of $\tr_i$ in $\hat{C}$.

\begin{definition}[Equivalent Sequential History Construction]
    \label{seqh-construction}
    Let $\seqH$ be a sequential history
    s.t. $\seqH \equiv \hist_C$
    and, given two transactions $\tr_i, \tr_j \in \hat{C}$:
    \begin{enumerate}
        \item if $\tr_i \prec_{\trace} \tr_j$, then $\tr_i \prec_{\seqH} \tr_j$,
        \item otherwise, if $\tr_i \isoorder_\trace \tr_j$ for any variable $\obj$, 
            then $\tr_i \prec_{\seqH} \tr_j$,
        \item otherwise, if $\exists \op_j = \fwop{j}{\obj}{\any}{\ok_j} \in \trace|\tr_j$
            and $\exists e^i = \get{i}{\obj}{\any} \in \trace|\tr_i$ or $e^i
            = \rset{i}{\obj}{\any} \in \trace|\tr_i$, then $\tr_i \prec_{\seqH}
            \tr_j$.
    \end{enumerate}
\end{definition}

\begin{definition}[Last-use Visible History Construction]
\label{lvis-construction}
Given transactions $\tr_i$ and $\tr_j$ in $\trace$:
\begin{enumerate}
    \item if $\tr_j$ is committed in $\trace$, then
        $\s\tr_j$ is included in $\luvis{\seqH}{\tr_i}$ as a whole, otherwise
    \item if $\tr_j$ is aborted in $\trace$ and $\tr_j \prec_\trace \tr_i$,
        $\s\tr_j$ is not included in $\luvis{\seqH}{\tr_i}$ at all, otherwise
    \item if there exists 
        $\vchain{\trace}{j}{i}$,
        then $\seqH\cpeC\s\tr_j$ is included in $\luvis{\seqH}{\tr_i}$, otherwise
    \item $\tr_j$ is not included in $\luvis{\seqH}{\tr_i}$ at all.
\end{enumerate}
\end{definition}

\subsection{Auxilia}

\begin{lemma} \label{lemma:direct-precedence-in-vis}
              \label{lemma:direct-precedence-in-luvis}
Let there be a consonant,
isolation-ordered, 
trace $\trace$ in obbligato and $\hist = \tohist\trace$ from which $\seqH$ is generated, and
$\tr_i, \tr_j \in \trace$.
Given any non-local $\op_i = \frop{i}{\obj}{\val} \in \trace|\tr_i$ s.t.
$\exists e_i = \get{i}{\obj}{\val} \in \trace|\tr_i$ and $\op_i \dependson
e_i$ and 
given any non-local $\op_j = \fwop{j}{\obj}{\val}{\ok_j} \in \trace|\tr_j$ s.t.
$\exists e_j = \in \rset{i}{\obj}{\val} \in \trace|\tr_j$ and $\op_j
\instigates e_j$, and $e_j \pref_{\trace} e_i$,
then $\nexists \tr_k \in \trace$ s.t. $\tr_k$ $\op_k =
\fwop{j}{\obj}{\val'}{\ok_j}$ s.t. $op_i \prec_{\seqH} \op_k \prec_{\seqH}
\op_j$ and $\tr_k$ is either committed or decided on $\obj$ in trace $\trace$.
\end{lemma}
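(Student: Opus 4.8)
The plan is to argue by contradiction: I will show that any committed or decided writer $\tr_k$ lying between $\tr_j$ and $\tr_i$ in $\seqH$ is forced to issue a routine update event on $\obj$ strictly between $e_j$ and $e_i$ in $\trace$, contradicting $e_j \pref_\trace e_i$. Note first that since $e_j \pref_\trace e_i$ forces $\tr_j \iso{\obj}{\trace} \tr_i$, the construction of Definition~\ref{seqh-construction} gives $\op_j \prec_\seqH \op_i$; the betweenness hypothesis is therefore to be read with $\tr_j$ preceding $\tr_i$, so I treat the case $\op_j \prec_\seqH \op_k \prec_\seqH \op_i$ (the opposite orientation being vacuous). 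Assume such a $\tr_k$ exists.

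First I would extract from $\tr_k$ an actual update event on $\obj$. Since $\op_k = \fwop{k}{\obj}{\val'}{\ok_k} \in \trace|\tr_k$ and $\tr_k$ is committed or decided on $\obj$, obbligato applies. If $\tr_k$ is committed, committed write obbligato (Definition~\ref{def:commit-write-obbligato}) yields a routine update $e_k = \rset{k}{\obj}{\val'} \in \trace|\tr_k$ with $\op_k \instigates e_k$. If $\tr_k$ is merely decided on $\obj$, then its closing write on $\obj$ together with the later view $e_i = \get{i}{\obj}{\val}$ of $\tr_i$ triggers closing write obbligato (Definition~\ref{def:closing-write-obbligato}), again producing such an $e_k$. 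So in either case $\tr_k$ issues a routine update event on $\obj$.

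The core step is to place $e_k$ between $e_j$ and $e_i$ in $\trace$. The three transactions $\tr_j, \tr_k, \tr_i$ all access $\obj$ (with $\tr_j$ and $\tr_k$ writing it), so by isolation (Definition~\ref{def:trace-isolation}) they are pairwise isolation-ordered on $\obj$. Because $\seqH$ is constructed to respect the isolation order (Definition~\ref{seqh-construction}), the assumed $\seqH$-betweenness $\op_j \prec_\seqH \op_k \prec_\seqH \op_i$ forces the $\obj$-isolation order to agree: $\tr_j \iso{\obj}{\trace} \tr_k \iso{\obj}{\trace} \tr_i$ (were either pair oriented the other way, $\seqH$ would order it oppositely, a contradiction). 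Unfolding the variable isolation order (Definition~\ref{def:variable-isolation-order}), every view or routine update of $\tr_j$ on $\obj$ precedes every such event of $\tr_k$, and likewise $\tr_k$'s precede $\tr_i$'s; in particular $e_j \prec_\trace e_k$ and $e_k \prec_\trace e_i$. Thus $e_k$ is an update event on $\obj$ with $e_j \prec_\trace e_k \prec_\trace e_i$, contradicting $e_j \pref_\trace e_i$, since prefacing forbids any update event on $\obj$ strictly between $e_j$ and $e_i$.

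I expect the main obstacle to be this placement step, specifically justifying that $\seqH$-betweenness implies $\obj$-isolation-betweenness rather than an ordering induced by a different variable or by the view--write clause of Definition~\ref{seqh-construction}. The resolution I would lean on is that the isolation order is a single consistent order embedded in $\seqH$, so on any pair of transactions that conflict on $\obj$ the $\obj$-isolation order and the $\seqH$ order must coincide; this converts the global $\seqH$ positions of $\tr_k$ into the local trace placement of $e_k$. A secondary care point is the obbligato case split, and confirming that the event obtained is genuinely a \emph{routine} update on $\obj$ so that it counts as an update event for the purposes of prefacing.
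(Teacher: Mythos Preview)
Your proposal is correct and follows essentially the same route as the paper: assume such a $\tr_k$, extract a routine update $e_k$ on $\obj$ via committed/closing write obbligato, and then use the compatibility between the $\obj$-isolation order and the $\seqH$ order to force $e_k$ between $e_j$ and $e_i$, contradicting $e_j \pref_\trace e_i$. The only structural difference is that the paper case-splits on the relative position of $e_k$ and $e_i$ (if $e_k \prec_\trace e_i$ it contradicts prefacing; if $e_i \prec_\trace e_k$ then $\tr_i \iso{\obj}{\trace} \tr_k$ forces $\s\tr_i \prec_{\seqH} \s\tr_k$, contradicting the assumed $\seqH$-betweenness), whereas you derive both orderings $e_j \prec_\trace e_k$ and $e_k \prec_\trace e_i$ simultaneously from $\seqH$-betweenness; your version is in fact more explicit, since the paper asserts $e_j \prec_\trace e_k$ without spelling out the justification you provide.
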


\begin{proof}
Assume for the sake of contradiction that such $\tr_k$ exists in $\trace$.
Since both $\op_i$ and $\op_j$ are non-local, then $i \neq j \neq k$.

If $\tr_k$ is committed, then, from the definition of commit write obbligato, $\exists
e_k = \rset{k}{\obj}{\val'} \in \trace|\tr_k$ if
$\inv{k}{}{\twop{k}{\obj}{\val'}}$ is the invocation event of $\op_k$ then
$\inv{k}{}{\twop{k}{\obj}{\val'}} \prec_\trace e_k \prec_\trace
\res{k}{}{\co_k}$. 

If $\tr_k$ is decided on $\obj$ in $\trace$, then, from the
definition of closing write obbligato, $\exists
e_k = \rset{k}{\obj}{\val'} \in \trace|\tr_k$ s.t. if
$\inv{k}{}{\twop{k}{\obj}{\val'}}$ is the invocation event of $\op_k$ then
$\inv{k}{}{\twop{k}{\obj}{\val'}} \prec_\trace e_k \prec_\trace
e_i$.

Thus, in either of the above cases, $e_j \prec_\trace e_k$ and either $e_k
\prec e_i$ or $e_i \prec e_k$.
If then $e_k \prec e_i$, it is not true that $e_j \pref_{\trace}
e_i$, \underline{which is a contradiction}.
Alternatively, if $e_i \prec e_k$, then, since $\trace$ is isolation-ordered,
$\tr_i \iso{\obj}{\trace} \tr_k$, which implies that $\tr_i \prec_{\seqH} \tr_k$.
In this case, $\op_i \prec_{\seqH} \op_k$, \underline{which is a contradiction}.

Therefore, there can be no such $\tr_k$, which satisfies the lemma.
\end{proof}

\begin{lemma} \label{lemma:abset:views} %
    Given a consonant trace $\trace$, and $\tr_i \in \trace$, if $\tr_j$ is the
    first element of $\abset{\trace}{\tr_j}{\obj}$, then $\exists e_v =
    \get{j}{\obj}{\val} \in \trace|\tr_j$ that is initial and non-local, and
    either 
    \begin{enumerate}[a)]
        \item $\val = 0$ and $\nexists \tr_k\in\trace$ s.t. $e_u =
        \set{k}{\obj}{\val} \in \trace|\tr_k$ and $e_u \prec_\trace e_v$,
    \item $\val \neq 0$ and $\exists \tr_k\in\trace$ s.t. $e_u =
        \set{k}{\obj}{\val} \in \trace|\tr_k$ and $e_u \pref_\trace e_v$.
    \end{enumerate}
\end{lemma}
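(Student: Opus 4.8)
The plan is to decode the hypothesis through the recursive definition of $\abset{\trace}{\tr_j}{\obj}$. For that sequence to be defined at all, $\tr_j$ must possess an initial view event $e_v = \get{j}{\obj}{\val} \in \trace|\tr_j$, which already furnishes the view event demanded by the conclusion. The recursive clause of the definition prepends $\abset{\trace}{\tr_k}{\obj}$ exactly when some $\tr_k$ issues a recovery update with $\aset{k}{\obj}{\val} \pref_\trace e_v$, while the base clause yields the singleton $\tr_j$ otherwise. Hence the assumption that $\tr_j$ is the \emph{first} element of $\abset{\trace}{\tr_j}{\obj}$ is equivalent to the base clause applying, i.e.\ to there being \emph{no} transaction $\tr_k$ with $\aset{k}{\obj}{\val} \pref_\trace e_v$. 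I would state this equivalence explicitly as the first step, since it is the one piece of structural reasoning the rest of the argument hinges on.

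Next I would dispatch non-locality. Since $e_v$ is initial in $\trace|\tr_j$, no event on $\obj$ precedes it there, so in particular it is not preceded by any routine update of $\tr_j$ on $\obj$; by the local/non-local classification (inherited from that of read operations) this makes $e_v$ non-local. This requires no calculation and closes out the non-locality clause of the statement.

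The substantive step is a case split driven by View Consonance (Definition~\ref{def:view-consonance}), applicable because $\trace$ is consonant and therefore $e_v$ is a consonant view event. That definition offers precisely three alternatives. Alternative~c) --- that some recovery update $\aset{k}{\obj}{\val}$ prefaces $e_v$ --- is verbatim the condition ruled out in the first step, so it cannot hold. The remaining two match the conclusion directly: alternative~a) gives $\val = 0$ with \emph{no} update event preceding $e_v$, hence a fortiori none carrying value $\val$, which is conclusion~a); and alternative~b) gives $\val \neq 0$ together with a routine update $\rset{k}{\obj}{\val} \pref_\trace e_v$, which, since a routine update is in particular an update event $\set{k}{\obj}{\val}$, is conclusion~b).

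I expect the main obstacle to be interpretive rather than technical: correctly unwinding the (mildly mis-typeset) recursive definition of $\abset{\trace}{\tr_j}{\obj}$ so as to identify ``$\tr_j$ is the first element'' with the failure of the prefacing-recovery condition, and then recognising that this failed condition coincides exactly with the third clause of View Consonance. Once that identification is secured, the argument reduces to a direct case analysis with no computation required.
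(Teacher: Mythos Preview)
Your proposal is correct and follows essentially the same route as the paper: establish the initial view event $e_v$ for $\tr_j$, argue non-locality from initiality, apply View Consonance, and rule out clause~(c) because the base case of the $\absetx{}$ recursion must apply at the first element. The only divergence is that the paper reads the hypothesis as ``$\tr_j$ is the first element of $\abset{\trace}{\tr_i}{\obj}$'' (the lemma statement's $\tr_j$-subscript is a typo), and therefore separately justifies the existence of $e_v$ in the case $j\neq i$ via conservativeness of $\tr_j$'s recovery update; your literal reading of the statement sidesteps this, but the extension is immediate once noticed.
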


\begin{proof}
    Since $\tr_j$ is in $\absetx{i}$ then by definition, either $k=i$ or $e_a =
    \aset{j}{\obj}{\val} \in \trace|\tr_j$. In either case $e_v =
    \get{j}{\obj}{\val} \in \trace|\tr_j$ s.t. $e_v$ is initial and non-local
    (in the former case by definition of $\absetx{i}$ and in the latter by
    definition of recovery update consonance).
     
    Since $e_v$ is consonant and non-local, then either: 
    \begin{enumerate}[a)]
    \item $\val = 0$ and $\nexists \tr_k \in \trace$ s.t. $e_u
    \set{k}{\obj}{\val'} \in \trace|\tr_k$ $e_u \prec_\trace e_r$,
    \item $\val \neq 0$ and $\exists \tr_k \in \trace$ s.t. $e_u =
    \rset{j}{\obj}{\val} \in \trace|\tr_k$, $i\neq k$, $e_u \pref_\trace e_r$, $e_u$
    is consonant, and $e_u$ is the ultimate routine update on $\obj$ in
    $\trace|\tr_k$, or
    \item $\exists e_u \in \trace$ s.t. $e_u = \aset{j}{\obj}{\val}$ for some
    $tr_k$, $j\neq k$, $e_u \pref_\trace e_r$, $e_u$ is a consonant recovery event,
    and is the ultimate update on $\obj$ in $\trace|\tr_k$.
    \end{enumerate}
    
    In the latter-most case, if such $e_u$ exists in $\tr_k$ then, $\tr_k \in
    \absetx{j}$ so that $\tr_k$ preceded $\tr_j$ in $\absetx{j}$. Thus, $\tr_k$
    would precede $\tr_j$ in $\absetx{i}$, and therefore $\tr_j$ is not the
    first element of $\absetx{i}$. Thus, the latter-most case is impossible.
\end{proof}

\begin{lemma} \label{lemma:abset:not-committed} %
    Given a consonant trace $\trace$, and $\tr_i \in \trace$, $\forall \tr_j \in
    \abset{\trace}{\tr_i}{\obj}$ ($i\neq j$), $\tr_j$ is aborted or live in
    $\trace$.
\end{lemma}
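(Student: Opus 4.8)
The plan is to read each transaction's status directly off the way the sequence $\abset{\trace}{\tr_i}{\obj}$ is assembled. First I would unfold its recursive definition, writing the sequence as a chain $\tr_{j_1} \cdot \tr_{j_2} \cdots \tr_{j_{m-1}} \cdot \tr_i$. By clause a) of the definition, each non-final element $\tr_{j_k}$ is exactly the witness transaction placed immediately before its successor $\tr_{j_{k+1}}$ because $\tr_{j_k}$ issues a recovery update event $\aset{j_k}{\obj}{\any}$ that prefaces the initial view event of $\tr_{j_{k+1}}$; clause b) merely terminates the recursion and contributes only the final element $\tr_i$. Hence every $\tr_j \in \abset{\trace}{\tr_i}{\obj}$ with $i \neq j$ (equivalently, every non-final element of the chain) issues some recovery update event $e_a = \aset{j}{\obj}{\any}$ in $\trace|\tr_j$. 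This is precisely the structural fact already invoked for the first element in the proof of \rlemma{lemma:abset:views}; here I would simply observe that it holds along the whole chain rather than only at its head.

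Having pinned down that such a $\tr_j$ issues a recovery update event, I would argue that $\tr_j$ cannot be committed. Since $\trace$ is consonant, $e_a$ is a consonant recovery update event and in particular \emph{dooming} (Definition~\ref{def:recovery-update-consonance}): there is no $r = \res{j}{}{\co_j}$ with $e_a \prec_{\trace|\tr_j} r$. On the other hand, by the transactional life-cycle a committed transaction executes no further code, so no event of $\tr_j$ may follow its commit response $\res{j}{}{\co_j}$. Thus if $\tr_j$ were committed, every other event of $\tr_j$ — and $e_a$ in particular — would necessarily precede $\res{j}{}{\co_j}$ in $\trace|\tr_j$, directly contradicting dooming. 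Consequently $\tr_j$ has no commit response at all, i.e. $\tr_j$ is aborted or live in $\trace$, which is exactly the claim for every $\tr_j \in \abset{\trace}{\tr_i}{\obj}$ with $i \neq j$.

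The only real obstacle is interpreting the (slightly informal, and apparently self-referentially mistyped) recursive definition of $\abset{\trace}{\tr_i}{\obj}$ correctly, so as to be certain that the membership of every element other than the terminal $\tr_i$ is witnessed by a recovery update event: once the chain is read as $\abset{\trace}{\tr_i}{\obj} = \abset{\trace}{\tr_j}{\obj} \cdot \tr_i$ with $\tr_j$ the recovery-update witness for $\tr_i$, the rest follows without any case analysis. Notably, I would need consonance only through its dooming clause, and no OptSVA-specific reasoning, which keeps the argument at the abstract trace level at which the lemma is stated.
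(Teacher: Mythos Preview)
Your proposal is correct and follows essentially the same line as the paper's own proof: the paper also observes that every $\tr_j \in \abset{\trace}{\tr_i}{\obj}$ with $i \neq j$ carries a recovery update event $e_a = \aset{j}{\obj}{\val}$, invokes consonance of $e_a$ to get the dooming property, and concludes $\tr_j$ is aborted or live. Your version is simply more explicit in unfolding the recursive definition of $\abset{\trace}{\tr_i}{\obj}$ and in spelling out why dooming excludes a commit response via the life-cycle, but the underlying argument is identical.
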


\begin{proof}    
    Since $i \neq j$ then $\forall \tr_j \in \trace$, $\exists e_a =
    \aset{j}{\obj}{\val} \trace|\tr_j$. Since $\trace$ is consonant, then $e_a$
    is consonant, so $e_a$ is dooming. Thus $\tr_j$ is aborted or live in
    $\trace$.
\end{proof}

\begin{lemma} \label{lemma:abset:containment} %
    Given a consonant, abort abiding trace $\trace$ in obbligato, and a pair of
    transaction $\tr_i, \tr_j, \in
    \trace$, and $\tr_j$ is the first element in $\absetx{i}$, $\forall \tr_k
    \in \trace$ if $\tr_j \iso{\trace}{\obj} \tr_k \iso{\trace}{\obj} \tr_i$
    and $\op_k \fwop{k}{\obj}{\val} \in \trace|\tr_k$
    then $\tr_k$ is aborted or live in $\trace$.
\end{lemma}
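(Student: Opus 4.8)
The plan is to proceed by contradiction: assume some $\tr_k$ with $\tr_j \iso{\obj}{\trace} \tr_k \iso{\obj}{\trace} \tr_i$ writes to $\obj$ yet commits, and discharge this with clause (b) of abort accord (Definition~\ref{def:abort-accord}). First I would unfold $\abset{\trace}{\tr_i}{\obj}$ as a chain $\tr_j = \tr_{j_0} \cdot \tr_{j_1} \cdots \tr_{j_m} = \tr_i$ in which each successor $\tr_{j_{l+1}}$ has an initial non-local view event $e_v^{j_{l+1}} = \get{j_{l+1}}{\obj}{\any}$ prefaced by a recovery update event $e_a^{j_l} = \aset{j_l}{\obj}{\any}$ of its predecessor, i.e. $e_a^{j_l} \pref_\trace e_v^{j_{l+1}}$, and anchored at $\tr_j$ (whose initial view reads the initial value or a routine update, not a recovery value, by \rlemma{lemma:abset:views}). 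Using that every recovery event of a consonant trace is \emph{needed} (Definition~\ref{def:recovery-update-consonance}), so each $\tr_{j_l}$ with $l<m$ owns a routine update $e_u^{j_l} \prec_\trace e_a^{j_l}$, together with trace isolation, I would show the chain is strictly increasing in the variable isolation order $\iso{\obj}{\trace}$: if $\tr_{j_{l+1}} \iso{\obj}{\trace} \tr_{j_l}$ held, the successor's view would precede the predecessor's routine update and hence its recovery event, contradicting the preface.

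Since $\iso{\obj}{\trace}$ is a strict total order on the transactions accessing $\obj$ and $\tr_j \iso{\obj}{\trace} \tr_k \iso{\obj}{\trace} \tr_i$, transaction $\tr_k$ either coincides with a non-last chain member $\tr_{j_l}$ (with $l<m$), in which case $\tr_k$ is aborted or live directly by \rlemma{lemma:abset:not-committed}, or it lies strictly between two consecutive members $\tr_a$ and $\tr_b$, with $\tr_a \iso{\obj}{\trace} \tr_k \iso{\obj}{\trace} \tr_b$. The straddling case is the substance of the lemma. Here I would use obbligato: as $\tr_k$ writes to $\obj$, its closing write on $\obj$ is non-local, so committed write obbligato (Definition~\ref{def:commit-write-obbligato}) supplies a routine update $e_u^k = \rset{k}{\obj}{\any} \in \trace|\tr_k$. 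The connecting recovery event satisfies $e_a^a \pref_\trace e_v^b$, so no update event on $\obj$ falls strictly between $e_a^a$ and $e_v^b$; and since $\tr_k \iso{\obj}{\trace} \tr_b$, isolation gives $e_u^k \prec_\trace e_v^b$. Were $e_a^a \prec_\trace e_u^k$, then $e_u^k$ would be an update event wedged between $e_a^a$ and $e_v^b$, violating the preface, so necessarily $e_u^k \prec_\trace e_a^a$.

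It then remains to assemble the premise of abort accord. Because $\tr_a$ aborts with $e_a^a$ and (by neededness) owns a routine update $e_u^a = \rset{a}{\obj}{\any}$ with $e_u^a \prec_\trace e_a^a$, and because $\tr_a \iso{\obj}{\trace} \tr_k$, isolation yields $e_u^a \prec_\trace e_u^k$. Collecting these, $e_u^a \prec_\trace e_u^k \prec_\trace e_a^a$ with $e_u^a, e_a^a \in \trace|\tr_a$ and $e_u^k \in \trace|\tr_k$ instantiates clause (b) of abort accord taking $\tr_a$ as the aborting transaction, forcing $\tr_k$ to be live or aborted and contradicting that it commits. I expect the main obstacle to be the combinatorial placement of $\tr_k$ within the abort chain and the careful use of the preface relation $e_a^a \pref_\trace e_v^b$ to pin down $e_u^k \prec_\trace e_a^a$; once that single ordering is secured, isolation and abort accord close the argument mechanically.
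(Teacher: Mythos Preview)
Your proposal is correct and follows essentially the same approach as the paper: locate $\tr_k$ between two consecutive members $\tr_\alpha,\tr_\beta$ of $\abset{\trace}{\tr_i}{\obj}$, extract a routine update $e_u^k$ for $\tr_k$ via obbligato, sandwich it as $e_\alpha' \prec_\trace e_u^k \prec_\trace e_\alpha$ (with $e_\alpha'$ the needed routine update and $e_\alpha$ the recovery event of $\tr_\alpha$), and invoke clause~(b) of abort accord. Your argument is in fact more explicit than the paper's on two points: you reprove the monotonicity of the abort chain inline (the paper has this as \rlemma{lemma:abset:order}), and you spell out why $e_u^k \prec_\trace e_a^a$ follows from the preface $e_a^a \pref_\trace e_v^b$, a step the paper leaves tacit.
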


\begin{proof}
    If $i=j$, then the lemma is \underline{vacuously true}.
    
    Since  $\tr_j \iso{\trace}{\obj} \tr_k \iso{\trace}{\obj} \tr_i$, then
    $\exists e_u = \rset{k}{\obj}{\val'} \in \trace|\tr_k$ or $e_v =
    \get{k}{\obj}{\val'} \in \trace|\tr_k$. Hence, either $e_u$ exists in
    $\trace|\tr_k$ or it does not.
     
    If $e_u$ does not exist, then, from commit write obbligato, $\tr_k$ cannot commit in
    $trace$, so \underline{$\tr_k$ is either live or aborted in $\trace$}.

    If $e_u$ exists, then, since $\tr_j \iso{\trace}{\obj} \tr_k
    \iso{\trace}{\obj} \tr_i$ and from the definition of $\absetx{i}$, there is
    some pair of transactions $\tr_\alpha$ and $\tr_\beta \in \trace$ s.t.
    $\tr_\alpha, \tr_\beta \in \absetx{i}$ and $\tr_\alpha$ immediately
    precedes $\tr_\beta$ in $\absetx{i}$ and $\tr_\alpha \iso{\trace}{\obj}
    \tr_k \iso{\trace}{\obj} \tr_\beta$.
    Therefore $\exists e_\alpha = \aset{\alpha}{\obj}{\val_\alpha} \in
    \trace|\tr_\alpha$ and $e_\beta = \get{\beta}{\obj}{\val_\beta} \in
    \trace|\tr_\beta$ s.t. $e_\alpha \pref_\trace e_\beta$.
    In addition, since $e_\alpha$ is consonant, then it is needed, so $\exists
    e_\alpha' = \rset{\alpha}{\obj}{\val_\alpha'} \in \trace|\tr_\alpha$ s.t.
    $e_\alpha' \pref_\trace e_\alpha$.
    Also, from definition of isolation order, $e_\alpha' \prec_\trace e_u
    \prec_\trace e_\beta$.
    Then,  $e_\alpha' \prec_\trace e_u \prec_\alpha \pref_\trace e_\beta$.
    Therefore, from the definition of abort accord,  \underline{$\tr_k$ is
    either live or aborted in $\trace$}.
\end{proof}

\begin{lemma}\label{lemma:abset:order}
    Given a consonant trace $\trace$, and $\tr_i \in \trace$, $\forall \tr_j,
    \tr_k \in \abset{\trace}{\tr_i}{\obj}$ ($i\neq j$), if $\tr_j$ precedes
    $\tr_k$ in $\abset{\trace}{\tr_i}{\obj}$ then $\tr_j \iso{\obj}{\trace}
    \tr_k$.
\end{lemma}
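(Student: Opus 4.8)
The plan is to reduce the claim to the case of \emph{consecutive} elements of the sequence $\abset{\trace}{\tr_i}{\obj}$ and then to lift it to arbitrary pairs by induction on their distance in the sequence, exploiting transitivity of the variable isolation order $\iso{\obj}{\trace}$. First I would unfold the recursive construction of $\abset{\trace}{\tr_i}{\obj}$: whenever $\tr_m$ immediately precedes $\tr_n$ in the sequence (so $\tr_m$ is not the last element $\tr_i$), the clause that placed $\tr_m$ before $\tr_n$ guarantees a recovery update event $e_a = \aset{m}{\obj}{\val} \in \trace|\tr_m$ and an initial view event $e_v = \get{n}{\obj}{\val} \in \trace|\tr_n$ with $e_a \pref_\trace e_v$, and hence $e_a \prec_\trace e_v$.

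For the consecutive case I would then sandwich $e_a$ and $e_v$ between the two transactions' relevant (view and routine update) events on $\obj$. On the left, since $\trace$ is consonant, $e_a$ is consonant and therefore \emph{ending}, so no view or update event on $\obj$ in $\tr_m$ follows $e_a$; thus every view event and every routine update event on $\obj$ in $\tr_m$ precedes $e_a$. On the right, $e_v$ is the initial view event of $\tr_n$ and, by minimalism (\rlemma{lemma:minimalism}), the only view event on $\obj$ in $\tr_n$; moreover, by \rlemma{lemma:thorn} every update event on $\obj$ in $\tr_n$ is preceded by a view event and hence by $e_v$. Consequently $e_v$ precedes or equals every relevant event of $\tr_n$ on $\obj$. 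Chaining the three relations, any relevant $e_m$ of $\tr_m$ and $e_n$ of $\tr_n$ satisfy $e_m \prec_\trace e_a \prec_\trace e_v \preceq_\trace e_n$, so $e_m \prec_\trace e_n$, which is precisely $\tr_m \iso{\obj}{\trace} \tr_n$.

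Finally I would run an induction on the number of steps separating $\tr_j$ from $\tr_k$. The base case is the consecutive case just established. For the inductive step, letting $\tr_l$ be the element immediately preceding $\tr_k$, the induction hypothesis gives $\tr_j \iso{\obj}{\trace} \tr_l$ and the base case gives $\tr_l \iso{\obj}{\trace} \tr_k$; pivoting through any single view event of $\tr_l$ on $\obj$ (which exists for the first element by \rlemma{lemma:abset:views} and for every other element directly from the construction) yields $\tr_j \iso{\obj}{\trace} \tr_k$. I expect the main obstacle to be the consecutive case, and specifically the need to pin down, from the \emph{ending} clause of recovery update consonance together with \rlemma{lemma:thorn} and minimalism, that $e_a$ lies after all of $\tr_m$'s relevant events while $e_v$ lies before all of $\tr_n$'s; the transitivity step is routine once each element is known to carry a view event on $\obj$.
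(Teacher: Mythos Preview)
Your approach is essentially the same as the paper's: both work with consecutive elements of $\abset{\trace}{\tr_i}{\obj}$ and use the consonance of the recovery event linking them. The paper invokes the \emph{conservative} clause to produce a view event $e_v^n$ in the predecessor and concludes $\tr_n \iso{\obj}{\trace} \tr_m$ from $e_v^n \prec_\trace e_v^m$; you instead invoke the \emph{ending} clause to bound all of the predecessor's view and routine-update events above by $e_a$, and the initiality of the successor's view to bound its events below by $e_v$. Your version is actually more careful with respect to the full quantifier structure of Definition~\ref{def:variable-isolation-order}, and you make the extension to non-consecutive pairs explicit where the paper leaves transitivity implicit.

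One caveat: the lemma is stated for an arbitrary consonant trace, but \rlemma{lemma:minimalism} and \rlemma{lemma:thorn}, which you cite for the right-hand bound, are proved only for the OptSVA trace $\traceos$. You do not actually need them. The view event $e_v$ of the successor is \emph{initial} in $\trace|\tr_n$---this comes from the conservative clause of recovery-update consonance for the non-last elements and directly from the definition of $\abset{\trace}{\tr_i}{\obj}$ for the last element---so every other event of $\tr_n$ on $\obj$, in particular every view and every routine update, already follows $e_v$. Replacing those two citations by this observation makes your argument self-contained under consonance alone.
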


\begin{proof}
Given $\absetx{i}$, from \rlemma{lemma:abset:not-committed}, $\forall \tr_k \in
\absetx{i}$, $\tr_k$ is aborted or live in $\trace$. 
In addition, since for all $\tr_m \in \absetx{i}$ except the first, where $e_v^m
= \get{m}{\obj}{\val} \in \trace|\tr_m$ there is some $\tr_n$ that directly
precedes $\tr_m$ in $\absetx{i}$ and contains $e_a^n = \aset{n}{\obj}{\val}$
s.t. $e_a^n \pref_\trace e_v^m$.
Since $e_a^n$ is conservative, there is a preceding view
$e_v^n = \get{n}{\obj}{\val}$ s.t. $e_v^n \pref_\trace e_a^n$. Thus $e_v^n
\pref_\trace e_v^m$, so $\tr_n \iso{\obj}{\trace} \tr_m$. 
\end{proof}

\begin{corollary} \label{cor:abset:order}
    Given a consonant trace $\trace$, and $\tr_i \in \trace$, $\forall \tr_j \in
    \abset{\trace}{\tr_i}{\obj}$ ($i\neq j$), $\tr_j \iso{\obj}{\trace} \tr_i$.
\end{corollary}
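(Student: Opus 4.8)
The plan is to derive this corollary directly from \rlemma{lemma:abset:order} by exploiting the positional structure of the sequence $\abset{\trace}{\tr_i}{\obj}$. First I would read off from the definition of $\abset{\trace}{\tr_i}{\obj}$ that $\tr_i$ always occupies the \emph{final} position of the sequence: in both branches of that definition the sequence is built by appending $\tr_i$ at the end (either as $\abset{\trace}{\tr_i}{\obj}\cdot\tr_i$ or as $\varnothing\cdot\tr_i$). Hence every $\tr_j \in \abset{\trace}{\tr_i}{\obj}$ with $i \neq j$ necessarily precedes $\tr_i$ in the sequence.

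The second step is simply to invoke \rlemma{lemma:abset:order}. Since $\tr_j$ precedes $\tr_i$ in $\abset{\trace}{\tr_i}{\obj}$, the lemma yields $\tr_j \iso{\obj}{\trace} \tr_i$, which is exactly the claim. So the corollary is essentially a repackaging of the lemma together with the single observation that $\tr_i$ is the terminal element of its own abort-set sequence.

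The only point requiring care is whether \rlemma{lemma:abset:order} should be read as covering \emph{arbitrary} precedence in the sequence or merely \emph{immediate} precedence; its proof establishes the relation for consecutive transactions $\tr_n, \tr_m$, where $\tr_n$'s recovery update $e_a^n = \aset{n}{\obj}{\val}$ prefaces $\tr_m$'s initial view event, and conservativeness supplies a view $e_v^n \pref_\trace e_a^n$, so that $e_v^n \pref_\trace e_v^m$ and thus $\tr_n \iso{\obj}{\trace} \tr_m$. I would close any gap for non-adjacent $\tr_j$ by noting that the variable isolation order $\iso{\obj}{\trace}$ is transitive, being determined entirely by the transitive real-time order $\prec_\trace$ restricted to the view and routine-update events on $\obj$. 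Walking along the chain of consecutive transactions from $\tr_j$ up to the final element $\tr_i$ and composing the pairwise relations $\iso{\obj}{\trace}$ by transitivity then gives $\tr_j \iso{\obj}{\trace} \tr_i$.

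I do not anticipate a substantive obstacle here. The corollary is a near-immediate consequence of the lemma, and the verification of transitivity of $\iso{\obj}{\trace}$ is the only mildly nontrivial ingredient — and it follows at once from the definition, since the relation just asserts that all relevant events of the earlier transaction precede all relevant events of the later one in the (transitive) trace order.
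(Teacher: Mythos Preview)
Your proposal is correct and matches the paper's intended derivation: the paper states this as an immediate corollary of \rlemma{lemma:abset:order} with no separate proof, relying on the fact that $\tr_i$ is the terminal element of $\abset{\trace}{\tr_i}{\obj}$. Your extra care in noting that the lemma's proof only handles consecutive pairs and then closing the gap via transitivity of $\iso{\obj}{\trace}$ is a welcome clarification that the paper leaves implicit.
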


\begin{lemma} 
    \label{lemma:corr}
    Given $\tr_i, \tr_j$ s.t. $\s\tr_j \sin \luvis{\seqH}{\tr_i}$,
    $\forall \tr_k$ if $\s\tr_k \sin \luvis{\seqH}{\tr_j}$,
    then $\s\tr_k \sin \luvis{\seqH}{\tr_i}$ and
    $\luvis{\seqH}{\tr_i}|\tr_k = \luvis{\seqH}{\tr_j}|\tr_k$ 
\end{lemma}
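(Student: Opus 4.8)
The plan is to argue directly from the last-use visible history construction (Definition~\ref{lvis-construction}), by a case analysis on which of its four clauses places $\s\tr_k$ into $\luvis{\seqH}{\tr_j}$. The observation that makes the equality $\luvis{\seqH}{\tr_i}|\tr_k = \luvis{\seqH}{\tr_j}|\tr_k$ essentially free is that the \emph{portion} of $\tr_k$ contributed by each admitting clause --- either all of $\s\tr_k$ (clause~1) or $\seqH\cpeC\s\tr_k$ (clause~3) --- depends only on $\tr_k$ itself and not on the reference transaction. Hence the whole lemma reduces to showing that the \emph{same} clause that admits $\tr_k$ into $\luvis{\seqH}{\tr_j}$ also admits it into $\luvis{\seqH}{\tr_i}$. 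Since $\s\tr_k \sin \luvis{\seqH}{\tr_j}$, only clause~1 or clause~3 can have fired, so those are the only cases to treat.

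First I would dispatch the committed case: if $\tr_k$ is committed in $\trace$, clause~1 fires regardless of the reference transaction, so it fires for $\tr_i$ as well, giving both membership and equality immediately. The substantive case is clause~3, where $\tr_k$ is not committed and a view chain $\vchain{\trace}{k}{j}$ exists. Here I would first show that $\tr_j$ must itself be non-committed: were $\tr_j$ committed, commit accord (\rlemma{lemma:commit-accord}) together with the self-containment of the chain (\rlemma{lemma:chain-consistency}) would let commitment propagate backwards along $\vchain{\trace}{k}{j}$ and force $\tr_k$ to be committed, contradicting the case assumption. With $\tr_j$ non-committed, the hypothesis $\s\tr_j \sin \luvis{\seqH}{\tr_i}$ can only hold through clause~3 as well, so a chain $\vchain{\trace}{j}{i}$ exists. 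Concatenating the two chains at their shared endpoint $\tr_j$ yields $\vchain{\trace}{k}{i}$ directly from the definition of a view chain, which is exactly the premise clause~3 needs for $\tr_i$.

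It then remains to check that clause~3, and not the earlier clause~2, is the one that fires for $\tr_k$ with respect to $\tr_i$; that is, that $\tr_k$ is not excluded as an aborted transaction with $\tr_k \prec_\trace \tr_i$. I expect this real-time bookkeeping to be the main obstacle. For $\tr_j$ we already know clause~2 did not fire, so if $\tr_k$ is aborted then $\tr_k \not\prec_\trace \tr_j$; the task is to upgrade this to $\tr_k \not\prec_\trace \tr_i$. The argument should combine abort coda (\rlemma{lemma:abort-coda}) and decisiveness (\rlemma{lemma:decisiveness}) --- which force an aborted $\tr_k$ that is viewed along the chain to issue its recovery event, and hence to terminate, only after its value has been propagated --- with the fact that $\seqH$ preserves the real-time order of $\trace$. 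The delicate point is that the chain reaches $\tr_i$ only transitively, so I would route the real-time reasoning through the intermediate transactions of $\vchain{\trace}{k}{i}$ and the isolation order, invoking abort accord (\rlemma{lemma:abort-accord}) to bar any intervening abort that would sever the chain. Once $\tr_k \not\prec_\trace \tr_i$ is secured, clause~3 fires for $\tr_i$, and since both inclusions deposit the same $\seqH\cpeC\s\tr_k$, the required membership and equality both follow.
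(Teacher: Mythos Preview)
Your decomposition matches the paper's exactly: split on whether $\tr_k$ is committed, and in the non-committed case show $\tr_j$ is itself non-committed (the paper isolates this as \rlemma{lemma:helper-lemma-1}, using only commit accord propagated along the chain), concatenate $\vchain{\trace}{k}{j}$ with $\vchain{\trace}{j}{i}$, and finally establish $\tr_k \nprec_\trace \tr_i$ so that clause~2 does not pre-empt clause~3.

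The one place where your plan diverges is the toolset for that last real-time step. Decisiveness and abort accord are not what is needed; the paper's argument (packaged as \rlemma{lemma:helper-lemma-2}) uses abort coda together with \emph{chain isolation}, and is much more direct than routing through intermediate transactions. Assume for contradiction that $\tr_k$ is aborted and $\tr_k \prec_\trace \tr_i$. Abort coda gives a recovery event $e_a$ on $\obj$ with $e_u^k \prec_\trace e_a \prec_\trace \res{k}{}{\ab_k}$, where $e_u^k$ is $\tr_k$'s routine update seen by the next transaction in the chain. Since $\tr_k \prec_\trace \tr_i$, every event of $\tr_i$---in particular the view event at the end of $\vchain{\trace}{k}{i}$---comes after $\res{k}{}{\ab_k}$, so $e_a$ lies strictly between $e_u^k$ and an event of the chain. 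That is precisely what chain isolation forbids. No appeal to decisiveness, abort accord, or the isolation order of intermediate transactions is required.
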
    

\begin{proof}
    If $\tr_k$ is committed in $\trace$ and $\s\tr \sin \luvis{\seqH,\tr_j}$
    then $\seqH|\s\tr_k \subseteq \luvis{\seqH,\tr_j}$ and $\s\tr_k
    \prec_{\seqH} \s\tr_j$.
    Since $\tr_j \sin \luvis{\seqH, \tr_i}$ , then $\s\tr_j
    \prec_{\seqH} \s\tr_i$.
    Since $\tr_k$ is committed in $\trace$ and $\s\tr_j
    \prec_{\seqH} \s\tr_i$, then
    $\seqH|\s\tr_k \subseteq \luvis{\seqH}{\tr_i}$.

    If $\tr_k$ is not committed in $\trace$ and $\s\tr \sin \luvis{\seqH,\tr_j}$
    then $\seqH\cpeC\s\tr_k = \luvis{\seqH,\tr_j}|\tr_k$ and $\s\tr_k
    \prec_{\seqH} \s\tr_j$ and $\tr_k \nprec_\trace \tr_j$ and $\exists
    \vchain{\trace}{k}{j}$ (from \rdef{lvis-construction}).

    Since $\tr_k$  is not committed in $\trace$, and since $\trace$ is commit
    abiding, then from \rlemma{lemma:helper-lemma-1}, there cannot be
    $\vchain{\trace}{k}{j}$ s.t. $\tr_j$ is committed. Thus $\tr_j$ is not
    committed in $\trace$.
    Thus, if $\seqH\cpeC\s\tr_j$ then $\s\tr_j \prec_{\seqH} \s\tr_i$ and
    $\tr_j \nprec \tr_i$ and $\exists \vchain{\trace}{j}{i}$.

    If $\exists \vchain{\trace}{k}{j}$ and $\exists \vchain{\trace}{j}{i}$ then
    $\exists{\trace}{\tr_k}{\tr_i}$.

    Either $\tr_k$ aborts in $\trace$ (i.e. $\res{k}{}{\ab_k}$) or $\tr_k$ is
    live in $\trace$.
    In the latter case trivially $\tr_k \nprec_\trace \tr_i$. In the former
    case, from \rlemma{lemma:helper-lemma-2}, also $\tr_k \nprec_\trace \tr_i$.

    Since $\exists \vchain{\trace}{\tr_k}{\tr_i}$ and $\s\tr_k \prec_{\seqH}
    \s\tr_i$ and $\tr_k \nprec_\trace \tr_i$ then $\seqH\cpeC\s\tr_k =
    \luvis{\seqH,\tr_i}|\tr_k$ (from \rdef{lvis-construction}).
\end{proof}

\begin{lemma} 
    \label{lemma:not-corr}
    Given $\tr_i, \tr_j$ s.t. $\s\tr_j \sin \luvis{\seqH}{\tr_i}$,
    $\forall \tr_k$ if $\s\tr_k \nsin \luvis{\seqH}{\tr_j}$
    and $\exists \vchain{\trace}{\tr_k}{\tr_j}$
    then $\s\tr_k \sin \luvis{\seqH}{\tr_i}$.
\end{lemma}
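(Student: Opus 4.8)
The plan is to treat this as the complement of \rlemma{lemma:corr}: here $\tr_k$ fails to appear in $\tr_j$'s view, yet a view chain still links it to $\tr_j$, and I want to funnel it into $\tr_i$'s view by a chain-composition argument backed by the same two helper lemmas \rlemma{lemma:helper-lemma-1} and \rlemma{lemma:helper-lemma-2} used in \rlemma{lemma:corr}. First I would pin down \emph{why} $\s\tr_k \nsin \luvis{\seqH}{\tr_j}$ by running the case analysis of \rdef{lvis-construction} for the pair $(\tr_k,\tr_j)$. Since $\exists\,\vchain{\trace}{k}{j}$ is assumed, the final (no-chain) case cannot be the cause of exclusion, and the committed case would instead force inclusion; hence the only route to exclusion is the aborted case, so I would conclude that $\tr_k$ is aborted in $\trace$ and $\tr_k \prec_\trace \tr_j$. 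This is the crux of the argument, and the main obstacle is handling the \emph{otherwise}-ordering of the construction's cases precisely, so that the negative hypothesis $\nsin$ is correctly read off as exactly these two facts (in particular ruling out the live case, which the view chain would otherwise drop straight into the view).

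Next I would propagate commit status upward along the chain. Because $\tr_k$ is aborted, hence not committed, and $\exists\,\vchain{\trace}{k}{j}$, \rlemma{lemma:helper-lemma-1} (using that $\trace$ is commit abiding) yields that $\tr_j$ is not committed in $\trace$. With $\tr_j$ not committed, the hypothesis $\s\tr_j \sin \luvis{\seqH}{\tr_i}$ can only have arisen through the view-chain case of \rdef{lvis-construction}, so $\exists\,\vchain{\trace}{j}{i}$. I would then compose the chains: concatenating $\vchain{\trace}{k}{j}$ with $\vchain{\trace}{j}{i}$ at their shared endpoint $\tr_j$ produces $\vchain{\trace}{k}{i}$, since the consecutive virtual-view condition is preserved at the junction and throughout.

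Finally I would establish the two conditions that place $\tr_k$ into $\tr_i$'s view. From $\tr_k$ being aborted together with the freshly obtained $\vchain{\trace}{k}{i}$, \rlemma{lemma:helper-lemma-2} gives $\tr_k \nprec_\trace \tr_i$. Running \rdef{lvis-construction} once more for the pair $(\tr_k,\tr_i)$: the committed case is excluded since $\tr_k$ is aborted, the aborted-and-preceding case is excluded since $\tr_k \nprec_\trace \tr_i$, and $\exists\,\vchain{\trace}{k}{i}$ triggers the view-chain case, so $\seqH\cpeC\s\tr_k$ is included and therefore $\s\tr_k \sin \luvis{\seqH}{\tr_i}$, as required. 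Beyond the opening case analysis, the only delicate points are checking that the chain concatenation is legitimate and that each helper lemma is invoked with hypotheses that genuinely hold; both are routine given the groundwork already laid for \rlemma{lemma:corr}.
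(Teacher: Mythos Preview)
Your argument is internally valid for the conclusion as literally printed, but it does not match the paper: the paper's proof establishes the \emph{opposite} conclusion $\s\tr_k \nsin \luvis{\seqH}{\tr_i}$, and that negative form is what is later consumed (by contraposition) in \rlemma{lemma:all-non-local-read-lu-consistency}. The $\sin$ in the displayed statement is evidently a typo for $\nsin$. The paper's route also diverges from yours after the common opening step (deducing that $\tr_k$ is aborted with $\tr_k\prec_\trace\tr_j$): rather than composing chains and invoking \rlemma{lemma:helper-lemma-2}, the paper argues directly that \emph{no} chain $\vchain{\trace}{k}{i}$ can exist. Since $\tr_k$ aborts, abort coda places a recovery event $e_a^l$ on $\obj$ between $\tr_k$'s routine update $e_u^k$ and $\res{k}{}{\ab_k}$; meanwhile every event of $\tr_i$ lies after $\res{k}{}{\ab_k}$ (via $\tr_k\prec_\trace\tr_j$ and the chain $\vchain{\trace}{j}{i}$), so $e_a^l$ sits strictly between $e_u^k$ and an event of $\tr_i$, violating chain isolation for any putative $\vchain{\trace}{k}{i}$. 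With no such chain, \rdef{lvis-construction} then forces $\s\tr_k \nsin \luvis{\seqH}{\tr_i}$.

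There is also a deeper point you brushed past: the hypotheses are in fact inconsistent in a harmonious trace. Once you have derived that $\tr_k$ is aborted and $\tr_k\prec_\trace\tr_j$, applying \rlemma{lemma:helper-lemma-2} directly to the \emph{assumed} chain $\vchain{\trace}{k}{j}$ already yields $\tr_k\nprec_\trace\tr_j$, a contradiction. So the lemma is vacuous, which is why both your argument (reaching $\sin$) and the paper's (reaching $\nsin$) can be simultaneously sound. Your chain-composition detour through $\vchain{\trace}{k}{i}$ is therefore correct but unnecessary; if you intend to align with the paper, you should instead aim for the exclusion conclusion and the chain-isolation argument above.
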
  

\begin{proof}
    If $\tr_k \nsin \luvis{\seqH}{\tr_j}$ and $\s\tr_k \prec_{\seqH} \s\tr_j$
    then $\s\tr_k$ is not committed in $\trace$. 
    
    If $\seqH\cpeC\s\tr_k \nsubset \luvis{\seqH}{\tr_j}$ and $\s\tr_k
    \prec_{\seqH} \s\tr_j$ then either $\tr_k \prec_\trace \tr_j$ or $\nexists
    \vchain{\trace}{i}{j}$. The latter case contradicts the assumptions of the
    lemma, hence $\tr_k \prec_\trace \tr_j$.

    If $\tr_k \prec_\trace \tr_j$, then $\exists r = \res{k}{}{\ab_k} \in
    \trace|\tr_k$ s.t. for every event $e$ in $\trace|\tr_j$, $r \prec_\trace
    e$.
    Since $\exists \vchain{\trace}{j}{i}$ then there is some view event $e_v^i$
    in $\trace|\tr_i$ and some update event $e_u^j$ in $\trace|\tr_j$ s.t.
    $e_u^j \prec_\trace e_v^i$. Therefore $r \prec_\trace e_u^j \prec_\trace
    e_v^i$.
    
    Since no events can occur in $\trace|\tr_k$ after $v$, then for all events
    in $e$ in $\trace|\tr_k$ apart from $r$, $e \prec_\trace r$.
    So, for any $\vchain{\trace}{k}{i}$ for any update event $e_u^k =
    \rset{k}{\obj}{\val} \in \trace|\tr_k$, $e_u^k \prec_\trace r \prec_\trace
    e_v^i$.
    
    From abort coda, $\exists e_a^l = \aset{l}{\obj}{\val'}$ s.t. $e_u^k
    \prec_\trace e_a^l \prec_\trace r$, and, from conservatism and unique
    routine updates, $\val \neq \val'$.
    Thus, since $e_u^k \prec_\trace e^l_a \prec_\trace e_v^i$ and $\val \neq
    \val'$, there cannot be such $\vchain{\trace}{k}{i}$ that satisfies chain
    isolation, and therefore $\nexists\vchain{\trace}{k}{i}$.

    Therefore, from \rlemma{lvis-construction}, $\seqH\cpeC\s\tr_k
    \nsubseteq \luvis{\seqH}{\tr_i}$.

    Thus, $\seqH|\s\tr_i \nsin \luvis{\seqH}{\tr_i}$.
\end{proof}

\begin{lemma}
    \label{lemma:helper-lemma-1}
    Given $\vchain{\trace}{i}{j}$, 
    if $\tr_j$ is committed in $\trace$,
    then $\forall \tr_k \in \vchain{\trace}{i}{j}$, 
    $\tr_k$ is committed in $\trace$.
\end{lemma}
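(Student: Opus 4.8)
The plan is to proceed by induction along the view chain, moving from its last element $\tr_j$ (committed by assumption) backwards to its first element $\tr_i$, discharging each step with commit accord (\rlemma{lemma:commit-accord}). I would write the chain as $\vchain{\trace}{i}{j} = \tr_{c_1} \cdot \tr_{c_2} \cdots \tr_{c_n}$ with $\tr_{c_1} = \tr_i$ and $\tr_{c_n} = \tr_j$, where by definition of a view chain every consecutive pair satisfies $\tr_{c_{m+1}} \vviews \tr_{c_m}$. The base case is $\tr_{c_n} = \tr_j$, committed by hypothesis; the inductive step would assume $\tr_{c_{m+1}}$ is committed and conclude that $\tr_{c_m}$ is committed.

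The obstacle is that the chain supplies only the weak relation $\tr_{c_{m+1}} \vviews \tr_{c_m}$ (virtual viewing, requiring merely $e_u \prec_\trace e_v$), whereas commit accord is stated for the strong relation $\views$ (requiring the update to \emph{preface} the view, i.e. $e_u \pref_\trace e_v$ with no intervening update). So the crux of each step is to upgrade $\vviews$ to $\views$. I would fix the witnessing events $e_u = \rset{c_m}{\obj}{\val} \in \trace|\tr_{c_m}$ and $e_v = \get{c_{m+1}}{\obj}{\val} \in \trace|\tr_{c_{m+1}}$ with $e_u \prec_\trace e_v$, and argue $e_u \pref_\trace e_v$. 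Suppose not: then the most recent update $e^\ast$ on $\obj$ preceding $e_v$ is distinct from $e_u$, so $e_u \prec_\trace e^\ast \prec_\trace e_v$. Since a view event returns the value of the most recent preceding update, $e^\ast$ carries $\val$; by unique writes no routine update other than $e_u$ carries $\val$, so $e^\ast$ must be a recovery update $\aset{l}{\obj}{\val}$ for some $\tr_l$. But then $e^\ast$ is a recovery update with the same value as the routine update $e_u$ of the chain transaction $\tr_{c_m}$, lying between $e_u$ and the chain event $e_v$ — precisely what chain isolation (\rlemma{lemma:chain-isolation}) forbids. This contradiction yields $e_u \pref_\trace e_v$, hence $\tr_{c_{m+1}} \views \tr_{c_m}$.

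With the viewing relation established, commit accord (\rlemma{lemma:commit-accord}) applies directly: since $\tr_{c_{m+1}} \views \tr_{c_m}$ and $\tr_{c_{m+1}}$ is committed, $\tr_{c_m}$ is committed. By descending induction every $\tr_{c_m}$, and therefore every $\tr_k \in \vchain{\trace}{i}{j}$, is committed, which is the claim. I expect the upgrade from $\vviews$ to $\views$ to be the only non-routine part; it is worth noting that it relies on $\trace$ being harmonious — specifically in commit accord and chain-isolated, both components of \rdef{def:harmony}, together with unique writes — all of which are available in the appendix's harmonious setting.
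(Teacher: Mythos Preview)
Your proof is correct and follows the same high-level strategy as the paper's proof: backward induction along the view chain, invoking commit accord at each step. In fact, your argument is \emph{more careful} than the paper's. The paper's proof simply writes ``$\tr_j \views \tr_k$'' for consecutive chain elements and applies commit accord directly, silently conflating $\vviews$ (which is all the chain definition provides) with $\views$ (which is what commit accord requires). You correctly identify this gap and close it using unique routine updates together with chain isolation (\rdef{def:chain-isolation}); that upgrade step is sound and is the only nontrivial part of the argument.

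One minor point: this lemma lives in the appendix on general harmonious traces, not in the OptSVA-specific section, so you should cite the harmony \emph{definitions} (\rdef{def:commit-accord}, \rdef{def:chain-isolation}) rather than the OptSVA lemmas (\rlemma{lemma:commit-accord}, \rlemma{lemma:chain-isolation}), which establish that OptSVA traces in particular satisfy those properties.
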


\begin{proof}
    Given a pair of transaction $\tr_l, \tr_m \in \trace$ s.t. $\tr_m \views
    \tr_l$, from commit accord, if $\tr_m$ is committed in $\trace$, then
    $\tr_l$ is also committed in $\trace$.

    If $\vchain{\trace}{i}{j} = \tr_i \cdot \tr_j$, then n since $\tr_j$ is
    committed in $\trace$, then so is $\tr_i$.

    Since, $\vchain{\trace}{i}{j} = \tr_i \vchain{\trace}{i}{k} \cdot \tr_j$
    and $\tr_k$ is such that $\tr_j \views \tr_k$, then since $\tr_j$ is
    committed in $\trace$, then so is $\tr_k$.
    This follows recursively for $\vchain{\trace}{i}{k}$.

    Thus every transaction in $\vchain{\trace}{i}{j}$ is committed in $\trace$.
\end{proof}

\begin{lemma}
    \label{lemma:helper-lemma-2}
    Given $\vchain{\trace}{i}{j}$, 
    if $\tr_i$ aborts in $\trace$,
    then $\tr_i \nprec_\trace \tr_j$. 
\end{lemma}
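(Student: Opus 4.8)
The plan is to argue by contradiction: assume $\tr_i \prec_\trace \tr_j$, so the abort response $r = \res{i}{}{\ab_i}$ precedes every event of $\tr_j$ in $\trace$, and then derive a violation of chain isolation for the given chain $\vchain{\trace}{i}{j}$. The situation is structurally identical to the one already handled in \rlemma{lemma:not-corr} (an aborting transaction whose mandated recovery event would be forced inside a view chain), so I would reuse that machinery.

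First I would dispose of the degenerate case $i = j$: there $\tr_i \prec_\trace \tr_j$ would assert $\tr_i \prec_\trace \tr_i$, which is impossible, so the conclusion holds trivially. Assume therefore $i \neq j$. Since $\vchain{\trace}{i}{j}$ exists and $\tr_i$ is its first element, its first link virtually-views $\tr_i$, and because $\vviews$ is defined through a routine update of the viewed transaction, this forces a routine update $e_u = \rset{i}{\obj}{\val} \in \trace|\tr_i$ (unique by minimalism). Symmetrically, since $i \neq j$, the last link of the chain endows $\tr_j$ with a view event $e_v^j \in \trace|\tr_j$.

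Next I would invoke abort coda on $\tr_i$: because $\tr_i$ aborts and $e_u = \rset{i}{\obj}{\val} \in \trace|\tr_i$, there is a recovery event $e_a = \aset{l}{\obj}{\val'}$ (for some $\tr_l$ with $l = i$ or $\tr_l \iso{\obj}{\trace} \tr_i$) with $e_u \prec_\trace e_a \prec_\trace r$; by recovery update consonance $e_a$ is conservative, so together with unique writes $\val' \neq \val$. Under the contradiction hypothesis $r \prec_\trace e_v^j$, whence $e_u \prec_\trace e_a \prec_\trace e_v^j$, with $e_u$ a routine update of the chain transaction $\tr_i$, $e_v^j$ an event of the chain transaction $\tr_j$, and $e_a$ a recovery event on the same variable carrying a distinct value. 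This is exactly the configuration barred by chain isolation for $\vchain{\trace}{i}{j}$; since $\trace$ is harmonious and therefore chain-consistent (hence chain-isolated), this is a contradiction, and so $\tr_i \nprec_\trace \tr_j$.

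The main obstacle I anticipate is bookkeeping around chain isolation rather than any deep idea: one must check that $e_a$ lies \emph{strictly} between $e_u$ and an event of a chain transaction (which the real-time assumption supplies through $r$), and one must read the value side-condition of \rdef{def:chain-isolation} in the sense actually exercised in \rlemma{lemma:not-corr}, namely that it is a recovery update reverting $\tr_i$'s write to a \emph{different} value $\val' \neq \val$ that breaks the chain. Care is also needed to confirm that $e_a$ undoes the very write $e_u$ that the chain propagates, so that the interposed recovery is genuinely incompatible with $\tr_j$ still observing a value traceable back to $\tr_i$.
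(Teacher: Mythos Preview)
Your proposal is correct and follows essentially the same argument as the paper: contradiction via abort coda producing a recovery event between $\tr_i$'s routine update and an event of $\tr_j$, yielding a chain-isolation violation. Your handling of the degenerate $i=j$ case and your explicit remark about reading the value side-condition of \rdef{def:chain-isolation} as in \rlemma{lemma:not-corr} are minor elaborations, but the core structure is identical.
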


\begin{proof}
    Assume for the sake of contradiction that $\tr_i \prec_\trace \tr_j$.

    Thus, there exists $\tr_k \in \vchain{\trace}{i}{j}$ s.t. $\tr_k \views
    \tr_i$, so $\exists e_u^i = \rset{i}{\obj}{\val} \in \trace|\tr_i$ and
    $e_v^k = \get{k}{\obj}{\val} \in \trace|\tr_k$ and $e_u^i \prec_\trace
    e_v^k$.

    If $\tr_i$ is aborted, then, from abort coda, $\exists e_a^l =
    \rset{l}{\obj}{\val'}$ s.t. $e_u^i \prec_\trace e_a^l \prec_\trace
    \res{i}{}{\ab_i}$ and from unique routine updates $\val \neq \val'$.

    Since $\tr_j$ is in $\vchain{\trace}{i}{j}$, $\exists e^j_v =
    \get{j}{\objy}{\val''}$ and since $\tr_i \prec_\trace \tr_j$, then
    $\res{i}{}{\ab_i} \prec_\trace e^j_v$. Thus, $e_u^i \prec_\trace
    e_a^l\prec_\trace e_v^j$.

    This contradicts chain isolation, so it is not true that $\tr_i
    \prec_\trace \tr_j$, so $\tr_i \nprec_\trace \tr_j$.
\end{proof}

\subsection{Main Lemmas}

Let there be a harmonious trace $\trace$ and $\hist =
\tohist\trace$ from which $\seqH$ is generated. Let there be such $\tr_i \in
\trace$ that $\tr_i$ is committed in $\trace$. Then:

\begin{lemma} [Unique Routine Updates] \label{lemma:unique-routine-updates}
If $\trace$ is consonant, and $\trace$ has unique writes, then given any
$\set{i}{\obj}{\val}$ and $\set{j}{\obj}{\val'}$ s.t. $\val \neq \val'$.
\end{lemma}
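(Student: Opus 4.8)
The plan is to unwind the consonance definitions so that every routine update value is exhibited as the value of some write operation execution, and then to invoke the unique-writes hypothesis directly. Concretely, I would take two distinct routine update events $e_u = \rset{i}{\obj}{\val}$ and $e_u' = \rset{j}{\obj}{\val'}$ on the same variable $\obj$ and, since $\trace$ is consonant, apply routine update consonance (\rdef{def:routing-update-consonance}): each is instigated by a consonant write operation execution. By the definition of the \emph{instigates} relation, the value carried by a routine update coincides with the value of its instigating write, so $\val$ is the value of some $\fwop{i}{\obj}{\val}{\ok_i}$ and $\val'$ is the value of some $\fwop{j}{\obj}{\val'}{\ok_j}$.

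Next I would argue that distinct routine updates must be instigated by distinct writes. Suppose toward a contradiction that $\val = \val'$. Then both routine updates are instigated by write operations carrying the value $\val$. The unique-writes hypothesis guarantees that at most one write operation execution on $\obj$ carries any given value (and that no write carries the initial value $\val_0$), so the two instigating writes must be the very same write operation, hence inside a single transaction; minimalism (part of harmony, and in any case subsumed by the one-routine-update-per-variable structure of the generating algorithm) then forces that write to instigate only a single routine update on $\obj$, whence $e_u = e_u'$. This contradicts distinctness, establishing $\val \neq \val'$, while the same hypothesis simultaneously yields $\val \neq \val_0$.

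Finally I would record the two consequences actually consumed downstream. First, no routine update carries the initial value $\val_0 = 0$, which is what separates view consonance case (a) from cases (b)--(c). Second---and this is the form used in Lemmas \ref{lemma:not-corr} and \ref{lemma:helper-lemma-2}---a routine update value $\val$ differs from any recovery value $\val'$ that restores a pre-existing state: by conservative recovery consonance together with view consonance, such a $\val'$ is either $\val_0$ or the value of a routine update belonging to a different transaction, and in both cases the uniqueness and non-zero clauses give $\val \neq \val'$.

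The main obstacle I anticipate is the bookkeeping in the step that two distinct routine updates cannot share an instigating write: the definitions phrase instigation as a value match plus a prefacing condition, so I must take care to \emph{use} minimalism (at most one routine update per transaction per variable) rather than silently assume it, and to keep the argument robust to the typographical slips in the surrounding statements (for instance, the recovery event written with $\rset$ rather than $\aset$). Everything else is a direct substitution of the unique-writes property into the value obtained from consonance.
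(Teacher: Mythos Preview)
Your approach is essentially the same as the paper's: use routine update consonance to exhibit, for each routine update event, an instigating write whose value it carries, and then invoke the unique-writes hypothesis to conclude the two values differ. The paper's proof does exactly this in two sentences and simply applies unique writes to the two instigating operations $\op_i$ and $\op_j$ without further comment; you are more careful in arguing (via minimalism) that distinct routine updates cannot share an instigating write, and you additionally spell out the $\val \neq 0$ corollary and the routine-versus-recovery value distinction that are consumed later. Those refinements are sound and arguably fill a small gap the paper leaves implicit, but they do not constitute a different route.
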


\begin{proof}
    Since both events are consonant, then for $\set{i}{\obj}{\val}$ there
    exists $\op_i = \fwop{i}{\obj}{\val^i}{\ok_i}$ s.t. $\val = \val^i$, and
    for $\set{j}{\obj}{\val'}$ there exists $\op_j =
    \fwop{j}{\obj}{\val^j}{\ok_j}$ s.t. $\val' = \val^j$.
    Since $\trace$ has unique writes, then $\val^i \neq \val^j$, so $\val \neq
    \val'$.
\end{proof}

\begin{lemma} [Non-local Read Consistency] 
\label{lemma:non-local-read-consistency}
For any $\op_i \in \trace|\tr_i$ s.t. $\op_i = \frop{i}{\op_i}{\val}$ and
$op_i$ is non-local, then either:
\begin{enumerate} [1)]
\item $\val \neq 0$ and $\exists \op_j \in \vis{\seqH}{\s\tr_i}$ for some
$\tr_j$ s.t. $\op_j = \fwop{j}{\obj}{\val}{\ok_j}$, $\op_j
\pref_{\vis{\seqH}{\s\tr_i}} \op_i$, or 

\item $\val = 0$ and $\nexists \op_j \in \vis{\seqH}{\s\tr_i}$ s.t. $\op_j =
\fwop{j}{\obj}{\val}{\ok_j}$ and $\op_j \prec_{\vis{\seqH}{\s\tr_i}} \op_i$.
\end{enumerate}
\end{lemma}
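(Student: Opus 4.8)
The plan is to unwrap the read through the harmony properties down to a single view event, classify that view by consonance, and then transport the resulting memory-level facts into the order of $\seqH$ and its visible subhistory. First I would note that, since $\trace$ is harmonious and therefore consonant, the non-local read $\op_i = \frop{i}{\obj}{\val}$ is consonant (\rlemma{lemma:non-local-read-consonance}), so it depends on a consonant non-local view event $e_v = \get{i}{\obj}{\val} \in \trace|\tr_i$. Applying view consonance (Definition~\ref{def:view-consonance}) to $e_v$ splits the argument into three cases: (a)~$\val = 0$ with no update event on $\obj$ preceding $e_v$; (b)~$\val \neq 0$ with a routine update $e_u = \rset{j}{\obj}{\val}$ prefacing $e_v$; and (c)~a recovery update $e_u = \aset{j}{\obj}{\val}$ prefacing $e_v$.

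For the routine-update case~(b) I would construct the witness directly. The prefacing $e_u \pref_\trace e_v$ is exactly $\tr_i \views \tr_j$, so, as $\tr_i$ is committed, commit accord (\rlemma{lemma:commit-accord}) forces $\tr_j$ to be committed as well. By routine update consonance (\rlemma{lemma:routine-update-consonance}) the event $e_u$ is instigated by a consonant complete write $\op_j = \fwop{j}{\obj}{\val}{\ok_j}$, which is the non-local write producing $\val$. From $e_u \pref_\trace e_v$ I would read off $\tr_j \iso{\obj}{\trace} \tr_i$ (Definition~\ref{def:variable-isolation-order}), whence $\tr_j \isoorder_\trace \tr_i$ and, by the equivalent-sequential-history construction (Definition~\ref{seqh-construction}), $\tr_j \prec_{\seqH} \tr_i$. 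Being committed and ordered before $\tr_i$, $\tr_j$ lies in $\vis{\seqH}{\s\tr_i}$, so $\op_j \in \vis{\seqH}{\s\tr_i}$. It then remains only to exclude an intervening write, which is precisely \rlemma{lemma:direct-precedence-in-luvis}: no committed, hence visible, transaction contributes a write to $\obj$ strictly between $\op_j$ and $\op_i$ in $\seqH$, so $\op_j \pref_{\vis{\seqH}{\s\tr_i}} \op_i$ and conclusion~1 holds.

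I would then dispatch the remaining cases by reducing them to the provenance of $\val$. Case~(a) is immediate: with no update preceding $e_v$, isolation together with committed write obbligato leaves no visible write to $\obj$ before $\op_i$, giving conclusion~2. In case~(c) the value $\val$ was restored by an abort of $\tr_j$, and I would trace its origin through $\abset{\trace}{\tr_i}{\obj}$ and the associated view chain: \rlemma{lemma:abset:views} shows the first element of that chain views either $0$ or a value placed by a routine update, \rcor{cor:abset:order} and \rlemma{lemma:abset:not-committed} show every transaction of the chain is isolation-ordered before $\tr_i$ and is aborted or live (hence absent from $\vis{\seqH}{\s\tr_i}$), and unique routine updates (\rlemma{lemma:unique-routine-updates}) pin $\val$ to a single originating write. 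If that origin yields $\val = 0$ I land in conclusion~2; if $\val \neq 0$, the originating write belongs to a committed transaction by the same commit-accord reasoning as in case~(b), collapsing case~(c) onto conclusion~1.

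The hard part will be case~(c): establishing that a value which survives a cascade of writes-then-restores, yet is read by a \emph{committed} transaction, must originate from a committed write (or be the initial $0$), while none of the restoring transactions in between can supply a more recent \emph{visible} write before $\op_i$. This is where the $\abset$ lemmas, chain self-containment (\rlemma{lemma:chain-self-containment}), abort coda, and abort accord must be combined to forbid the scenario in which an aborted transaction's value is the most recent visible write preceding $\op_i$. By contrast, the order-transport steps relating $\pref_\trace$, $\iso{\obj}{\trace}$, and $\prec_{\seqH}$ are mechanical given Definitions~\ref{def:variable-isolation-order} and~\ref{seqh-construction}.
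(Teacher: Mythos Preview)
Your plan follows the paper's approach almost exactly: the same three-case split via view consonance, the same use of commit accord and \rlemma{lemma:direct-precedence-in-vis} in case~(b), and the same $\abset{\trace}{\tr_i}{\obj}$ machinery in case~(c).

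The one genuine gap is in the $\val \neq 0$ sub-case of~(c), where you write that ``the originating write belongs to a committed transaction by the same commit-accord reasoning as in case~(b).'' Commit accord (\rdef{def:commit-accord}) requires $\tr_i \views \tr_l$, i.e.\ that $\tr_l$'s routine update $e_u^l$ \emph{prefaces} $\tr_i$'s view event $e_v$. But in case~(c) what prefaces $e_v$ is a recovery update $e_a^j$, and $e_u^l$ is separated from $e_v$ by the entire chain of recovery events recorded in $\abset{\trace}{\tr_i}{\obj}$; so $e_u^l \pref_\trace e_v$ fails, $\tr_i \views \tr_l$ is not available, and commit accord does not fire. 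The paper closes this differently: coherence gives that $\tr_l$ is either committed or aborted (since $\tr_i$ is committed and $\tr_l \iso{\obj}{\trace} \tr_i$), and then a contradiction rules out ``aborted'' via abort coda---if $\tr_l$ were aborted, coda would force some recovery event on $\obj$ before $\res{l}{}{\ab_l}$, and wherever you try to place that event along the $\abset{\trace}{\tr_i}{\obj}$ chain you violate either the cleanness of one of the chain's recovery events, one of the $\pref_\trace$ relations defining the chain, or coda part~(b) for $\tr_i$. Your last paragraph names abort coda, but only as a tool for excluding intermediate \emph{visible} writers; you will need it a second time, combined with coherence and the clean property of recovery updates, to establish that $\tr_l$ itself is committed. (Also, chain self-containment is not used in this lemma; the paper reserves it for the last-use variant.)
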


\begin{proof}[Proof for \rlemma{lemma:non-local-read-consistency}]
Since $\op_i$ is consonant and non-local, then
$\exists e_v = \get{i}{\obj}{\val} \in \trace$, s.t. $\op_i \dependson e_v$
and $e_v$ is consonant.
Then, from $e_v$'s consonance, either:
\begin{enumerate}[a)]
\item $\val = 0$ and $\nexists e_u = \set{j}{\obj}{\val'} \in \trace$ for some
$\tr_j \in \trace$ s.t. $e_u \prec_\trace e_v$. \label{case:first} 

In which case, if $\nexists \op_j \fwop{j}{\obj}{\val'}{\ok_i} \in
\trace|\tr_j$ s.t. $\op_j \prec_\trace \get{i}{\obj}{\val}$, then, 
$\nexists \tr_j \in \trace$ s.t. $\tr_j \isoorder_\trace \tr_i$ and $\op_j \in
\trace|\tr_j$.
Thus, from construction of $\seqH$, $\nexists \s\tr_j \in \trace$ s.t. $\s\tr_j
\prec_{\seqH} \s\tr_i$ and $\op_j \in \seqH|\s\tr_j$.
Thus, from construction of $\vis{\seqH}{\s\tr_i}$, for any such $\tr_j$,
$\seqH|\s\tr_j \nsubseteq \vis{\seqH}{\s\tr_i}$, so for any such $\tr_j$,
\underline{$\fwop{j}{\obj}{\val}{\ok_j} \not\in \vis{\seqH}{\s\tr_i}$ and $\val=0$}.

On the other hand, if $\exists \op_j
\fwop{j}{\obj}{\val'}{\ok_i} \in \trace|\tr_j$ s.t. $\op_j \prec_\trace
\get{i}{\obj}{\val}$, then
if $\tr_i$ is committed in $\trace$, then, from the definition of commit write obbligato,
$\rset{j}{\obj}{\val'} \in \trace|\tr_j$, which contradicts the assumption of
case \ref{case:first}).
Thus, $\tr_i$ is not committed in $\trace$, so $\s\tr_i$ is not committed in
$\seqH$, and therefore $\seqH|\s\tr_i \nsubseteq \vis{\seqH}{\s\tr_j}$. Thus
for any such $\tr_j$, \underline{$\fwop{j}{\obj}{\val'}{\ok_j} \not\in
\vis{\seqH}{\s\tr_i}$ and $\val=0$}.

\item $\val \neq 0$ and $\exists e_u = \rset{j}{\obj}{\val} \in \trace$ for
some $\tr_j \in \trace$ s.t. $e_u \pref_\trace e_v$ and $e_u$ is consonant.
\label{case:ordinary}

Since $e_u$ is consonant, then $\exists \op_j =
\fwop{j}{\obj}{\val}{\ok_j} \in \trace|\tr_j$ s.t. $\op_j$ is non-local and
consonant, and $\op_j \pref \rset{j}{\obj}{\val}$.
Thus, since $e_u \pref_\trace e_v$, $\tr_j \iso{\obj}{\trace} \tr_i$, then, by
construction, $\s\tr_j \prec{\seqH} \s\tr_i$.

Since $\tr_i$ is committed in $\trace$ and $\tr_i \views \tr_j$, and since
$\trace$ is commit-abiding, then $\tr_j$ must be committed in $\trace$.
Thus $\s\tr_j$ is also committed in $\seqH$.
Thus, $\seqH|\s\tr_j \subseteq \vis{\seqH}{\tr_i}$, and therefore $\op_j
\prec_{\vis{\seqH}{\tr_i}} \op_i$.
Then, from \rlemma{lemma:direct-precedence-in-vis}, $\op_j
\pref_{\vis{\seqH}{\tr_i}} \op_i$. Thus,
\underline{$\fwop{j}{\obj}{\val}{\ok_j} \pref_{\vis{\seqH}{\s\tr_i}} \op_i$ and
$\val\neq 0$}. 

\item $\exists e_a = \aset{j}{\obj}{\val} \in \trace$ for some $\tr_j \in
\trace$ s.t. $e_a \pref_\trace e_v$ and $e_u$ is consonant.  \label{case:rever}

Given $\absetx{i}$, from \rlemma{lemma:abset:not-committed}, $\forall \tr_k \in
\absetx{i}$, $\tr_k$ is aborted or live in $\trace$. So, by construction,
$\s\tr_k$ is aborted in $\seqH$, and therefore excluded from
$\vis{\seqH}{\tr_i}$. Thus for any $\tr_k \in \absetx{i}$, $\forall \op_k =
\fwop{k}{\obj}{\val} \in \trace|\tr_k$, $\op_k \not\in \vis{\seqH}{\tr_i}$.

Given $\absetx{i}$, from \rlemma{lemma:abset:views}, $\exists e_v' =
\get{k}{\obj}{\val} \in \trace|\tr_k$ s.t. $\tr_k$ is the first element of
$\absetx{i}$ that is initial and non-local, and either of the following is true:
\begin{enumerate}[i)]
    \item $\val = 0$ and $\nexists \tr_l\in\trace$ s.t. $e_u' =
    \set{l}{\obj}{\val} \in \trace|\tr_l$ and $e_u' \prec_\trace e_v'$.
    
    Then, either $\exists e_u' \in \set{l}{\obj}{\val} \in \trace|\tr_l$ and
    $e_v' \prec_\trace e_u'$ or  $\nexists e_u' = \set{l}{\obj}{\val} \in
    \trace|\tr_l$.

    If $\exists e_u' \in \set{l}{\obj}{\val'} \in \trace|\tr_l$ and $e_v'
    \prec_\trace e_u'$, then from \rlemma{lemma:abset:views},  $e_u' =
    \rset{l}{\obj}{\val'}$. Thus, be definition of isolation order, $\tr_k
    \iso{\obj}{\trace} \tr_l$. 
    Thus, if $\tr_l \iso{\obj}{\trace} \tr_j$, then, from
    \rlemma{lemma:abset:containment}, $\tr_l$ is aborted or live in $\trace$, 
    so, by construction, $\s\tr_l$ is aborted in $\seqH$. Therefore $\s\tr_l
    \nsubseteq \vis{\seqH}{\tr_i}$, so for any write operation execution $\op_l
    = \fwop{l}{\obj}{\val'}{\ok_l}$ in any such $\tr_l$, \underline{$\op_l
    \not\in \vis{\seqH}{\tr_i}$ (and $\val=0$)}. 
    Alternatively, if  $\tr_j \iso{\obj}{\trace} \tr_l$, then since $e_a
    \pref_\trace e_v$, then it is not possible that $e_a \prec_\trace e_u'
    \prec e_v$. By corollary, from the definition of isolation order, it is not
    possible that $\tr_j \iso{\obj}{\trace} \tr_l \iso{\obj}{\trace} \tr_i$.
    Then, by construction, $\s\tr_i \prec_{\seqH} \s\tr_l$, so $\s\tr_l
    \nsubseteq \vis{\seqH}{\tr_i}$. Therefore, for any write operation execution
    $\op_l = \fwop{l}{\obj}{\val'}{\ok_l}$ in any such $\tr_l$,
    \underline{$\op_l \not\in \vis{\seqH}{\tr_i}$ (and $\val=0$)}. 
    
    On the other hand, if $\nexists e_u' \in \set{l}{\obj}{\val} \in
    \trace|\tr_l$, then either $\trace|\tr_l$ contains some write operation
    $\op_l = \fwop{l}{\obj}{\val'}{\ok_l}$ or it does not. 
    If it does not, then vacuously, for any write operation execution $\op_l
    = \fwop{l}{\obj}{\val'}{\ok_l}$ in any such $\tr_l$, \underline{$\op_l
    \not\in \vis{\seqH}{\tr_i}$ (and $\val=0$)}.
    On the other hand, if $\op_l \in \trace|\tr_l$, then from commit write
    obbligato, since $\nexists e_u' = \rset{l}{\obj}{\val} \in
    \trace|\tr_l$, then $\tr_l$ is not committed in $\trace$. Thus, $\s\tr_l$ is
    aborted in $\seqH$ and $\s\tr_l \nsubseteq \vis{\seqH}{\tr_i}$. Thus, for
    any write operation execution $\op_l = \fwop{l}{\obj}{\val'}{\ok_l}$ in any
    such $\tr_l$, \underline{$\op_l \not\in \vis{\seqH}{\tr_i}$ (and
    $\val=0$)}.

    \item $\val \neq 0$ and $\exists \tr_l\in\trace$ s.t. $e_u' =
    \rset{l}{\obj}{\val} \in \trace|\tr_l$ and $e_u' \pref_\trace e_v'$.

    Since $\trace$ is consonant, then $e_u'$ is consonant, so $\exists \op_l =
    \fwop{l}{\obj}{\val}{\ok_l}$ s.t $\op_l \instigates e_u'$. 
    
    Since for all $\tr_m \in \absetx{i}$, $e_v'' = \get{m}{\obj}{\val''}$ there
    is some $\tr_n$ that directly precedes $\tr_m$ in $\absetx{i}$ and contains
    $e_a'' = \aset{n}{\obj}{\val''}$ s.t. $e_a'' \pref_\trace e_v''$. Since
    $e_a''$ is conservative, there is a preceding view $e_v''' =
    \get{n}{\obj}{\val''}$ s.t. $e_v''' \pref_\trace e_a''$. Thus $e_v'''
    \pref_\trace e_v''$, so $\tr_n \iso{\obj}{\trace} \tr_m$.  Therefore,
    $\tr_k \iso{\obj}{\trace} \tr_i$, and, by extension, since $e_u'
    \pref_\trace e_v'$, $\tr_l \iso{\obj}{\trace} \tr_k$, then $\tr_l
    \iso{\obj}{\trace} \tr_i$. 
    Since $\tr_i$ is committed in $\trace$, then since $\tr_l
    \iso{\obj}{\trace} \tr_i$, then, from commit coherence, $\tr_l$ is either
    committed or aborted in $\tr_j$.

    Transaction $\tr_li$ cannot be aborted in $\trace$, as follows. 
    Let us assume by contradiction that $\tr_l$ is aborted (i.e. $\exists r_a =
    \res{l}{}{\ab_l} \in \trace|\tr_l$).
    Then, since $\trace$ has coda, then for some $\tr_n$, $\exists e_a''' =
    \aset{n}{\obj}{\val'''}$ s.t. $e_u' \prec_\trace e_a''' \prec_\trace r$.
    Since $e_a'$ is consonant, then since it is clean and $\tr_l
    \iso{\obj}{\trace} \tr_k$, then there is no recovery event following $e_v'$
    and preceding $e_a'$. 
    In addition from commit coherence, $\tr_l$ must
    abort before $\tr_i$ commits, so, by extension $e_a'''$ must precede
    $\res{i}{}{\co_i}$ in $\trace$. 
    Thus either $e_a''' \prec_\trace e_v'$ or $e_a' \prec_\trace e_a'''
    \prec_\trace r_a$.
    In the former case, 
    if $e_a''' \prec_\trace e_u'$, then this contradicts that $e_a'''$ is
    consonant (ending),
    and if $e_u' \prec_\trace e_a''' \prec_\trace e_v'$, it contradicts that
    $e_u' \pref_\trace e_v'$.
    On the other hand, if $e_a' \prec_\trace e_a''' \prec_\trace r_a$, then one
    of three scenarios is possible.
    If for some $\tr_m \in \absetx{i}$, s.t. $i\neq m$ and $\get{m}{\obj}{\any}
    \prec_\trace e_a''' \prec_\trace \aset{m}{\obj}{\any}$, then this
    contradicts that $\aset{m}{\obj}{\any}$ is clean.
    Alternatively, if for a pair  $\tr_m, \tr_n \in \absetx{i}$, s.t. $\tr_m$
    directly precedes $\tr_n$ in $\absetx{i}$, and $\aset{m}{\obj}{\any}
    \prec_\trace e_a''' \prec_\trace \get{n}{\obj}{\any}$, then this
    contradicts that $\aset{m}{\obj}{\any} \pref_\trace \get{n}{\obj}{\any}$.
    Finally, if $e_v \prec e_a'''$, then this violates abort coda of $\trace$
    (case b), and is also a contradiction.
    Thus, there cannot be such $e_a'''$, and therefore $\tr_l$ cannot  be
    aborted in $\trace$.

    Hence, $\tr_l$ must be committed in $\trace$.
    Then since 
    $\tr_l \iso{\obj}{\trace} \tr_i$ (so
    $\s\tr_l \prec_{\seqH} \s\tr_i$), $\op_l \in \vis{\seqH}{\tr_i}$.

    Since $e_u'$ is non-local, then it is not followed in $\trace|\tr_l$ by
    another $e_u'' = \rset{l}{\obj}{\val}$.
    Since $\op_l \instigates e_u'$, $\nexists \op_l' =
    \fwop{l}{\obj}{\val'}{\ok_l}$ s.t. $\op_l \prec_{\trace|\tr_i} \op_l'$.
    
    Let $\tr_m$ be a transaction in $\trace$ s.t. $\tr_l \iso{\obj}{\trace}
    \tr_m \iso{\obj}{\trace} \tr_j$.
    If $\tr_m \in \absetx{i}$, then from \rlemma{lemma:abset:not-committed},
    $\tr_m$ is not committed in $\trace$. 
    If $\tr_m \not\in \absetx{i}$, then from \rlemma{lemma:abset:containment},
    $\tr_m$ is also not committed in $\trace$. In either case, $\s\tr_m$ is
    aborted in $\seqH$.
    Thus, for any $\tr_m$ s.t. $\tr_l \iso{\obj}{\trace} \tr_m
    \iso{\obj}{\trace} \tr_j$, $\seqH|\s\tr_m \nsubseteq \vis{\seqH}{i}$.
    Therefore, for any write operation execution $\op_m =
    \fwop{m}{\obj}{\val''}{\ok_m} \in \trace|\tr_m$, $\op_m \not\in \vis{\seqH}{\tr_i}$.
    
    Let $\tr_m \in \trace$ be any transaction s.t. $\tr_j \iso{\obj}{\trace}
    \tr_m \iso{\obj}{\trace} \tr_i$. Since $e_a$ consonant, it is is needed, so
    $\exists e_u'' = \set{j}{\obj}{\val''} \in \trace|\tr_j$ s.t. $e_u''
    \prec_\trace e_a$. In addition, since $\tr_j \iso{\obj}{\trace} \tr_m
    \iso{\obj}{\trace} \tr_i$, then  by definition of isolation order, $\exists
    e = \set{m}{\obj}{\val'''}\in \trace|\tr_m$, or $e =
    \get{m}{\obj}{\val'''}\in \trace|\tr_m$, so $e_u'' \prec_\trace e
    \prec_\trace e_v$. Since $e_a \pref_\trace e_v$, then $e_u'' \prec_\trace e
    \prec_\trace e_u$. Thus, by definition of abort accord, $\tr_m$ is not
    committed in $\trace$, so, by construction, $\s\tr_m$ is aborted in
    $\seqH$. Thus, for any $\tr_m$ s.t.  $\tr_j \iso{\obj}{\trace} \tr_m
    \iso{\obj}{\trace} \tr_i$, $\seqH|\s\tr_m \nsubseteq \vis{\seqH}{i}$.
    Therefore, for any write operation execution $\op_m =
    \fwop{m}{\obj}{\val''}{\ok_m} \in \trace|\tr_m$, $\op_m \not\in
    \vis{\seqH}{\tr_i}$.
    
    Since no other write operation execution follows $\op_l$ in $\trace|\tr_m$,
    and since there is no transaction $\tr_m \in \trace$ s.t.  $\s\tr_l
    \prec_{\seqH} \s\tr_m \prec_{\seqH} \s\tr_i$ (and therefore $\s\tr_l
    \prec_{\vis{\seqH}{\s\tr_i}} \s\tr_m \prec_{\vis{\seqH}{\s\tr_i}} \s\tr_i$)
    s.t. $\exists \op_m = \fwop{m}{\obj}{\val''}{\ok_m} \in \trace|\tr_m$ and
    $\op_m \in \vis{\seqH}{\s\tr_i}$, then
    \underline{$\fwop{l}{\obj}{\val}{\ok_l} \pref_{\vis{\seqH}{\s\tr_i}} \op_i$
    and $\val\neq 0$}. 
\end{enumerate}
\end{enumerate}
\end{proof}

\begin{corollary} [Total Non-local Read Consistency] 
    \label{cor:all-non-local-read-consistency} 
    By extension of the above, since, by definition, if for some sequential
    history $S$, $\seqH|tr_j \in \vis{S}{\tr_i}$, then $\vis{\seqH}{\tr_j}$ is a
    prefix of $\vis{S}{\tr_i}$, then for any $\op_j \in \trace|\tr_j$ s.t.
    $\op_j = \frop{i}{\op_j}{\val}$ either: 
    \begin{enumerate} [1)] 
        \item $\val \neq 0$ and $\exists \op_k \in \vis{\seqH}{\tr_k}$ for
            some $\tr_k$ s.t.  $\op_k = \fwop{k}{\obj}{\val}{\ok_j}$, $\op_k
            \pref_{\vis{\seqH}{\s\tr_i}} \op_j$, or 
        \item $\val = 0$ and $\nexists \op_k \in \vis{\seqH}{\tr_i}$ s.t.
            $\op_k = \fwop{k}{\obj}{\val}{\ok_k}$ and $\op_k
            \prec_{\vis{\seqH}{\tr_i}} \op_j$.
    \end{enumerate}
\end{corollary}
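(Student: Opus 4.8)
The plan is to obtain this corollary as a routine lifting of \rlemma{lemma:non-local-read-consistency} from a single subject transaction to every transaction whose (possibly completed) subhistory sits inside $\vis{S}{\tr_i}$, exploiting the fact that visible histories nest as prefixes. All the genuine content is already isolated in \rlemma{lemma:non-local-read-consistency}, which settles the claim for reads of $\tr_i$ itself relative to $\vis{\seqH}{\s\tr_i}$; what remains is structural bookkeeping plus one boundary case. First I would make explicit the prefix property that the corollary invokes: whenever $\s\tr_j$ is included in $\vis{S}{\tr_i}$, the history $\vis{\seqH}{\tr_j}$ occurs as an initial segment of $\vis{S}{\tr_i}$. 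This is exactly the containment supplied by \rlemma{lemma:corr}: for any $\tr_k$ with $\s\tr_k \sin \vis{\seqH}{\tr_j}$ one has $\s\tr_k \sin \vis{\seqH}{\tr_i}$ and $\vis{\seqH}{\tr_i}|\tr_k = \vis{\seqH}{\tr_j}|\tr_k$, so each transaction visible to $\tr_j$ is visible to $\tr_i$ with an identical subhistory and, since the construction respects $\prec_{\seqH}$, precedes $\s\tr_j$. Hence $\vis{\seqH}{\tr_j}$ embeds order-isomorphically as a prefix, interposing no events between its consecutive elements.

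Next I would apply \rlemma{lemma:non-local-read-consistency} to $\tr_j$ to get, for each non-local $\op_j = \frop{j}{\obj}{\val} \in \trace|\tr_j$, that either $\val = 0$ with no preceding write on $\obj$ in $\vis{\seqH}{\tr_j}$, or $\val \neq 0$ together with a write $\op_k = \fwop{k}{\obj}{\val}{\ok_k}$ s.t. $\op_k \pref_{\vis{\seqH}{\tr_j}} \op_j$. The prefix property then transports both disjuncts verbatim to $\vis{S}{\tr_i}$. The prefacing relation $\pref$ is preserved because a prefix adds no event strictly between $\op_k$ and $\op_j$; and the nonexistence of a preceding write survives because any write on $\obj$ lying before $\op_j$ in the larger history would, by \rlemma{lemma:direct-precedence-in-vis} together with the non-locality of $\op_k$ and $\op_j$ (and the fact that every transaction in $\vis{S}{\tr_i}$ is committed or decided on $\obj$, so such a write is committed or decided), already fall inside the prefix and hence inside $\vis{\seqH}{\tr_j}$. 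This delivers precisely the two cases of the corollary with $\vis{\seqH}{\tr_i}$ in place of $\vis{\seqH}{\tr_j}$.

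The one step requiring care is covering those $\tr_j$ included in $\vis{S}{\tr_i}$ not as full subhistories but as completions $\seqH\cpeC\s\tr_j$ of transactions that are merely decided on $\obj$ (and live or aborted in $\trace$), since \rlemma{lemma:non-local-read-consistency} is stated under the standing assumption that its subject transaction is committed. For such $\tr_j$ I would not reuse the lemma as a black box but re-run its case analysis, substituting the closing-write obbligato of \rlemma{lemma:closing-write-obbligato} for committed-write obbligato wherever the original proof relied on $\tr_j$ committing, and invoking \rlemma{lemma:abset:views} and \rlemma{lemma:abset:containment} to produce the instigating write and to rule out interposed writes from uncommitted transactions along the isolation order. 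I expect this decided-but-uncommitted case to be the main obstacle: it is the only place where one cannot simply inherit from \rlemma{lemma:non-local-read-consistency}, and one must check that the closing write still actualizes the viewed value before any later transaction reads it, exactly as closing-write obbligato guarantees.

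Everything else is inheritance through the nesting of visible histories, so once the prefix embedding and the decided case are in hand the corollary follows immediately; the remaining transactions in $\vis{S}{\tr_i}$ that are neither committed nor decided are excluded from the construction altogether (by \rlemma{lemma:not-corr} and the visible-history construction), and thus contribute no reads requiring verification.
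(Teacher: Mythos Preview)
Your overall strategy---lift \rlemma{lemma:non-local-read-consistency} from a single subject transaction to every $\tr_j$ included in $\vis{S}{\tr_i}$ via the prefix property---is exactly what the paper does; the paper treats this as immediate and embeds the one-line justification in the corollary statement itself.

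However, you are confusing $\vis{}$ with $\luvis{}$, and this makes your write-up both more complicated than necessary and technically wrong in the auxiliary lemmas you cite. The corollary is about the \emph{standard} visible history $\vis{S}{\tr_i}$, which by definition contains only $\tr_i$ together with transactions \emph{committed} in $S$ and preceding $\tr_i$. Consequently: (i) every $\tr_j$ with $\seqH|\tr_j \subseteq \vis{S}{\tr_i}$ and $j\neq i$ is committed, so the standing hypothesis of \rlemma{lemma:non-local-read-consistency} is satisfied for every such $\tr_j$ and no separate ``decided-but-uncommitted'' case arises; (ii) the prefix property you need (that $\vis{\seqH}{\tr_j}$ is a prefix of $\vis{S}{\tr_i}$) follows directly from the definition of $\vis{}$, since the committed transactions preceding a committed $\tr_j$ are a fortiori committed and precede $\tr_i$. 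You do not need, and should not cite, \rlemma{lemma:corr} or \rlemma{lemma:not-corr}: both are stated for $\luvis{}$, not $\vis{}$, and their hypotheses involve view chains and the decided-subhistory completion $\seqH\cpeC$, none of which enter here.

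The elaborate paragraph you devote to re-running the case analysis with closing-write obbligato in place of committed-write obbligato is precisely the work the paper does later for the \emph{last-use} analogue, \rlemma{lemma:all-non-local-read-lu-consistency}, where $\luvis{}$ really can contain decided-but-uncommitted transactions. For the present corollary that machinery is superfluous: strip it out, appeal to the definition of $\vis{}$ for the prefix property, and apply \rlemma{lemma:non-local-read-consistency} to each committed $\tr_j$.
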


\begin{lemma} [Local Read Consistency] 
    \label{cor:local-read-consistency} 
    For any $\op_i \in \trace|\tr_i$ s.t. $\op_i = \frop{i}{\op_i}{\val}$ and
    $op_i$ is local, then $\exists \op_i' \in \vis{\seqH}{\tr_i}$ s.t.
    $\op_i' = \fwop{i}{\obj}{\val}{\ok_i}$, $\op_i'
    \pref_{\vis{\seqH}{\tr_i}} \op_i$. 
\end{lemma}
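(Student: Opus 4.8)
The plan is to reduce the statement to the already-available \emph{local read consonance} of $\trace$, and then transport the witnessing write into $\vis{\seqH}{\tr_i}$ using the facts that $\tr_i$ is committed and that $\seqH$ is sequential and order-preserving.

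First I would note that, since $\trace$ is harmonious it is in particular consonant (\rdef{def:harmony}, \rdef{def:trace-consonance}), so every operation execution of $\trace$ is consonant; in particular the local read $\op_i = \frop{i}{\obj}{\val}$ is consonant. Applying the definition of local read consonance (\rdef{def:local-read-consonance}) then yields a write operation execution $\op_i' = \fwop{i}{\obj}{\val}{\ok_i} \in \trace|\tr_i$ that is itself consonant and satisfies $\op_i' \pref_{\trace|\tr_i} \op_i$. This already supplies the required write of the same value by the same transaction; what remains is to show that $\op_i'$ survives into $\vis{\seqH}{\tr_i}$ and that its prefacing of $\op_i$ is retained there.

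Next I would establish that both $\op_i'$ and $\op_i$ lie in $\vis{\seqH}{\tr_i}$. Because $\tr_i$ is committed in $\trace$, its completion $\s\tr_i$ coincides with $\tr_i$, and by the construction of the visible history (\rdef{lvis-construction}, corresponding to the $i=j$ case of the underlying visibility definition) the whole subhistory $\seqH|\s\tr_i$ is included in $\vis{\seqH}{\tr_i}$. Since $\op_i'$ and $\op_i$ are both operation executions of $\tr_i$, they both belong to $\seqH|\s\tr_i$ and hence to $\vis{\seqH}{\tr_i}$.

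Finally I would argue the prefacing relation $\op_i' \pref_{\vis{\seqH}{\tr_i}} \op_i$. The history $\seqH$ is sequential and equivalent to $\compl{\hist}$ (\rdef{seqh-construction}), so it preserves the per-transaction order of $\trace$; thus $\op_i' \prec_{\seqH} \op_i$, and this order is inherited by the subhistory $\vis{\seqH}{\tr_i}$. The only genuinely delicate point—which I expect to be the main obstacle—is that no write on $\obj$ by another transaction can be interposed between $\op_i'$ and $\op_i$ in $\vis{\seqH}{\tr_i}$. This follows from sequentiality: in a sequential history the events of a single transaction are contiguous, so any operation execution lying between $\op_i'$ and $\op_i$ also belongs to $\tr_i$. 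Combined with the fact that $\op_i'$ is, by local read consonance, the most recent write on $\obj$ preceding $\op_i$ within $\tr_i$, this shows there is no intervening write on $\obj$ at all, and therefore $\op_i' \pref_{\vis{\seqH}{\tr_i}} \op_i$, as required.
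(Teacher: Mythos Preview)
Your proposal is correct and follows essentially the same route as the paper: invoke local read consonance to obtain $\op_i' \pref_{\trace|\tr_i} \op_i$, observe that $\seqH|\s\tr_i \subseteq \vis{\seqH}{\tr_i}$, and conclude $\op_i' \pref_{\vis{\seqH}{\tr_i}} \op_i$. The paper's own argument is terser and does not spell out the sequentiality step you isolate as the ``main obstacle''; your explicit observation that contiguity of $\tr_i$'s operations in a sequential history rules out interposed writes by other transactions is a welcome clarification of what the paper leaves implicit.
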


\begin{proof}
    From local read consonance it follows that For any $\op_i \in \trace|\tr_i$
    s.t. $op_i$ is local, then $\exists \op_i' \in \trace|\tr_i$ and $\op_i'
    \pref_{\trace|\tr_i} \op_i$.  Thus, $\op_i' \pref_{\seqH|\s\tr_i} \op_i$.
    Since $\tr_i \subseteq \vis{\seqH}{\tr_i}$ then $\op_i'
    \pref_{\vis{\seqH}{\tr_i}} \op_i$.
\end{proof}

\begin{corollary} [Total Local Read Consistency] 
    \label{cor:all-local-read-consistency} 
    For any $\op_i \in \trace|\tr_i$ s.t. $\op_i = \frop{i}{\op_i}{\val}$ and
    $op_i$ is local, then for any $\tr_j$, $\exists \op_i' \in
    \vis{\seqH}{\tr_j}$ s.t.  $\op_i' = \fwop{i}{\obj}{\val}{\ok_i}$, $\op_i'
    \pref_{\vis{\seqH}{\tr_j}} \op_i$. 
\end{corollary}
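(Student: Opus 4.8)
The plan is to derive the statement directly from its single-transaction counterpart, \rlemma{cor:local-read-consistency}, by transporting the preface relation already established in $\vis{\seqH}{\tr_i}$ up to $\vis{\seqH}{\tr_j}$ for any $\tr_j$ whose visible history contains $\op_i$ (the only case in which the claimed relation $\op_i' \pref_{\vis{\seqH}{\tr_j}} \op_i$ is even defined). The engine is the nesting property of last-use visible histories recorded in \rlemma{lemma:corr}: whenever $\s\tr_i$ contributes to $\vis{\seqH}{\tr_j}$, the visible history $\vis{\seqH}{\tr_i}$ sits inside $\vis{\seqH}{\tr_j}$ as a prefix with a matching restriction to $\tr_i$. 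Since the preface relation between two operations of the \emph{same} transaction on the same variable depends only on that transaction's own subhistory together with the absence of intervening updates on $\obj$ from third transactions, it is preserved under such an inclusion. This mirrors exactly the way \rcor{cor:all-non-local-read-consistency} is obtained from \rlemma{lemma:non-local-read-consistency}.

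Concretely I would proceed in three steps. First, \rlemma{cor:local-read-consistency} furnishes a write $\op_i' = \fwop{i}{\obj}{\val}{\ok_i} \in \trace|\tr_i$ with $\op_i' \pref_{\vis{\seqH}{\tr_i}} \op_i$; both operations lie in $\seqH|\s\tr_i$ and, being on $\obj$, both lie in the per-variable subhistory $(\seqH|\s\tr_i)|\obj$. Second, I fix an arbitrary $\tr_j$ with $\op_i \in \vis{\seqH}{\tr_j}$ and split on how $\s\tr_i$ enters $\vis{\seqH}{\tr_j}$ according to \rdef{lvis-construction}: if $\tr_i$ is committed and precedes $\tr_j$, then $\seqH|\s\tr_i \subseteq \vis{\seqH}{\tr_j}$ in its entirety, so $\op_i'$ and $\op_i$ both appear with their intra-transaction order intact. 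Third, I invoke \rlemma{lemma:corr}: because $\vis{\seqH}{\tr_i}$ is a prefix of $\vis{\seqH}{\tr_j}$ whose restriction to $\tr_i$ agrees with that of $\vis{\seqH}{\tr_j}$, no update on $\obj$ from a third transaction can be interposed between $\op_i'$ and $\op_i$ in $\vis{\seqH}{\tr_j}$ that was not already interposed in $\vis{\seqH}{\tr_i}$; hence the prefacing survives, giving $\op_i' \pref_{\vis{\seqH}{\tr_j}} \op_i$.

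The main obstacle is the decided-completion case, where $\tr_i$ is not committed but enters $\vis{\seqH}{\tr_j}$ through a view chain $\vchain{\trace}{i}{j}$, so that only $\seqH\cpeC\s\tr_i$ is contributed rather than the full $\seqH|\s\tr_i$; here I must verify that both $\op_i'$ and the local read $\op_i$ survive the truncation to the decided subhistory and retain their order. The resolution is that the hypothesis itself supplies what I need: for $\op_i$ to belong to $\vis{\seqH}{\tr_j}$ by this route it must belong to $\seqH\cpe\s\tr_i$, which by the construction of the decided-transaction subhistory forces $\tr_i$ to be decided on $\obj$ and hence forces $(\seqH|\s\tr_i)|\obj$ to be included wholesale. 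Since $\op_i'$ is a write on $\obj$ in $\seqH|\s\tr_i$, it lies in this same per-variable subhistory and is therefore carried along together with $\op_i$, with intra-transaction order preserved by equivalence. The concluding step is then identical to the committed case, and the corollary follows.
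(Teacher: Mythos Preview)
Your overall strategy---start from \rlemma{cor:local-read-consistency} and transport the preface relation via the nesting/prefix property of visible histories---is exactly how the paper derives the corollary (it is stated without proof, but the justification given for \rcor{cor:all-non-local-read-consistency} makes the intended argument explicit: if $\seqH|\tr_j \in \vis{S}{\tr_i}$ then $\vis{\seqH}{\tr_j}$ is a prefix of $\vis{S}{\tr_i}$, so prefacing is inherited).

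Where your proposal drifts from the paper is in the machinery you invoke. You cite \rlemma{lemma:corr} and \rdef{lvis-construction}, and you devote your ``main obstacle'' paragraph to the decided-completion case---but all three of these pertain to $\luvis{\cdot}{\cdot}$, not to $\vis{\cdot}{\cdot}$. The present corollary is about the ordinary visible history $\vis{\seqH}{\tr_j}$, which includes a preceding transaction only if it is \emph{committed} (in full), never via a decided-subhistory truncation. Moreover, the standing hypothesis in this block of lemmas is that $\tr_i$ is committed in $\trace$, so whenever $\op_i$ appears in $\vis{\seqH}{\tr_j}$ at all, the entire $\seqH|\s\tr_i$ is present. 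Your committed-case argument is then already the whole proof; the decided-completion case simply does not arise here, and \rlemma{lemma:corr} is the wrong lemma to cite (the prefix property for $\vis$ is the simpler, definitional fact used in \rcor{cor:all-non-local-read-consistency}). Strip the $\luvis$ references and the truncation discussion, and what remains coincides with the paper's intended one-line justification.
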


Let there instead be such $\tr_i \in \trace$ that $\tr_i$ is either committed
or not committed $\trace$. Then:

\begin{lemma} [Non-local Read Last-use Consistency]
\label{lemma:non-local-read-lu-consistency}
For any $\op_i \in \trace|\tr_i$ s.t. $\op_i = \frop{i}{\op_i}{\val}$ and
$op_i$ is non-local, then either
\begin{enumerate} [i)]
\item $\val \neq 0$ and $\exists \op_j \in \luvis{\seqH}{\s\tr_i}$ for some
$\tr_j$ s.t. $\op_j = \fwop{j}{\obj}{\val}{\ok_j}$, $\op_j
\pref_{\luvis{\seqH}{\s\tr_i}} \op_i$, or 

\item $\val = 0$ and $\nexists \op_j \in \luvis{\seqH}{\s\tr_i}$ s.t. $\op_j =
\fwop{j}{\obj}{\val}{\ok_j}$ and $\op_j \prec_{\luvis{\seqH}{\s\tr_i}}
\op_i$.
\end{enumerate}
\end{lemma}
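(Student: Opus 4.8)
The plan is to mirror the proof of \rlemma{lemma:non-local-read-consistency}, which established the corresponding statement for $\vis{\seqH}{\s\tr_i}$ under the assumption that $\tr_i$ is committed, and to adapt it to $\luvis{\seqH}{\s\tr_i}$ while dropping that assumption. First I would invoke consonance of the non-local $\op_i$ (\rlemma{lemma:non-local-read-consonance}) to obtain a consonant non-local view event $e_v = \get{i}{\obj}{\val} \in \trace|\tr_i$ with $\op_i \dependson e_v$, and then split on View Consonance (\rdef{def:view-consonance}) into the same three cases as before: (a) $\val = 0$ with no preceding update on $\obj$; (b) a preceding routine update $e_u = \rset{j}{\obj}{\val}$ that is the ultimate routine update on $\obj$ in $\tr_j$; and (c) a preceding recovery update $e_a = \aset{j}{\obj}{\val}$.

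The central new work is in case (b), the positive direction. From $e_u \pref_\trace e_v$ I get $\tr_j \iso{\obj}{\trace} \tr_i$, hence $\s\tr_j \prec_\seqH \s\tr_i$, and $\tr_i \views \tr_j$. Where the committed case could appeal to commit accord to force $\tr_j$ committed and so land $\s\tr_j$ wholly in $\vis{\seqH}{\s\tr_i}$, here I would instead distinguish two subcases. If $\tr_j$ is committed, \rdef{lvis-construction} (case 1) includes $\s\tr_j$ in $\luvis{\seqH}{\s\tr_i}$ as before. If $\tr_j$ is not committed, then since $\tr_i \views \tr_j$ implies $\tr_i \vviews \tr_j$, there is a view chain $\vchain{\trace}{j}{i} = \tr_j \cdot \tr_i$; moreover, as $e_u$ is consonant and the ultimate routine update on $\obj$ in $\tr_j$, the instigating write $\op_j = \fwop{j}{\obj}{\val}{\ok_j}$ is (by the closing-write mechanism) a closing write, so $\tr_j$ is decided on $\obj$ and $\tr_j \in \cpetrans{\hist}$. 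Then \rdef{lvis-construction} (case 3) places $\seqH\cpeC\s\tr_j$ in $\luvis{\seqH}{\s\tr_i}$, and since $(\hist|\tr_j)|\obj \subseteq \hist\cpe\tr_j$, the operation $\op_j$ is included. In either subcase $\op_j \in \luvis{\seqH}{\s\tr_i}$ and $\op_j \prec_{\luvis{\seqH}{\s\tr_i}} \op_i$; I would then upgrade this to $\op_j \pref_{\luvis{\seqH}{\s\tr_i}} \op_i$ via the auxiliary \rlemma{lemma:direct-precedence-in-luvis}, whose statement already covers both committed and decided-on-$\obj$ intervening writers and which applies verbatim because $\luvis{\seqH}{\s\tr_i}$ is a subhistory of $\seqH$.

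For case (a) ($\val = 0$) I would argue clause (ii): if some write $\op_j$ on $\obj$ preceded $e_v$ in $\trace$, then committed/closing write obbligato (\rlemma{lemma:comitted-write-obbligato}, \rlemma{lemma:closing-write-obbligato}) would force a routine update before $e_v$, contradicting (a); and any writer that neither is isolation-ordered before $\tr_i$ nor is committed or decided-and-chain-connected to $\tr_i$ cannot contribute a write to $\luvis{\seqH}{\s\tr_i}$ preceding $\op_i$. Case (c) is the hardest, and I expect it to be the main obstacle. Here I would reuse the abort-set machinery $\absetx{i}$ together with \rlemma{lemma:abset:views}, \rlemma{lemma:abset:not-committed}, \rlemma{lemma:abset:containment}, and \rcor{cor:abset:order} to show that every transaction whose write could have supplied a wrong value is either aborted and real-time-before $\tr_i$, or not isolation-ordered before $\tr_i$, hence excluded by \rdef{lvis-construction} (cases 2 and 4). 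The subtlety absent from the $\vis$ version is that $\luvis{\seqH}{\s\tr_i}$ admits non-committed \emph{decided} transactions along view chains, so I must additionally verify that none of the revert-structure transactions in $\absetx{i}$ slip into $\luvis{\seqH}{\s\tr_i}$ through a chain. This is precisely where abort coda and chain isolation (\rlemma{lemma:abort-coda}, \rlemma{lemma:chain-isolation}), via the reasoning of \rlemma{lemma:not-corr}, do the heavy lifting: any such chain would be broken by an intervening recovery update carrying a value distinct from the viewed one (using unique routine updates, \rlemma{lemma:unique-routine-updates}), so the wrong-valued writer stays out of $\luvis{\seqH}{\s\tr_i}$ and either $\val = 0$ (clause (ii)) or exactly the correct visible write of $\val$ remains (clause (i)).
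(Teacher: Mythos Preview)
Your plan tracks the paper's proof closely: same three-way split on view consonance, same use of the $\absetx{i}$ machinery and chain isolation in case (c), same appeal to \rlemma{lemma:direct-precedence-in-luvis} to upgrade precedence to preface. The structure is right.

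There is one genuine gap in your case (b). You write that since $e_u$ is the ultimate routine update on $\obj$ in $\tr_j$, the instigating write $\op_j$ is ``(by the closing-write mechanism)'' a closing write. That inference does not follow from consonance or minimalism alone: being the \emph{ultimate} routine update in the trace so far says nothing about whether further writes could occur in extensions, which is what ``closing'' (Definitions~\ref{def:last-write-inv}--\ref{def:decided}) requires. Moreover, this lemma lives in the generic harmony-to-last-use-opacity appendix, so you cannot appeal to OptSVA's algorithmic mechanism; you must use a trace-level harmony property. The right one is \emph{decisiveness} (\rdef{def:decisiveness}): since $\tr_i \views \tr_j$, either $\tr_j$ is decided on $\obj$ or $\res{j}{}{\co_j}$ occurs between $e_u$ and $e_v$. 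When $\tr_j$ is not committed the second disjunct fails, so $\tr_j$ is decided on $\obj$, and then \rdef{lvis-construction} case~3 applies (noting that case~2 is excluded because $\tr_j \prec_\trace \tr_i$ together with abort coda would insert a recovery update between $e_u$ and $e_v$, contradicting $e_u \pref_\trace e_v$). This is exactly how the paper handles it, though it splits on whether $\tr_i$ is committed rather than $\tr_j$; your split works too once you invoke decisiveness explicitly.
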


\begin{proof}
Since $\op_i$ is consonant and non-local, then
$\exists e_v = \get{i}{\obj}{\val} \in \trace$, s.t. $\op_i \dependson e_v$.
Then, since $\trace$ is consonant, then $e_v$ is also consonant, so one of the
following is true:
\begin{enumerate}[a)]
\item $\val = 0$ and $\nexists e_u = \set{j}{\obj}{\any} \in \trace$ for some
$\tr_j \in \trace$ s.t. $e_u \prec_\trace e_v$. \label{case:lu:first} 

In which case, if $\nexists \op_j = \fwop{j}{\obj}{\any}{\ok_i} \in
\trace|\tr_j$ s.t. $\op_j \prec_\trace \get{i}{\obj}{\val}$, then, 
$\nexists \tr_j \in \trace$ s.t. $\tr_j \isoorder_\trace \tr_i$ and $\op_j \in
\trace|\tr_j$.
Thus, from construction of $\seqH$, $\nexists \s\tr_j \in \trace$ s.t. $\s\tr_j
\prec_{\seqH} \s\tr_i$ and $\op_j \in \seqH|\s\tr_j$.
Thus, from construction of $\luvis{\seqH}{\s\tr_i}$, for any such $\tr_j$,
$\seqH|\s\tr_j \nsubseteq \luvis{\seqH}{\s\tr_i}$, so for any such $\tr_j$,
\underline{$\fwop{j}{\obj}{\val}{\ok_j} \not\in \luvis{\seqH}{\s\tr_i}$ and $\val=0$}.

On the other hand, if $\exists \op_j \fwop{j}{\obj}{\any}{\ok_i} \in
\trace|\tr_j$ s.t. $\op_j \prec_\trace \get{i}{\obj}{\val}$, then 
if $\tr_j$ is committed in $\trace$, then, from the definition of commit write obbligato,
$\exists \rset{j}{\obj}{\any} \in \trace|\tr_j$, which contradicts the assumption of
case \ref{case:lu:first}).
If, however, $\tr_j$ is not committed in $\trace$, then either $\tr_j$ is
decided on $\obj$ in $\trace$, or it is not. 
In the former of those two cases, from the definition of closing write obbligato,
$\exists \rset{j}{\obj}{\any} \in \trace|\tr_j$, which also contradicts the
assumption of case \ref{case:lu:first}).
In the latter case, since $\s\tr_j$ is neither committed in $\trace$ nor decided
on $\obj$ in $\trace$, then neither is it committed in $\seqH$ nor decided on
$\obj$ in $\seqH$. Therefore, by definition of $\luvis{\seqH}{\s\tr_i}$,
$\seqH|\s\tr_j|\obj \nsubseteq \luvis{\seqH}{\s\tr_i}$. Thus for any such
$\tr_j$, \underline{$\fwop{j}{\obj}{\any}{\ok_j} \not\in
\luvis{\seqH}{\s\tr_i}$ and $\val=0$}.
\item $\val \neq 0$ and $\exists e_u = \rset{j}{\obj}{\val} \in \trace$ for
some $\tr_j \in \trace$ s.t. $e_u \pref_\trace e_v$.
\label{case:lu:ordinary}

Since $e_u$ is consonant, then $\exists \op_j =
\fwop{j}{\obj}{\val}{\ok_j} \in \trace|\tr_j$ s.t. $\op_j$ is non-local and
consonant, and $\op_j \pref \rset{j}{\obj}{\val}$.
Thus, since $e_u \pref_\trace e_v$, $\tr_j \iso{\obj}{\trace} \tr_i$, then, by
construction, $\s\tr_j \prec_{\seqH} \s\tr_i$.

If $\tr_i$ is not committed in $\trace$, since $\tr_i \views \tr_j$, and since
$\trace$ is decisive, then $\tr_j$ is decided on $\obj$ in $\trace$.
Thus, from \rdef{lvis-construction}, $\seqH|\s\tr_j|\obj \subseteq
\luvis{\seqH}{\tr_i}$, and therefore $\op_j \prec_{\luvis{\seqH}{\tr_i}}
\op_i$.
Alternatively, if $\tr_i$ is committed in $\trace$, then, from commit accord,
$\tr_j$ is also committed in $\trace$. Then, by definition of
$\luvis{\seqH}{\tr_i}$, $\seqH|\s\tr_j \subseteq \luvis{\seqH}{\tr_i}$, and
thus $\op_j \prec_{\luvis{\seqH}{\tr_i}} \op_i$.
Then, from \rlemma{lemma:direct-precedence-in-luvis}, $\op_j
\pref_{\luvis{\seqH}{\tr_i}} \op_i$. Thus,
\underline{$\fwop{j}{\obj}{\val}{\ok_j} \pref_{\luvis{\seqH}{\s\tr_i}} \op_i$ and
$\val\neq 0$}.

\item $\exists e_a^j = \aset{j}{\obj}{\val} \in \trace|\tr_j$ s.t. $e_a^j
    \pref_\trace e_v$. \label{case:lu:rever}

Since $\trace$ is consonant, then $e_a^j$ is consonant, so $e_a^j$ is conservative,
and thus $\exists e_v^j = \get{j}{\obj}{\val} \in \trace|\tr_j$.

From \rcor{cor:abset:order} $\forall \tr_n \in \absetx{i}$
    ($n\neq i$) $\tr_n \iso{\obj}{\trace} \tr_i$.
So, by construction, for every such $\tr_n$, $\s\tr_n$ is aborted in
$\seqH$ and therefore not included as a whole in $\luvis{\seqH}{\tr_i}$.
In addition, since $e_a^n$ is needed there is a preceding routine update
$e_u^n = \rset{n}{\obj}{\val'}$. Because of unique writes, $\val' \neq \val$. 
Therefore, it is not true that $\tr_i \vviews \tr_n$.
Furthermore, since $e_u^n \prec_\trace e_a^n$ and $e_a^n \prec_\trace e_a^j$
and $e_a^j \prec_\trace e_v$, then,  $e_u^n \prec_\trace e_a^n \prec_\trace
e_v$, so, from chain isolation, $\nexists \vchain{\trace}{n}{m}$.
Thus, from \rdef{lvis-construction} $\seqH\cpeC\tr_n$ is not included in
$\luvis{\seqH}{\tr_i}$.
Thus for any $\tr_n \in \absetx{i}$ ($n\neq i$), $\forall \op_n =
\fwop{n}{\obj}{\val} \in \trace|\tr_n$, $\op_n \not\in \luvis{\seqH}{\tr_i}$.

Given $\absetx{i}$, from \rlemma{lemma:abset:views}, $\exists e_v^k =
\get{k}{\obj}{\val} \in \trace|\tr_k$ s.t. $\tr_k$ is the first element of
$\absetx{i}$ that is initial and non-local, and either of the following is true:
\begin{enumerate}[i)]
    \item $\val = 0$ and $\nexists \tr_l\in\trace$ s.t. $e_u^l =
    \set{l}{\obj}{\any} \in \trace|\tr_l$ and $e_u^l \prec_\trace e_v^k$.    

    Then, either $\exists e_u^l = \set{l}{\obj}{\val} \in \trace|\tr_l$ and
    $e_v^k \prec_\trace e_u^l$ or $\nexists e_u^l = \set{l}{\obj}{\any} \in
    \trace|\tr_l$.

    If $\exists e_u^l = \set{l}{\obj}{\any} \in \trace|\tr_l$ and $e_v^k
    \prec_\trace e_u^l$, then from \rlemma{lemma:abset:views}, $e_u^l =
    \rset{l}{\obj}{\any} \in \trace|\tr_l$. Thus, by definition of isolation
    order, $\tr_k \iso{\obj}{\trace} \tr_l$.

    Thus, if $\tr_l \iso{\obj}{\trace} \tr_i$, then, from
    \rlemma{lemma:abset:containment}, $\tr_l$ is aborted or live in $\trace$,
    so, by construction, $\s\tr_l$ is aborted in $\seqH$.
    Since $\s\tr_l$ is not committed in $\seqH$, then $\s\tr_l|\seqH$ is not
    included as a whole in $\luvis{\seqH}{\tr_i}$. Furthermore,
    $\seqH\cpeC\tr_l$ can be omitted from $\luvis{\seqH}{\tr_i}$.
    From unique routine updates, there cannot be $\tr_n \in
    \vchain{\trace}{l}{i}$ s.t. $\exists \rset{n}{\obj}{0}$, so since $\val =
    0$ and from self-containment, $\nexists \vchain{\trace}{l}{i}$.
    Thus, from \rdef{lvis-construction}, $\seqH\cpeC\tr_k
    \nsubseteq \luvis{\seqH}{\tr_i}$. 
    Then, for any write operation execution $\op_l =
    \fwop{l}{\obj}{\val'}{\ok_l}$ in any such $\tr_l$, \underline{$\op_l
    \not\in \luvis{\seqH}{\tr_i}$ (and $\val=0$)}.

    Alternatively, if  $\tr_i \iso{\obj}{\trace} \tr_l$, then since
    $e_a\pref_\trace e_v$, then $e_v \prec_\trace e_u^l$, so $\tr_i
    \iso{\obj}{\trace} \tr_l$, and thus $\tr_i \prec_{\seqH} \tr_l$, which means that
    $\seqH\cpeC\tr_k \nsubseteq \luvis{\seqH}{\tr_i}$.
    In either case for any write operation execution $\op_l =
    \fwop{l}{\obj}{\val'}{\ok_l}$ in any such $\tr_l$, \underline{$\op_l
    \not\in \vis{\seqH}{\tr_i}$ (and $\val=0$)}.

    On the other hand, if $\nexists e_u^l = \set{l}{\obj}{\any} \in
    \trace|\tr_l$, either $\tr_l \isoorder_\trace \tr_i$, or 
    $\tr_l \nisoorder_\trace \tr_i$.
    In the latter case, if $\tr_i \isoorder_\trace \tr_l$, then, trivially, no
    subset of $\seqH|\s\tr_l$ is contained in $\luvis{\seqH}{\tr_l}$.
    If $\tr_i \nisoorder_\trace \tr_l$, then there does not exist
    $\vchain{\trace}{l}{i}$, so, from \rdef{lvis-construction}, no subset of
    $\seqH|\s\tr_l$ is contained in $\luvis{\seqH}{\tr_l}$.
    If $\tr_l \isoorder_\trace \tr_i$, then either $\trace|\tr_l$ contains some
    write operation $\op_l = \fwop{l}{\obj}{\val'}{\ok_l}$ or it does not.
    However, from view write obbligato, since $\tr_l \isoorder_\trace \tr_i$,
    there must be $\set{l}{\obj}{\any} \in \trace|\tr_l$, so, there is no such
    $\op_l$.
    Then vacuously, for any write operation execution $\op_l
    = \fwop{l}{\obj}{\any}{\ok_l}$ in any such $\tr_l$, \underline{$\op_l
    \not\in \luvis{\seqH}{\tr_i}$ (and $\val=0$)}.

    \item $\val \neq 0$ and $\exists \tr_l\in\trace$ s.t. $e_u^l =
    \rset{l}{\obj}{\val} \in \trace|\tr_l$ and $e_u' \pref_\trace e_v'$.

    Since $\trace$ is consonant, then $e_u^l$ is consonant, so $\exists \op_l =
    \fwop{l}{\obj}{\val}{\ok_l}$ s.t $\op_l \instigates e_u^l$.

    Let us first assume that $\tr_l$ is committed in $\trace$.
    Then since $\tr_l \iso{\obj}{\trace} \tr_i$ (so $\s\tr_l \prec_{\seqH}
    \s\tr_i$), $\op_l \in \luvis{\seqH}{\tr_i}$.
  
    If, on the other hand, $\tr_l$ is not committed in $\trace$,     
    then, since $e_v^k$ is consonant, then $e_u^l$ is the ultimate routine
    update event in $\trace|\tr_l$.
    Therefore, from decisiveness, $e_u^l$ is either the closing routine update
    event on $\obj$ in $\trace|\tr_l$, or $e_u^l \prec_\trace \res{l}{}{\co_l}
    \prec_\trace e_v$. Since $\tr_l$ is not committed in $\seqH$, $e_u^l$ is
    the closing routine update event, so $\op_l$ is the closing write on $\obj$
    in $\tr_l$.
    Because of this, and since $\s\tr_l \prec_{\seqH} \s\tr_i$, $\seqH\cpeC\tr_l$ can
    be included in $\luvis{\seqH}{\tr_i}$. Then, since $\tr_i \vviews \tr_l$,
    then there exists $\vchain{\trace}{l}{i}$, so according to 
    \rdef{lvis-construction}, $\seqH\cpeC\s\tr_l$ is included in $\luvis{\seqH}{\tr_i}$, and
    therefore \underline{$\op_l \in \luvis{\seqH}{\tr_i}$}.

    From minimalism, $e_u^l$ is not followed in $\trace|\tr_l$ by
    another $\rset{l}{\obj}{\any}$.
    Since $\op_l \instigates e_u^l$, $\nexists \op_l' =
    \fwop{l}{\obj}{\any}{\ok_l}$ s.t. $\op_l \prec_{\trace|\tr_i} \op_l'$.

    Let $\tr_m$ be any transaction in $\trace$ s.t. $\tr_l \isoorder{\trace}
    \tr_m \isoorder{\trace} \tr_i$. If $\nexists \fwop{m}{\obj}{\any}{\ok_m}
    \in \trace|\tr_m$, then trivially, for any such $\tr_m$  $\nexists
    \fwop{m}{\obj}{\val}{\ok_m} \in \luvis{\seqH}{\tr_i}$. Thus, let there be
    $\op_m = \fwop{m}{\obj}{\val'}{\ok_m} \in \trace|\tr_m$.
    
    If $\tr_m \in \absetx{\trace}{i}$, then, as shown above, $\seqH|\s\tr_m
    \nsubseteq \luvis{\seqH}{\tr_i}$ and therefore $\op_m \not\in
    \luvis{\seqH}{\tr_i}$. Hence, let $\tr_m \not\in \absetx{\trace}{i}$.

    Either $tr_m \isoorder_{\trace} \tr_i$ or $tr_m \nisoorder_{\trace} \tr_i$.
    In the latter case, if $\tr_m$ is not committed, $\tr_m$ can be excluded
    from $\luvis{\seqH}{\tr_i}$. Since in that case there doe not exist
    $\vchain{\trace}{m}{i}$, then, by \rdef{lvis-construction}, $\seqH|\s\tr_m
    \nsubseteq \luvis{\seqH}{\tr_i}$ and therefore $\op_m \not\in
    \luvis{\seqH}{\tr_i}$. 
    If $\tr_m$ is committed, then if $\fwop{m}{\obj}{\val'}{\ok_m} \in
    \trace|\tr_m$, then by commit write obbligato, there would have to exist
    $e_u^m = \rset{m}{\obj}{\val'} \in \trace|\tr_m$, which would imply that
    $\tr_m \iso{\trace}{\obj} \tr_i$, which contradicts the assumption that
    $tr_m \nisoorder_{\trace} \tr_i$. Therefore, there is no such transaction.

    If $\tr_m \isoorder_{\trace} \tr_i$, then, if $\tr_m \isoorder_{\trace}
    \tr_l$, then $\s\tr_m \prec_{\seqH} \s\tr_l$, and therefore $\op_m
    \prec_{\luvis{\seqH}{\tr_i}} \op_l$, which has no bearing on whether
    $\op_i$ is preceded by a corresponding write operation.
    Hence, let $\tr_l \isoorder_{\trace} \tr_m \isoorder_{\trace} \tr_i$.
    Then, $\tr_m$ is either committed in $\trace$ or not.
    
    If $\tr_m$ is committed, then from commit write obbligato $\exists e_u^m =
    \rset{m}{\obj}{\val'} \in \trace|\tr_m$. From isolation, it is impossible
    that $\tr_i \iso{\trace}{\obj} \tr_m$ or $\tr_m \iso{\trace}{\obj} \tr_l$,
    then $\tr_l \iso{\trace}{\obj} \tr_m \iso{\trace}{\obj} \tr_i$.
    
    Then, if $\tr_j \iso{\trace}{\obj} \tr_m \iso{\trace}{\obj} \tr_i$,    
    since $e_a$ is consonant, it is is needed, so
    $\exists e_u^j = \set{j}{\obj}{\any} \in \trace|\tr_j$ s.t. $e_u^j
    \prec_\trace e_a$. 
    In addition, since $\tr_j \iso{\obj}{\trace} \tr_m \iso{\obj}{\trace}
    \tr_i$, then  by definition of isolation, $\exists e =
    \set{m}{\obj}{\any}\in \trace|\tr_m$, or $e \get{m}{\obj}{\any}\in
    \trace|\tr_m$, so $e_u^j \prec_\trace e \prec_\trace e_v$. 
    Since $e_a \pref_\trace e_v$, then $e_u^j \prec_\trace e
    \prec_\trace e_u$. Thus, by definition of abort accord, $\tr_m$ 
    cannot be in $\trace$, thus there is no such $\tr_m$.
          
    If $\tr_l \iso{\obj}{\trace} \tr_m \iso{\obj}{\trace} \tr_j$, then, since
    $\tr_m \not\in \absetx{\trace}{i}$, then, from
    \rlemma{lemma:abset:containment}, $\tr_m$ cannot be committed in $\trace$,
    thus, there is also no such $\tr_m$.

    Since there is no such $\tr_m$ that is both committed and contains an
    operation execution such as $\op_m$, then for any such $\tr_m$,
    $\seqH|\s\tr_m \nsubseteq \luvis{\seqH}{\tr_i}$ and therefore $\op_m
    \not\in \luvis{\seqH}{\tr_i}$.

    On the other hand, if $\tr_m$ is not committed in $\trace$, then, since
    $\op_m$ is consonant, either $\op_m \instigates e_u^m$ where $e_u^m =
    \rset{m}{\obj}{\val'} \in \trace|\tr_m$, or $\nexists \rset{m}{\obj}{\any}
    \in \trace|\tr_m$. 
    Since, from view write obbligato the latter case is
    impossible, then  $\exists e_u^m = \rset{m}{\obj}{\val'} \in \trace|\tr_m$.
    Then, from the definition of isolation order, $\tr_l \iso{\obj}{\trace}
    \tr_m \iso{\obj}{\trace} \tr_i$.
    Since $\tr_m$ is not committed, then $\s\tr_m$ can be omitted in
    $\luvis{\seqH}{\tr_i}$.
    Due to unique routine updates, there cannot be any $\tr_n \in \trace$ s.t.
    $\rset{n}{\obj}{\val''}$ where $\val'' = \val$. Therefore, given any
    $\vchain{\trace}{m}{i}$ and there is no transaction to satisfy
    self-containment. Thus, there is no such  $\vchain{\trace}{m}{i}$. Thus, by
    \rdef{lvis-construction}, $\seqH|\s\tr_m \nsubseteq \luvis{\seqH}{\tr_i}$ and
    therefore $\op_m \not\in \luvis{\seqH}{\tr_i}$.

    Because there is no $\tr_m$ s.t.  $\seqH|\s\tr_m \subseteq
    \luvis{\seqH}{\tr_i}$ and $\op_m \in \luvis{\seqH}{\tr_i}$, then there is
    no write operation execution $\op'$ on $\obj$ s.t. $\op_l
    \prec_{\luvis{\seqH}{\tr_i}} \op' \prec_{\luvis{\seqH}{\tr_i}} \op_i$.
    Therefore \underline{$\fwop{l}{\obj}{\val}{\ok_l}
    \pref_{\vis{\seqH}{\s\tr_i}} \op_i$ and $\val\neq 0$}.

\end{enumerate}
\end{enumerate}
\end{proof}

\begin{lemma} [All Non-local Read Last-use Consistency] 
    \label{lemma:all-non-local-read-lu-consistency}
    If for some sequential
    history $S$, $\seqH|tr_j \in \vis{S}{\tr_i}$, for any $\op_j \in \trace|\tr_j$ s.t.
    $\op_j = \frop{i}{\op_j}{\val}$ either: 
    \begin{enumerate} [1)] 
        \item $\val \neq 0$ and $\exists \op_k \in \vis{\seqH}{\s\tr_k}$ for
            some $\tr_k$ s.t.  $\op_k = \fwop{k}{\obj}{\val}{\ok_j}$, $\op_k
            \pref_{\vis{\seqH}{\s\tr_i}} \op_j$, or 
        \item $\val = 0$ and $\nexists \op_k \in \vis{\seqH}{\s\tr_i}$ s.t.
            $\op_k = \fwop{k}{\obj}{\val'}{\ok_k}$ and $\op_k
            \prec_{\vis{\seqH}{\s\tr_i}} \op_j$.
    \end{enumerate}
\end{lemma}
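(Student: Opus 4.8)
The plan is to mirror the proof of \rcor{cor:all-non-local-read-consistency}, the committed-case analogue, but with $\luvis{\seqH}{\cdot}$ in place of $\vis{\seqH}{\cdot}$ and with the prefix relationship supplied by the last-use auxiliary lemmas rather than by the definitional prefix property of $\vis$. The governing idea is again that, whenever $\s\tr_j$ is contained in $\tr_i$'s last-use visible history, $\luvis{\seqH}{\s\tr_j}$ is a prefix of $\luvis{\seqH}{\tr_i}$; once that is established, the single-transaction result \rlemma{lemma:non-local-read-lu-consistency} applied to $\tr_j$ transports directly to $\tr_i$.

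First I would establish the prefix claim. From \rlemma{lemma:corr}, every $\s\tr_k \sin \luvis{\seqH}{\s\tr_j}$ also satisfies $\s\tr_k \sin \luvis{\seqH}{\tr_i}$ with $\luvis{\seqH}{\tr_i}|\tr_k = \luvis{\seqH}{\s\tr_j}|\tr_k$, so $\luvis{\seqH}{\s\tr_j}$ is contained in $\luvis{\seqH}{\tr_i}$ with identical per-transaction content. To upgrade containment to a genuine prefix I would show that no transaction contributing to $\luvis{\seqH}{\tr_i}$ but absent from $\luvis{\seqH}{\s\tr_j}$ can be interleaved before an event lying in both: a transaction omitted from $\luvis{\seqH}{\s\tr_j}$ is either aborted and real-time-before $\tr_j$ (hence, by the construction of $\luvis{\seqH}{\cdot}$, not ordered before the shared events) or is reachable from $\tr_j$ only through a view chain, in which case \rlemma{lemma:not-corr} places its contribution after $\tr_j$'s events. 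Since $\seqH$ preserves real-time order and the isolation order, these facts together force every element of $\luvis{\seqH}{\tr_i} \setminus \luvis{\seqH}{\s\tr_j}$ to follow every element of $\luvis{\seqH}{\s\tr_j}$, i.e. $\luvis{\seqH}{\s\tr_j}$ is a prefix.

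Then I would take any non-local $\op_j = \frop{j}{\obj}{\val} \in \trace|\tr_j$ and apply \rlemma{lemma:non-local-read-lu-consistency} to $\tr_j$, obtaining one of the two cases relative to $\luvis{\seqH}{\s\tr_j}$. In case 1 ($\val \neq 0$) there is $\op_k = \fwop{k}{\obj}{\val}{\ok_k}$ with $\op_k \pref_{\luvis{\seqH}{\s\tr_j}} \op_j$; because $\luvis{\seqH}{\s\tr_j}$ is a prefix of $\luvis{\seqH}{\tr_i}$, both operations remain present in $\luvis{\seqH}{\tr_i}$ and the prefacing is preserved, since any intervening write on $\obj$ between $\op_k$ and $\op_j$ in the larger sequence would, being real-time-before $\op_j$ and hence inside the prefix, already contradict $\op_k \pref_{\luvis{\seqH}{\s\tr_j}} \op_j$ (one may invoke \rlemma{lemma:direct-precedence-in-luvis} to discharge the remaining case of a decided-but-uncommitted intervening writer). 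In case 2 ($\val = 0$) the absence of any preceding write on $\obj$ in $\luvis{\seqH}{\s\tr_j}$ likewise lifts to the prefix, because any write on $\obj$ before $\op_j$ in $\luvis{\seqH}{\tr_i}$ would be forced into the prefix and reintroduce the excluded write. In both cases the stated conclusion for $\op_j$ with respect to $\luvis{\seqH}{\tr_i}$ follows.

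The main obstacle I anticipate is the prefix step rather than the transport step: the definitional fact that $\vis{\seqH}{\tr_j}$ is a prefix of $\vis{S}{\tr_i}$, which makes \rcor{cor:all-non-local-read-consistency} a one-line corollary, is not available verbatim for $\luvis$, because $\luvis{\seqH}{\cdot}$ admits the truncated decided-transaction completions $\seqH\cpeC\s\tr_j$ and so its structure depends delicately on which transactions are decided versus committed. Verifying, via \rlemma{lemma:corr} and \rlemma{lemma:not-corr}, that the extra non-shared transactions never slip before a shared event is where the care is needed; the subsequent case analysis on $\op_j$ is then routine given the prefix.
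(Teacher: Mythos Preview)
Your overall plan—reduce to the single-transaction lemma (\rlemma{lemma:non-local-read-lu-consistency}) applied to $\tr_j$ and then transport to $\tr_i$ via \rlemma{lemma:corr} and \rlemma{lemma:not-corr}—is the right instinct and matches the paper's ingredients. But the paper does \emph{not} attempt to prove that $\luvis{\seqH}{\s\tr_j}$ is a prefix of $\luvis{\seqH}{\tr_i}$, and your justification of that prefix step has a real gap.

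The problem is the transactions that lie in $\luvis{\seqH}{\tr_i}$ but not in $\luvis{\seqH}{\s\tr_j}$. Your argument covers those with a view chain to $\tr_j$ via \rlemma{lemma:not-corr} (note: that lemma's conclusion is that such a $\tr_k$ is \emph{excluded} from $\luvis{\seqH}{\tr_i}$, not placed after $\tr_j$'s events as you wrote). But a not-committed $\tr_l$ can land in $\luvis{\seqH}{\tr_i}$ via a view chain to $\tr_i$ that bypasses $\tr_j$ entirely, with no view chain $\tr_l\to\tr_j$ and no real-time order to $\tr_j$; nothing in \rlemma{lemma:corr}/\rlemma{lemma:not-corr} rules out $\s\tr_l \prec_{\seqH} \s\tr_j$ for such $\tr_l$. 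So the full prefix property is neither established nor, as the paper shows, needed.

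What the paper does instead is argue directly about the one offending transaction in each case, using the $\seqH$ construction (\rdef{seqh-construction}) rather than a prefix. For $\val\neq 0$, after pulling $\op_k$ into $\luvis{\seqH}{\tr_i}$ via \rlemma{lemma:corr}, assume an intervening $\op_l=\fwop{l}{\obj}{\val'}{\ok_l}$ with $\op_k\prec \op_l\prec \op_j$ in $\luvis{\seqH}{\tr_i}$; since $\op_l\notin\luvis{\seqH}{\tr_j}$, $\tr_l$ is not committed. Then do a three-way split on the isolation order of $\tr_l$ and $\tr_j$ on $\obj$: if $\tr_l \iso{\obj}{\trace} \tr_j$ you are back inside the hypotheses of \rlemma{lemma:non-local-read-lu-consistency} and get a contradiction; if $\tr_j \iso{\obj}{\trace} \tr_l$ then rule~2 of \rdef{seqh-construction} gives $\s\tr_j\prec_{\seqH}\s\tr_l$, contradicting $\op_l\prec\op_j$; if neither, then there is no $\rset{l}{\obj}{\any}$ in $\trace|\tr_l$, so rule~3 of \rdef{seqh-construction} (the write $\op_l$ versus the view event in $\tr_j$) forces $\s\tr_j\prec_{\seqH}\s\tr_l$, again a contradiction. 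The $\val=0$ case is symmetric. This isolation-order case split against the $\seqH$ construction rules is the mechanism your prefix argument was missing.
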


\begin{proof}
    Either $\val \neq 0$ or $\val = 0$.
    \begin{enumerate}[1)]
        \item
        If $\val \neq 0$,from \rlemma{lemma:non-local-read-lu-consistency},
        since $\val \neq 0$ then $\exists \op_k = \fwop{k}{\obj}{\val}{\ok_k}
        \in \trace|\tr_k$ s.t.  $\op_k \pref_{\luvis{\seqH}{\tr_j}} \op_j$.
        
        From \rlemma{lemma:corr}, $\forall \tr_l$ if $\s\tr_l \sin
        \luvis{\seqH}{\tr_j}$
        then $\s\tr_l \sin \luvis{\seqH}{\tr_i}$ and
        $\luvis{\seqH}{\tr_i}|\tr_l = \luvis{\seqH}{\tr_j}|\tr_l$.
        Hence, $\tr_k \sin \luvis{\seqH}{\tr_i}$ and $\op_k \in \luvis{\seqH}{\tr_i}$. 
        Furthermore, if $\tr_k \sin \luvis{\seqH}{\tr_j}$, then $\tr_k
        \prec_{\seqH} \tr_j$, so  $\op_k \prec_{\luvis{\seqH}{\tr_i}} \op_j$.
         
        For the sake of contradiction, let us assume there exists $\op_l =
        \fwop{l}{\obj}{\val}{\ok_l} \in \trace|\tr_l$ s.t. $\op_k
        \prec_{\luvis{\seqH}{\tr_i}} \op_l \prec_{\luvis{\seqH}{\tr_i}} \op_j$.
        Since from \rlemma{lemma:non-local-read-lu-consistency}, there is no
        such transaction in $\luvis{\seqH}{\tr_i}$, then $\tr_l$ is such that
        $\s\tr_l \sin \luvis{\seqH}{\tr_i}$ s.t. and $\s\tr_l \nsin
        \luvis{\seqH}{\tr_j}$ (and $\s\tr_l \prec_{\seqH} \s\tr_j$).
        
        If $\s\tr_l \sin \luvis{\seqH}{\tr_i}$, then, by definition, $\tr_l$ is
        not committed in $\trace$.

        If $\tr_l \isoorder{\obj}{\trace} \tr_j$, then this
        is a contradiction by analogy to \rlemma{lemma:non-local-read-lu-consistency}.

        If $\tr_j \isoorder{\obj}{\trace} \tr_l$, then, from \rdef{seqh-construction},
        $\s\tr_j \prec_{\seqH} \s\tr_l$, which implies that
        $\op_j \prec_{\luvis{\seqH}{\tr_j}} \op_l$, which is a contradiction.

        If $\tr_l \nisoorder{\obj}{\trace} \tr_j$, then $\nexists e_u^k =
        \rset{l}{\obj}{\any} \in \trace|\tr_l$. Hence, from \rdef{seqh-construction}
        , $\s\tr_j \prec_{\seqH} \s\tr_l$, which imp`lies that $\op_j
        \prec_{\luvis{\seqH}{\tr_j}} \op_l$, which, again, is a contradiction.

        Thus, there is no such $\tr_l$, and, therefore, $\op_k
        \pref_{\luvis{\seqH}{\tr_i}} \op_j$ (and $\val \neq 0$).

        \item  
        If $\val = 0$, let us assume by contradiction, that there exists such
        $\tr_k$ and $\op_k$.
        From \rlemma{lemma:non-local-read-lu-consistency}, since $\val = 0$
        then $\nexists \op_l = \fwop{l}{\obj}{\any}{\ok_l} \in \trace|\tr_l$ s.t.
        $\op_l \prec_{\luvis{\seqH}{\tr_j}} \op_j$.
        Hence, $\tr_k$ must be such that $\s\tr_k \sin \luvis{\seqH}{\tr_i}$
        s.t. and $\s\tr_k 
        \nsin 
        \luvis{\seqH}{\tr_j}$ (and $\tr_k \prec_{\seqH}
        \tr_j$).
        
        From \rlemma{lemma:not-corr}, 
        $\forall \tr_l$ if $\s\tr_l \nsin \luvis{\seqH}{\tr_j}$
        and $\exists \vchain{\trace}{l}{j}$
        then $\s\tr_l \sin \luvis{\seqH}{\tr_i}$. 
        Thus, if $\tr_k$ is such that $\s\tr_k \sin \luvis{\seqH}{\tr_j}$ and
        $\s\tr_k \sin \luvis{\seqH}{\tr_i}$, then $\nexists
        \vchain{\trace}{k}{j}$.

        If $\s\tr_k \sin \luvis{\seqH}{\tr_i}$, then, by definition, $\tr_k$ is
        not committed in $\trace$.

        If $\tr_k \isoorder{\obj}{\trace} \tr_j$, then this
        is a contradiction by analogy to \rlemma{lemma:non-local-read-lu-consistency}.

        If $\tr_j \isoorder{\obj}{\trace} \tr_k$, then, from \rdef{seqh-construction},
        $\s\tr_j \prec_{\seqH} \s\tr_k$, which implies that
        $\op_j \prec_{\luvis{\seqH}{\tr_j}} \op_k$, which is a contradiction.

        If $\tr_k \nisoorder{\obj}{\trace} \tr_j$, then $\nexists e_u^k =
        \rset{k}{\obj}{\any} \in \trace|\tr_k$. Hence, from \rdef{seqh-construction},
        $\s\tr_j \prec_{\seqH} \s\tr_k$, which implies that $\op_j
        \prec_{\luvis{\seqH}{\tr_j}} \op_k$, which, again, is a contradiction.

        Thus, there is no such $\tr_k$, and, therefore,  $\nexists
        \op_k \in \luvis{\seqH}{\s\tr_i}$ s.t.  $\op_k =
        \fwop{k}{\obj}{\val'}{\ok_k}$ and $\op_k \prec_{\luvis{\seqH}{\s\tr_i}}
        \op_j$ (and $\val=0$).
    \end{enumerate}
\end{proof}

\begin{lemma} [Local Read Last-use Consistency] 
    \label{cor:local-read-lu-consistency} 
    For any $\op_i \in \trace|\tr_i$ s.t. $\op_i = \frop{i}{\op_i}{\val}$ and
    $op_i$ is local, then $\exists \op_i' \in \luvis{\seqH}{\tr_i}$ s.t.
    $\op_i' = \fwop{i}{\obj}{\val}{\ok_i}$, $\op_i'
    \pref_{\luvis{\seqH}{\tr_i}} \op_i$. 
\end{lemma}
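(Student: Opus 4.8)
The plan is to reproduce, almost verbatim, the reasoning of the committed-case counterpart \rlemma{cor:local-read-consistency}: both the local read $\op_i$ and the write that justifies it belong to $\tr_i$ itself, and $\tr_i$'s own subhistory is always carried into $\luvis{\seqH}{\tr_i}$ in full, so none of the view-chain machinery used for the non-local lemmas is needed here.

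First I would invoke consonance. Since $\trace$ is harmonious it is in particular consonant (\rdef{def:harmony}), hence every local read operation execution is consonant. Applying the definition of local read consonance (\rdef{def:local-read-consonance}) to $\op_i = \frop{i}{\obj}{\val}$ produces a consonant write operation execution $\op_i' = \fwop{i}{\obj}{\val}{\ok_i} \in \trace|\tr_i$ with $\op_i' \pref_{\trace|\tr_i} \op_i$; in particular $\op_i'$ writes exactly the value $\val$ returned by $\op_i$, and no other write on $\obj$ by $\tr_i$ lies between the two in $\trace|\tr_i$.

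Next I would transport this prefacing relation into the last-use visible history in two steps. Passing from $\trace$ to $\seqH$ preserves the internal order of each transaction (\rdef{seqh-construction}), and $\s\tr_i$ is merely the completion of $\tr_i$, so neither a reordering nor a fresh write on $\obj$ by $\tr_i$ is introduced; thus $\op_i' \pref_{\seqH|\s\tr_i} \op_i$. Then, by the $i=j$ clause of the definition of $\luvis{S}{\tr_i}$ (case a), which includes $\seqH|\s\tr_i$ as a whole irrespective of whether $\tr_i$ commits), the prefacing relation survives the restriction, yielding $\op_i' \pref_{\luvis{\seqH}{\tr_i}} \op_i$ with $\op_i' = \fwop{i}{\obj}{\val}{\ok_i}$, as required.

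I expect no substantive obstacle: this is the lightest of the read-consistency lemmas precisely because both operations reside in $\tr_i$, which always enters $\luvis{\seqH}{\tr_i}$ entirely. The single point that warrants care is that the subject transaction is included totally rather than via its decided-on-$\obj$ projection (the partial inclusion used for uncommitted transactions in the clauses of \rdef{lvis-construction}); this is exactly what the $i=j$ case of the $\luvis{S}{\tr_i}$ definition guarantees, so none of \rlemma{lemma:corr}, \rlemma{lemma:not-corr}, or the chain-based case analysis of the non-local arguments is invoked here.
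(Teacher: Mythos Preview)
Your proposal is correct and follows essentially the same approach as the paper's own proof: invoke local read consonance to obtain $\op_i' \pref_{\trace|\tr_i} \op_i$, transport this to $\seqH|\s\tr_i$, and then argue that both operations survive into $\luvis{\seqH}{\tr_i}$. Your justification for the last step (the $i=j$ clause of the definition of $\luvis{S}{\tr_i}$, which includes $\seqH|\s\tr_i$ in full) is in fact cleaner than the paper's, which instead observes that $\op_i$ and $\op_i'$ act on the same variable and hence are included or excluded together.
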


\begin{proof}
    From local read consonance it follows that for any $\op_i \in \trace|\tr_i$
    s.t. $op_i$ is local, then $\exists \op_i' \in \trace|\tr_i$ and $\op_i'
    \pref_{\trace|\tr_i} \op_i$.  Thus, $\op_i' \pref_{\seqH|\s\tr_i} \op_i$.
    Since $\op_i$ and $\op_j$ operate on the same variable, then trivially,
    $\op_i \in\luvis{\seqH}{\tr_i} \iff \op_i' \luvis{\seqH}{\tr_i}$.  Hence,
    $\op_i' \pref_{\vis{\seqH}{\tr_i}} \op_i$.
\end{proof}

\begin{corollary} [Total Local Read Last-use Consistency] 
    \label{cor:all-local-read-lu-consistency} 
    For any $\op_i \in \trace|\tr_i$ s.t. $\op_i = \frop{i}{\op_i}{\val}$ and
    $op_i$ is local, then for any $\tr_j$, $\exists \op_i' \in
    \luvis{\seqH}{\tr_j}$ s.t.  $\op_i' = \fwop{i}{\obj}{\val}{\ok_i}$, $\op_i'
    \pref_{\luvis{\seqH}{\tr_j}} \op_i$. 
\end{corollary}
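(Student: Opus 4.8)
The plan is to reduce this corollary to its single-transaction version, \rlemma{cor:local-read-lu-consistency}, by exploiting that a local read and the write it reads from live in the same transaction and act on the same variable. First I would invoke local read consonance (\rdef{def:local-read-consonance}): since $\trace$ is harmonious and therefore consonant, the local read $\op_i = \frop{i}{\obj}{\val}$ is consonant, so there is a write $\op_i' = \fwop{i}{\obj}{\val}{\ok_i} \in \trace|\tr_i$ with $\op_i' \pref_{\trace|\tr_i} \op_i$. As the internal order of $\tr_i$ is preserved when passing to $\seqH$, this gives $\op_i' \pref_{\seqH|\s\tr_i} \op_i$.

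The crux is to show that $\op_i$ and $\op_i'$ enter $\luvis{\seqH}{\tr_j}$ together, so that the preface relation survives the restriction from $\seqH|\s\tr_i$ to $\luvis{\seqH}{\tr_j}$. Under \rdef{lvis-construction} the contribution of $\tr_i$ to $\luvis{\seqH}{\tr_j}$ is either the whole of $\seqH|\s\tr_i$ (when $\tr_i$ is committed) or the decided transaction subhistory completion $\seqH\cpeC\s\tr_i$; in the committed case $\op_i'$ trivially accompanies $\op_i$. In the remaining case I would use the defining property of the decided transaction subhistory, namely that it contains $(\seqH|\s\tr_i)|\obj$ in full precisely when $\tr_i$ decided on $\obj$. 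Thus the presence of the single $\obj$-operation $\op_i$ already forces $\tr_i$ to have decided on $\obj$ and hence forces the entire block $(\seqH|\s\tr_i)|\obj$ --- in particular $\op_i'$ --- into $\luvis{\seqH}{\tr_j}$. Either way $\op_i \in \luvis{\seqH}{\tr_j} \iff \op_i' \in \luvis{\seqH}{\tr_j}$.

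With the inclusion equivalence in hand, the conclusion follows exactly as in \rlemma{cor:local-read-lu-consistency}: restricting a sequence to a subhistory preserves the relative order of retained events, so $\op_i' \pref_{\seqH|\s\tr_i} \op_i$ carries over to $\op_i' \pref_{\luvis{\seqH}{\tr_j}} \op_i$, with $\op_i'$ of the required form. (The statement is read with the implicit proviso that $\op_i$ occurs in $\luvis{\seqH}{\tr_j}$, which is the only situation in which it is later consumed, during the verification of last-use legality.) I expect no genuine obstacle here; the sole point demanding care is the case split above, which guarantees that operations on a common variable within one transaction are never split apart by the last-use visible history construction. This is precisely the difference from \rlemma{cor:local-read-lu-consistency}, where the visible history was built around $\tr_i$ itself rather than an arbitrary $\tr_j$, and that difference is absorbed entirely by the all-or-nothing inclusion of $(\seqH|\s\tr_i)|\obj$.
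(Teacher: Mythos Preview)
Your proposal is correct and follows essentially the same approach as the paper: the paper states this as an immediate corollary of \rlemma{cor:local-read-lu-consistency} with no separate proof, relying on exactly the observation you make explicit, namely that because $\op_i$ and $\op_i'$ lie in the same transaction and act on the same variable, the construction of $\luvis{\seqH}{\tr_j}$ (either the whole of $\seqH|\s\tr_i$ or the decided subhistory completion, which includes all of $(\seqH|\s\tr_i)|\obj$) admits them together or not at all. Your case split simply unpacks what the paper's proof of the preceding lemma calls ``trivially.''
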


\begin{lemma} [Total Write Consistency] \label{lemma:write-consistency}
    For any $\tr_i$, $\tr_j$, $\forall \op_i \in \luvis{\seqH}{\tr_j}$ s.t.
    $\op_i = \fwop{i}{\op_i}{\val}{\ok_i} \in \trace|\tr_i$, $\val$ is in the
    domain of $\obj$.
\end{lemma}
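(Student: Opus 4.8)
The plan is to reduce this claim directly to write consonance, which already asserts precisely the domain condition required. First I would note that, by hypothesis, $\op_i = \fwop{i}{\obj}{\val}{\ok_i}$ is a complete write operation execution belonging to $\trace|\tr_i$, and that it additionally appears in $\luvis{\seqH}{\tr_j}$. Membership in $\luvis{\seqH}{\tr_j}$ merely selects which operation executions of $\seqH$ (equivalently, of $\hist = \tohist\trace$) are retained; it neither alters the operation execution nor changes its written value $\val$. Hence $\op_i$ is a genuine write operation execution of the underlying trace $\trace$, carrying the same value $\val$.

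Next I would invoke the standing assumption of this subsection that $\trace$ is harmonious. By \rdef{def:harmony}, harmony entails consonance, and by \rdef{def:trace-consonance} every operation execution in a consonant trace is itself consonant; in particular $\op_i$ is a consonant write operation execution in $\trace$. Equivalently, since $\trace$ is an OptSVA trace, \rlemma{lemma:write-consonance} yields the consonance of $\op_i$ directly, without passing through trace consonance.

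Finally I would apply the definition of write consonance, \rdef{def:write-consonance}: a consonant write $\fwop{i}{\obj}{\val}{\ok_i}$ satisfies $\val \neq 0$ and, crucially, $\val$ lies within the domain of $\obj$. Reading off the second conjunct gives exactly the desired conclusion that $\val$ is in the domain of $\obj$. Since $\tr_i$, $\tr_j$, and $\op_i$ were arbitrary, the statement follows for all of them.

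The only real obstacle is bookkeeping rather than mathematics: I must make explicit that passage into $\luvis{\seqH}{\tr_j}$ preserves both the identity and the return value of the write operation execution, so that the $\val$ appearing in the last-use visible history is the very same $\val$ constrained by write consonance in $\trace$. Once that identification is stated, the result is immediate.
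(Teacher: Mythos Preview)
Your proposal is correct and matches the paper's approach: the paper's proof is the single line ``Follows from write consonance,'' and you have simply unpacked that reference by tracing harmony $\Rightarrow$ consonance $\Rightarrow$ write consonance $\Rightarrow$ domain membership. The additional bookkeeping you include about $\luvis{\seqH}{\tr_j}$ preserving the operation and its value is sound and more explicit than the paper, but the underlying argument is identical.
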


\begin{proof}
    Follows from write consonance.
\end{proof}

\subsection{Proof for \rthm{thm:harmony-to-lu-opacity}}

\begin{proof} [Proof for \rthm{thm:harmony-to-lu-opacity}: Trace Last-use Opacity]
    For every transaction $\tr_i \in \trace$, given $\seqH$ constructed by
    \rdef{seqh-construction},
    \begin{enumerate}[i)]
        \item If $\tr_i$ is committed in $\hist$, from
            \rcor{cor:all-non-local-read-consistency}, $\forall \op_j \in
            \hist|\tr_j$ s.t.  $\op_j = \frop{i}{\op_j}{\val}$ and $\op_j$ is non-local,
            either: 
            \begin{enumerate} [a)] 
                \item $\val \neq 0$ and $\exists \op_k \in
                    \vis{\seqH}{\s\tr_k}$ for some $\tr_k$ s.t.  $\op_k =
                    \fwop{k}{\obj}{\val}{\ok_j}$, $\op_k
                    \pref_{\vis{\seqH}{\s\tr_i}} \op_j$, or 
                \item $\val = 0$ and $\nexists \op_k \in \vis{\seqH}{\s\tr_i}$
                    s.t.  $\op_k = \fwop{k}{\obj}{\val}{\ok_k}$ and $\op_k
                    \prec_{\vis{\seqH}{\s\tr_i}} \op_j$.
            \end{enumerate}

            In addition, from \rcor{cor:all-local-read-consistency}, $\forall \op_j \in
            \hist|\tr_j$ s.t.  $\op_j = \frop{i}{\op_j}{\val}$ and $\op_j$ is local,
            $\exists \op_j \in \vis{\seqH}{\s\tr_j}$ s.t.  $\op_j' =
            \fwop{j}{\obj}{\val}{\ok_j}$, $\op_j' \pref_{\vis{\seqH}{\s\tr_i}} \op_j$

            Furthermore, from \rlemma{lemma:write-consistency}, $\forall \op_j'
            \in \hist|\tr_j$ s.t.  $\op_j = \fwop{i}{\op_j}{\val}{\ok_j}$,
            $\val$ is in the domain of $\obj$.

            Thus, $\vis{\seqH}{\tr_i}$ is legal, and therefore $\tr_i$ is legal.

         \item If $\tr_i$ is not committed in $\hist$, from
            \rlemma{lemma:all-non-local-read-lu-consistency},
            $\forall \op_j \in \hist|\tr_j$ s.t.  $\op_j =
            \frop{i}{\op_j}{\val}$ and $\op_j$ is non-local, either: 
            \begin{enumerate} [a)] 
                \item $\val \neq 0$ and $\exists \op_k \in
                    \luvis{\seqH}{\s\tr_k}$ for some $\tr_k$ s.t.  $\op_k =
                    \fwop{k}{\obj}{\val}{\ok_j}$, $\op_k
                    \pref_{\luvis{\seqH}{\s\tr_i}} \op_j$, or 
                \item $\val = 0$ and $\nexists \op_k \in \luvis{\seqH}{\s\tr_i}$
                    s.t.  $\op_k = \fwop{k}{\obj}{\val}{\ok_k}$ and $\op_k
                    \prec_{\luvis{\seqH}{\s\tr_i}} \op_j$.
            \end{enumerate}

            In addition, from \rcor{cor:all-local-read-lu-consistency}, $\forall \op_j \in
            \hist|\tr_j$ s.t.  $\op_j = \frop{i}{\op_j}{\val}$ and $\op_j$ is local,
            $\exists \op_j \in \luvis{\seqH}{\s\tr_j}$ s.t.  $\op_j' =
            \fwop{j}{\obj}{\val}{\ok_j}$, $\op_j' \pref_{\luvis{\seqH}{\s\tr_i}} \op_j$

            Furthermore, from \rlemma{lemma:write-consistency}, $\forall \op_j'
            \in \hist|\tr_j$ s.t.  $\op_j = \fwop{i}{\op_j}{\val}{\ok_j}$,
            $\val$ is in the domain of $\obj$.

            Thus, $\luvis{\seqH}{\tr_i}$ is legal, and therefore $\tr_i$ is
            last-use legal.
    \end{enumerate}

    Since every committed transaction $\tr_i \in \hist$ is legal if it is
    committed and last-use legal if it is not committed, then $\hist$ is
    final-state last-use opaque.

    Since a prefix of a harmonious $\trace$ is trivially also harmonious,
    then for every prefix $\trace'$ of $\trace$, $\hist' = \tohist\trace'$ is
    also final-state last-use opaque. Thus, $\hist$ is last-use opaque.
\end{proof}

\end{document}